\IfFileExists{./.config.tex}{\input{.config}}{}
\providecommand{\style}{1}
\providecommand{\isanonymous}{0}
\providecommand{\isfullversion}{1}
\providecommand{\iscomments}{1}

\ifnum\style=2
  \documentclass[conference,compsoc]{IEEEtran}
\else
  \documentclass[a4paper]{llncs}
\fi

\ifnum\style=2
  \usepackage{amsmath}
  \usepackage{amsthm}
  \makeatletter
  \renewenvironment{proof}[1][\proofname]{\par
    \normalfont \topsep6\p@\@plus6\p@\relax
    \trivlist
    \item[\hskip\labelsep
          \itshape
      #1\@addpunct{.}]\ignorespaces
  }{\endtrivlist}
  \makeatother
\fi

\makeatletter
\ifnum\style=2\else
  \renewcommand*{\l@author}[2]{}
  \renewcommand*{\l@title}[2]{}
\fi
\makeatother

\PassOptionsToPackage{dvipsnames}{xcolor}
\usepackage[l2tabu,orthodox]{nag}
\usepackage{microtype}
\ifnum\style=2  
  \usepackage[bookmarksdepth=2]{hyperref}
\else
  \usepackage[pagebackref=true,bookmarksdepth=2]{hyperref}
\fi
 \usepackage{booktabs}  
\usepackage{comment}
\usepackage[inline]{enumitem}
\setlist[itemize,1]{leftmargin=*, labelindent=\labelsep}
\usepackage{nicematrix}
\usepackage[normalem]{ulem}
\usepackage{stackrel}
\usepackage{subcaption}
\usepackage{tabularx}
\usepackage{thm-restate}
\usepackage{nicematrix}
\usepackage[pass]{geometry}
\usepackage{orcidlink}
\usepackage{breakcites}
\usepackage{dashbox}
\usepackage{pifont}
\usepackage[algo2e,linesnumbered,ruled,vlined]{algorithm2e}
\ifnum\style=2 \SetAlFnt{\small}\SetAlCapFnt{\small}\SetAlCapNameFnt{\small}\fi
\usepackage{dashbox}
\usepackage{multirow}
\usepackage{adjustbox}  
\usepackage{placeins}   
\usepackage{xfrac}
\usepackage{pgfplots}
\pgfplotsset{compat=1.18}
\ifnum\style=2 \usepackage{flushend}\fi  

\usepackage{lineno}

\ifnum\iscomments=1
  \usepackage[draft]{comments-bugs-todos}
\else
  \usepackage{comments-bugs-todos}
\fi
\NewDocumentCommand\iwoo{s m o}{\IfBooleanTF{#1}{\cmnt*{iwoo}{#2}}{\cmnt{iwoo}{#2}[#3]}}
\NewDocumentCommand\rlai{s m o}{\IfBooleanTF{#1}{\cmnt*{rlai}{#2}}{\cmnt{rlai}{#2}[#3]}}
\NewDocumentCommand\osdnk{s m o}{\IfBooleanTF{#1}{\cmnt*{osdnk}{#2}}{\cmnt{osdnk}{#2}[#3]}}
\AtBeginDocument{%
  \setlength{\marginparsep}{0.1cm}%
  \setlength{\marginparwidth}{\dimexpr
    \paperwidth-\textwidth-1in-\hoffset-\oddsidemargin-\marginparsep-2mm\relax}%
}
\NewDocumentCommand\vcini{s m o}{\IfBooleanTF{#1}{\cmnt*{vcini}{#2}}{\cmnt{vcini}{#2}[#3]}}
\NewDocumentCommand\antg{s m o}{\IfBooleanTF{#1}{\cmnt*{Antonio}{#2}}{\cmnt{Antonio}{#2}[#3]}}
\NewDocumentCommand\claude{s m o}{\IfBooleanTF{#1}{\cmnt*{claude}{#2}}{\cmnt{claude}{#2}[#3]}}
\NewDocumentCommand\contrclaude{s m o}{\IfBooleanTF{#1}{\cmnt*{claude-ans}{#2}}{\cmnt{claude-ans}{#2}[#3]}}
\definecolor{claudecolor}{HTML}{DE7356}
\NewDocumentCommand\claudetext{+m}{\ifnum\iscomments=1\relax{\color{claudecolor}#1}\else#1\fi}
\colorlet{claudedelcolor}{black!35}
\makeatletter
\def\claudedel@gap{\ifx\@let@token\claudetext\hspace{0.3em}\fi}
\def\claudedel@accepted{\ifx\@let@token\@sptoken\ifhmode\unskip\fi\fi}
\long\def\claudedel@on#1{{\color{claudedelcolor}#1}\futurelet\@let@token\claudedel@gap}
\long\def\claudedel@off#1{\futurelet\@let@token\claudedel@accepted}
\NewDocumentCommand\claudedel{+m}{%
  \ifnum\iscomments=1\relax\expandafter\claudedel@on\else\expandafter\claudedel@off\fi{#1}}
\makeatother

\usepackage{ifthen}
\newcommand{\anonymous}[2]{%
\ifthenelse{\equal{\isanonymous}{1}}%
{#1}%
{#2}%
}%

\newcommand{\fullversion}[2]{%
\ifthenelse{\equal{\isfullversion}{1}}%
{#1}%
{#2}%
}%

\newcommand{\ifstyle}[3]{%
\ifthenelse{\equal{\style}{#1}}%
{#2}%
{#3}%
}%

\usepackage{amsmath,amsfonts,amssymb}
\usepackage{xspace}
\usepackage[lambda,landau,operators,probability,sets,adversary,logic,ff,primitives,complexity,oracles,asymptotics,keys]{cryptocode}

\ifnum\style=2
\usepackage{newtxtext,newtxmath} 
\else 
\makeatletter
\newcommand{\hathat}[1]{%
\begingroup%
  \let\macc@kerna\z@%
  \let\macc@kernb\z@%
  \let\macc@nucleus\@empty%
  \hat{\mathchoice%
    {\raisebox{.2ex}{\vphantom{\ensuremath{\displaystyle #1}}}}%
    {\raisebox{.2ex}{\vphantom{\ensuremath{\textstyle #1}}}}%
    {\raisebox{.16ex}{\vphantom{\ensuremath{\scriptstyle #1}}}}%
    {\raisebox{.14ex}{\vphantom{\ensuremath{\scriptscriptstyle #1}}}}%
    \smash{\hat{#1}}}%
\endgroup%
}
\makeatother
\fi

\newcommand{\ourwork}{\ensuremath{\mathtt{Tulva}}\xspace}                  
\newcommand{\ourworktfhe}{\ensuremath{\mathtt{tulva\mhyphen tfhe}}\xspace}  
\newcommand{\thefullversion}{the full version \cite{FULLVERSION}}

\mathchardef\mhyphen="2D
\renewcommand{\vec}[1]{\ensuremath{\mathbf{#1}}\xspace}
\newcommand{\mat}[1]{\ensuremath{\mathbf{#1}}\xspace}
\newcommand{\ring}{\mathcal{R}}
\newcommand{\field}{\mathcal{K}}

\newcommand{\conductor}{\mathfrak{f}}
\newcommand{\expansion}{\mathfrak{c}}
\newcommand{\tensorexpansion}[1]{{\expansion^{\otimes}_{#1}}} 
\newcommand{\rad}{\mathsf{rad}}

\newcommand{\transpose}{\mathtt{T}}
\newcommand{\tr}{\mathtt{T}}
\newcommand{\tuple}[1]{\ensuremath{\left(#1\right)}}
\newcommand{\diag}[1]{\ensuremath{\mathsf{diag}\left(#1\right)}}
\newcommand{\Span}{\mathsf{span}}

\newcommand{\round}[1]{\left\lfloor #1 \right\rceil}
\newcommand{\fracpart}[1]{\left\{ #1 \right\}}

\newcommand{\fracpartfrac}[2]{\fracpart{\frac{#1}{#2}}}
\newcommand{\diff}[1]{{\color{red}#1}}
\newcommand{\DD}{{D}}
\newcommand{\slack}{\xi}

\newcommand{\smax}{s_{\max}}

\newcommand{\tildevec}[1]{\tilde{\vec{#1}}}

\newcommand{\hatvec}[1]{\hat{\vec{#1}}}

\newcommand{\barvec}[1]{\bar{\vec{#1}}}
\newcommand{\barmat}[1]{\bar{\mat{#1}}}

\newcommand{\ddvA}{\ddv_{\mathtt{A}}}

\newcommand{\ddvt}{\ddv_{\mathtt{t}}}
\newcommand{\hdv}{\mathcal{H}}

\newcommand{\cL}{\mathcal{L}}
\newcommand{\cM}{\mathcal{M}}

\newcommand{\FHE}{\mathsf{FHE}}
\newcommand{\tFHE}{\mathsf{thFHE}}

\newcommand{\chisub}{\chi}

\newcommand{\Adp}{\mathsf{Adp}}

\newcommand{\Hyb}{\mathsf{Hyb}}
\newcommand{\hyb}{\mathsf{Hyb}}
\newcommand{\ParDec}{\mathsf{ParDec}}
\newcommand{\Chal}{\mathsf{Chal}}

\newcommand{\Setup}{{\mathsf{Setup}}}
\newcommand{\setup}{\ensuremath{\mathsf{Setup}}\xspace}

\newcommand{\Dec}{{\mathsf{Combine}}}
\newcommand{\Eval}{{\mathsf{Eval}}}

\newcommand{\Share}{{\mathsf{Share}}}

\newcommand{\Leak}{{\mathsf{Leak}}}
\newcommand{\leak}{{\mathsf{leak}}}
\newcommand{\leakdim}{L} 

\newcommand{\PreDecrypt}{\mathsf{Preproc}}

\newcommand{\ctxt}{\mathsf{ctxt}}

\newcommand{\pd}{\mathsf{pd}}

\newcommand{\params}{\ensuremath{\mathsf{par}}\xspace}
\newcommand{\pubparam}{\ensuremath{\mathsf{pp}}\xspace}

\newcommand{\id}{\mathsf{id}} 

\renewcommand{\state}{\mathsf{st}}

\newcommand{\rtag}{\vect}
\newcommand{\pretag}{\tildevec{t}}
\newcommand{\prectxt}{\tildevec{c}}

\newcommand{\LShare}{L_{\mathsf{Share}}}
\newcommand{\Lquery}{L_{\mathsf{Query}}}
\newcommand{\Lshare}{L_{\mathsf{Share}}}
\newcommand{\Lindex}{L_{\mathsf{Party}}}
\newcommand{\Lleak}{L_{\mathsf{leak}}}
\newcommand{\Ltag}{L_{\mathsf{t}}}
\newcommand{\sdvinit}{\sdv_{\mathsf{Init}}}

\newcommand{\Gen}{\mathsf{Gen}}

\newcommand{\veca}{\mathbf{a}}
\newcommand{\vecb}{\mathbf{b}}
\newcommand{\vecc}{\mathbf{c}}
\newcommand{\vecd}{\mathbf{d}}
\newcommand{\vece}{\mathbf{e}}
\newcommand{\vecf}{\mathbf{f}}
\newcommand{\vecg}{\mathbf{g}}

\newcommand{\vecl}{\mathbf{l}}
\newcommand{\vecm}{\mathbf{m}}

\newcommand{\vecr}{\mathbf{r}}
\newcommand{\vecs}{\mathbf{s}}
\newcommand{\vect}{\mathbf{t}}
\newcommand{\vecu}{\mathbf{u}}
\newcommand{\vecv}{\mathbf{v}}
\newcommand{\vecw}{\mathbf{w}}
\newcommand{\vecx}{\mathbf{x}}
\newcommand{\vecy}{\mathbf{y}}
\newcommand{\vecz}{\mathbf{z}}

\newcommand{\veczero}{\mathbf{0}}

\newcommand{\matA}{\mathbf{A}}
\newcommand{\matB}{\mathbf{B}}
\newcommand{\matC}{\mathbf{C}}
\newcommand{\matD}{\mathbf{D}}
\newcommand{\matE}{\mathbf{E}}
\newcommand{\matF}{\mathbf{F}}
\newcommand{\matG}{\mathbf{G}}

\newcommand{\matI}{\mathbf{I}}

\newcommand{\matM}{\mathbf{M}}

\newcommand{\matP}{\mathbf{P}}

\newcommand{\matR}{\mathbf{R}}
\newcommand{\matS}{\mathbf{S}}
\newcommand{\matT}{\mathbf{T}}
\newcommand{\matU}{\mathbf{U}}
\newcommand{\matV}{\mathbf{V}}
\newcommand{\matW}{\mathbf{W}}
\newcommand{\matX}{\mathbf{X}}
\newcommand{\matY}{\mathbf{Y}}
\newcommand{\matZ}{\mathbf{Z}}
\newcommand{\matzero}{\mathbf{0}}

\DeclareRobustCommand{\SIS}[1]%
{\ifthenelse{\equal{#1}{}}%
  {\ensuremath{\mathsf{SIS}}}%
  {\ensuremath{\mathsf{SIS}_{\/#1}}}%
  \xspace%
}
\DeclareRobustCommand{\LWE}[1]%
{\ifthenelse{\equal{#1}{}}%
  {\ensuremath{\mathsf{LWE}}}%
  {\ensuremath{\mathsf{LWE}_{\/#1}}}%
  \xspace%
}

\DeclareRobustCommand{\elLWE}[1]%
{\ifthenelse{\equal{#1}{}}%
  {\ensuremath{\mathsf{LLWE}}}%
  {\ensuremath{\mathsf{LLWE}_{\/#1}}}%
  \xspace%
}
\DeclareRobustCommand{\LLWE}[1]%
{\ifthenelse{\equal{#1}{}}%
  {\ensuremath{\mathsf{LLWE}}}%
  {\ensuremath{\mathsf{LLWE}_{\/#1}}}%
  \xspace%
}

\DeclareRobustCommand{\thLWE}[1]%
{\ifthenelse{\equal{#1}{}}%
  {\ensuremath{\mathsf{thLWE}}}%
  {\ensuremath{\mathsf{thLWE}_{\/#1}}}%
  \xspace%
}

\newcommand{\leco}{\vecu} 

\newcommand{\INDCPA}{{\mathsf{IND}\text{-}\mathsf{CPA}}} 
\newcommand{\INDCPAwL}{{\mathsf{IND}\text{-}\mathsf{CPA}\text{-}\mathsf{wL}}} 
\newcommand{\INDCPAwLd}{{\mathsf{IND}\text{-}\mathsf{CPA}\text{\dbox{\text{-}$\mathsf{wL}$}}}} 
\newcommand{\Adv}{{\mathsf{Adv}}}

\usepackage{booktabs}
\usepackage{tabularx}
\usepackage{array}

\newcolumntype{Y}{>{\raggedright\arraybackslash}X}

\colorlet{hlcolor}{yellow!100!white!50}
\newcommand{\highlight}[2][hlcolor]{%
{\setlength{\fboxsep}{0pt}\colorbox{#1}{\ifmmode$\displaystyle#2$\else#2\fi}}%
}
        
\fullversion{\usepackage{fullpage}}{}

\hypersetup{
    colorlinks,
    linkcolor={red!50!black},
    citecolor={blue!50!black},
    urlcolor={blue!80!black}
}
\allowdisplaybreaks

\ifstyle{2}{
  \theoremstyle{plain}
  \newtheorem{theorem}{Theorem}
  \newtheorem{lemma}{Lemma}
  \newtheorem{corollary}{Corollary}
  \newtheorem{proposition}{Proposition}

  \theoremstyle{definition}
  \newtheorem{definition}{Definition}
  
  \theoremstyle{remark}
  \newtheorem{remark}{Remark}
}{
  \spnewtheorem{Heuristic}{Heuristic}{\bfseries}{\itshape} 
}

\usepackage[capitalize]{cleveref}
\ifstyle{2}{
  \crefname{lemma}{Lem.}{Lems.}%
  \crefname{corollary}{Cor.}{Cors.}%
  \crefname{theorem}{Thm.}{Thms.}%
  \crefname{proposition}{Prop.}{Props.}%
  \crefname{definition}{Def.}{Defs.}%
  \crefname{claim}{Claim}{Claims}%
  \crefname{remark}{Rem.}{Rems.}%
  \crefname{section}{\S}{\S\S}%
  \crefname{subsection}{\S}{\S\S}%
  \crefname{figure}{Fig.}{Figs.}%
  \crefname{table}{Tab.}{Tabs.}%
  \crefname{Heuristic}{Heuristic}{Heuristics}

  \crefformat{section}{\S#2#1#3}%
  \Crefformat{section}{\S#2#1#3}%
  \AtBeginDocument{%
    \crefformat{subsection}{\S#2#1#3}%
    \Crefformat{subsection}{\S#2#1#3}%
  }
}{
  
}

\AddToHook{cmd/appendix/before}{\crefalias{section}{appendix}}
\AddToHook{cmd/appendix/before}{\crefalias{subsection}{appendix}}
\AddToHook{cmd/appendix/before}{} 

\DeclareRobustCommand{\abbrevcrefs}{}

\usepackage[final]{listings}
\lstdefinelanguage{Sage}[]{Python}{morekeywords={True,False,sage,cdef,cpdef,ctypedef,self},sensitive=true}

\usepackage[color={.9 .2 .2},final]{attachfile2}

\DeclareRobustCommand{\cshref}[1]{{\abbrevcrefs\cref{#1}}}

\fullversion{
\title{\ourwork: Practical Asynchronous Threshold FHE \\ \textnormal{(Full version)}}
}{
\title{\ourwork: Practical Asynchronous Threshold FHE}
}

\newcommand{\authorCini}{Valerio Cini}
\newcommand{\orcidCini}{0009-0003-6876-0954}
\newcommand{\authorGuimaraes}{Antonio Guimar{\~a}es}
\newcommand{\orcidGuimaraes}{0000-0001-5110-6639}
\newcommand{\authorLai}{Russell W. F. Lai}
\newcommand{\orcidLai}{0000-0001-9126-1887}
\newcommand{\authorOsadnik}{Micha{\fontencoding{T1}\selectfont\l}\nobreakspace Osadnik}
\newcommand{\orcidOsadnik}{0000-0003-4000-3241}
\newcommand{\authorWoo}{Ivy K. Y. Woo}
\newcommand{\orcidWoo}{0000-0001-8905-1207}
\newcommand{\affBocconi}{Bocconi University, Milan, Italy}
\newcommand{\affIMDEA}{IMDEA Software Institute, Madrid, Spain}
\newcommand{\affAalto}{Aalto University, Espoo, Finland}

\anonymous{
  \ifstyle{2}{}{
  \author{}
  \institute{}
  }
}{
\ifstyle{2}{
  \author{
    \IEEEauthorblockN{%
      \authorCini\orcidlink{\orcidCini},
      \authorGuimaraes\orcidlink{\orcidGuimaraes},
      \authorLai\orcidlink{\orcidLai},
      \authorOsadnik\orcidlink{\orcidOsadnik},
      \authorWoo\orcidlink{\orcidWoo}%
    }
    \IEEEauthorblockA{%
      \affBocconi \\
      \affIMDEA \\
      \affAalto%
    }
  }
}{
  \author{
      \authorCini\inst{1}%
      \orcidlink{\orcidCini}
      \and
      \authorGuimaraes\inst{2}%
      \orcidlink{\orcidGuimaraes}
      \and
      \authorLai\inst{3}%
      \orcidlink{\orcidLai}
      \and
      \authorOsadnik\inst{3}%
      \orcidlink{\orcidOsadnik}
      \and
      \authorWoo\inst{3}%
      \orcidlink{\orcidWoo}
  }
  \institute{
      \affBocconi
      \and
      \affIMDEA
      \and
      \affAalto
  }
}
}

\begin{document}

\maketitle
\ifnum\style=2\else\setcounter{footnote}{0}\fi

\begin{abstract}
  Fully homomorphic encryption (FHE) allows arbitrary computation over encrypted data, but its secret key presents a single point of failure which limits its applications. Threshold FHE patches this gap by distributing the secret key to $K$ parties, requiring a threshold $t \leq K$ of them to cooperate for decryption. We focus on the asynchronous setting where a ciphertext can be decrypted by collecting decryption shares from any $t$ parties who do not need to coordinate or even be online simultaneously. Asynchronous schemes are sought in applications such as privacy-preserving Ethereum Virtual Machines (EVM), processing of long-lived private data with quorum-based decryption and the thresholdisation of any cryptographic primitive in general.

We present \ourwork, a practical threshold FHE scheme which equips a base FHE scheme with an asynchronous threshold decryption mechanism in a modular way. Our scheme simultaneously (1) is compatible with virtually any base FHE scheme and leaves its ciphertexts unchanged apart from an appended random tag, (2) supports a wide range of threshold profiles $(t,K)$ and ring degrees $\geq 256$ with 6--25 bits partial decryption per plaintext bit, and (3) achieves security in the presence of partial decryption queries on challenge ciphertexts based on the standard LWE assumption.

We provide an open-source implementation, benchmark its threshold decryption performance and evaluate the end-to-end costs for typical privacy-preserving EVM use cases. For 16-out-of-512 threshold and a publicly preprocessed ciphertext, computing secret-dependent partial decryption and public share combination requires only 3.7ms and 0.5ms, respectively, on commodity hardware. Compared with the state-of-the-art synchronous scheme [Zyskind et al.,  CCS'25], we replace their ciphertext-independent yet interactive preparation of secret-correlated randomness with a public, ciphertext-dependent 0.5s preprocessing step, eliminate interaction in the online phase while maintaining similar decryption time on the same machine, and support 32$\times$ more total users.

\end{abstract}

\fullversion{%
  \thispagestyle{plain}
  \vfill
  \begingroup
  \footnotesize
  \noindent\rule{0.35\linewidth}{0.4pt}\par\smallskip
  \noindent
  This is the full version of a paper to appear in the proceedings of the 48th IEEE Symposium on Security and Privacy (S\&P 2027) with the same title.
  This version includes additional proofs that are omitted in the proceedings version.
  \par\endgroup
}{}

\fullversion{%
  \clearpage
  \setcounter{tocdepth}{2}
  \tableofcontents
  \thispagestyle{plain}
  \clearpage
}{}


\section{Introduction}

Fully homomorphic encryption (FHE) has rapidly emerged in the past decade as a practically viable tool for computing over private data. 
However, in many real-world applications, storing the secret decryption key on a single device is undesirable.
Threshold FHE distributes the single point of trust by splitting the secret key among $K$ parties, 
such that any subset of at least $t$ parties can jointly decrypt a ciphertext, 
while any coalition of fewer than $t$ parties learns nothing about the underlying message.\footnote{For some schemes, the recovery threshold could be much larger than the corruption threshold.}
The practical significance of threshold FHE is recognised: it constitutes a standalone category in NIST's call for multiparty threshold schemes \cite{NIST:ThresholdCrypto}.

\subsection{Threshold FHE} 

Threshold FHE can be roughly classified into two classes: asynchronous and synchronous/interactive.
For an asynchronous scheme (e.g.~\cite{C:BGGJKR18,AC:BouSch23,EPRINT:BFMMST25}), any of the $K$ shareholders can locally compute a partial decryption without coordinating with other parties. 
Anyone can recover the encrypted message by collecting $t$ partial decryptions in any order and possibly across an extended time period.
In contrast, the decrypting parties of a synchronous scheme (e.g.~\cite{EPRINT:BCDFLO25,CCS:ZZLP25}) must be aware of the set of $t$ participating parties in advance to produce valid partial decryptions. 
Some synchronous schemes are further interactive (e.g.~\cite{CCS:ZZLP25}), i.e.~the partial decryptors must engage in at least two rounds of communication to decrypt the ciphertext, meaning that all decrypting parties must be simultaneously online during the decryption process.\footnote{Some schemes have a synchronous/interactive offline phase and an asynchronous online phase.}

The above differences dictate where a threshold FHE can be deployed, 
and the non-interactivity of asynchronous schemes makes them particularly suited for decentralised, dynamic environments. 
A prominent application is FHEVM \cite{ZamaFhevm}, a confidential smart-contract platform where encrypted account balances and contract states reside on a public ledger. A validator set homomorphically executes smart-contract logic over these ciphertexts, while a separate, rotating committee of decrypting parties is invoked to decrypt only specific outputs (e.g.~a transaction-success bit or an auction's clearing price).
The rise of production FHEVM stacks from companies such as Zama \cite{ZamaFhevm} and Fhenix \cite{Fhenix} highlights the growing demand for this architecture. 
Because contracts trigger decryption requests unpredictably and the decryption committee rotates across epochs, the underlying decryption pattern is inherently asynchronous. Synchronous schemes, which require the decrypting parties to be agreed in advance and/or simultaneously online, are structurally incompatible. 
Beyond FHEVMs, asynchronous threshold FHE further enables generically thresholdising cryptographic primitives \cite{C:BGGJKR18} without imposing constraints on set-agreement or communication that are inherent to synchronous alternatives.

Despite strong market demand, existing threshold FHE schemes struggle to meet all practical constraints simultaneously. 
Constructing schemes that simultaneously maximise the range of threshold profiles while minimising partial decryption size, time and communication rounds is challenging, even without asynchronous decryption. 
This forces a trade-off between asynchrony and efficiency. 
Among the best benchmarked schemes, Noah's Ark~\cite{WAHC:DDE23} is asynchronous only when $t = O(1)$: It requires interactive offline $K$-party preprocessing for $t \geq \omega(1)$, such that each decryption process consumes one preprocessing run; Zyskind et al.~\cite{CCS:ZZLP25} is restricted to the $K$-out-of-$K$ setting and is highly interactive.

Beyond efficiency trade-offs, there is also a subtlety in how threshold FHE security is typically modeled. While the exact security definition of threshold FHE differs greatly from work to work (cf.~\cref{sec:related}), none of the definitions that we are aware of models attack scenarios where the adversary can learn partial decryptions of ``challenge ciphertexts'', whose underlying plaintexts are (supposedly) unknown to the adversary.%
\footnote{This restriction is sometimes explicit (e.g.~\cite{AC:BouSch23}), but is more often implicitly implied by restricting the adversary to corrupt exactly $t-1$ users (e.g.~\cite{C:BGGJKR18,AC:OkaTak25,EPRINT:BFMMST25}), so that any additional partial decryption leads to trivial decryption. In the context of proof, assuming the corruption of exactly $t-1$ users seems to make a proof much easier,  
since any partial decryption would imply recovery of the plaintext, which can then be trivially simulated.}
This is a strict~\cite{PKC:BrzKloWoo26} limitation that is especially significant when the threshold FHE is used to thresholdise another cryptographic primitive. 
Taking threshold signatures as an example, the ability to learn partial decryptions of challenge ciphertexts corresponds to that of learning transcripts of unfinished threshold signing protocols, the latter of which is captured in the strongest \mbox{TS-UF-4} security definition of threshold signatures defined in~\cite{C:BCKMTZ22}.

\subsection{Our Contributions}

We present \ourwork, a practical compiler that turns virtually any\footnote{We require the FHE to satisfy a ``linear decryption'' property (\cref{def:linear-FHE-fine-grained}) which holds for virtually all practical FHE families except CKKS.} FHE into an asynchronous threshold FHE. The instantiation with TFHE, \ourworktfhe, efficiently supports a wide range of threshold profiles\footnote{There is no hard limit and the costs scale gracefully.}, e.g.~any $t \leq K \leq 128$ or any $t$ with $\min\{t, K-t\} \leq 16$ for up to $K = 512$, with $\secpar=128$ bits of estimated security. 
Our headline result is a 16-out-of-512 scheme whose threshold decryption process can be decomposed into a 0.5s public ciphertext preprocessing step, a 3.7ms partial decryption step, and a 0.5ms public share combination step, where the median timing measurements are obtained from an AWS \texttt{c7a.32xlarge} instance with AMD EPYC CPU.

\medskip\noindent\textbf{Modular construction.} 
\ourwork consists of a decryption process which first performs a (public) key-switch operation on the FHE ciphertext, turning it into a threshold public-key encryption (PKE) ciphertext of a compatible threshold PKE scheme, and then performs the threshold decryption procedure of the threshold PKE. 
Such a modular construction is essentially agnostic to the underlying FHE scheme, hence enabling flexible instantiations depending on the use cases.

We highlight two advantages of key-switching to a dedicated threshold PKE scheme:
\begin{enumerate*}
  \item It enables cleaner decryption noise management and parameter selection: The FHE layer only needs to expose a small set of parameters (e.g.~plaintext and ciphertext moduli, decryption noise budget) to the threshold layer, making the parameter selection process of the two layers mostly independent;
  \item It allows us to inject additional (pseudo)randomness to ciphertexts to be decrypted. This is crucial for proving that \ourwork remains secure even when the adversary learns partial decryptions of challenge ciphertexts, under the standard (module) learning with errors (LWE) assumption. 
\end{enumerate*}

\medskip\noindent\textbf{Security given partial decryptions of challenge ciphertexts.} 
Proving this strong security property requires constructing a reduction which simulates partial decryptions that are jointly correlated with all key shares instead of only those of corrupt parties. 
While proof techniques for this scenario have recently been developed in the simpler context of threshold PKE~\cite{AC:CinLaiWoo25}, the threshold FHE setting introduces more complications since many correlated FHE ciphertexts can be derived from a single one.
The injection of additional randomness to ciphertexts allows us to extend the technique of~\cite{AC:CinLaiWoo25} to the threshold FHE setting.  

\medskip\noindent\textbf{Implementation, benchmark and use cases.}
Our TFHE-based instantiation \ourworktfhe consumes ciphertexts produced by the \texttt{tfhe-rs}~\cite{Zama} library directly: a thin compatibility layer repacks and key-switches them into the form expected by our threshold layer, whose performance-critical routines are written in C with \texttt{AVX-512} optimisations and orchestrated from Python.
We benchmark across ring degrees $\varphi \in \{256, 2048\}$ and threshold profiles up to $K = 512$ parties; the small ring minimises single-ciphertext latency, while the large ring packs $\varphi = 2048$ messages per batch for throughput.
We further provide end-to-end benchmarks for four confidential smart contract use cases in an FHEVM (\cref{sec:fhevm}): a sealed-bid auction, a confidential ERC-20 transfer, a dark-pool DEX, and a median price oracle.


\subsection{Related Work}\label{sec:related}

There is a substantial literature on threshold FHE, including~\cite{EC:AJLTVW12,C:BGGJKR18,AC:BouSch23,AC:PasSte24,AC:OkaTak25,EPRINT:BFMMST25}. In the following, we focus on comparing with~\cite{WAHC:DDE23,C:KLLPS26,CCS:ZZLP25} which are the most practically relevant, of which \cite{CCS:ZZLP25} is the only work that makes its implementation publicly available.

Pilvi~\cite{AC:CinLaiWoo25} is a threshold PKE scheme that uses subtractive-set elements as the evaluation points of Shamir sharing, and it is secure even when the adversary obtains partial decryptions of challenge ciphertexts.
\ourwork takes from Pilvi the subtractive-set evaluation points, the security model, and its proof techniques.
Applying Pilvi directly to threshold FHE would mean secret-sharing the FHE secret key itself in the same way, as \cite{C:KLLPS26} does, which requires non-black-box changes to the FHE and its parameters.
Instead, \ourwork keeps the FHE and threshold layers separate: a public key switch moves an FHE ciphertext to a dedicated decryption key, and only that key is secret-shared, so the FHE is used as a black box and its ciphertexts and evaluation are unchanged apart from an appended random tag.
On the proof side, these random tags are what let us adapt Pilvi's proof techniques to homomorphically evaluated ciphertexts (see \cref{sec:overview-analysis}).

In terms of construction, both \ourwork and Noah's Ark \cite{WAHC:DDE23} secret-share a dedicated decryption key\footnote{A subtle difference is that in \ourwork a ciphertext is key-switched to one under the decryption key while in Noah's Ark \cite{WAHC:DDE23} it is bootstrapped.}, while the schemes of \cite{CCS:ZZLP25,C:KLLPS26} perform threshold decryption ``in place''.
All three schemes, \ourwork and \cite{WAHC:DDE23,C:KLLPS26}, use noise flooding to mask leakages of the FHE decryption noise and the secret key share.
The scheme of \cite{CCS:ZZLP25} avoids noise flooding using a rounding protocol based on multiparty computation (MPC), at the cost of having online interactions.

\ourwork and \cite{EPRINT:BFMMST25,C:KLLPS26} consider the asynchronous model.
Noah's Ark \cite{WAHC:DDE23} is only asynchronous for $t = O(1)$. For $t \geq \omega(1)$, each threshold decryption consumes a ciphertext-independent $K$-party interactive offline preprocessing, although the online phase is asynchronous.
The scheme of \cite{CCS:ZZLP25} requires interactions among all $K$ parties in both offline and online phases.

There are mainly two formal models of threshold FHE which define the notion either (1) as a standalone primitive (e.g.~this work and \cite{C:KLLPS26}) or (2) as MPC protocols for key generation and threshold decryption associated to a base FHE scheme (e.g.~\cite{WAHC:DDE23,CCS:ZZLP25}). 
Typically, security in the former model is described via security experiments for individual properties, and in the latter via ideal functionalities.
All compared works use distinct security definitions, and we are not aware of any result that connects game-/simulation-/UC-based security definitions for threshold FHE, making them difficult to compare. 
This work inherits the strong security model for threshold PKE considered in \cite{AC:CinLaiWoo25,PKC:BrzKloWoo26} where the adversary can non-maximally corrupt $|\cdv| < t$ parties and make less than $t - |\cdv|$ partial decryption queries on each challenge ciphertext. 
The model in \cite{C:KLLPS26} is weaker \cite{PKC:BrzKloWoo26} in the sense that it forces $|\cdv| = t-1$. 
Understanding the correspondence of this property in UC-based models~\cite{WAHC:DDE23,CCS:ZZLP25} is beyond the scope of this work.


\section{Technical Overview}

We overview our construction and the security proof. 
Throughout, we work over a cyclotomic field $\field = \QQ(\zeta)$ and its ring of integers $\ring = \ZZ[\zeta]$ with norms measured in the coefficient embedding. 
The fractional part of $x \in \field$ wrt.~the coefficient embedding is written as $\fracpart{x} \coloneqq x - \round{x}$. 

\subsection{Requirements on Underlying FHE}\label{sec:overview-FHE-requirements}

Our transformation assumes an FHE scheme $\FHE$ to satisfy a structural property and a security property.

\medskip\noindent\textbf{Structural property.} We assume that $\FHE$ has ``linear decryption''\footnote{The formal definition involves a ``slack'' factor $\xi$. We assume $\xi=1$ in this overview for simplicity.} (\cref{def:linear-FHE-fine-grained}), i.e.~the following (typical) properties: 
\begin{itemize}
    \item The decryption key is a vector $\sk_\FHE = \hat{\vec{s}} \in \ring_q^{\hat{n}}$. 
    \item There exists a deterministic transformation $\PreDecrypt$ that turns a ciphertext into a tuple 
    \[
        (\hat{\vec{c}}_0, \hat{c}_1) \gets \PreDecrypt(\ctxt_\FHE) \in \ring_q^{\hat{n}} \times \ring_q. 
    \]
    In particular, we do \emph{not} require the FHE scheme to operate over the same ring $\ring_q$ (e.g.~it can work over entirely different rings or moduli), as long as the above $\PreDecrypt$ procedure exists.
    \item If a ciphertext $\ctxt_\FHE$ is intended to encrypt a message $\mu \in \ring_p$ (potentially after some homomorphic evaluations), then it holds that 
    \begin{align}\label{eq:overview-linear-dec}
        \hat{c}_1 - \hat{\vec{s}}^\transpose \cdot \hat{\vec{c}}_0 = \round{\frac{q}{p}} \cdot \mu + \hat{e} \bmod q 
        \quad \text{with} \quad 
        \norm{\hat{e}}_\infty \leq B,
    \end{align}
    for some noise budget $B$ depending on the FHE scheme. We call $\hat{e}$ the FHE decryption noise.
    This implies $\round{ \frac{p}{q} \cdot (\hat{c}_1 - \hat{\vec{s}}^\transpose \cdot \hat{\vec{c}}_0) } = \mu \bmod p$ when $q/2p > B + \fracpart{\frac{q}{p}}$. 
\end{itemize}
Virtually all lattice-based constructions of FHE have linear decryption. For most of them, $\PreDecrypt$ can be implemented with well-known and relatively inexpensive procedures, such as Ring Switching~\cite{SCN:GHPS12} and Packing Key Switching~\cite{AC:CGGI16,ICALP:MicSor18,ACNS:CDKS21}. In fact, some schemes (e.g.~BFV~\cite{C:Brakerski12,EPRINT:FanVer12}) already often work with ciphertexts satisfying our requirements. An exception is CKKS~\cite{AC:CKKS17}, which, being an approximate scheme, would not immediately satisfy our correct decryption condition, requiring more complex procedures. 
We discuss the potential instantiations of $\PreDecrypt$ and their compatibility with different FHE schemes in \cref{sec:implementation}.
For our proof-of-concept implementation, we consider the TFHE scheme~\cite{AC:CGGI16}, as provided by the \texttt{tfhe-rs} library~\cite{Zama}, and implement $\PreDecrypt$ using bootstrappings (to increase moduli) and packing key switching.

\medskip\noindent\textbf{Security property.}
We assume that $\FHE$ has IND-CPA-with-leakage security
(\cref{def:FHE-with-leakage-security}). Informally, this means that $\FHE$ remains IND-CPA secure even when the adversary has access to a leakage oracle that, on input an honestly generated ciphertext of its choice, returns the corresponding FHE decryption noise $\hat{e}$ masked
by additional noise. We show (\cref{thm:leakage-security}) that any IND-CPA-secure FHE scheme with linear decryption satisfies this notion,
provided the masking noise is super-polynomially large. We leave the case of polynomially large masking noise to future work.

Apart from these structural and security requirements, our
transformation only requires the modulus-to-noise ratio $q/B$ of the underlying FHE scheme to be sufficiently large (super-polynomial in our current analysis).
The transformation is otherwise oblivious to the homomorphic evaluation procedures of the underlying FHE scheme. Hence, it preserves any optimisations implemented in the underlying FHE scheme.

\subsection{From FHE to Threshold FHE}

Next, we describe our generic compiler, \ourwork, that turns any FHE scheme satisfying the requirements in \cref{sec:overview-FHE-requirements} into a threshold FHE scheme.
Conceptually, our transformation is to pair the FHE with an external linearly-homomorphic threshold PKE (tPKE) scheme which takes care of threshold decryption. 
The basic idea is to encrypt the FHE secret key under the tPKE, and homomorphically evaluate (the linear part) of the FHE's decryption function on an FHE ciphertext, i.e.~perform a key-switching operation to turn an FHE ciphertext into a tPKE ciphertext. The partial decryption procedure of the tPKE can then be applied.
Looking ahead, we will instantiate the tPKE with the classic threshold Regev-PKE scheme. 
The construction involves two moduli $p \leq q$ for plaintexts and  threshold layer ciphertexts, respectively.
It is agnostic to the FHE layer modulus, which is left implicit. 

Our modular framework offers two major advantages: 
(1) The parameters for the FHE and tPKE layers are almost decoupled: only the noise budget, the plaintext and ciphertext moduli and some dimensions of the former are exposed to the latter, allowing a simpler parameter selection process. 
(2) Our security proof can be decomposed into distinct parts concerning solely the FHE or the tPKE components, which allows us to borrow techniques for tPKE to argue strong threshold FHE security (see \cref{sec:overview-analysis}).

\medskip\noindent\textbf{Key Generation.}
A trusted dealer first generates the FHE key pair $(\pk_\FHE, \sk_\FHE)$, where $\sk_\FHE = \hat{\vec{s}}$.
It then generates a key-switching key $(\mat{A}, \vec{b})$ which allows anyone to turn an FHE ciphertext into a tPKE ciphertext, as follows:
Sample the matrices $\mat{A} \sample \ring_q^{n \times m}$ and $\mat{R} \sample \ring_q^{t \times n}$ uniformly at random.
Compute $\vec{r}^\transpose = \vec{v}_0^\transpose \cdot \mat{R} \bmod q$, where $\vec{v}_0 = (1,0,\ldots,0)^\tr$ denotes the $t$-dimensional first unit vector, i.e.~$\vec{r}^\transpose$ is the first row of $\mat{R}$.
Sample a short error vector $\vec{e} \sample \ring^m$.
Compute 
\begin{align}\label{eq:overview-vecb}
    \vec{b} \coloneqq \vec{r}^\transpose \cdot \mat{A} + \vec{e}^\transpose + (\hat{\vec{s}}^\transpose, \vec{0}_h^\transpose) \cdot \matG \bmod q,
\end{align}
where $\mat{G}$ is an appropriate gadget matrix \cite{EC:MicPei12}.
The public key is $\pk = (\pk_\FHE, \mat{A}, \vec{b})$.
The secret key $\sk_k$ for the $k$-th party is 
the $k$-th Shamir's secret-share of the vector $\vecr$,
which can be expressed as the $k$-th row of $\mat{V} \cdot \mat{R} \bmod q$, where $\mat{V} \in \ring^{K \times t}$ is the Vandermonde matrix with the evaluation points as entries.%
\footnote{Our security proof will not directly rely on the security of secret-sharing, but via analysing the structure of the Vandermonde matrix.}
The norm growth issue can be handled by carefully chosen evaluation points, similar to other recent works \cite{AC:CinLaiWoo25,C:KLLPS26} (see \cref{sec:overview-analysis} ``optimisations'').
\begin{remark}[Distributed key generation]
    Removing the trusted-dealer assumption requires generating the underlying FHE key material, our key-switching key, and the secret shares in a distributed manner.
    Generic secure MPC can in principle implement $\kgen$, but an efficient distributed key generation protocol must address both distributed key generation for the underlying FHE scheme and joint sampling of the Gaussian error in the key-switching key. We leave its design and evaluation to future work.
\end{remark}

\medskip\noindent\textbf{Encryption and Homomorphic Evaluation.}
The encryption and homomorphic evaluation algorithms are almost exactly those of $\FHE$, except that each ciphertext $\ctxt = (\ctxt_\FHE, \rtag)$ is accompanied by an independently uniformly random tag $\rtag \sample \ring_q^h$, which is generated during encryption and refreshed every time a homomorphic operation is performed. 
This seemingly useless tag injects additional randomness to a ciphertext and is crucial for proving the strong security property in \cref{sec:overview-security-model}.
In practice, $\rtag$ could e.g.~be derived by hashing $\ctxt_\FHE$ with a hash function (fixed in the public parameters) that can reasonably be modelled as a random oracle (cf.~\cref{rem:alt-security-def,rem:pseudorandom-tag}).
We note again that fresh FHE ciphertexts $\ctxt_\FHE$ and those undergoing homomorphic evaluations may be over rings different from $\ring_q$, 
as long as they can be transformed to take the form $(\hat{\vec{c}}_0, \hat{c}_1) \in \ring_q^{\hat{n}} \times \ring_q$ 
by the procedure $\PreDecrypt$.

\medskip\noindent\textbf{Partial Decryption.}
For a ciphertext $\ctxt = (\ctxt_\FHE, \rtag)$, 
any party $k$ who wants to contribute its partial decryption first runs the deterministic procedure $\PreDecrypt$ to obtain $(\hat{\vec{c}}_0, \hat{c}_1) \gets \PreDecrypt(\ctxt_\FHE)$, 
then performs (part of) a key-switching operation to obtain%
\footnote{\label{fot:keygen}In practice, the computation of $\vecc_0$ can alternatively be done by a designated party, e.g.~whoever is performing the last homomorphic evaluation step on the ciphertext.}
\begin{align}\label{eq:overview-c_0}
    \vec{c}_0 \coloneqq \mat{A} \cdot \mat{G}^{-1} \begin{pmatrix} \hat{\vec{c}}_0 \\ \rtag \end{pmatrix} \bmod q,
\end{align}
where $\matG^{-1}(\cdot)$ is the ``gadget-decomposition'' operation \cite{EC:MicPei12}. 
Each party $k$ then releases the partial decryption 
\[
    \pd_k = \vec{s}_k^\tr \cdot \vec{c}_0 + e_k \bmod q
\] 
where $e_k \sample \ring$ is a fresh masking noise.

\begin{remark}
    In applications where ciphertexts are processed by a dedicated entity, e.g.~a server, before they are partial-decrypted, 
    $\PreDecrypt$ can be run only once by this entity, instead of multiple times by individual partial decryptors. 
\end{remark}

\medskip\noindent\textbf{Combining.}
After collecting $t$ partial decryptions $(\pd_k)_{k \in T}$ from a set $T \subseteq_t [K]$ of parties, the ciphertext can be fully decrypted by a public share combination procedure. 
First, the combiner completes the key-switching operation, i.e.~computes $c_1 \coloneqq \vec{b}^\transpose \cdot \mat{G}^{-1} \begin{pmatrix} \hat{\vec{c}}_0 \\ \rtag \end{pmatrix} \bmod q$.
Next, it combines the set of partial decryptions $(\pd_k)_{k \in T}$ into
\[
    \pd \coloneqq \vec{v}_0^\transpose \cdot \mat{V}_T^{-1} \cdot (\pd_k)_{k \in T} \bmod q,
\]
where $\matV_T$ is the submatrix of $\matV$ corresponding to the set $T$ of parties.
Equivalently, $\vecv_0^\tr \cdot \matV_T^{-1}$ are the recovery coefficients of the secret-sharing.
Then, it removes the encryption in the threshold layer by computing
 \ifstyle{2}{\begin{equation*}
\hat{\mu} \coloneqq \round{ \hat{c}_1\! -\! (c_1\! - \!\pd) } \bmod q
    \quad\text{then} \quad 
    \mu \coloneqq \round{ \frac{p}{q}\! \cdot\! \hat{\mu} } \bmod p.
\end{equation*}}{
\begin{align*}
    \hat{\mu} \coloneqq \round{ \hat{c}_1 - (c_1 - \pd) } \bmod q
    && \text{then} && 
    \mu \coloneqq \round{ \frac{p}{q} \cdot \hat{\mu} } \bmod p.
\end{align*}}

\subsection{Analysis}\label{sec:overview-analysis}

\ifstyle{2}{\subsubsection{Main parameter constraint}}{\subsubsection{Main parameter constraint.}}
In our transformation, each partial decryptor $k$ adds a moderate amount of noise $e_k$ to the partial decryption $\pd_k$. 
On the one hand, for security, this noise $e_k$ must be large enough to mask out the FHE decryption noise $\hat{e}$.
This ultimately allows us to argue the security of the resulting threshold FHE scheme based on the IND-CPA-with-Leakage security of the underlying FHE scheme.
On the other hand, for correctness, the modulus-to-noise ratio $q/B$ of the underlying FHE scheme must be large enough to accommodate the extra masking noise $e_k$ in addition to the existing FHE decryption noise $\hat{e}$.

\ifstyle{2}{\subsubsection{Security model}}{\subsubsection{Security model.}}\label{sec:overview-security-model}
Following recent works which studied and refined tPKE security \cite{AC:CinLaiWoo25,PKC:BrzKloWoo26}, we define an analogously refined IND-CPA security model for threshold FHE (\cref{def:tFHE-security}). For a threshold $t$, our model supports corrupting a subset of arbitrary $\leq t-1$ many users, and allows an adversary $\adv$ to access limited partial decryptions even for ciphertexts that it is unauthorised to decrypt, which captures security under thresholdisation.  
In more detail, our IND-CPA model is as follows:
\begin{itemize}
    \item The adversary $\adv$ statically corrupts a subset $\cdv$ of size $\leq t-1$ (instead of exactly $t-1$) users.
    \item $\adv$ is given the public key and the secret keys of all corrupt users.
    \item $\adv$ has access to a challenge oracle $\Chal\oracle$, an encryption oracle $\enc\oracle$, an evaluation oracle $\eval\oracle$, and a partial decryption oracle $\ParDec\oracle$, which do the following:
    \begin{itemize}
        \item $\Chal\oracle$ encrypts one of the two challenge messages $\mu_0,\mu_1$ chosen by $\adv$.  
        \item $\enc\oracle$ encrypts any message $\mu$ chosen by $\adv$.
        \item $\Eval\oracle$ allows $\adv$ to instruct any homomorphic evaluation on any set of ciphertexts that have been honestly generated (and possibly evaluated) by the experiment.
        \item $\ParDec\oracle$ adaptively answers partial decryption queries of any user on any honestly generated (and possibly evaluated) ciphertext, with a condition: for each ciphertext whose underlying plaintext is different in games $0$ and $1$,
        the number of partial decryption queries is less than $t - |\cdv|$. 
    \end{itemize}
\end{itemize}
Relative to prior works \cite{C:BGGJKR18,AC:BouSch23,AC:OkaTak25,EPRINT:BFMMST25,C:KLLPS26}, the main distinction lies in $\ParDec\oracle$, where we model security even when giving $<t-|\cdv|$ (instead of $0$) partial decryptions for each ``challenge ciphertext''%
\footnote{In the simpler case of threshold PKE, one may easily classify ciphertexts from $\Chal\oracle$ as ``challenges'', which an adversary is unauthorised to decrypt, and those from $\enc\oracle$ as ``non-challenges'', see e.g.~\cite{PKC:DLNPY21,AC:CinLaiWoo25,PKC:BrzKloWoo26}. For threshold FHE, because of the freedom of homomorphic evaluation, alternative classification is needed. In our model, any (possibly evaluated) ciphertext is considered ``challenge'' if the underlying plaintext is different in games $0$ and $1$, else it is ``non-challenge''. This aspect is consistent with the model of \cite{AC:BouSch23}.} to $\adv$.

\ifstyle{2}{\subsubsection{Security analysis}}{\subsubsection{Security analysis.}}
We show (\cref{thm:tfhe-security}) that under appropriate parameter constraints, if the underlying FHE has linear decryption and is IND-CPA-with-Leakage secure, and if the standard LWE assumption holds, then our threshold FHE transformation satisfies the above IND-CPA security. Since IND-CPA-with-Leakage is implied by plain IND-CPA up to appropriate masking noise (\cref{thm:leakage-security}), overall, our threshold FHE can be based on any IND-CPA secure FHE with linear decryption and the LWE assumption\fullversion{ (\cref{cor:tFHE-from-FHE})}{}. 

Our security proof goes via a hybrid argument consisting of two main parts, corresponding to the ``threshold layer'' and the ``FHE layer'' of the construction. 

\medskip\noindent\textbf{Threshold Layer.}
The first part \fullversion{($\hyb_{b,0} \approx_c \hyb_{b,1}$ via \cref{lem:thLWE-hybrid-step} and $\hyb_{b,1} \equiv \hyb_{b,2}$ via \cref{lem:H1-H2})}{($\hyb_{b,0} \approx_c \hyb_{b,1}$ and $\hyb_{b,1} \equiv \hyb_{b,2}$)} unwraps the threshold layer.
We hop from the real experiment, where the challenge oracle $\Chal\oracle$ encrypts message $\mu_b$, to a hybrid experiment with the following main changes:
\begin{itemize}
    \item The vector $\vecb$ in the public key which encodes the FHE secret key $\hatvec{s}$ (ref. \cref{eq:overview-vecb}) is swapped to uniformly random. In exchange, the correlation between $\vecb$ and all partial decryptions $\pd_k$'s (which arises naturally from decryption correctness) is embedded in the $\pd_k$'s. 
    \item For ciphertexts whose underlying plaintexts differ in games $0$ and $1$ (i.e.~the ``challenges''), their partial decryptions are swapped to uniformly random. 
\end{itemize} 
The proof ideas for this part borrow those in \cite{AC:CinLaiWoo25} established for tPKE: 
The first change is by carefully manipulating the underlying linear correlations and invoking LWE along the way; 
the second change relies on the algebraic structure of the Vandermonde matrix, together with the leverage that only $\leq t-|\cdv|$ partial decryptions of each ``challenge'' are handed to the adversary. 

Relative to \cite{AC:CinLaiWoo25}, however, we need to handle a new complication: 
During partial decryption, the component $(\mat{A}, \vec{b})$ is multiplied with $\vecx \coloneqq \mat{G}^{-1} \begin{pmatrix} \hat{\vec{c}}_0 \\ \rtag \end{pmatrix} \bmod q$ (\cref{eq:overview-c_0}), where the top part $\mat{G}^{-1}(\hat{\vec{c}}_0)$ is (the encoding of) an arbitrary honestly generated (and possibly evaluated) FHE ciphertext. Since we do not assume any specific FHE scheme, the experiment has no guarantee on the distribution of $\hatvec{c}_0$. Yet, it needs to adaptively answer partial decryption queries on these -- possibly adversarially influenced -- ciphertexts.
This is in contrast to the tPKE setting of \cite{AC:CinLaiWoo25} where $\vecx$ is simply the encryption randomness which the security experiment has full control over.

We tackle the above issue as follows. First, we utilise the random tag $\vect$ appended to the threshold FHE ciphertext, which serves as a rerandomisation (in the form of $\matG^{-1}(\vect)$), allowing us to invoke standard LWE w.r.t. $\matG^{-1}(\vect)$ when analysing the modified partial decryptions.
Second, to handle the adaptivity issue, we rely on standard LWE and a super-polynomial masking error, with which we argue the distribution of partial decryptions in an information-theoretic way.
The formal security proof, which we provide in \fullversion{the appendices}{\thefullversion}, involves reestablishing the counterparts of all lemmas in \cite{AC:CinLaiWoo25} with the above modifications incorporated.

\medskip\noindent\textbf{FHE Layer.}
The second part handles partial decryptions for the ciphertexts which an adversary is authorised to fully decrypt and recover the plaintexts. 
For this, we further rewrite these partial decryptions, this time relying on the linear decryption (\cref{eq:overview-linear-dec}) of the underlying FHE \fullversion{($\hyb_{b,2} \equiv \hyb_{b,3}$ via \cref{lem:H2-H3})}{($\hyb_{b,2} \equiv \hyb_{b,3}$)}, such that they can be simulated given the leakages from the IND-CPA-with-Leakage experiment of the FHE \fullversion{($\hyb_{0,3} \approx_c \hyb_{1,3}$ via \cref{lem:H3-H4})}{($\hyb_{0,3} \approx_c \hyb_{1,3}$)}.

\medskip\noindent\textbf{Optimisations.}
As in \cite{EPRINT:BFMMST25,AC:CinLaiWoo25,C:KLLPS26}, to support $t$-out-of-$K$ decryption, our transformation utilises $\xi$-subtractive sets \cite{C:AlbLai21,AC:KLNO24} of cardinality $K$, where $\xi = \xi(t): \NN \to \ring$ is a slack function. 
In brief, if $\Xi \subset \ring$ is a $\xi$-subtractive set, then for any $t$-subset $T \subseteq \Xi$ which defines a Vandermonde matrix $\mat{V}_T$, it holds that $\xi(t) \cdot \mat{V}_T^{-1}$ is integral, i.e.~has entries in $\ring$.  
This allows choosing the modulus $q$ such that $\xi(t) \cdot \mat{V}_T^{-1}$ has low norm relative to $q$.
Since the norm of $\xi(t) \cdot \mat{V}_T^{-1}$ is empirically only moderate-sized and especially small with small thresholds, 
this means the modulus $q$ can also be kept practically small, leading to the concrete efficiency of our threshold FHE.
To further compress partial decryption size, we also adopt the standard bit-dropping technique, where the errornous low-order bits are dropped.

Finally, in our formal security proof, we use an intermediate computational assumption called Leaky LWE to modularise proof steps and ease presentation. In more detail, \cite{CiC:LaiSwaWoo25} and \cite{EPRINT:HKMNS26} gave reductions from standard LWE to the semi-adaptive and fully-adaptive versions of Leaky LWE, respectively, both in the polynomial modulus-to-noise ratio regime. The security proof of \ourwork \fullversion{(\cref{app:core2,app:core1-alt}) }{}assumes fully-adaptive Leaky LWE (implied by LWE) and uses proof techniques adapted from the semi-adaptive reduction of \cite{CiC:LaiSwaWoo25} in a non-black-box way.


\section{Preliminaries}\label{sec:prelim}

Let $\secpar \in \NN$ and $\delta \in \NN$ denote the computational and statistical security parameters respectively. 
For any $n \in \NN$, we write $\poly[n]$ and $\negl[n]$ for the sets of all polynomials and negligible functions in $n$, respectively. 
For $k,n \in \NN$, $k \leq n$, we write $[n]$ for $\set{1,\ldots,n}$ and $[k,n]$ for $\set{k,\ldots,n}$.
If $S$ is a set, we write $x \sample S$ for sampling a uniformly random element from $S$.
If $\ddv$ is a distribution over $S$, denoted as $\ddv \sim S$, we write $x \sample \ddv$ for sampling a random element from $S$ according to the distribution $\ddv$. 
We use bold capital and lower-case letters, e.g.~$\mat{A}$ and $\vec{b}$, to denote matrices and vectors, respectively. 

Let $\mathcal{K} = \QQ(\zeta)$ be a cyclotomic field of conductor $\conductor$ and degree $\varphi = \varphi(\conductor)$ and $\ring = \ZZ[\zeta]$ its ring of integers. 
For $q \in \NN$, define the quotient ring $\ring_q \coloneqq \ring / q\ring$. 
We denote by $\ring^\times$ and $\ring^\times_q$ the sets of units in $\ring$ and $\ring_q$ respectively. 

For any $x \in \RR$, we write $\round{x}$ for the nearest integer to $x$ with an arbitrary fixed tie-breaking rule. 
The fractional part of $x$ is defined as $\fracpart{x} \coloneqq x - \round{x}$. 
This notation is extended naturally to vectors of real numbers.
For $x \in \ZZ$ and $z \in \NN$, we write $\round{x}_{2^z} \coloneqq 2^z \cdot \round{x/2^z}$ for $x$ rounded to the nearest multiple of $2^z$.
For $x \in \ZZ_q$, $\round{x}_{2^z}$ is computed on the representative of $x$ in $[-q/2, q/2)$, i.e.~the one with the smallest absolute value, and the result is read again in $\ZZ_q$.
The notation extends coefficient-wise to ring elements and vectors.
We abuse the notation of modular arithmetic so that it expresses two meanings depending on the context: e.g.~
$a = b \bmod q$ means $a$ and $b$ are congruent modulo $q$ but $a \bmod q = b$ means $b$ is the smallest (in absolute value) of all integers congruent to $a$ modulo $q$.
Relatedly, $\frac{p}{q} \cdot (x \bmod q)$ means multiplying the smallest representative of $x \bmod q$ by $p/q$.
Note that $\fracpartfrac{q}{p} = \frac{q \bmod p}{p}$.

\medskip
\noindent\textbf{Embeddings and norms.} 
We write $\psi: \field \to \QQ^\varphi$ and $\sigma = (\sigma_i)_{i \in \ZZ_\conductor^\times}: \field \to \CC^\varphi$ for the coefficient (wrt.~the standard power basis) and canonical embeddings of $\field$, respectively.
For $x \in \field$, its $\ell_p$-norm is measured via its coefficient embedding, i.e.~$\norm{x}_p \coloneqq \norm{\psi(x)}_p$, and $p = \infty$ when omitted. 
If $\field$ is a power-of-two field, then it is known that $\norm{\psi(x)} \leq \norm{\sigma(x)}$ and $\norm{\sigma(x)}_2 = \sqrt{\varphi} \cdot \norm{\psi(x)}_2$.
The field expansion factor of $\field$ w.r.t. the coefficient $\ell_p$-norm is defined as 
$\expansion_p \coloneqq \max_{a,b\in \field} \norm{a\cdot b}_p/(\norm{a}_p\cdot\norm{b}_p)$.

For a vector $\vec{x} \in \field^m$, the embedding is extended component-wise, e.g.~$\psi(\vec{x}) = (\psi(x_i))_{i \in [m]}$, and 
$\norm{\vecx}_p \coloneqq \norm{\psi(\vecx)}_p$, and $p = \infty$ when omitted.
For a matrix $\matM \in \field^{n \times m}$, we write $\Psi(\matM) \in \QQ^{\varphi n \times \varphi m}$ for the matrix of the $\QQ$-linear map $\vecx \mapsto \matM \vecx$ from $\field^m$ to $\field^n$ in the coefficient embedding, i.e.~$\psi(\matM \vecx) = \Psi(\matM) \cdot \psi(\vecx)$ for all $\vecx \in \field^m$.

For a matrix $\mat{M} = (\vecm_1,\ldots,\vecm_m) \in \field^{n \times m}$, its $\ell_p$-induced-norm is
$\norm{\mat{M}}_{p,p} \coloneqq \sup_{\vec{x} \in \field^m} \norm{\mat{M} \cdot \vec{x}}_p/\norm{\vec{x}}_p$.
Consequently, $\norm{\mat{M}\vec{x}}_{p} \leq \norm{\mat{M}}_{p,p} \cdot \norm{\vec{x}}_p$ and $\norm{\mat{M}\mat{N}}_{p,p} \leq \norm{\mat{M}}_{p,p} \cdot \norm{\mat{N}}_{p,p}$.
Similarly, by default we write $\norm{\matM}$ for $\norm{\matM}_{\infty,\infty} = \norm{\Psi(\matM)}_{\infty,\infty}$.

For a real or complex matrix $\matB$, we write $\smax(\matB)$ for its largest singular value, i.e.~the square root of the largest eigenvalue of $\matB^* \matB$, where $\matB^*$ is the conjugate transpose of $\matB$, and we define its $\ell_2$-induced norm as $\norm{\matB}_{2,2} \coloneqq \smax(\matB)$.
By \cite[Lemma~2]{CiC:LaiSwaWoo25}, for any $\matM \in \field^{n \times m}$, it holds that $\norm{\matM}_{2,2} = \norm{\Psi(\matM)}_{2,2} = \smax(\Psi(\matM))$.
We note that matrices over $\field$ are not identified as complex matrices: a complex matrix arises from $\matM$ only as $\sigma_i(\matM)$ for some $i \in \ZZ_\conductor^\times$, obtained by applying the $i$-th component $\sigma_i$ of the canonical embedding $\sigma$ entrywise.
In particular, $\norm{\matM}_{2,2}$ is in general not $\smax(\sigma_i(\matM))$ for a fixed $i$\fullversion{ (cf.~\cref{lem:induced-norm-embeddings})}{}.

\medskip
\noindent\textbf{Subtractive sets.}
We rely on the notion of subtractive sets, 
proposed by \cite{C:AlbLai21}, further studied by e.g.~\cite{AC:KLNO24}, and used by e.g.~\cite{AC:CinLaiWoo25} for constructing threshold PKE.

\begin{definition}[Subtractive Sets]\label{def:subtractive-sets}
    For a set $\Xi = \{ \mu_1, \ldots, \mu_K \} \subseteq \ring$, any $t \leq K$, and any subset $T \subseteq \set{0} \cup \Xi$,
    let $\mat{V}_T$ (with implicit dependency on $t$) denote the (row-style) $T \times t$ Vandermonde matrix, 
    i.e.~its rows are indexed by $T$ and row $\mu \in T$ is given by $(1,\mu,\ldots,\mu^{t-1})$.

    Let $\slack = \slack(t): [K] \to \ring$. 
    The set $\Xi$ is said to be $\slack$-subtractive \cite{C:AlbLai21} if for any $t \in [K]$ and any size-$t$ subset $T \subseteq \Xi$, 
    it holds that $\slack \cdot \mat{V}_T^{-1} \in \ring^{t \times t}$.

    Consider $\Xi$ with $0 \notin \Xi$. Denote by $\vec{v}_0^\transpose \coloneqq \mat{V}_{\set{0}} = (1, 0,\ldots, 0)$ the first unit vector. We measure the quality of $\Xi$ by the following ``quality measurements'':
    \fullversion{
        \begin{align}\label{eq:subtractive-set-qualities}
        \begin{aligned}
            \rho(t) \coloneqq~& \max_{T \subseteq_t \Xi} \norm{\slack \cdot \vec{v}_0^\transpose \cdot \mat{V}_T^{-1}}, &
            \gamma(t) \coloneqq~& \max_{T^* \subseteq_{t-1} \Xi} \norm{\slack \cdot \mat{V}_{\Xi} \cdot \mat{V}_{\{0\} \cup T^*}^{-1}}_{2,2}, \\
            \theta(t) \coloneqq~& \max_{T^* \subseteq_{t-1} \Xi} \norm{\slack \cdot \mat{V}_{\Xi\setminus T^*} \cdot \mat{V}_{\{0\} \cup T^*}^{-1} \cdot \vecv_0},
        \end{aligned}
    \end{align}
    }{
    \begin{align}\label{eq:subtractive-set-qualities}
        \begin{split}
            \rho(t) \coloneqq~& \max_{T \subseteq_t \Xi} \norm{\slack \cdot \vec{v}_0^\transpose \cdot \mat{V}_T^{-1}},\\
            \gamma(t) \coloneqq~& \max_{T^* \subseteq_{t-1} \Xi} \norm{\slack \cdot \mat{V}_{\Xi} \cdot \mat{V}_{\{0\} \cup T^*}^{-1}}_{2,2}, \\
            \theta(t) \coloneqq~& \max_{T^* \subseteq_{t-1} \Xi} \norm{\slack \cdot \mat{V}_{\Xi\setminus T^*} \cdot \mat{V}_{\{0\} \cup T^*}^{-1} \cdot \vecv_0},
        \end{split}
    \end{align}}
    where the top row of $\mat{V}_{\{0\} \cup T^*}$ corresponds to the index $0$.
    Here $\rho$ and $\theta$ use the default coefficient $\ell_\infty$-norm, whereas $\gamma$ uses the $\ell_2$-induced operator norm $\norm{\cdot}_{2,2}$.
\end{definition}%

The Vandermonde matrix $\mat{V}_\Xi$ defined above 
is equivalent to the share-generating matrix of the $t$-out-of-$K$ Shamir's secret sharing with evaluation points in $\Xi$. The subtractive property ensures that
multiplying $\mat{V}_T^{-1}$ by $\xi$ clears its denominators
for every size-$t$ subset $T\subseteq \Xi$.

The quality measurements provide norm bounds for various scaled interpolation maps and associated coefficient vectors.
These bounds control noise growth and the parameters required by our later analysis.
The quantity $\rho(t)$ bounds the norm of the $\xi$-scaled interpolation coefficients used to reconstruct the value at zero from any $t$ shares.
It therefore controls the noise growth when $\Dec$ combines partial decryptions, and hence correctness (\cref{thm:correctness}).
The quantity $\gamma(t)$ bounds the growth in size when interpolating from the value at zero and the evaluations at $t-1$ points in $\Xi$ to the evaluations at all points in $\Xi$, scaled by $\xi$.
It appears in the conditions of the security theorem (\cref{thm:tfhe-security}).
Finally, fix the evaluations at any $t-1$ points in $\Xi$.
Each remaining evaluation, scaled by $\xi$, can be written as an interpolation coefficient times the value at zero, plus a term determined by the fixed evaluations.
The quantity $\theta(t)$ bounds the sizes of these interpolation coefficients.
It determines how wide the flooding noise $\chi$ must be to hide the FHE decryption noise (cf. \fullversion{\cref{cor:tFHE-from-FHE}}{\thefullversion}).
We always have $\theta(t) \leq \gamma(t)$.
Upper bounds on these quantities for power-of-two fields are given in \fullversion{\cref{lem:xi-subtactive-trig-spaced} in \cref{sec:add-prelim}}{\thefullversion}.

\medskip
\noindent\textbf{Discrete Gaussians.}
We denote by $\DD_{\ring,\chi}^m$ the (centred) discrete Gaussian distribution over $\ring^m$ with parameter $\chi$, i.e.~the
distribution over $\ring^m$ where for all $\vec{x} \in \ring^m$, the probability mass function $\DD_{\ring,\chi}^m(\vec{x}) \propto \rho_\chi(\vec{x})$ where $\rho_\chi(\vec{x}) \coloneqq e^{-\pi\norm{\vec{x}}_2^2/\chi^2}$ is the Gaussian function of width $\chi$, extended to a set $S$ by $\rho_\chi(S) \coloneqq \sum_{\vec{x} \in S} \rho_\chi(\vec{x})$.
More generally, for a positive definite $\Sigma \in \RR^{\varphi m \times \varphi m}$ we denote by $\DD_{\ring,\sqrt{\Sigma}}^m$ the discrete Gaussian over $\ring^m$ with covariance $\Sigma$, i.e.~the distribution with probability mass function proportional to $e^{-\pi \cdot \psi(\vec{x})^\transpose \Sigma^{-1} \psi(\vec{x})}$, which for $\Sigma = \chi^2 \cdot \mat{I}_{\varphi m}$ is $\DD_{\ring,\chi}^m$.
We recall the smoothing parameter~\cite{FOCS:MicReg04}: for a full-rank lattice $\Lambda$ and $\epsilon > 0$, $\eta_\epsilon(\Lambda)$ is the smallest $\chi > 0$ such that $\rho_{1/\chi}(\Lambda^* \setminus \set{\vec{0}}) \leq \epsilon$, where $\Lambda^*$ is the dual lattice. Throughout, $\eta_\epsilon(\ring^m)$ denotes the smoothing parameter of $\ring^m$ identified with $\ZZ^{\varphi m}$ via the coefficient embedding.

\medskip
\noindent\textbf{Lattice assumption.}
We recall the LWE assumption with the matrix $\mat{A}$ sampled from a
specified, not necessarily uniform distribution $\ddvA$.
This formulation accommodates both instances used in our security proof:
uniform matrices and matrices whose columns are gadget decompositions
of uniformly sampled vectors.
Their hardness follows from the usual LWE assumption with uniform
matrices, using \cite[Theorem 4.3]{C:BLMR13} for the latter.

\begin{definition}[$\LWE{}$]\label{def:LWE}
    Let 
    \(\params = (\ring,n,m,t,q,\chi,\ddvA)\) 
    be parametrised by $\secpar$.
    The (decision) $\LWE{\params}$ assumption states that 
    for any PPT adversary $\adv$,
    it holds that $\Adv_{\params,\adv}^{\LWE{}}(\secpar) \coloneqq$
    \begin{align*}
        \abs{
        \ifstyle{2}{%
        \begin{aligned}
            &\condprob{
                \adv(\mat{A}, \mat{B}) = 1
            }{
                \begin{aligned}
                     &\mat{A} \gets \ddvA\\ 
                     &\mat{R} \sample \ring_q^{t \times n} ;~ \matE \sample \DD_{\ring,\chi}^{t \times m} \\
                     &\mat{B} \coloneqq \matR \cdot \mat{A} + \matE \bmod q 
                \end{aligned}
            } \\
            -~& 
            \condprob{
                \adv(\mat{A}, \mat{B}) = 1
            }{
                \begin{aligned}
                     &\mat{A} \gets \ddvA \\
                     &\mat{B} \sample \ring_q^{t \times m} 
                \end{aligned}
            }
        \end{aligned}
        }{
            \condprob{
                \adv(\mat{A}, \mat{B}) = 1
            }{
                \begin{aligned}
                     &\mat{A} \gets \ddvA\\ 
                     &\mat{R} \sample \ring_q^{t \times n} ;~ \matE \sample \DD_{\ring,\chi}^{t \times m} \\
                     &\mat{B} \coloneqq \matR \cdot \mat{A} + \matE \bmod q 
                \end{aligned}
            }
            - 
            \condprob{
                \adv(\mat{A}, \mat{B}) = 1
            }{
                \begin{aligned}
                     &\mat{A} \gets \ddvA \\
                     &\mat{B} \sample \ring_q^{t \times m} 
                \end{aligned}
            }
        }
        }
        \leq \negl.
    \end{align*}
    The parameter $t$ is suppressed if $t=1$. 
    The parameter $\ddvA$ is suppressed if it is the uniform distribution over $\ring_q^{n \times m}$. 
    We write $m = \poly$ for the version where $\adv$ is given unbounded polynomially many LWE samples, in which case $\ddvA$ is the distribution per sample which is over $\ring_q^n$. 
\end{definition}

\medskip
\noindent\textbf{Gadget Matrix.}
For any $\beta,q \in \NN$ with $\beta \leq q$, we write $\vecg_{\beta,q}^\tr \coloneqq (\beta^0,\beta^1,\ldots,\beta^{\ceil{\log_\beta q}-1})$ for the gadget vector
and $\matG_{\beta,q} \coloneqq \matI_n \otimes \vecg_{\beta,q}^\tr$ for the gadget matrix respectively \cite{EC:MicPei12}. 
We denote by $\matG_{\beta,q}^{-1}(\cdot)$ the corresponding balanced $\beta$-ary decomposition operation, 
i.e.~it holds $\matG_{\beta,q} \cdot \matG_{\beta,q}^{-1}(\vecx) = \vecx$ and $\norm{\matG_{\beta,q}^{-1}(\vecx)} \leq \beta/2$ for any $\vecx \in \ring_q^n$.

\medskip
\noindent\textbf{Fully-Homomorphic Encryption.}
A fully-homomorphic encryption (FHE) for a message space $\cM$ 
consists of PPT algorithms $(\setup, \kgen, \enc, \Eval, \dec)$ with the syntax: 

\begin{description}
	\item $\pubparam\gets\setup(\secparam)$: The setup algorithm generates the public parameters $\pubparam$.
    \item $\tuple{\pk, \sk} \gets \kgen(\pubparam)$: 
    The key generation algorithm, on input the public parameters $\pubparam$, 
    generates the public and secret keys $(\pk,\sk)$.
    \item $\ctxt \gets \enc(\pk, \mu)$: 
    The encryption algorithm encrypts a message $\mu \in \cM$ w.r.t. the public key $\pk$.
    \item $\ctxt' \gets \Eval(\pk,f,(\ctxt_i)_{i \in I})$: 
    The evaluation algorithm takes the public key $\pk$, a function $f$, and a set of ciphertexts $(\ctxt_i)_{i \in I}$ indexed by $I$, and returns an evaluated ciphertext $\ctxt'$.
    \item $\mu' \gets \dec\tuple{\sk,\ctxt}$:
    The decryption algorithm, on input a secret key $\sk$ and ciphertext $\ctxt$, outputs some $\mu'$.
\end{description}

Because of the homomorphic nature of FHE, defining the correctness requirement is not straightforward and 
multiple proposals have appeared in the literature (e.g.~\cite{CiC:BCFPPR25,EC:BJSW25,CiC:ABMP25}). 
We state a version that follows the spirit of that in \cite{CiC:ABMP25},
which formalises correctness as an experiment so as to capture the adaptiveness of an adversary in choosing ciphertext evaluations,
except that we further allow the flexibility to decrypt unevaluated ciphertexts.

\begin{definition}[FHE Correctness]\label{def:FHE-correctness}
    An FHE $\FHE$ is said to have correctness error $\varepsilon$, if for any PPT $\adv$, it holds that
    \fullversion{\begin{align*}
        &\condprob{
            \dec\tuple{\sk,\ctxt} = \mu
        }{
            (\sk,\ctxt,\mu) \gets \mathsf{Correct}_{\FHE,\adv}(\secparam) 
        } 
        \geq 1 - \varepsilon(\secpar),
    \end{align*}}{\begin{align*}
        &\condprob{
            \dec\tuple{\sk,\ctxt} = \mu
        }{
            (\sk,\ctxt,\mu) \gets \mathsf{Correct}_{\FHE,\adv}(\secparam) 
        } \\
        &\geq 1 - \varepsilon(\secpar),
    \end{align*}}
    where $\mathsf{Correct}_{\FHE,\adv}$ is defined in \cref{fig:FHE-correctness}.
    Furthermore, we say that $\FHE$ is correct if $\varepsilon$ is negligible in $\secpar$.
\end{definition}


\begin{figure*}[t]
    \begin{pchstack}[center,boxed]
        \adjustbox{max width=\dimexpr\textwidth-2\fboxsep-2\fboxrule\relax}%
        {
            \procedure[]{$\mathsf{Correct}_{\FHE,\adv}(\secparam)$}
            {
                \Lquery \coloneqq \emptyset \\
                \pubparam\gets\setup(\secparam) \\
                (\pk, \sk) \gets \kgen(\pubparam) \\
                \id \gets \adv^{\enc\oracle,\Eval\oracle}(\pubparam,\pk,\sk) \\
                \pcparse (\ctxt,\mu) \gets \Lquery[\id] \\
                \pcreturn (\sk,\ctxt,\mu)
            }
            \pchspace
            \procedure[]{$\mathsf{Correct}_{\tFHE,\adv}(\secparam)$}
            {
                \Lquery \coloneqq \emptyset \\
                \pubparam\gets\setup(\secparam) \\
                (\pk, (\sk_k)_{k \in [K]}) \gets \kgen(\pubparam) \\
                (T, \id) \gets \adv^{\enc\oracle,\Eval\oracle}(\pubparam,\pk,(\sk_k)_{k \in [K]}) \\
                \pcparse (\ctxt,\mu) \gets \Lquery[\id] \\
                \pd_k \gets \ParDec(\pk, \sk_k, \ctxt) ~~\forall k \in T \\
                \pcif \mu \neq \Dec(T, (\pd_k)_{k \in T}, \ctxt) \\ 
                \pcind \pcthen \pcreturn 1 \\
                \pcelse \pcreturn 0
            }
            \pchspace
            \begin{pcvstack}            
                \procedure[]{$\enc\oracle(\id,\mu \in \cM)$}
                {
                    \ctxt \gets \enc(\pk,\mu) \\
                    \Lquery[\id] \coloneqq (\ctxt,\mu) 
                }
                \pcvspace
                \procedure[]{$\Eval\oracle(\id,f,(\id_i)_{i\in I})$}
                {
                    \pcfor i \in I : \\
                        \pcind \pcassert \Lquery[\id_i] \neq \emptyset \\
                        \pcind \pcparse (\ctxt_i,\mu_{i}) \gets \Lquery[\id_i] \\ 
                    \ctxt' \gets \Eval(\pk, f, (\ctxt_i)_{i \in I}) \\
                    \Lquery[\id] \coloneqq (\ctxt', f((\mu_{i})_{i \in I})) 
                }
            \end{pcvstack}
        }
        \end{pchstack}
    \caption{Correctness experiment for FHE and threshold FHE (\cref{def:FHE-correctness,def:tFHE-correctness}).} 
    \label{fig:FHE-correctness} 
\end{figure*}



\section{Threshold FHE Construction}\label{sec:Plivi-FHE}

\subsection{Requirements on Underlying FHE}\label{sec:FHE-requirements}

\begin{definition}[FHE with Linear Decryption]\label{def:linear-FHE-fine-grained}
  Let $\xi \in \ring$ be coprime to $q$. An FHE scheme $\FHE$ has $(\ring,\hat{n},p,q,\xi,B)$-linear-decryption with error $\varepsilon$, if the following hold:
  \begin{enumerate} 
    \item The secret key space is $\ring_{q}^{\hat{n}}$, i.e.~a secret key is a vector $\sk = \hat{\vec{s}} \in \ring_{q}^{\hat{n}}$.
    \item There exists a deterministic $\PreDecrypt$ algorithm, which inputs a ciphertext $\ctxt$ and outputs
      $(\hatvec{c}_0,\hat{c}_1) \in \ring_q^{\hat{n}} \times \ring_q \gets \PreDecrypt(\ctxt)$.
    \item For any PPT $\adv$, it holds that 
    \[
      \adjustbox{max width=\linewidth}{$\displaystyle
        \condprob{
          \begin{aligned}
            &\hat{c}_1-\hat{\vec{s}}^\tr \cdot \hat{\vecc}_0 \bmod q \\
            =~& \xi^{-2} \cdot \round{q/p} \cdot \mu + \hat{e} \\
            \land~& \norm{\hat{e}}\leq B
          \end{aligned}
        }{
          \begin{aligned}
            & (\sk,\ctxt,\mu) \gets \mathsf{Correct}_{\FHE,\adv}(\secparam) \\
            & \pcparse \sk = \hatvec{s} \\
            & (\hatvec{c}_0,\hat{c}_1) \gets \PreDecrypt(\ctxt)
          \end{aligned}
        }
      $}
    \]
    $\geq 1 - \varepsilon(\secpar)$,
    where $\mathsf{Correct}_{\FHE,\adv}$ is defined in \cref{fig:FHE-correctness}. 
  \end{enumerate}
  If $\varepsilon(\lambda)$ is negligible, we simply say that $\FHE$ has $(\ring,\hat{n},p,q,\xi,B)$-linear-decryption.
\end{definition}

We will also require the FHE scheme to satisfy a security notion that we call ``IND-CPA-with-leakage''.
This notion says that, for an FHE scheme with linear decryption, 
the scheme is IND-CPA secure even when additionally giving out a masked leakage on its error term arising from the linear part of the decryption algorithm.
Below, we recall standard IND-CPA and define the above variant. 

\begin{definition}[FHE IND-CPA(-with-Leakage) Security]\label{def:FHE-with-leakage-security}
Let $\FHE$ be an FHE scheme. We define the IND-CPA, and IND-CPA-with-$(D,\leakdim,\chi)$-Leakage experiments $\mathsf{Exp}_{\FHE,\adv}^b$ in~\cref{fig:FHE-leaky-security}. The IND-CPA and 
IND-CPA-with-$(D,\leakdim,\chi)$-Leakage advantage functions of a PPT $\adv$ against $\FHE$ are defined as
  \[
    \Adv_{\FHE,\adv\text{\dbox{,$(D,\leakdim,\chi)$}}}^{\INDCPAwLd}(\secpar)\coloneqq
    \abs{
    \ifstyle{2}{%
      \begin{aligned}
      &\prob{\mathsf{Exp}_{\FHE,\adv}^0(\secparam) = 1} \\
      -~& 
      \prob{\mathsf{Exp}_{\FHE,\adv}^1(\secparam) = 1}
      \end{aligned}
    }{
      \prob{\mathsf{Exp}_{\FHE,\adv}^0(\secparam) = 1} 
      - 
      \prob{\mathsf{Exp}_{\FHE,\adv}^1(\secparam) = 1}
    }
    }. 
  \]
    $\FHE$ is IND-CPA secure, respectively IND-CPA-with-$(D,\leakdim,\chi)$-Leakage secure, 
    if for any PPT $\adv$ it holds that
    $\mathsf{Adv}_{\FHE,\adv\text{\dbox{,$(D,\leakdim,\chi)$}}}^{\INDCPAwLd}(\secpar)\leq \negl$. 
\end{definition}


\begin{figure*}[t]
    \begin{pchstack}[center,boxed]
        \adjustbox{max width=\dimexpr\textwidth-2\fboxsep-2\fboxrule\relax}%
        {
            \begin{pcvstack}
                \procedure[]{$\mathsf{Exp}_{\FHE,\adv}^b(\secparam)$}
                {
                    \pubparam\gets\setup(\secparam)\\
                    (\pk, \sk) \gets \kgen(\pubparam) \\
                    \Lquery \coloneqq \emptyset;~~ \Lleak\coloneqq \emptyset \\
                    b' \gets \adv^{\Chal\oracle,\dbox{\scriptsize$\enc\oracle,\Eval\oracle,\Leak\oracle$}} (\pubparam,\pk) \\ 
                    \pcreturn b'
                }
                \pcvspace
                \procedure[]{$\Leak\oracle(\id, \leco\in\ring^\leakdim)$}
                {
                    \pcif \Lleak[\id] = \emptyset \pcthen \\
                    \pcind \pcassert \norm{\leco} \leq D \pccomment{norm check} \\
                    \pcind \pcassert \Lquery[\id] \neq \emptyset \\ 
                    \pcind \pclinecomment{query $\id$ honestly generated} \\
                    \pcind \pcparse (\ctxt,\mu_0,\mu_1) \gets \Lquery[\id] \\ 
                    \pcind \pcassert \mu_0 = \mu_1 \\ 
                    \pcind \pclinecomment{same plaintext in games 0 and 1} \\
                    \pcind (\hat{\vecc}_0,\hat{c}_1)\gets\PreDecrypt(\ctxt)\\
                    \pcind \mu \coloneqq \mu_0 = \mu_1\\
                    \pcind \hat{e} \coloneqq \hat{c}_1 - \sk^\tr\cdot\hat{\vecc}_0 - \xi^{-2} \round{\frac{q}{p}}\mu \bmod q \\
                    \pcind \vecf \sample \ddv_{\ring,\chi}^\leakdim \\
                    \pcind \Lleak[\id] \gets \leco \cdot \hat{e} + \vecf \\
                    \pcind \pclinecomment{error $\leco \cdot \hat{e}$ masked by fresh $\vecf$} \\ 
                    \pcreturn \Lleak[\id] 
                }
            \end{pcvstack}
            \pchspace 
            \begin{pcvstack} 
                \procedure[]{$\mathsf{Exp}_{\tFHE,\adv}^b(\secparam)$}
                {
                    \pubparam\gets\setup(\secparam)\\
                    \cdv \gets \adv(\pubparam) \pccomment{corrupt set $\cdv$} \\
                    \pcassert \cdv \subset_{< t} [K] \\
                    (\pk, (\sk_k)_{k \in [K]}) \gets \kgen(\pubparam) \\
                    \Lquery \coloneqq \emptyset;~~ \Lshare \coloneqq \emptyset;~~ \Lindex \coloneqq \emptyset \\
                    b' \gets \adv^{\enc\oracle,\Chal\oracle,\Eval\oracle,\ParDec\oracle} \tuple{\pk,\tuple{\sk_k}_{k \in \cdv}} \\ 
                    \gamechange{$\pcassert |\LShare| \leq Q$} \\
                    \gamechange{\pclinecomment{partial decryptions for $\leq Q$ ciphertexts}} \\
                    \pcreturn b'
                }
                \pcvspace
                \procedure[]{$\ParDec\oracle(\id, k \in [K])$}
                {
                    \pcassert \Lquery[\id] \neq \emptyset \\ 
                    \pclinecomment{queries $\id$ honestly generated} \\
                    \Lindex[\id] \coloneqq \Lindex[\id] \cup \set{k} \\
                \pcparse (\ctxt,\mu_0,\mu_1) \gets \Lquery[\id] \\ 
                    \pcif \mu_0 \neq \mu_1 \pcthen 
                    \pcassert |\Lindex[\id] \cup \cdv| < t \\
                    \pclinecomment{if different plaintext in games 0 and 1,} \\
                    \pclinecomment{unauthorised to decrypt} \\
                    \pcif \Lshare[\id] = \emptyset \pcthen \\
                    \pcind \pd_k \gets \ParDec\tuple{\pk, \sk_k,\ctxt} ~~\forall k \in [K] \\                     
                    \pcind \Lshare[\id] \coloneqq (\pd_k)_{k \in [K]} \\
                    \pcparse (\pd_k)_{k \in [K]} \gets \Lshare[\id] \\
                    \pcreturn \pd_k
                }        
            \end{pcvstack}
            \pchspace
            \begin{pcvstack}
                \procedure[]{$\Chal\oracle(\id,\mu_0,\mu_1)$}
                {
                    \pcif \Lquery[\id] = \emptyset \pcthen \\ 
                    \pcind \ctxt \gets \enc(\pk,\mu_b) \\
                    \pcind \Lquery[\id] \coloneqq (\ctxt,\mu_0,\mu_1) \\
                    \pcreturn \Lquery[\id]
                }
                \pcvspace
                \procedure[]{$\enc\oracle(\id,\mu)$}
                {
                    \pcif \Lquery[\id] = \emptyset \pcthen \\ 
                    \pcind \ctxt \gets \enc(\pk,\mu) \\
                    \pcind \Lquery[\id] \coloneqq (\ctxt,\mu,\mu) \\
                \pcreturn \Lquery[\id]
                }
                \pcvspace
                \procedure[]{$\Eval\oracle(\id,f,(\id_i)_{i\in I})$}
                {
                \pcif \Lquery[\id] = \emptyset \pcthen \\
                    \pcind \pcfor i \in I : \\
                        \pcind[2] \pcassert \Lquery[\id_i] \neq \emptyset \\ 
                        \pcind[2] \pclinecomment{queries $\id_i$ honestly generated} \\
                        \pcind[2] \pcparse (\ctxt_i,\mu_{0,i},\mu_{1,i}) \gets \Lquery[\id_i] \\ 
                    \pcind \ctxt' \gets \Eval(\pk, f, (\ctxt_i)_{i \in I}) \\
                    \pcind \mu_0' \coloneqq f((\mu_{0,i})_{i \in I}) \\
                    \pcind \mu_1' \coloneqq f((\mu_{1,i})_{i \in I}) \\
                    \pcind \Lquery[\id] \coloneqq (\ctxt', \mu_0', \mu_1') \\
                    \pcreturn \Lquery[\id]
                }
            \end{pcvstack}
        }
    \end{pchstack}
    \caption[{}]{IND-CPA(-with-$(D,\leakdim,\chi)$-Leakage) experiments $\mathsf{Exp}_{\FHE,\adv}^b(\secparam)$ for FHE (\cref{def:FHE-with-leakage-security}), and 
    IND-CPA experiment $\mathsf{Exp}_{\tFHE,\adv}^b(\secparam)$ for threshold FHE (\cref{def:tFHE-security}). 
    Code in \dbox{dashed box}, i.e.~the oracles $\enc\oracle$, $\Eval\oracle$ and $\Leak\oracle$, applies only to 
    IND-CPA-with-$(D,\leakdim,\chi)$-Leakage. 
    $\gamechange{Code in gray}$ for the $Q$-bounded version of the experiment.
    } 
    \label{fig:FHE-leaky-security}\label{fig:tFHE-security}  
\end{figure*}

It is easy to see that the IND-CPA-with-Leakage security is implied by standard IND-CPA security if the masking noise for the leakage is large enough, 
specifically when it is super-polynomially larger than the error term being masked. This follows from a standard noise-flooding lemma,
formalised below. \fullversion{Its proof is given in \cref{app:thm:leakage-security}.}{Its proof can be found in \thefullversion.}

\begin{restatable}[Leakage Security]{theorem}{leakagesecurity}\label{thm:leakage-security}
  Let $\mathcal{K} = \QQ(\zeta)$ be a cyclotomic field of degree $\varphi$ and expansion factor $\expansion_\infty$ w.r.t. the coefficient $\ell_\infty$-norm. Let $\ring = \ZZ[\zeta]$ be its ring of integers.
  Let $\leakdim,D\in\NN$ and $\FHE$ be an FHE scheme with $(\ring,\hat{n},p,q,\xi,B)$-linear-decryption with error $\varepsilon_\FHE$.
  Then, for any PPT adversary $\adv$ against the IND-CPA-with-$(D,\leakdim,\chi)$-Leakage security of $\FHE$ that makes at most $Q \in \poly$ queries to the oracle $\Leak\oracle$, there exists a PPT adversary $\bdv$ against the IND-CPA security of $\FHE$ such that
  \begin{align*}
    \ifstyle{2}{
      &\Adv_{\FHE,\adv,(D,\leakdim,\chi)}^{\INDCPAwL}(\secpar) \\ 
      \leq~& 
    }{
      \Adv_{\FHE,\adv,(D,\leakdim,\chi)}^{\INDCPAwL}(\secpar) \leq 
    }
    \ifstyle{2}{
      2\!\cdot \!\left( Q\!\cdot\! \varepsilon_\FHE\!+\! C \!\cdot\! \sqrt{Q\! \cdot \!\varphi\! \cdot\! \leakdim}\! \cdot\! \expansion_\infty\! \cdot\! D\!\cdot\! B/\chi\right)
      +
      \Adv_{\FHE,\bdv}^{\INDCPA}(\secpar)
    }
    {
      2\cdot \left( Q\cdot \varepsilon_\FHE + \frac{C \cdot \sqrt{Q \cdot \varphi \cdot \leakdim} \cdot \expansion_\infty \cdot D\cdot B}{\chi} \right)
      +
      \Adv_{\FHE,\bdv}^{\INDCPA}(\secpar)
    }
  \end{align*}
  where $C = \sqrt{\pi/2}$ is a constant. 
  In particular, if $\chi> \lambda^{\omega(1)}\cdot \sqrt{Q\cdot\varphi \cdot \leakdim} \cdot \expansion_\infty \cdot D\cdot B$, $\FHE$ has $(\ring,\hat{n},p,q,\xi,B)$-linear-decryption and is IND-CPA secure, then it is also IND-CPA-with-$(D,\leakdim,\chi)$-Leakage secure.
\end{restatable}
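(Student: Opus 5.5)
We prove the bound by a short hybrid argument that removes the dependence of the leakage on $\hat{e}$. The argument has three games and a final reduction.

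\textbf{Game $\mathsf{G}_0$.} This is the experiment $\mathsf{Exp}^b_{\FHE,\adv}$ with leakage. Here $\Leak\oracle$ on query $(\id,\leco)$ returns $\leco\cdot\hat{e}+\vecf$, where $\vecf\sample\DD_{\ring,\chi}^\leakdim$.

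\textbf{Game $\mathsf{G}_1$.} We abort if, for any queried $\id$, the computed $\hat{e}$ has $\norm{\hat{e}}>B$. Each queried ciphertext satisfies $\mu_0=\mu_1$ and was honestly produced by $\enc\oracle$/$\Eval\oracle$. An adversary that makes the bound fail can therefore be turned into an adversary for $\mathsf{Correct}_{\FHE}$ in \cref{def:linear-FHE-fine-grained}: it is given $\sk$, simulates the whole experiment for the fixed bit $b$, and guesses one of the $\le Q$ leak queries at random to output as $\id$. A union bound over queries gives a loss of $Q\cdot\varepsilon_\FHE$ per value of $b$.

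\textbf{Game $\mathsf{G}_2$.} $\Leak\oracle$ returns a fresh $\vecf\sample\DD_{\ring,\chi}^\leakdim$, ignoring $\hat{e}$.

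The crux is the statistical step $\mathsf{G}_1\approx\mathsf{G}_2$. Conditioned on no abort, each answer is a discrete Gaussian shifted by $\vecc=\leco\cdot\hat{e}\in\ring^\leakdim$. Its coefficient norm satisfies
\[
\norm{\vecc}_2\le\sqrt{\varphi\leakdim}\,\norm{\vecc}_\infty\le\sqrt{\varphi\leakdim}\,\expansion_\infty D B .
\]
I would not bound the statistical distance per query and sum, since that gives a factor $Q$ rather than $\sqrt{Q}$. Instead I would use a Rényi/KL-divergence or Hellinger-type bound for shifted discrete Gaussians. For $\DD_{\ZZ^{\varphi\leakdim},\chi}$, which is smooth since $\chi$ is large, the KL divergence between the shifted and unshifted distributions is about $\pi\norm{\vecc}_2^2/\chi^2$, using $\rho_\chi(\vec x)=e^{-\pi\norm{\vec x}^2/\chi^2}$.

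Adaptivity does not hurt, because KL is additive under the chain rule over adaptively chosen queries. The total KL is therefore at most $Q\pi\varphi\leakdim(\expansion_\infty DB)^2/\chi^2$. Pinsker's inequality, $\Delta\le\sqrt{\mathrm{KL}/2}$, then yields
\[
\sqrt{\pi/2}\cdot\sqrt{Q\varphi\leakdim}\,\expansion_\infty DB/\chi ,
\]
which matches the constant $C=\sqrt{\pi/2}$.

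The main obstacle is the discrete (non-continuous) Gaussian. The KL between $\DD_{\ring,\chi}+\vecc$ and $\DD_{\ring,\chi}$ for integral $\vecc$ must still be shown to be at most $\pi\norm{\vecc}^2/\chi^2$. This follows from $\log$-ratio $=\pi(\norm{\vec x}^2-\norm{\vec x-\vecc}^2)/\chi^2$ together with the fact that the normalisers coincide (the shift is by a lattice vector). Taking expectations then gives exactly $\pi\norm{\vecc}^2/\chi^2$, since the mean of a centred discrete Gaussian is $0$ by symmetry. So this step is actually exact.

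\textbf{Reduction.} In $\mathsf{G}_2$, an IND-CPA adversary $\bdv$ can simulate everything. It forwards $\Chal\oracle$ queries to its own challenge oracle, implements $\enc\oracle$ and $\Eval\oracle$ itself using $\pk$, and answers $\Leak\oracle$ with fresh Gaussians, checking $\mu_0=\mu_1$ and the norm of $\leco$ itself. The advantage in $\mathsf{G}_2$ is thus $\Adv^{\INDCPA}_{\FHE,\bdv}$. Applying the two hybrid losses for both $b=0,1$ accounts for the factor $2$, and the asymptotic corollary follows directly.
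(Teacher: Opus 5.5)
Your proposal follows the paper's proof almost step for step. It has the same three hybrids: a correctness hop costing $Q\varepsilon_\FHE$ per bit, an adaptive noise-flooding hop via the KL chain rule and Pinsker, and a direct IND-CPA reduction. Your exact computation $\mathrm{KL}=\pi\norm{\vecc}_2^2/\chi^2$ for an integral shift is what the paper's flooding lemma uses, and your bound $\sqrt{\varphi\leakdim}\,\expansion_\infty DB$ is the paper's $B'$.

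There is one gap: the abort in $\mathsf{G}_1$. Your $\bdv$ cannot simulate an abort, because detecting $\norm{\hat e}>B$ requires $\sk$. So $\mathsf{G}_2$ must not abort, and then the hop $\mathsf{G}_1\to\mathsf{G}_2$ changes two things at once: it removes the shift and it removes the abort. Saying ``conditioned on no abort'' does not justify charging only the flooding term for this hop. An unbounded flooding reduction can simulate $\mathsf{G}_1$ and an \emph{aborting} version of $\mathsf{G}_2$. Removing the abort from $\mathsf{G}_2$ then costs another bad-event probability, which your accounting omits. With your per-event cost of $Q\varepsilon_\FHE$, you would end up with $4Q\varepsilon_\FHE$ instead of the claimed $2Q\varepsilon_\FHE$.

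The paper avoids this. In $\hyb_{b,1}$, when $\norm{\hat e}>B$ the leakage oracle does not abort; it returns a fresh $\vecf$. Then $\hyb_{b,0}$ and $\hyb_{b,1}$ are identical until the bad event, at cost $Q\varepsilon_\FHE$. Moreover, $\hyb_{b,1}$ and $\hyb_{b,2}$ differ only by a shift: $\vecc=\leco\cdot\hat e$ in the good case and $\vecc=\veczero$ in the bad case. This shift always has $\ell_2$-norm at most $B'$, so the adaptive flooding lemma applies to the whole transcript with no conditioning, and $\hyb_{b,2}$ contains no check requiring $\sk$. With this change, your argument gives exactly the stated bound.
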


\subsection{Threshold FHE Definitions}

A $t$-out-of-$K$ threshold fully-homomorphic encryption, or $(t,K)$-threshold FHE, for a set of $K$ parties, threshold $t \leq K$, and a message space $\cM$,
consists of PPT algorithms $(\setup, \kgen, \enc, \Eval, \ParDec, \Dec)$ with the syntax: 

\begin{description}
	\item $\pubparam\gets\setup(\secparam)$: The setup algorithm generates the public parameters $\pubparam$.
    \item $(\pk, (\sk_k)_{k \in [K]}) \gets \kgen(\pubparam)$: 
    The key generation algorithm, on input the public parameters $\pubparam$, 
    generates the public key $\pk$ and a tuple of $K$ secret keys $\tuple{\sk_k}_{k \in [K]}$ for each user $k \in [K]$.
    \item $\ctxt \gets \enc(\pk, \mu)$: 
    The encryption algorithm encrypts a message $\mu \in \cM$ w.r.t. the public key $\pk$.
    \item $\ctxt' \gets \Eval(\pk,f,(\ctxt_i)_{i \in I})$: 
    The evaluation algorithm takes the public key $\pk$, a function $f$, and a set of ciphertexts $(\ctxt_i)_{i \in I}$ indexed by $I$, and returns an evaluated ciphertext $\ctxt'$.
    \item $\pd_k \gets \ParDec\tuple{\pk,\sk_k,\ctxt}$: 
    The partial decryption algorithm takes the secret key $\sk_k$ of a user $k$ and a ciphertext $\ctxt$, and outputs a partial decryption $\pd_k$.
    \item $\mu' \gets \Dec\tuple{\pk,T,\tuple{\pd_k}_{k \in T},\ctxt}$: 
    The combine algorithm, on input a tuple of partial decryptions $(\pd_k)_{k \in T}$ from a set $T$ of $t$ users and a ciphertext $\ctxt$, outputs some $\mu'$.
\end{description}

Analogous to a (non-threshold) FHE, we define correctness of a threshold FHE by an experiment, 
in order to capture the adaptive nature of an adversary in choosing ciphertext evaluations and decryption. 

\begin{definition}[Threshold FHE Correctness]\label{def:tFHE-correctness}
    A $(t,K)$-threshold FHE $\tFHE$ 
    is said to have correctness error $\varepsilon(\secpar)$, if
    for any PPT $\adv$,
    \begin{align*}
        \prob{\mathsf{Correct}_{\tFHE,\adv}(\secparam) = 1}
        \leq \varepsilon(\secpar),
      \end{align*}
    where $\mathsf{Correct}_{\tFHE,\adv}$ is defined in \cref{fig:FHE-correctness}.
    Furthermore, we say that $\tFHE$ is correct if $\varepsilon$ is negligible in $\secpar$.
\end{definition}

\begin{definition}[Threshold FHE IND-CPA Security]\label{def:tFHE-security}
  A $(t,K)$-threshold FHE $\tFHE$ is (resp.~$Q$-bounded) IND-CPA secure, 
  if for any PPT adversary $\adv$ it holds that
  \ifstyle{2}{
    $\Adv_{\tFHE,\adv}^{\INDCPA}(\secpar) \coloneqq$ 
    \[
      \abs{\prob{\mathsf{Exp}_{\tFHE,\adv}^0(\secparam) = 1} - \prob{\mathsf{Exp}_{\tFHE,\adv}^1(\secparam) = 1}} 
    \]
    $\leq \negl,$
  }{
    \[ 
      \Adv_{\tFHE,\adv}^{\INDCPA}(\secpar) \coloneqq \abs{\prob{\mathsf{Exp}_{\tFHE,\adv}^0(\secparam) = 1} - \prob{\mathsf{Exp}_{\tFHE,\adv}^1(\secparam) = 1}} \leq \negl,
    \]
  }
  where $\mathsf{Exp}_{\tFHE,\adv}^b$ (resp.~augmented with $\gamechange{code in gray}$) is defined in~\cref{fig:tFHE-security}.
\end{definition}

\begin{remark}[Alternative definition for threshold FHE security.]\label{rem:alt-security-def}
  In \cref{def:tFHE-security}, more specifically in the security experiment defined in \cref{fig:tFHE-security}, the oracles keep track of ciphertexts $\ctxt$ by their identifiers $\id$. 
  An alternative would be to track ciphertexts $\ctxt$ by (the representation of) the ciphertexts $\ctxt$ themselves.
  This alternative definition is polynomially equivalent to that stated in \cref{fig:tFHE-security} if ciphertexts are collision resistant, i.e.~for honestly generated $\pk$, it is infeasible to produce the same ciphertext $\ctxt$ by two distinct sequences of $\enc(\pk, \cdot)$ and $\eval(\pk, \cdot, \cdot)$ operations with honestly sampled randomness. 
  This alternative definition has some practical benefits, to be elaborated in \cref{rem:pseudorandom-tag}. 
\end{remark}

\subsection{Threshold FHE from FHE}\label{sec:thFHE-from-FHE}

\noindent\textbf{Construction.}
Let $1 < p \leq q$ be integers.
We require two ingredients:
\begin{itemize}
  \item $\FHE$ is an FHE scheme which has $(\ring,\hat{n},p,q,\xi,B)$-linear-decryption with error $\varepsilon_\FHE$ (\cref{def:linear-FHE-fine-grained}).
  \item $\Xi$ is a $\xi$-subtractive set of cardinality $K$, with quality measurements $(\rho,\gamma,\theta) = (\rho(t),\gamma(t),\theta(t))$ (\cref{def:subtractive-sets}).
\end{itemize}
Given the above, in \cref{fig:threshold-FHE-fine-grained} we construct a $(t,K)$-threshold FHE scheme $\tFHE_\params$
with parameters $\params = (\FHE,\ring,n,m,h,q,\beta,\bar{\chi},\chi,z,\Xi)$.


\begin{figure*}[t]
	\begin{pchstack}[center,boxed]
        \adjustbox{max width=\dimexpr\textwidth-2\fboxsep-2\fboxrule\relax}%
		{
			\begin{pcvstack}				
				\procedure[]{$\setup(\secparam)$}
				{	
					\pubparam_{\FHE} \gets \FHE.\Setup(\secparam) \\
					\pcreturn \pubparam \coloneqq \pubparam_\FHE
				}				
				\pcvspace
				\procedure[]{$\kgen(\pubparam)$}
				{	
					\pcparse \pubparam = \pubparam_\FHE \\
                    (\pk_\FHE, \sk_\FHE) \gets \FHE.\kgen(\pubparam_\FHE) \\ 
                    \pclinecomment{$\sk_\FHE = \hat{\vec{s}} \in \ring_{q}^{\hat{n}}$} \\
					\matA \sample \ring_q^{n \times m} \\
					\mat{R} \sample \ring_q^{t \times n} \\
					\vec{r}^\transpose \coloneqq \vec{v}_0^\transpose \cdot \mat{R} \bmod q \\ 
					\vec{e} \sample \DD_{\ring,\bar{\chi}}^m \\
					\vec{z}^\transpose \coloneqq (\hat{\vec{s}}^\tr,\veczero_{h}^\tr) \cdot \matG_{\beta,q} \bmod q \in \ring_q^m \\
                    \vec{b}^\transpose \coloneqq  \vec{r}^\transpose \matA + \vec{e}^\transpose + \vecz^\transpose \bmod q \\
					\pk \coloneqq (\pk_\FHE, \matA, \vec{b}) \\
					(\vec{s}_{k})_{k\in[K]} \coloneqq \xi \cdot \mat{V} \cdot \mat{R} \bmod q \\
                    (\sk_k)_{k \in [K]} \coloneqq (\vec{s}_{k})_{k\in[K]} \\
					\pcreturn (\pk,(\sk_k)_{k\in[K]})
				}
			\end{pcvstack}
			\pchspace
			\begin{pcvstack}
				\procedure[]{$\enc(\pk,\mu)$}
				{
					\pcparse \pk=(\pk_\FHE,\matA,\vecb) \\
                    \ctxt_\FHE \gets \FHE.\enc(\pk_\FHE, \mu) \\
					\rtag \sample \ring_q^h \\
					\pcreturn \ctxt \coloneqq (\ctxt_\FHE, \rtag) \\
				}
                \pcvspace
				\procedure[]{$\ParDec(\pk, \sk_k, \ctxt)$}
				{	
					\pcparse \pk = (\pk_\FHE,\matA,\vecb) \\
					\pcparse \sk_k = \vecs_k \\
					\pcparse \ctxt = (\ctxt_\FHE,\rtag) \\
					(\hatvec{c}_0,\hat{c}_1) \gets \FHE.\PreDecrypt(\ctxt_\FHE) \\
					\vecx \coloneqq \matG_{\beta,q}^{-1} \begin{pmatrix} \hat{\vec{c}}_0 \pclb \rtag \end{pmatrix} \\
                    \vecc_0 \coloneqq \matA \cdot \vecx \bmod q \\
					e_{k} \sample \DD_{\ring,\chi} \\
					\pd_{k} \coloneqq \vecs_{k}^\transpose \cdot \vec{c}_0 + e_{k} \bmod q \\
					\pd_{k} \gets \round{\pd_{k}}_{2^z} \\
					\pcreturn \pd_k
				}
			\end{pcvstack}
			\pchspace
			\begin{pcvstack}
				\procedure[]{$\Eval(\pk, f, (\ctxt_i)_{i \in I})$}
				{	
					\pcparse \pk = (\pk_\FHE,\matA,\vecb) \\
					\pcparse \ctxt_i = (\ctxt_{\FHE,i}, \rtag_i)~~\forall i \in I \\
                    \ctxt_\FHE' \gets \FHE.\Eval(\pk_\FHE, f, (\ctxt_{\FHE,i})_{i \in I}) \\
					\rtag \sample \ring_q^h \\
					\pcreturn \ctxt' \coloneqq (\ctxt_\FHE', \rtag)
				}
				\pcvspace
				\procedure[]{$\Dec(\pk, T,(\pd_k)_{k\in T},\ctxt)$}
				{	
					\pcparse \pk = (\pk_\FHE,\matA,\vecb) \\
					\pcparse \ctxt = (\ctxt_\FHE,\rtag) \\
					(\hatvec{c}_0,\hat{c}_1) \gets \FHE.\PreDecrypt(\ctxt_\FHE) \\
					\vecx \coloneqq \matG_{\beta,q}^{-1} \begin{pmatrix} \hat{\vec{c}}_0 \pclb \rtag \end{pmatrix} \\
                    c_1 \coloneqq \xi^2 \cdot \vec{b}^\tr \cdot \vecx \bmod q \\
					\pd \gets \xi \cdot \vec{v}_0^\transpose \cdot \mat{V}_T^{-1} \cdot (\pd_{k})_{k\in T} \bmod q \\
					\hat{\mu} \coloneqq \xi^2\cdot\hat{c}_1 - (c_{1} - \pd) \bmod q \in\ring_q \\
                    \mu' \gets \round{ p/q \cdot \hat{\mu} } \bmod p \in \ring_p \\
					\pcreturn \mu'
				}
			\end{pcvstack}
		}
	\end{pchstack}
	\caption[{}]{$\tFHE$ from $\FHE$. $m = (\hat{n}+h) \cdot \ceil{\log_\beta q}$. }
	\label{fig:threshold-FHE-fine-grained}
\end{figure*}

\begin{restatable}[Correctness]{theorem}{tfhecorrectness}\label{thm:correctness}
  If $\xi \in \ZZ$, 
  $\FHE$ has $(\ring,\hat{n},p,q,\xi,B)$-linear-decryption with error $\varepsilon_\FHE$ and 
  \begin{align}\label{eq:correctness}
    \frac{q}{2p}
    \geq&~ 
    \frac{\abs{q \bmod p}}{2} 
    + \xi^2\cdot B 
    \ifstyle{2}{\\ &}{} 
    + \expansion_\infty \cdot \left( \sqrt{\lambda} \cdot m\cdot\xi^2\cdot \bar{\chi}\cdot\frac{\beta}{2}+ t\cdot \rho\cdot(\sqrt{\lambda} \cdot \chi + 2^{z-1})\right).\ifstyle{2}{\nonumber}{}
  \end{align}
  Then, $\tFHE$ in \cref{fig:threshold-FHE-fine-grained} has correctness error $\varepsilon \leq \varepsilon_\FHE+\varphi\cdot(m+t)\cdot 2^{-\lambda}$. In particular, if $\min\{\log(1/\varepsilon_\FHE),\lambda -\log \varphi- \log(m+t)\}\geq\omega(\log\lambda)$, then $\tFHE$ is correct.
\end{restatable}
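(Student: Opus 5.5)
The plan is to fix an adversary $\adv$ in $\mathsf{Correct}_{\tFHE,\adv}$, condition on the linear-decryption event of \cref{def:linear-FHE-fine-grained} for the returned $(\ctxt,\mu)$, and then expand $\hat\mu$ algebraically. The linear-decryption event fails with probability at most $\varepsilon_\FHE$: the tag $\rtag$ does not affect $\FHE$, so the tFHE experiment induces a valid $\mathsf{Correct}_{\FHE}$ adversary. On this event, $\hat c_1 - \hatvec{s}^\tr\hatvec{c}_0 = \xi^{-2}\round{q/p}\mu + \hat e$ with $\norm{\hat e}\le B$.

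Next I would unfold the combine step. Write $\pd_k = \vecs_k^\tr\vecc_0 + e_k + \epsilon_k$, where $\epsilon_k$ is the rounding error with $\norm{\epsilon_k}\le 2^{z-1}$. Since $(\vecs_k)_k = \xi\matV\matR$, the $T$-rows give $\xi\matV_T\matR$. Hence
\[
\xi\vecv_0^\tr\matV_T^{-1}(\vecs_k^\tr\vecc_0)_{k\in T} = \xi^2\vecr^\tr\matA\vecx .
\]
Therefore
\[
\pd = \xi^2\vecr^\tr\matA\vecx + \vecl^\tr(e_k+\epsilon_k)_{k\in T},
\qquad \vecl^\tr \coloneqq \xi\vecv_0^\tr\matV_T^{-1},
\]
which is integral with $\norm{\vecl}\le\rho$. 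On the other side, $c_1 = \xi^2(\vecr^\tr\matA\vecx + \vece^\tr\vecx + \vecz^\tr\vecx)$, and
\[
\vecz^\tr\vecx = (\hatvec{s}^\tr,\veczero^\tr)\matG\matG^{-1}\begin{pmatrix}\hatvec{c}_0\\ \rtag\end{pmatrix} = \hatvec{s}^\tr\hatvec{c}_0 .
\]
Plugging in,
\[
\hat\mu = \round{q/p}\mu + \xi^2\hat e - \xi^2\vece^\tr\vecx + \vecl^\tr(e_k+\epsilon_k)_k \bmod q .
\]
Here $\xi\in\ZZ$ is used so that $\xi^2\cdot\xi^{-2}=1$ as integers and the norm scales by $\xi^2$ exactly.

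The remaining work is to bound the total noise $E$ in coefficient $\ell_\infty$ norm and apply a standard rounding argument. The rounding argument is this: $\frac pq\round{q/p}\mu = \mu - \frac pq\fracpart{q/p}\mu$, and the coefficients of $\mu\in\ring_p$ are taken in $(-p/2,p/2]$, so this contributes at most $\abs{q\bmod p}/2$ in $q$-units. Decoding is therefore correct whenever $\abs{q\bmod p}/2 + \norm{E} < q/(2p)$, which is exactly \eqref{eq:correctness} once $\norm{E}$ is bounded by the remaining terms. For the noise, the product terms are controlled by $\norm{ab}\le\expansion_\infty\norm a\norm b$. We have $\norm{\vece^\tr\vecx}\le m\,\expansion_\infty\,\norm{\vece}\,\beta/2$, using $\norm{\matG^{-1}(\cdot)}\le\beta/2$. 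We also have $\norm{\vecl^\tr(e_k+\epsilon_k)}\le t\,\expansion_\infty\,\rho\,(\max_k\norm{e_k}+2^{z-1})$.

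The one probabilistic step, which I expect to be the only delicate point, is the Gaussian tail bound. I would show $\norm{\vece}\le\sqrt\lambda\,\bar\chi$ and $\norm{e_k}\le\sqrt\lambda\,\chi$ coefficientwise except with probability $2^{-\lambda}$ per coefficient. This uses a standard one-dimensional tail bound for discrete Gaussians over $\ZZ^{\varphi}$ (with the paper's $e^{-\pi x^2/\chi^2}$ normalisation, $\Pr[\abs{x}>\sqrt\lambda\chi]\le 2e^{-\pi\lambda}\le2^{-\lambda}$). A union bound over the $\varphi m$ coefficients of $\vece$ and the $\varphi t$ coefficients of $(e_k)_{k\in T}$ then gives failure probability $\varphi(m+t)2^{-\lambda}$.

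One subtlety is that $\adv$ chooses $T$ and $\id$ after seeing the keys. However, $\vece$ is a fixed sample and the $e_k$ are sampled after the adversary's choice, so the bound applies regardless. Summing the failure probabilities gives $\varepsilon\le\varepsilon_\FHE+\varphi(m+t)2^{-\lambda}$. The asymptotic statement follows directly, since both terms are negligible under the stated condition.
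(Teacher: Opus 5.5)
Your proposal is correct and follows the paper's proof essentially step for step. You use the same expansion of $\pd$ and $c_1$, so that $\xi^2\vecr^\tr\matA\vecx$ cancels and $\vecz^\tr\vecx=\hat{\vec{s}}^\tr\hat{\vecc}_0$; the same appeal to linear decryption to reach $\round{q/p}\mu+\xi^2\hat e-f$; the same $\expansion_\infty$, $\rho$ and $\beta/2$ norm bounds; and the same coefficientwise Gaussian tail bound with a union bound over $\varphi(m+t)$ coefficients.

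Your explicit reduction from a $\mathsf{Correct}_{\tFHE}$ adversary to a $\mathsf{Correct}_{\FHE}$ adversary (which receives $\sk_\FHE$ and so can simulate $\matA,\vecb,\vecs_k$ and the tags) fills in a step the paper leaves implicit. Two small corrections: $\xi^2\cdot\xi^{-2}=1$ holds modulo $q$ by coprimality, not over the integers, since $\xi\in\ZZ$ only matters for the norm scaling. And the non-strict hypothesis only yields $\leq q/(2p)$ rather than your strict $<$, a boundary case the paper also glosses via its tie-breaking convention.
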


The proof of \cref{thm:correctness} is given in
\fullversion{\cref{app:thm:correctness}.}{\thefullversion.}
Here we provide a brief sketch of the proof.
\begin{proof}[Proof Sketch]
The proof compares the value reconstructed by threshold decryption with the
linear-decryption equation of the underlying $\FHE$ scheme. Both $\ParDec$ and
$\Dec$ compute the same gadget preimage
\[
    \vecx=\matG_{\beta,q}^{-1}
    \begin{pmatrix}
        \hat{\vecc}_0\\
        \rtag
    \end{pmatrix}.
\]
For a recovery set $T$, linearly combining the partial decryptions recovers the
shared secret $\vecr$ and gives
\[
    \pd =
    \xi^2\vecr^\transpose\matA\vecx
    +
    \xi\vecv_0^\transpose\matV_T^{-1}\vece_T
    \bmod q.
\]
On the other hand, the corresponding ciphertext component satisfies
\[
    c_1 =
    \xi^2\vecr^\transpose\matA\vecx
    +
    \xi^2\vece^\transpose\vecx
    +
    \xi^2\hat{\vec{s}}^\transpose\hat{\vecc}_0
    \bmod q.
\]
Hence $c_1-\pd$ cancels the common term
$\xi^2\vecr^\transpose\matA\vecx$ and leaves the $\FHE$ linear-decryption term, $\xi^2\hat{\vec{s}}^\transpose\hat{\vecc}_0$, plus a residual error
\[
    f \coloneqq \xi^2 \cdot\vece^\transpose\cdot\vecx - \xi \cdot\vec{v}_0^\transpose \cdot \mat{V}_T^{-1}\cdot\vece_T.
\]
In the final rounding in $\Dec$ one has to additionally account for the error $\xi^2\hat{e}$ arising from the $(\ring,\hat{n},p,q,\xi,B)$-linear-decryption property of $\FHE$.
The correctness condition in \cref{eq:correctness} is set so
that the norm of the whole error term remains below $q/(2p)$,
ensuring the scaling by $p/q$ recovers $\mu$ except with the stated failure
probability.\qed 
\end{proof}

\begin{remark}[``Average-case'' correctness analysis.]\label{rmk:average-correctness}
  In the proof of correctness (\cref{thm:correctness}) we (probabilistically) upper bound 
  $\norm{\vec{e}^\transpose \cdot \vec{x}}$ by $\expansion_\infty \cdot m \cdot \bar{\chi} \cdot \sqrt{\secpar} \cdot \beta/2$. 
  This is known as the so called ``worst-case'' correctness analysis. 
  A standard heuristic is the so-called ``average-case'' correctness analysis where one would upper bound the norm of $\vec{e}^\transpose \cdot \vec{x}$ directly by treating it as a discrete Gaussian rather than bounding $\vec{e}$ and $\vec{x}$ separately. 
  Over $\ZZ$, this would shave off a square-root-dimension factor, i.e.~$\sqrt{\varphi \cdot m}$ from the bound. 
  Taking the potential skewness of $\ring$ into account, however, we conservatively shave a factor of $\frac{\expansion_\infty}{\expansion_2} \cdot \sqrt{m} \leq \sqrt{\varphi \cdot m}$ instead, i.e.~we take $\norm{\vec{e}^\transpose \cdot \vec{x}} \leq \expansion_2 \cdot \sqrt{m} \cdot \bar{\chi} \cdot \sqrt{\secpar} \cdot \beta/2$.
  We can apply the same treatment to the term $\xi \cdot \vec{v}_0^\transpose \cdot \mat{V}_T^{-1}\cdot\vece_T$ and upper bound its norm by 
  $\expansion_2 \cdot \sqrt{t} \cdot \rho\cdot(\chi + 2^{z-1})\cdot\sqrt{\secpar}$
  instead of by
  $\expansion_\infty \cdot t\cdot \rho\cdot(\chi\cdot\sqrt{\secpar} + 2^{z-1})$.
  Here $\vece_T$ contains the rounding errors $d_k$ of $\ParDec$, and the heuristic treats each $e_k + d_k$ as noise of width at most $\chi + 2^{z-1}$.
  To summarise, with the ``average-case'' heuristic, we can replace the constraint \cref{eq:correctness} with 
  \begin{align}\label{eq:average-correctness}
      \frac{q}{2p}
      \geq&~ 
      \frac{\abs{q \bmod p}}{2} 
      +\xi^2\cdot B 
      \ifstyle{2}{\\ &}{} 
      + \expansion_2 \cdot \sqrt{\secpar} \cdot \left( \sqrt{m} \cdot \xi^2\cdot \bar{\chi}\cdot\frac{\beta}{2}+ \sqrt{t}\cdot \rho\cdot(\chi + 2^{z-1})\right). \ifstyle{2}{\nonumber}{}
  \end{align} 
\end{remark}


    \begin{restatable}[Security]{theorem}{tfhesecurity}
    \label{thm:tfhe-security}
     Let $\FHE$ be an FHE scheme with $(\ring,\hat{n},p,q,\xi,B)$-linear-decryption and
    $\Xi$ be a $\slack$-subtractive set of cardinality $K$ with quality measurements $(\rho,\gamma,\theta)$ (\cref{def:subtractive-sets}).
    Let
    $\epsilon \in (0,1/2]$,
    $Q \in \poly$,
    $q,\xi$ be coprime,
    $\chi_1^* \geq \sqrt{2} \cdot \eta_\epsilon(\ring^t)$ be such that there exists $\hat{\chi} > 0$ with
    $\gamma \leq \chi \cdot \sqrt{(2 \eta_\epsilon(\ring^t)^2 + (\chi_1^*)^2)^{-1} - \hat{\chi}^{-2}}$,
    and $\chi_2^* \geq \sqrt{2} \cdot \eta_\epsilon(\ring^m)$ be such that
    $\gamma \cdot \sqrt{\varphi m} \cdot \beta/2 \geq 1$ and
    $\sqrt{2 Q K \cdot \varphi m} \cdot \expansion_2 \cdot \tensorexpansion{t} \cdot \gamma \cdot \beta/2 \leq \chi \cdot \sqrt{(2 \eta_\epsilon(\ring^m)^2 + (\chi_2^*)^2)^{-1} - \bar{\chi}^{-2}}$, where $\tensorexpansion{t} \coloneqq 1$ if $\conductor$ is a power of two and $\tensorexpansion{t} \coloneqq \sqrt{t}$ otherwise.
    Let
    $\ddvt\coloneqq \left\{\pretag : \rtag\sample\ring_q^{h}, \pretag\coloneqq \matG_{\beta,q}^{-1}(\rtag)\right\}$.
    Define the parameter sets: 
    \begin{itemize}
        \item $\tFHE$: $\params = (\FHE,\ring,n,m,h,q,\beta,\bar{\chi},\chi,z,\Xi)$,
        \item $\LWE{}$: $\params_1 \coloneqq (\ring,h \ceil{\log_\beta q},\poly,q,\chi_1^*,\ddvt)$,
        \item $\LWE{}$: $\params_2 \coloneqq (\ring,n,m,q,\chi_2^*)$,
        \item $\FHE$: $\params_3 \coloneqq (\theta,K,\chi)$.
    \end{itemize}
    Consider the $(t,K)$-threshold FHE scheme $\tFHE_\params$ in \cref{fig:threshold-FHE-fine-grained}.  
    For all PPT $\adv$ that requests partial decryptions on at most $Q$ ciphertexts (i.e.\ in the $Q$-bounded IND-CPA experiment of \cref{def:tFHE-security}), there exist PPT $\bdv_1,\bdv_2,\bdv_3$ such that $\Adv_{\tFHE,\adv}^{\INDCPA}(\secpar) \leq \varepsilon$ where
    \begin{align*}
        \varepsilon \coloneqq&~
        2 \!\cdot\! t\! \cdot \!\left( \Adv_{\params_1,\bdv_1}^{\LWE{}}(\secpar) 
        \!+\! \Adv_{\params_2,\bdv_2}^{\LWE{}}(\secpar) \right) 
        \!+\! \frac{12\!\cdot\! Q\!\cdot\! \epsilon}{1\!-\!\epsilon}
        \ifstyle{2}{\\ &}{}
        \!+\! \Adv_{\FHE,\params_3,\bdv_3}^{\INDCPAwL}(\secpar) \\
        &\!+\! 4\! \cdot\! t\! \cdot\!
        \left( 2QK \!+\! \frac{3}{1\!-\!\epsilon} \right)\!\cdot\! \epsilon.
    \end{align*}
\end{restatable}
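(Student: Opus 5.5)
The proof is a hybrid argument $\hyb_{b,0},\dots,\hyb_{b,3}$ for each $b\in\{0,1\}$, where $\hyb_{b,0}$ is the $Q$-bounded experiment $\mathsf{Exp}^b_{\tFHE,\adv}$. The last hybrids $\hyb_{0,3}$ and $\hyb_{1,3}$ are connected by a reduction $\bdv_3$ to the IND-CPA-with-$(\theta,K,\chi)$-Leakage security of $\FHE$. The hops are arranged so that the advantage terms in $\varepsilon$ appear in order: the threshold-layer terms twice (once per $b$), the FHE term once.

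\textbf{Step 1: $\hyb_{b,0}\approx_c\hyb_{b,1}$.} I would first swap $\vecb$ to uniform and re-express all partial decryptions in terms of $\vecb$, following \cite{AC:CinLaiWoo25}. Write $\matS=\xi\matV\matR$. For any $T^*\subseteq_{t-1}\Xi$ containing $\cdv$, the full share vector is
\[
\xi\matV_\Xi\matV_{\{0\}\cup T^*}^{-1}\cdot\bigl(\vecr^\tr,(\vecs_k^\tr/\xi)_{k\in T^*}\bigr)^\tr .
\]
So every $\pd_k$ is a known linear combination, with coefficient norm at most $\gamma$, of $\vecr^\tr\matA\vecx$ and of terms computable from corrupt-side data. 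The hop proceeds row by row through $\matR$, which gives the factor $t$.

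For each row I would do two things. First, apply $\LWE{\params_2}$ to replace $\vecr^\tr\matA+\vece^\tr$ by uniform $\vecb-\vecz$. Second, apply $\LWE{\params_1}$ with the tag distribution $\ddvt$ to rerandomise the products $\vecr^\tr\matA\vecx$ through the component $\matA_{\rtag}\matG^{-1}(\rtag)$, since $\rtag$ is fresh uniform for every ciphertext. The key-LWE error $\vece^\tr\vecx$ that now has to appear inside the $\pd_k$ is absorbed by the flooding noise $e_k$. This uses convolution/smoothing lemmas for discrete Gaussians: the conditions $\gamma\le\chi\sqrt{(2\eta_\epsilon(\ring^t)^2+(\chi_1^*)^2)^{-1}-\hat\chi^{-2}}$ and $\sqrt{2QK\varphi m}\,\expansion_2\tensorexpansion{t}\gamma\beta/2\le\chi\sqrt{(2\eta_\epsilon(\ring^m)^2+(\chi_2^*)^2)^{-1}-\bar\chi^{-2}}$ are exactly what makes $e_k$ plus the transformed LWE error statistically close to a fresh Gaussian of the right covariance. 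These smoothing steps produce the terms $\tfrac{\epsilon}{1-\epsilon}$ and $2QK\epsilon$. Adaptivity, where $\adv$ picks $\hat\vecc_0$ after seeing $\pk$, is handled by a fully-adaptive Leaky-LWE style argument. I would adapt the semi-adaptive reduction of \cite{CiC:LaiSwaWoo25} non-black-box; the super-polynomial masking lets the distribution of $\vece^\tr\vecx$ be argued information-theoretically for each queried $\vecx$.

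\textbf{Step 2: $\hyb_{b,1}\equiv\hyb_{b,2}$.} For challenge ciphertexts, $\adv$ sees fewer than $t-|\cdv|$ shares beyond $\cdv$. Hence there is a $T^*\supseteq\cdv\cup\Lindex[\id]$ of size $t-1$ for which those shares are determined by the uniform (in $\hyb_{b,1}$) values indexed by $T^*$. Invertibility of $\matV_{\{0\}\cup T^*}$ modulo $q$, using that $\xi$ is coprime to $q$, then shows the revealed $\pd_k$ are exactly uniform. They can therefore be replaced by uniform values, independent of $\hat\vecs$ and $\mu_b$.

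\textbf{Step 3: $\hyb_{b,2}\equiv\hyb_{b,3}$ and the FHE reduction.} For non-challenge ciphertexts, the remaining dependence on $\hat\vecs$ is only via $\hat c_1-\hat\vecs^\tr\hat\vecc_0=\xi^{-2}\round{q/p}\mu+\hat e$, using linear decryption. The honest shares outside $T^*$ then become $\pd_k=(\text{simulatable})+u_k\hat e+e_k$, with coefficients $u_k$ the $\xi$-scaled entries of $\matV_{\Xi\setminus T^*}\matV_{\{0\}\cup T^*}^{-1}\vecv_0$, which have norm at most $\theta$. This is exactly one $\Leak\oracle$ query with $D=\theta$, $\leakdim=K$ and masking $\chi$. $\bdv_3$ therefore simulates $\hyb_{b,3}$: it samples $\matA$, $\vecb$, the tags and the corrupt keys itself, forwards encryption, challenge and evaluation queries, and answers partial decryptions using $\Leak\oracle$. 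This gives the term $\Adv^{\INDCPAwL}_{\FHE,\params_3,\bdv_3}$.

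\textbf{Main obstacle.} I expect Step 1 to be the hardest: answering adaptively chosen, unstructured $\hat\vecc_0$ while keeping the Gaussian-convolution accounting tight. I would first try to make the tag block alone carry the LWE rerandomisation, treating $\matG^{-1}(\hat\vecc_0)$ as an arbitrary short vector whose error contribution is flooded.
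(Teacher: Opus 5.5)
Your hybrid skeleton is the paper's: the same $\hyb_{b,0},\dots,\hyb_{b,3}$, the same two LWE instances, the Vandermonde argument, and the reduction $\bdv_3$ with leakage vector $\matV_\hdv\vecu_0$ padded by zeros on $\cdv$, $D=\theta$, $\leakdim=K$. However, Step~1 as written aims the tag-based LWE at the wrong quantity, and it omits exactly the coordinates that Step~2 needs.

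The product $\vecr^\tr\matA\vecx$ cannot be rerandomised. After your first swap it equals $(\vecb-\vecz-\vece)^\tr\vecx$ with $\vecb,\vecx$ public. It must stay that way because it carries the decryption of non-challenge ciphertexts, for which $\adv$ sees all $K$ shares and knows $\mu$. Your claim that every $\pd_k$ is a combination of $\vecr^\tr\matA\vecx$ and corrupt-side data also holds only for $|\cdv|=t-1$. Otherwise any $T^*\supseteq\cdv$ of size $t-1$ contains $t-1-|\cdv|$ honest ``filler'' indices, whose values $\matV_{T^*\setminus\cdv}\matR\matA\vecx$ are neither. But $|\cdv|<t-1$ is the only case in which Step~2 has content, since for $|\cdv|=t-1$ no honest share of a challenge ciphertext may be requested. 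So the ``uniform (in $\hyb_{b,1}$) values indexed by $T^*$'' that Step~2 invokes are never produced. With $\matR$ fixed, the filler values of all ciphertexts are images of the single matrix $\matR\matA$. $\adv$ sees noisy linear functions of this matrix through the full share sets of non-challenge ciphertexts, so these values are not fresh per ciphertext.

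The missing step is the paper's two-stage treatment after a fixed rebasing. Fix $\cdv^*\supseteq\{0\}\cup\cdv$ and write the shares via $\matT_\cdv=\xi\matV_{\cdv^*}^{-1}=(\vecu_0,\matU_1,\matU_2)$ applied to independent uniform secret rows $\vecr^\tr,\matW,\matS_\cdv$.

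First, $\params_2$-LWE through adaptive Leaky LWE (\cref{thm:core1-alt}) replaces $\vecr^\tr\matA$ and $\matW\matA$ jointly by uniform $\vecr_m^\tr,\matW_m$. The shares then contain $(\vecb-\vece-\vecz)^\tr\vecx$, with $\vece$ still explicit rather than flooded away. The hypotheses lower-bound $\chi$ in terms of $\chi_2^*$, not $\bar\chi$, so $\bdv_3$ has to sample $\vece$ itself.

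Second, $\ddvt$-LWE with the tag columns of $\matW_m$ as secret and $\matG^{-1}(\rtag)$ as public vectors (\cref{thm:core2-fix}) replaces $\matW_m\vecx$ by a fresh $\vecw_\vecx$ per ciphertext. Here the LWE error leaks through $\matV_\hdv\matU_1$, whose norm is at most $\gamma$; that is where the $\chi_1^*$ condition comes from.

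With $\vecw_\vecx$ in place, your per-ciphertext choice $T^*\supseteq\cdv\cup\Lindex[\id]$ in Step~2 is a valid alternative to the paper's use of \cite[Lemma 9, Eprint]{AC:CinLaiWoo25} for the fixed $\cdv^*$. The polynomial with prescribed values at $\{0\}\cup\cdv$ is uniform in either parametrisation, because the relevant Vandermonde matrices are invertible modulo $q$.
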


    The proof of \cref{thm:tfhe-security} is given in \fullversion{\cref{sec:missing-proofs-for-security}.}{\thefullversion. A more detailed description of the proof strategy can be found in \cref{sec:overview-analysis}.}
     Here we provide a brief sketch of the proof.

    \begin{proof}[Proof Sketch]
        Since the rounding $\round{\cdot}_{2^z}$ in $\ParDec$ is a deterministic public post-processing of each partial decryption, it suffices to prove security for the scheme without it.
        The proof proceeds through hybrids, $\hyb_{b,i}$ for $i=0,1,2,3$, that first remove the threshold layer and then reduce the remaining indistinguishability to the security of the underlying $\FHE$. Starting from the real experiment $\hyb_{b,0}$, the first hop to $\hyb_{b,1}$ rewrites $\vecb$ and the partial decryptions $\pd_k$ so that the threshold-layer correlations are simulated explicitly. This step is justified by standard $\LWE{}$: the term $\Adv_{\params_1,\bdv_1}^{\LWE{}}(\secpar)$ comes from the tag-dependent rerandomisation via $\matG^{-1}(\rtag)$, while $\Adv_{\params_2,\bdv_2}^{\LWE{}}(\secpar)$ comes from replacing the key-switching-key component $\vecb$. The two statistical terms in $\varepsilon$ are losses involved in the reductions from $\LWE{}$.

        The next hop, $\hyb_{b,1}\equiv\hyb_{b,2}$, replaces the partial decryptions of challenge ciphertexts by uniform values. This uses the Vandermonde structure induced by the $\slack$-subtractive set $\Xi$ and the fact that $\adv$ receives fewer than $t-|\cdv|$ honest partial decryptions for every ciphertext whose plaintext differs between games $0$ and $1$.

        Finally, $\hyb_{b,2}\equiv\hyb_{b,3}$ handles ciphertexts whose plaintext is the same in both games. Using the linear-decryption equation from \cref{def:linear-FHE-fine-grained}, the corresponding partial decryptions are rewritten so that the only remaining dependence on $\sk_\FHE=\hatvec{s}$ is through the FHE decryption noise $\hat e$. These values are simulated by the leakage oracle in the IND-CPA-with-Leakage experiment, giving the term $\Adv_{\FHE,\params_3,\bdv_3}^{\INDCPAwL}(\secpar)$. At this point, challenge shares are uniform and non-challenge shares are simulated from the leakage oracle, so summing the hybrid losses gives the stated bound $\varepsilon$.\qed
    \end{proof}

\begin{remark}[Relation to IND-CPA$^D$ security of FHE]
  By setting $K=t=1$, so that $\tFHE$ collapses to a plain FHE scheme, 
  we recover from \cref{def:tFHE-security} the IND-CPA$^D$ security notion for FHE defined in \cite{EC:LiMic21}.
  This means that our construction $\tFHE$ in \cref{fig:threshold-FHE-fine-grained} 
  is also a compiler which turns any FHE satisfying the requirements in \cref{sec:FHE-requirements} into an FHE that is IND-CPA$^D$ secure \cite{EC:LiMic21}.
\end{remark}

\begin{remark}[Pseudorandom tag and context-dependent decryption.]\label{rem:pseudorandom-tag}
  In our construction in \cref{fig:threshold-FHE-fine-grained}, the tag $\rtag$ in a ciphertext $\ctxt = (\ctxt_\FHE, \rtag)$ is randomly sampled during encryption and refreshed during homomorphic evaluation, independent of the main part $\ctxt_\FHE$ of the ciphertext.
  If the underlying $\FHE$ has collision-resistant ciphertexts in the sense described in \cref{rem:alt-security-def}, then our construction would be secure wrt.~the alternative security definition described in \cref{rem:alt-security-def} where the oracles keep track of ciphertexts via their own representations rather than identifiers. 
  This means that the security proof would also go through, in the random oracle model, if we replace the truly random tag $\rtag \sample \ring_q^h$ by a pseudorandom $\rtag \coloneqq \mathcal{H}(\ctxt_\FHE)$ where $\mathcal{H}$ is a hash function modelled as a random oracle. 
  Deriving tags in a pseudorandom manner has the obvious advantages of reducing ciphertext size and more importantly reducing use of randomness, which could be scarce in constrained devices.   
  Furthermore, we can easily upgrade our construction to support context-dependent decryption \cite{AC:BBNRS25} by hashing $\ctxt_\FHE$ together with the context string.
\end{remark}

\begin{remark}[Optimised Lagrange interpolation in $\Dec$]
\label{rem:lagrange-interpolation}
One optimisation that may not arise naturally from the protocol description concerns $\Dec$, where the message is reconstructed from the $t$ partial decryptions using the (scaled) first row $\xi\cdot\vec{v}_0^\transpose \mat{V}_T^{-1}$ of the inverse Vandermonde matrix $\mat{V}_T$ of the active nodes $\{\zeta_k\}_{k\in T}$. Instead of inverting the $t\times t$ matrix $\mat{V}_T$ over $\ring_q$, we use that this row is the vector of Lagrange basis polynomials evaluated at $0$, namely $\big(L_k(0)\big)_{k\in T}$ with $L_k(0)=\prod_{j\in T\setminus\{k\}}\zeta_j/(\zeta_j-\zeta_k)$, which costs $O(t^2)$ to form directly. Moreover, since the nodes are exactly the $K$-th roots of unity, the identity $\prod_{m\in T}(x-\zeta_m)\prod_{m\in T^c}(x-\zeta_m)=x^K-1$ lets us rewrite each weight as a product over the complement, $L_k(0)=-\tfrac{1}{K}\prod_{m\in T}(-\zeta_m)\prod_{m\in T^c}(\zeta_k-\zeta_m)$. Evaluating the row over whichever of $T$ or $T^c$ is smaller thus costs $O(t\cdot\min\{t,K-t\})$, keeping $\Dec$ cheap for large committees whenever $t$ is close to $1$ or to $K$.\footnote{Asymptotically, the row can be built in quasi-linear time $O(t\log^2 t)$ via fast multipoint evaluation~\cite{vzGG}; this appears less efficient at the committee sizes we consider, but could be relevant for larger ones.}
\end{remark}


\section{Practical Instantiation}\label{sec:implementation}
\label{sec:FHE_compatibility}

\ourwork is a generic construction to build threshold FHE from an FHE scheme. Concretely, its implementation consists of two main components: $\PreDecrypt$ and the threshold decryption layer. The first establishes the bridge between the FHE scheme and our threshold layer: given a generic FHE ciphertext, it produces a ciphertext satisfying the requirements described in \cref{sec:Plivi-FHE}. This design makes our threshold decryption layer generic wrt.~the choice of FHE scheme, at the cost of making $\PreDecrypt$ scheme dependent. Let us start by discussing potential instantiations of $\PreDecrypt$ for some of the most popular modern FHE schemes:

\medskip\noindent
\textbf{BFV/BGV/CKKS}~\cite{C:Brakerski12,EPRINT:FanVer12,ITCS:BraGenVai12,AC:CKKS17}: These schemes usually operate over rings that can be directly used by our threshold decryption layer, admitting $\PreDecrypt$ to be just the identity in some scenarios. \cref{def:linear-FHE-fine-grained} expects a BFV-style ciphertext, which can be achieved for BGV through simple transforms (see Appendix~A of \cite{C:AlpPei13}) and for (discrete~\cite{EC:BCKS24}) CKKS through bootstrapping (\emph{e.g.}, \cite{EPRINT:Kim25b}). For the original (approximate) CKKS~\cite{AC:CKKS17}, on the other hand, an additional challenge is in separating message from noise, a non-trivial security problem~\cite{EC:LiMic21} that afflicts approximate schemes whenever the existence of decryption oracles is considered. This problem is mostly orthogonal to the threshold setting, and considering potential solutions (\emph{e.g.}, ~\cite{PKC:BCDKLT25}) is left as future work.

\medskip
\noindent
\textbf{FHEW/TFHE}~\cite{EC:DucMic15,AC:CGGI16}: These schemes are typically built on top of ring variants~\cite{EC:DucMic15} of AP-style~\cite{C:AlpPei14} bootstrapping for LWE ciphertexts in the form $\mathbb{Z}_q^{\hat{n}} \times \mathbb{Z}_q$, which cannot be directly used by our threshold layer. For these cases, we employ Packing Key Switching~\cite{AC:CGGI16,ICALP:MicSor18,ACNS:CDKS21} techniques, which, given a vector of $k$ LWE ciphertexts encrypting messages $\{m_i\}_{i \in [0, k)} \in \mathbb{Z}_q^k$, produces a single ciphertext of the form $\ring_q^{\hat{n}} \times \ring_q$ (for some ring of degree at least $k$) encrypting $\sum_{i=0}^{k-1} m_i X^i \in \ring_q$. This ciphertext can either be defined over the ring required by the threshold layer, or ring switched~\cite{SCN:GHPS12} to it.  Alternatively, these schemes could also be instantiated with AP-style bootstrapping for RLWE ciphertexts~\cite{ICALP:MicSor18,CCS:GuiPer25}, which already perform similar packing by construction.

\begin{remark}[Bootstrapping]\label{rem:bootstrapping}
  In addition to structural requirements, our construction also requires the FHE scheme to provide a minimum ratio between ciphertext modulus and noise magnitude at the end of the homomorphic evaluation. For leveled schemes, one could consider increasing parameters to account for it, but this would introduce performance overheads for the entire computation. A more generic and modular solution is to perform a bootstrapping over the results of the evaluation procedure. We do not specify this bootstrapping as part of $\PreDecrypt$, as schemes may have different ways of performing it, but, in practice, including a bootstrapping at the beginning of $\PreDecrypt$ is a natural solution for most scenarios, and we adopt it in our implementation.
\end{remark}

\subsection{Implementation Details}\label{sec:imple_details}

To demonstrate compatibility with existing FHE schemes, our implementation\footnote{Available at \url{https://github.com/osdnk/tulva}.} natively consumes TFHE ciphertexts produced by the \texttt{tfhe-rs} library. A thin compatibility layer (written in Rust, with Python bindings) performs the necessary repacking and key-switching within \texttt{tfhe-rs}: since \texttt{tfhe-rs} operates on LWE ciphertexts over $\ZZ_q$, up to $\varphi$ of them are gathered into a single ciphertext over $\ring_q$, as required by our threshold layer.

As a first step of $\PreDecrypt$, the ciphertext is bootstrapped to reduce the noise magnitude, include the $\xi^{-2}$ slack, and switch to the modulus $q$ required by our construction, which is significantly larger than the $2^{64}$ modulus \texttt{tfhe-rs} typically operates on. We provide our own implementation of the \cite{AC:CGGI16} and \cite{CCS:GuiPer25} bootstrappings: the former is generally faster for small plaintext spaces ($p = 2^4$, as used in our examples), at the cost of higher memory usage and a larger public key, while the latter is more efficient for larger plaintext spaces ($p = 2^6$ or $p = 2^8$). We report runtimes only for \cite{AC:CGGI16}, since our example parameters use the small plaintext space, but also include \cite{CCS:GuiPer25} (with the \cite{ACNS:CDKS21} repacking technique) for use in other parameter regimes. Before this bootstrapping step, all homomorphic operations are performed in the smaller modulus and require no parameter adaptation compared to a non-threshold setting; we use the parameters suggested by \texttt{tfhe-rs} for $2^{-128}$ security and correctness errors.

The threshold layer is orchestrated in Python, with the most computationally intensive operations implemented in C and exposed via bindings; this keeps the critical code easy to optimise and the overall implementation simple to modify. The C layer is mainly an adaptation of the code given in \cite{C:CCCFGS25}, and Gaussian sampling, used for both key generation and noise flooding, is a generalised and optimised implementation of the one in \cite{CCS:LyuSeiSte24}.

We operate on power-of-two cyclotomic rings, the most common in practical FHE implementations, and employ standard RNS and NTT polynomial arithmetic (also known as double-CRT \cite{C:GenHalSma12}). We applied standard \texttt{AVX-512} optimisations, especially in the NTT transforms, while also optimising for cache locality and parallelism; the code remains portable to any platform, with a performance hit where \texttt{AVX-512} is unavailable.

\subsection{Parameter Selection}

\fullversion{
\medskip\noindent\textbf{Summary of constraints.}
For the convenience of referencing, we summarise \cref{thm:correctness,thm:tfhe-security,thm:leakage-security,rmk:average-correctness} in the following \cref{cor:tFHE-from-FHE}. 
We note that Item 4 is associated to the ``average-case'' correctness heuristic.
\begin{corollary}\label{cor:tFHE-from-FHE}
  Let $\ring = \ZZ[\zeta]$ be of ring degree $\varphi$ and with field expansion factor $\expansion_p$ w.r.t. the coefficient $\ell_p$-norm. 
  Let $\FHE$ be an FHE scheme. 
  Let $\Xi$ be a $\xi$-subtractive set of cardinality $K$ with quality measurements $(\rho,\gamma,\theta)$ (\cref{def:subtractive-sets}).
  Let $\delta = \delta(\lambda)$ be an independent parameter such that $2^\delta \in \lambda^{\omega(1)}$.
  Suppose $\params = (\FHE,\ring,n,m,h,q,\beta,\bar{\chi},\chi,z,\Xi)$, $\chi_1^*,\chi_2^* >0$, $\epsilon \in (0,1/2] \cap \negl$ and $Q \in \poly$ satisfy:
  \begin{enumerate}
    \item $\FHE$ has $(\ring,\hat{n},p,q,\xi,B)$-linear-decryption (\cref{def:linear-FHE-fine-grained}). 
    \item $\FHE$ is IND-CPA secure.
    \item $q,\xi$ are coprime.
    \item $\frac{\abs{q \bmod p}}{2} +\xi^2\cdot B +\expansion_2 \cdot \sqrt{\secpar} \cdot \left( \sqrt{m} \cdot \xi^2\cdot \bar{\chi}\cdot\frac{\beta}{2}+ \sqrt{t}\cdot \rho\cdot (\chi + 2^{z-1}) \right) +\frac{1}{2} \leq \frac{q}{2p}$ (\cref{eq:average-correctness}).
    \item $\chi \geq 2^\delta \cdot \sqrt{Q \cdot \varphi \cdot K} \cdot \expansion_\infty \cdot B \cdot \theta$.\footnote{We let $2^\delta$ absorb the $\sqrt{\pi/2}$ factor.} 
    \item\label{item:cor-LWE} $\LWE{\ring,h,\poly,q,\chi_1^*}$\footnote{This implies $\LWE{\ring,h \ceil{\log_\beta q},\poly,q,\chi_1^*,\ddvt}$ where $\ddvt = \mathcal{U}([-\beta/2,\beta/2)\cap\ZZ)^{h \ceil{\log_\beta q}}$ \cite[Theorem 4.3]{C:BLMR13}.}, $\chi_1^* \geq \sqrt{2} \cdot \eta_\epsilon(\ring^t)$, $\hat{\chi} >0$ and
      \[
          \frac{\gamma^2}{\chi^2} + \frac{1}{\hat{\chi}^2} \leq \frac{1}{2 \eta_\epsilon(\ring^t)^2 + (\chi_1^*)^2}.
      \]
    \item\label{item:cor-LWE-A} $\LWE{\ring,n,m,q,\chi_2^*}$, $\chi_2^* \geq \sqrt{2} \cdot \eta_\epsilon(\ring^m)$, $\gamma \cdot \sqrt{\varphi m} \cdot \beta/2 \geq 1$, and, with $\tensorexpansion{t} \coloneqq 1$ if $\conductor$ is a power of two and $\tensorexpansion{t} \coloneqq \sqrt{t}$ otherwise,
      \[
          \frac{Q \cdot K \cdot \varphi \cdot m \cdot \expansion_2^2 \cdot (\tensorexpansion{t})^2 \cdot \gamma^2 \cdot \beta^2}{2 \cdot \chi^2} + \frac{1}{\bar{\chi}^2} \leq \frac{1}{2 \eta_\epsilon(\ring^m)^2 + (\chi_2^*)^2}.
      \]
    \item\label{item:cor-eps} $8 \cdot (3Q + t \cdot QK + 3t) \cdot \epsilon \leq 2^{-\delta}$.\footnote{This makes the statistical terms of \cref{thm:tfhe-security} total at most $2^{-\delta}$, using $1/(1-\epsilon) \leq 2$.}
  \end{enumerate}
  Then, the $(t,K)$-threshold FHE scheme $\tFHE_\params$ in \cref{fig:threshold-FHE-fine-grained} is correct (with ``average-case'' correctness) and $Q$-bounded IND-CPA secure.
  In particular, if $\params$, $\chi_1^*$ and $\chi_2^*$ are such that conditions \ref{item:cor-LWE}, \ref{item:cor-LWE-A} and \ref{item:cor-eps} hold for all $Q \in \poly$,
  then $\tFHE_\params$ in \cref{fig:threshold-FHE-fine-grained} is also (unbounded) IND-CPA secure.
\end{corollary}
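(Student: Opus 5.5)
The corollary follows by assembling \cref{thm:correctness}, \cref{rmk:average-correctness}, \cref{thm:tfhe-security} and \cref{thm:leakage-security}. The work is matching each numbered condition to the hypotheses of those results and checking that every term in the advantage bound of \cref{thm:tfhe-security} is negligible.

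\emph{Correctness.} Item 1 supplies linear decryption with negligible error $\varepsilon_\FHE$, and Item 3 supplies coprimality of $q$ and $\xi$. Item 4 is the average-case constraint \cref{eq:average-correctness} with an extra slack of $\frac12$, which absorbs the final rounding. So the argument of \cref{thm:correctness}, run with the average-case heuristic of \cref{rmk:average-correctness}, gives correctness error at most $\varepsilon_\FHE$ plus a tail term that is negligible because of the $\sqrt{\secpar}$ factor. I will not redo the error bookkeeping; I will only note that the error $f$ and the slack term $\xi^2\hat e$ are bounded exactly as in that proof sketch.

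\emph{Security: LWE and statistical terms.} I invoke \cref{thm:tfhe-security} with the given $\epsilon,\chi_1^*,\chi_2^*,\hat\chi$.
\begin{itemize}
\item The first noise condition of the theorem, $\gamma \le \chi\sqrt{(2\eta_\epsilon(\ring^t)^2+(\chi_1^*)^2)^{-1}-\hat\chi^{-2}}$, is Item~\ref{item:cor-LWE} after squaring and rearranging.
\item The second condition follows from Item~\ref{item:cor-LWE-A} the same way: squaring $\sqrt{2QK\varphi m}\,\expansion_2\tensorexpansion{t}\gamma\beta/2$ gives $QK\varphi m\,\expansion_2^2(\tensorexpansion{t})^2\gamma^2\beta^2/2$, matching the stated inequality.
\item The LWE instance $\params_1$, with tag distribution $\ddvt$, is hard because of the cited \cite[Theorem 4.3]{C:BLMR13} reduction from $\LWE{\ring,h,\poly,q,\chi_1^*}$. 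The instance $\params_2$ is assumed directly in Item~\ref{item:cor-LWE-A}.
\item Since $t\le K$ is polynomial, the terms $2t(\Adv^{\LWE{}}_{\params_1}+\Adv^{\LWE{}}_{\params_2})$ are negligible.
\item The statistical terms are $\frac{12Q\epsilon}{1-\epsilon}+4t(2QK+\frac{3}{1-\epsilon})\epsilon$. Using $1/(1-\epsilon)\le 2$, they are at most $24Q\epsilon+8tQK\epsilon+24t\epsilon = 8(3Q+tQK+3t)\epsilon$. By Item~\ref{item:cor-eps} this is at most $2^{-\delta}$, which is negligible since $2^\delta\in\secpar^{\omega(1)}$.
\end{itemize}

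\emph{Security: FHE leakage term.} The remaining term is $\Adv^{\INDCPAwL}_{\FHE,\params_3,\bdv_3}$ with $\params_3=(\theta,K,\chi)$, i.e.\ $D=\theta$ and $\leakdim=K$. I bound it with \cref{thm:leakage-security}. The reduction $\bdv_3$ makes at most $Q$ leakage queries, one per ciphertext receiving partial decryptions, so the hypothesis becomes $\chi>\secpar^{\omega(1)}\sqrt{Q\varphi K}\,\expansion_\infty\,\theta\, B$. Item 5 gives this, with $2^\delta$ absorbing both the $\secpar^{\omega(1)}$ factor and the constant $C$. Together with Items 1--2, this makes the term negligible.

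The step most likely to need care is this one. I must confirm that the leakage vectors $\leco$ queried by $\bdv_3$ are the rows $\xi\mat V_{\Xi\setminus T^*}\mat V^{-1}_{\{0\}\cup T^*}\vecv_0$, so that their norm is bounded by $\theta$. I also need the query count to be $Q$, not $QK$. Both facts have to be read off the hybrid $\hyb_{b,3}$ of the proof of \cref{thm:tfhe-security}, not from its statement.

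Finally, for unbounded security, suppose Items \ref{item:cor-LWE}--\ref{item:cor-eps} hold for every $Q\in\poly$. An arbitrary PPT adversary queries partial decryptions on only polynomially many ciphertexts $Q_\adv$. Applying the $Q_\adv$-bounded result to it gives a negligible advantage.
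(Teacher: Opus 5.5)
Your proposal is correct and follows the paper's own proof, which assembles \cref{thm:correctness,rmk:average-correctness,thm:leakage-security,thm:tfhe-security} in the same way. It uses Items 3--4, together with the linear decryption of Item 1, for correctness; Items 1, 2 and 5 for IND-CPA-with-$(\theta,K,\chi)$-Leakage security; and Items 3 and 6--8 for \cref{thm:tfhe-security}. Your explicit checks fill in details the paper leaves implicit: squaring the two noise conditions, summing the statistical terms to $8(3Q+tQK+3t)\epsilon$, and confirming that $\bdv_3$ makes at most $Q$ leakage queries of norm at most $\theta$.

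One small gap in the unbounded claim. Item 5 is assumed only for the fixed $Q$, so you cannot directly invoke the $Q_\adv$-bounded statement. The fix is easy: since $2^\delta\in\secpar^{\omega(1)}$, Item 5 still gives $\chi>\secpar^{\omega(1)}\cdot\sqrt{Q_\adv\cdot\varphi\cdot K}\cdot\expansion_\infty\cdot B\cdot\theta$ for every polynomial $Q_\adv$, and that is all \cref{thm:leakage-security} needs.
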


\begin{proof}
  Follows from putting \cref{thm:correctness,thm:tfhe-security,thm:leakage-security,rmk:average-correctness} together:
  Items 3--4 imply the correctness constraints in \cref{thm:correctness,rmk:average-correctness}; 
  Items 1, 2, 5 imply that $\FHE$ is IND-CPA-with-$(D,K,\chi)$-Leakage secure with $D = \theta$ by \cref{thm:leakage-security};
  security follows from Items 3, 6, 7, 8 by invoking \cref{thm:tfhe-security}.
  \qed
\end{proof}

}{}

We search for parameter sets that minimise the public-key $|\pk|$ and partial-decryption $|\pd|$ sizes, while keeping (partial) decryption reasonably efficient, subject to the correctness and security constraints of \cref{thm:correctness,thm:tfhe-security,thm:leakage-security,rmk:average-correctness}.
We set the plaintext modulus $p = 2^4 = 16$, a common choice for FHEW/TFHE-based applications~\cite{JC:BBBCLO23}, and target $\secpar=128$ bits of computational security, $\delta=64$ bits of statistical security, and $Q = 2^{64}$ partial decryption queries.
Recall that the underlying $\FHE$ must have $(\ring,\hat{n},p,q,\xi,B)$-linear-decryption; we insist on a noise bound $B \leq 2^{70}$ to leave enough room for $\FHE.\PreDecrypt$.
When searching for parameters using the Lattice Estimator~\cite{JMC:AlbPlaSco15}, we keep the degree $\ell$ of gadget decomposition low, for faster partial decryption and decryption, and require $\beta \in [2^{35}, 2^{50}]$ so that it corresponds to one ``limb'' in the RNS decomposition, allowing a more efficient gadget decomposition.

Following this methodology, we found parameter sets for conductors $\conductor\in\{512,4096\}$ and various threshold profiles. The small ring is selected for low latency (a single decryption is fast to compute, with reasonable per-sample efficiency), the large ring for high throughput (when many messages are to be decrypted).
Instead of specifying the threshold $t$ directly, we specify the number of parties $K$ and a window size $w$, covering all profiles with $t\in \{2,\ldots,w\} \cup \{ K - w, \ldots, K\}$. Since the partial decryption size and time scale with $\min\{t,K-t\}$, a window of size $w$ collects exactly the profiles whose cost is no worse than that at $\min\{t,K-t\}=w$.
The resulting parameters are listed in \cref{tab:tulva_params}.

When calculating the size of a public key, we assume that $\mat{A}$ in the public key is sampled using a 256-bit seed, and that the public key only stores this seed and the vector $\vec{b}$.

\providecommand{\fitbox}[1]{\sbox0{#1}\ifdim\wd0>\linewidth\resizebox{\linewidth}{!}{\usebox0}\else\usebox0\fi}

\begin{table}[t]
  \centering
  \setlength{\tabcolsep}{2.5pt}
  \caption{\ourworktfhe parameters across conductors $\conductor\in\{512,4096\}$ (ring degree $\varphi=\conductor/2$), $Q=2^{64}$, $p=2^4$. A set of width $w$ supports every threshold $t \in [2,w] \cup [K-w,K]$. Here the gadget base $\beta$ is constrained to a single RNS limb. The column $|\ctxt|$ gives the size of the input ciphertext, i.e. before $\PreDecrypt$.}\label{tab:tulva_params}
  \fitbox{%
  \begin{tabular}{lccccccccccccccc}
    \toprule
    & $K$ & $w$ & $\secpar$ & $q$ & $\ell$ & $\hat{n}$ & $n$ & $h$ & $m$ & $\chi$ & $\bar{\chi}$ & $|\ctxt|$\,[kB] & $|\pk|$\,[kB] & $|\pd|$\,[kB] & $|\pd|$/bit [b] \\ \midrule
    \texttt{S.1} & \phantom{00}8 & \phantom{0}4 & 130 & $2^{210}$ & 5 & 32 & 10 & \phantom{0}1 & 165 & $2^{193}$ & $2^{97}$ & 18 & 1083 & \phantom{0}0.7 & \phantom{0}5.5 \\
    \texttt{S.2} & \phantom{0}32 & 16 & 131 & $2^{235}$ & 5 & 36 & 16 & \phantom{0}3 & 195 & $2^{205}$ & $2^{81}$ & 18 & 1432 & \phantom{0}1.0 & \phantom{0}8.3 \\
    \texttt{S.3} & 128 & 16 & 129 & $2^{300}$ & 6 & 45 & 27 & \phantom{0}9 & 324 & $2^{240}$ & $2^{72}$ & 18 & 3038 & \phantom{0}2.0 & 16.0 \\
    \texttt{S.4} & 128 & 64 & 128 & $2^{329}$ & 7 & 49 & 38 & 17 & 462 & $2^{253}$ & $2^{45}$ & 18 & 4750 & \phantom{0}2.5 & 19.8 \\
    \texttt{S.5} & 512 & 16 & 128 & $2^{368}$ & 8 & 55 & 37 & 17 & 576 & $2^{276}$ & $2^{71}$ & 18 & 6624 & \phantom{0}3.0 & 23.8 \\
    \addlinespace
    \texttt{L.1} & \phantom{00}8 & \phantom{0}4 & 162 & $2^{215}$ & 5 & \phantom{0}5 & \phantom{0}2 & \phantom{0}1 & \phantom{0}30 & $2^{196}$ & $2^{98}$ & 32 & 1613 & \phantom{0}5.8 & \phantom{0}5.8 \\
    \texttt{L.2} & \phantom{0}32 & 16 & 144 & $2^{240}$ & 5 & \phantom{0}5 & \phantom{0}3 & \phantom{0}1 & \phantom{0}30 & $2^{208}$ & $2^{82}$ & 32 & 1800 & \phantom{0}8.5 & \phantom{0}8.5 \\
    \texttt{L.3} & 128 & 16 & 134 & $2^{308}$ & 7 & \phantom{0}6 & \phantom{0}4 & \phantom{0}2 & \phantom{0}56 & $2^{247}$ & $2^{82}$ & 32 & 4312 & 16.8 & 16.8 \\
    \texttt{L.4} & 128 & 64 & 137 & $2^{336}$ & 7 & \phantom{0}7 & \phantom{0}5 & \phantom{0}3 & \phantom{0}70 & $2^{259}$ & $2^{48}$ & 32 & 5880 & 20.5 & 20.5 \\
    \texttt{L.5} & 512 & 16 & 128 & $2^{376}$ & 8 & \phantom{0}7 & \phantom{0}5 & \phantom{0}3 & \phantom{0}80 & $2^{283}$ & $2^{75}$ & 32 & 7520 & 24.8 & 24.8 \\
    \bottomrule
  \end{tabular}}
\end{table}


\subsection{Experiments}
\label{sec:experiments}
All experiments were conducted on a single AWS \texttt{c7a.32xlarge} instance, equipped with 128 AMD EPYC 9R14 (Genoa) vCPUs and 256\,GB of RAM. For \texttt{S.[1-5]}, we considered two types of experiments: decryption of a single message and decryption of $\varphi$ messages (the full ring), the latter having a substantial impact on the $\PreDecrypt$ phase. For \texttt{L.[1-5]}, we report only the full-ring decryption results, as employing such a large ring for single-message decryption is unreasonable. The reported times are available in \cref{tab:tulva_runtimes}. We do not report the $\kgen$ time as it was not optimised (it takes at most 36 seconds) and is a one-time cost that can be amortised over many decryptions. We report $\PreDecrypt$ as a subroutine separate from $\ParDec$ and $\Dec$ to highlight the cost of the pre-decryption phase, which is the most expensive part (yet, secret independent). 

\providecommand{\fitbox}[1]{\sbox0{#1}\ifdim\wd0>\linewidth\resizebox{\linewidth}{!}{\usebox0}\else\usebox0\fi}

\begin{table}[t]
  \centering
  \setlength{\tabcolsep}{2pt}
  \caption{Runtimes of the threshold layer on the parameter sets of \cref{tab:tulva_params}, using the \cite{AC:CGGI16} bootstrapping.}\label{tab:tulva_runtimes}
  \fitbox{%
  \begin{tabular}{lccccc}
    \toprule
    Set & $t/K$ & msgs & $\PreDecrypt$\,[s] & $\ParDec$\,[ms] & $\Dec$\,[\textmu s] \\ \midrule
    \texttt{S.1} & $\phantom{0}4/8\phantom{00}$ & $1$ & \phantom{00}0.13 & 0.64 & \phantom{00}39 \\
      &  & $\varphi$ & \phantom{00}5.83 & 0.51 & \phantom{00}39 \\
    \addlinespace[2pt]
    \texttt{S.2} & $16/32\phantom{0}$ & $1$ & \phantom{00}0.16 & 0.60 & \phantom{00}91 \\
      &  & $\varphi$ & \phantom{00}7.70 & 0.43 & \phantom{00}77 \\
    \addlinespace[2pt]
    \texttt{S.3} & $16/128$ & $1$ & \phantom{00}0.24 & 1.11 & \phantom{00}94 \\
      &  & $\varphi$ & \phantom{0}12.68 & 0.77 & \phantom{00}82 \\
    \addlinespace[2pt]
    \texttt{S.4} & $64/128$ & $1$ & \phantom{00}0.36 & 2.80 & \phantom{0}223 \\
      &  & $\varphi$ & \phantom{0}22.58 & 2.98 & \phantom{0}610 \\
    \addlinespace[2pt]
    \texttt{S.5} & $16/512$ & $1$ & \phantom{00}0.47 & 3.68 & \phantom{0}534 \\
      &  & $\varphi$ & \phantom{0}31.03 & 4.26 & \phantom{0}562 \\
    \midrule
    \texttt{L.1} & $\phantom{0}4/8\phantom{00}$ & $\varphi$ & \phantom{0}74.17 & 0.63 & \phantom{00}70 \\
    \texttt{L.2} & $16/32\phantom{0}$ & $\varphi$ & \phantom{0}80.08 & 0.34 & \phantom{0}403 \\
    \texttt{L.3} & $16/128$ & $\varphi$ & 144.95 & 0.62 & \phantom{0}967 \\
    \texttt{L.4} & $64/128$ & $\varphi$ & 258.87 & 1.25 & 1495 \\
    \texttt{L.5} & $16/512$ & $\varphi$ & 303.27 & 1.10 & \phantom{0}875 \\
    \bottomrule
  \end{tabular}}
\end{table}

\cref{tab:tulva_runtimes} confirms the expected scaling of \ourworktfhe. For the small ring, a single decryption is fast across all profiles: the pre-decryption stays below one second and the online cost is at most a few milliseconds of partial decryption followed by a combination step of tens to hundreds of microseconds (e.g.\ $0.13$\,s, $0.64$\,ms and $39$\,\textmu s for \texttt{S.1}). $\PreDecrypt$ is by far the most expensive step and, as expected, scales with the number of decrypted messages, growing from sub-second for a single message to between $6$ and $31$\,s for the full $\varphi=256$ ring. At the same time, the per-message partial-decryption and combination times stay within a factor of three. Decrypting the full ring therefore amortises this fixed cost over many messages, and the large ring (\texttt{L.[1-5]}) pushes this further by packing $\varphi=2048$ messages per batch. Finally, the online cost scales with $\min\{t,K-t\}$ rather than by $K$: for instance \texttt{L.5} ($t/K=16/512$) combines in $0.88$\,ms, faster than \texttt{L.4} ($t/K=64/128$) at $1.50$\,ms, which stays consistent with \cref{rem:lagrange-interpolation}.

\medskip\noindent\textbf{Comparison with \cite{CCS:ZZLP25}.}
The concurrent work of \cite{CCS:ZZLP25} also targets LWE threshold decryption, removing the decryption noise through an interactive \emph{MPC rounding} rather than masking it. We compare their secret-dependent online latency for a single ciphertext \cite[Table~2]{CCS:ZZLP25} with our $\ParDec+\text{(network delay)}+\Dec$, matching their party count to our active threshold $t$ (\cref{tab:zzlp_comparison}).
For Tulva, the reported latency covers partial decryption, communication, and share combination after public ciphertext preprocessing has completed.
Preprocessing is excluded from this comparison and reported separately in \cref{tab:tulva_runtimes}.
The results show lower latency for Tulva and a smaller increase as network delay grows, consistent with its single communication round. Their ``preparation'' phase is interactive and generates secret correlated randomness, whereas our $\PreDecrypt$ is public, non-interactive, and ciphertext-dependent, and can be performed by an untrusted server.
Tulva's shorter secret-dependent phase comes at the cost of ciphertext-dependent preprocessing, which takes $0.13$\,s for \texttt{S.1} and $0.16$\,s for \texttt{S.2} (\cref{tab:tulva_runtimes}).
The preparation phase of \cite{CCS:ZZLP25} can instead be completed before the ciphertext becomes available.

\begin{table}[t]
  \centering
  \setlength{\tabcolsep}{5pt}
  \caption{Secret-dependent online-phase latency for a single ciphertext.}\label{tab:zzlp_comparison}
  \begin{tabular}{lcccc}
    \toprule
    Set & $t$ & ping & \cite{CCS:ZZLP25}\,[ms] & ours\,[ms] \\ \midrule
    \texttt{S.1} & \phantom{0}4 & $\phantom{0}1$\,ms & $\phantom{0}8.48$ & $\phantom{0}1.68$ \\
    \texttt{S.2} & 16 & $\phantom{0}1$\,ms & $24.23$ & $\phantom{0}1.69$ \\
    \addlinespace[2pt]
    \texttt{S.1} & \phantom{0}4 & $10$\,ms & $55.62$ & $10.68$ \\
    \texttt{S.2} & 16 & $10$\,ms & $56.98$ & $10.69$ \\
    \bottomrule
  \end{tabular}
\end{table}


\section{Examples of FHEVM Programs}\label{sec:fhevm}

Threshold FHE can be used to implement confidential smart contracts:
many users encrypt private inputs under one FHE public key, a neutral evaluator (the validator set) applies a public function $f$ to the ciphertexts while learning nothing about the inputs, and a $t$-of-$K$ committee runs the $\ParDec$ and $\Dec$ procedures of the threshold FHE scheme to reveal only a chosen part of the output.
The parties that hold the data are thus separated from the party that computes and from the parties that decrypt.
Furthermore, the committee composition can change dynamically over time.

We describe four concrete FHEVM programs and report the runtimes of their implementations using \ourworktfhe.
Detailed descriptions of the functions $f$ are deferred to \cref{app:fhevm}.
Since every value occupies a single $\ZZ_p$ element, each program is correct only under explicit domain constraints that bound its inputs, e.g.~the value ranges such as $[0,B)$ below. The exact constraints are given in \cref{app:fhevm}.
For each program we report runtimes over the five small-ring parameter sets \texttt{S.1}--\texttt{S.5} of \cref{tab:tulva_params}:
homomorphic evaluation of $f$ ($\Eval$), the functional bootstrap that carries each revealed value to the committee ($\PreDecrypt$), and the committee's partial and final decryption ($\ParDec$, $\Dec$).
The three profile-dependent phases ($\PreDecrypt$, $\ParDec$, $\Dec$) are listed per profile and $\Eval$ is listed once.
As in \cref{sec:experiments}, we omit key generation, a one-time cost that is amortised over many reveals.

\begin{table}[!t]
  \centering
  \setlength{\tabcolsep}{3pt}
  \caption{Sealed-bid auction (\cref{alg:fhevm-auction}).}
  \label{tab:fhevm-auction}
  \fullversion{}{\adjustbox{max width=\columnwidth}}%
  {%
  \begin{tabular}{@{}llcccc@{}}
    \toprule
    $\mathsf{Set}$ & $t/K$ & $\Eval$ [ms] & $\PreDecrypt$ [ms] & $\ParDec$ [ms] & $\Dec$ [\textmu s] \\
    \midrule
    \texttt{S.1} & $\phantom{0}4/8\phantom{00}$ & \multirow{5}{*}{513} & \phantom{0}275 & 0.6 & \phantom{0}42 \\
    \texttt{S.2} & $16/32\phantom{0}$ & & \phantom{0}310 & 0.6 & \phantom{0}90 \\
    \texttt{S.3} & $16/128$ & & \phantom{0}492 & 1.2 & \phantom{0}98 \\
    \texttt{S.4} & $64/128$ & & \phantom{0}724 & 2.7 & 220 \\
    \texttt{S.5} & $16/512$ & & 1120 & 4.0 & 206 \\
    \bottomrule
  \end{tabular}}
  \par\smallskip
  {\footnotesize\emph{Parameters:} $n = 4$ bidders, max. bid $B = 4$.}
\end{table}

\medskip\noindent\textbf{1. Sealed-Bid Auction (\cref{tab:fhevm-auction}).} 
A pool of $n$ users encrypt their bids over $[0, B)$ and submit them to the contract, which homomorphically computes the winning bid and bidder index.
The contract reveals the winner and the price at which the auction clears but not the losing bids.

\begin{table}[!t]
  \centering
  \setlength{\tabcolsep}{3pt}
  \caption{Confidential ERC-20 (\cref{alg:fhevm-erc20}).}
  \label{tab:fhevm-erc20}
  \fullversion{}{\adjustbox{max width=\columnwidth}}%
  {%
  \begin{tabular}{@{}llcccc@{}}
    \toprule
    $\mathsf{Set}$ & $t/K$ & $\Eval$ [s] & $\PreDecrypt$ [ms] & $\ParDec$ [ms] & $\Dec$ [\textmu s] \\
    \midrule
    \texttt{S.1} & $\phantom{0}4/8\phantom{00}$ & \multirow{5}{*}{98} & \phantom{0}310 & 0.5 & \phantom{0}38 \\
    \texttt{S.2} & $16/32\phantom{0}$ & & \phantom{0}375 & 0.4 & \phantom{0}74 \\
    \texttt{S.3} & $16/128$ & & \phantom{0}556 & 1.3 & \phantom{0}83 \\
    \texttt{S.4} & $64/128$ & & \phantom{0}902 & 3.0 & 574 \\
    \texttt{S.5} & $16/512$ & & 1156 & 4.0 & 176 \\
    \bottomrule
  \end{tabular}}
  \par\smallskip
  {\footnotesize\emph{Parameters:} $n = 256$ accounts, $R = 10$ rounds, max. balance $B = 4$, max. amount $A = 4$.}
\end{table}

\medskip\noindent\textbf{2. Confidential ERC-20 Transfer (\cref{tab:fhevm-erc20}).} 
ERC-20 is the standard smart-contract interface for fungible Ethereum tokens. 
In our specific example, account balances $[0, B)$ are encrypted, and a transfer moves an encrypted amount over $[0, A)$ between accounts. Throughout the protocol, multiple transfers can be processed sequentially (each with its own round). The contract only reveals the final balances of selected accounts. The rest of the state remains encrypted.

\begin{table}[!t]
  \centering
  \setlength{\tabcolsep}{3pt}
  \caption{Dark-pool DEX (\cref{alg:fhevm-darkpool}).}
  \label{tab:fhevm-darkpool}
  \fullversion{}{\adjustbox{max width=\columnwidth}}%
  {%
  \begin{tabular}{@{}llcccc@{}}
    \toprule
    $\mathsf{Set}$ & $t/K$ & $\Eval$ [s] & $\PreDecrypt$ [ms] & $\ParDec$ [ms] & $\Dec$ [\textmu s] \\
    \midrule
    \texttt{S.1} & $\phantom{0}4/8\phantom{00}$ & \multirow{5}{*}{9.3} & 236 & 0.6 & \phantom{0}39 \\
    \texttt{S.2} & $16/32\phantom{0}$ & & 264 & 0.4 & \phantom{0}73 \\
    \texttt{S.3} & $16/128$ & & 348 & 1.1 & \phantom{0}84 \\
    \texttt{S.4} & $64/128$ & & 507 & 3.0 & 203 \\
    \texttt{S.5} & $16/512$ & & 653 & 4.1 & 178 \\
    \bottomrule
  \end{tabular}}
  \par\smallskip
  {\footnotesize\emph{Parameters:} $S = 4$ resting orders, $R = 10$ rounds, max. price $P = 4$, max. volume $V = 4$.}
\end{table}

\medskip\noindent\textbf{3. Dark-Pool DEX (\cref{tab:fhevm-darkpool}).} The contract holds an encrypted order book. It matches incoming requests against it, publishing only the executed fills while prices over $[0, P)$, volumes over $[0, V)$, and unmatched orders remain encrypted. After removing the matched orders, the contract updates the order book with new encrypted submissions. This process repeats for $R$ rounds, and the unmatched orders are never revealed.

\begin{table}[!t]
  \centering
  \setlength{\tabcolsep}{3pt}
  \caption{Median price oracle (\cref{alg:fhevm-oracle}).}
  \label{tab:fhevm-oracle}
  \fullversion{}{\adjustbox{max width=\columnwidth}}%
  {%
  \begin{tabular}{@{}llcccc@{}}
    \toprule
    $\mathsf{Set}$ & $t/K$ & $\Eval$ [s] & $\PreDecrypt$ [ms] & $\ParDec$ [ms] & $\Dec$ [\textmu s] \\
    \midrule
    \texttt{S.1} & $\phantom{0}4/8\phantom{00}$ & \multirow{5}{*}{28} & 237 & 0.5 & \phantom{0}38 \\
    \texttt{S.2} & $16/32\phantom{0}$ & & 262 & 0.5 & \phantom{0}73 \\
    \texttt{S.3} & $16/128$ & & 348 & 1.0 & \phantom{0}80 \\
    \texttt{S.4} & $64/128$ & & 499 & 3.1 & 210 \\
    \texttt{S.5} & $16/512$ & & 680 & 4.3 & 178 \\
    \bottomrule
  \end{tabular}}
  \par\smallskip
  {\footnotesize\emph{Parameters:} $n = 128$ submissions, max. value $H = 8$.}
\end{table}

\medskip\noindent\textbf{4. Median Price Oracle (\cref{tab:fhevm-oracle}).} The contract aggregates $n$ encrypted submissions over $[0, H)$ from many nodes into their median. Computing the median homomorphically requires obliviously sorting the inputs, for which we employ Batcher's odd--even mergesort.

\medskip\noindent
Across all four programs, the threshold reveal ($\PreDecrypt + \ParDec + \Dec$) costs at most a few hundred milliseconds for the small committees and stays below $1.2$ seconds even for $K = 512$. For the computationally heavy programs this overhead is negligible relative to the evaluation. Thus, for any non-trivial computation, \ourworktfhe adds little overhead over a non-threshold deployment.

\section*{AI Disclosure}

The construction, all proof ideas and the parameter analysis are due to the authors. The first version of this work was written entirely by hand and was then proofread by Claude Opus~5 and~5.5 as well as by GPT-5.6 Sol. The models pointed out small gaps in some proofs and proposed corrections, which the authors verified before adopting, and the authors revised every passage edited by the models in subsequent editing rounds. The scripts that select the parameters of \cref{tab:tulva_params} were first written by hand, and Claude Opus~5 and~5.5 were then used to optimise their search logic and to refactor them. The authors checked that the resulting scripts are consistent with the analysis in the paper. The algorithms in the implementation were first written by hand, and Claude Opus~5 and~5.5 as well as GPT-6 Astra were used extensively to rewrite and optimise existing parts of the implementation.

\section*{Acknowledgements}

Russell W.\ F.\ Lai and Micha{\l} Osadnik are supported by the Research Council of Finland grant 358951 and the Future Makers Program project SPICED of the Technology Industries of Finland Centennial Foundation.
Ivy K.\ Y.\ Woo is supported by the Research Council of Finland grant 358950. 
Antonio Guimar{\~a}es is partially supported by grants JDC2024-055789-I and CEX2024-001471-M funded by MICIU/AEI/10.13039/501100011033 and ESF+, and by the ERC Proof of Concept grant VERIFHE (GA: 101287502). 
Valerio Cini is supported by the European Union's
Horizon Europe research and innovation programme under the
Marie Sk{\l}odowska-Curie grant agreement No.~101202597 (ACryL).
Ivy K.\ Y.\ Woo would like to thank Yongsoo Song for sharing the reduction in \cite{EPRINT:HKMNS26} during the Novel Lattice Assumptions Workshop in July 2026, which significantly improved the results of this work.

\FloatBarrier

\ifstyle{2}{
    \bibliographystyle{IEEEtran}
}{
    \fullversion{
        \bibliographystyle{alphaurl}
    }{
        \bibliographystyle{splncs04}
    }
}
\bibliography{cryptobib/abbrev3,cryptobib/crypto,local}

\ifstyle{1}{
  \appendix
  \clearpage
  \crefalias{section}{appendix} 
  \crefalias{subsection}{appendix} 
}

\ifstyle{2}{\appendices}


\section{Algorithms for FHEVM Programs}\label{app:fhevm}

For each of the four programs in \cref{sec:fhevm}, we state the plaintext algorithm and its homomorphic evaluation at the level of ciphertexts.
Throughout we write $\enc(x)$ for a ciphertext encrypting $x$, i.e.~shorthand for $\tFHE.\enc(\pk, x)$.
We use $\{0,1,\ldots,p-1\}$ to represent $\ZZ_p$.
We write $\mathbf{1}\{c\}$ for the bit $1$ when a condition $c$ holds and $0$ otherwise, and $(c\,?\,a:b)$ for the value $a$ if $c$ holds and $b$ otherwise.

\begin{table*}[t]
  \centering
  \footnotesize
  \setlength{\tabcolsep}{5pt}
  \caption{Functional-bootstrap tables instantiated by the FHEVM programs.}
  \label{tab:fhevm-tables}
  \begin{tabularx}{\linewidth}{@{}llXl@{}}
    \toprule
    Table & Definition & Description & Program(s) \\
    \midrule
    \multicolumn{4}{@{}l}{\textit{One-input table}} \\
    $T_{=c}$            & $\mathbf{1}\{x = c\}$               & equality test of $x$ against a public constant $c$ & Auction, Dark-pool \\
    $T_{\leq a,b}$      & $(x \le a)\,?\,x : b$              & keep $x$ when $x \le a$, otherwise lift it to $b$ & Dark-pool \\
    $T_{\mathsf{refresh}}$ & $x$                             & identity bootstrap (a ``refresh''): clears accumulated noise & Dark-pool \\
    \midrule
    \multicolumn{4}{@{}l}{\textit{Two-input table}} \\
    $T_>$               & $\mathbf{1}\{x > y\}$               & greater-than test of two bids & Auction \\
    $T_{\mathsf{sel}}$  & $x\,?\,y : 0$                       & multiplexer: pass value $y$ when bit $x$ is set, else $0$ & Auction, Dark-pool \\
    $T_{\mathsf{debit}}$ & $(x \ge y)\,?\,x{-}y : x$          & new sender balance: $x$ less amount $y$ when affordable & ERC-20 \\
    $T_{\mathsf{credit}}$ & $(x \ge y)\,?\,y : 0$             & amount credited to the receiver, or $0$ if balance $x < y$ & ERC-20 \\
    $T_{=}$             & $\mathbf{1}\{x = y\}$               & equality test of two encrypted values & Dark-pool \\
    $T_{\mathsf{scan}}$ & $\mathbf{1}\{x = 1 \wedge y = 0\}$  & prefix-scan step: fires when equality bit $x = 1$ and carry $y = 0$ & Dark-pool \\
    $T_\wedge$          & $x \wedge y$                        & logical AND (one-hot $x$, empty-slot bit $y$): the replace flag & Dark-pool \\
    \midrule
    \multicolumn{4}{@{}l}{\textit{Two-input sort}} \\
    $T_{\mathsf{sort}}$ & $(\min(x,y),\, \max(x,y))$          & sort the pair $(x,y)$ into $(\min, \max)$ & Dark-pool, Oracle \\
    \bottomrule
  \end{tabularx}
\end{table*}

Homomorphic addition and multiplication by a public scalar need no bootstrapping (bts) and are evaluated directly. Everything else is built from the tables in \cref{tab:fhevm-tables}, the gadgets shared across the four programs. Each table $T$ takes one or two encrypted inputs, returns one encrypted output (or two for $T_{\mathsf{sort}}$), and is evaluated by a single functional bootstrap. Using these gadgets, we describe the homomorphic realisations of the four programs below, with pseudocode listed in \cref{alg:fhevm-auction,alg:fhevm-erc20,alg:fhevm-darkpool,alg:fhevm-oracle}.
In each program, the \highlight{highlighted comments} indicate the number of bootstraps required for each step.

\begin{algorithm2e}[t]
  \caption{Sealed-bid first-price auction.}\label{alg:fhevm-auction}
  \SetKwInput{Input}{Input}
  \SetKwInput{Output}{Output}
  \DontPrintSemicolon
  \Input{distinct bids $b_0, \ldots, b_{n-1} \in \{0, \ldots, B-1\}$}
  \Output{winner index $w$ and clearing price $b_w$}
  \BlankLine
  \For{$i \gets 0$ \KwTo $n-1$}{
    $s_i \gets 0$ \tcp*{win count of $i$.}
    \For{$j \gets 0$ \KwTo $n-1$ \textbf{with} $j \neq i$}{
      \lIf{$b_i > b_j$}{$s_i \gets s_i + 1$} \tcp{\highlight{$s_i \gets s_i + T_>(b_i, b_j)$: $n(n{-}1)$ bts}}
    }
  }
  $o_i \gets \mathbf{1}\{s_i = n-1\}$ for all $i$, then $w \gets \sum_i i \cdot o_i$ \tcp*{\highlight{$w \gets \sum_i i \cdot T_{=c}(s_i)$: $n$ bts}}
  $b_w \gets \sum_i (o_i\,?\,b_i : 0)$ \tcp*{\highlight{$b_w \gets \sum_i T_{\mathsf{sel}}(o_i, b_i)$: $n$ bts}}
  \Return $(w,\ b_w)$
\end{algorithm2e}

\medskip\noindent\textbf{1. Sealed-Bid Auction (\cref{alg:fhevm-auction}).}
A first-price auction reveals only the winner and the price they pay. Assume the bids are distinct; then one bidder $i$ outbids all $n-1$ others. To find that bidder, form the win count $s_i = \sum_{j \ne i} T_>(b_i, b_j)$ for bids in $[0,B)$. Computing all $n$ of them costs $n(n-1)$ bootstraps. Bidder $i$ wins exactly when $s_i = n-1$, so $o_i = T_{=c}(s_i)$ at $c = n-1$ is a one-hot indicator of the winner. The two outputs are linear in $o_i$. The winner index is $w = \sum_i i\,o_i$; the clearing price is $b_w = \sum_i T_{\mathsf{sel}}(o_i, b_i)$, where the multiplexer $T_{\mathsf{sel}}$ keeps $b_i$ on the winning slot and zeroes the rest. Evaluation is $n^2 + n$ bootstraps (\cref{tab:fhevm-auction}), and each of $w, b_w$ is read by one $\PreDecrypt$. Correctness needs $B^2 \le p$ for the two-input comparison $T_>$ on bids in $[0,B)$, and $n \le p$ to keep the win counts $s_i$ and the winner index $w$ below $p$.

\begin{algorithm2e}[t]
  \caption{Confidential ERC-20 transfers.}\label{alg:fhevm-erc20}
  \SetKwInput{Input}{Input}
  \SetKwInput{Output}{Output}
  \DontPrintSemicolon
  \Input{sender balances $s_0, \ldots, s_{n-1}$, receiver balances $r_0, \ldots, r_{n-1}$, per-round transfer amounts $a^{(\rho)}_i$}
  \Output{updated balances $(s_i, r_i)_i$}
  \BlankLine
  \For{$\rho \gets 1$ \KwTo $R$}{
    \For{$i \gets 0$ \KwTo $n-1$}{
      \If{$s_i \ge a^{(\rho)}_i$}{
        $r_i \gets r_i + a^{(\rho)}_i$ \tcp*{\highlight{$r_i \mathrel{+}= T_{\mathsf{credit}}(s_i, a^{(\rho)}_i)$: $Rn$ bts}}
        $s_i \gets s_i - a^{(\rho)}_i$ \tcp*{\highlight{$s_i \gets T_{\mathsf{debit}}(s_i, a^{(\rho)}_i)$: $Rn$ bts}}
      }
    }
  }
  \Return $(s_i, r_i)_i$ \tcp*{queried accounts only. $1$ $\PreDecrypt$ each}
\end{algorithm2e}

\medskip\noindent\textbf{2. Confidential ERC-20 Transfer (\cref{alg:fhevm-erc20}).}
Each account on a confidential ERC-20 ledger carries an encrypted balance, and a transfer moves tokens without exposing any balance or the amount sent. Account $i$ holds a sender balance $s_i$ and a receiver balance $r_i$ as persistent state. A transfer debits the sender only when its funds suffice.
In round $\rho$, two two-input bootstraps on $(s_i, a^{(\rho)}_i)$ decide account $i$'s transfer obliviously: $T_{\mathsf{debit}}(s_i, a^{(\rho)}_i)$ returns the new sender balance ($s_i - a^{(\rho)}_i$ if $s_i \ge a^{(\rho)}_i$, else $s_i$) and $T_{\mathsf{credit}}(s_i, a^{(\rho)}_i)$ the moved amount ($a^{(\rho)}_i$ if affordable, else $0$). The receiver is credited by $r_i \mathrel{+}= T_{\mathsf{credit}}(s_i, a^{(\rho)}_i)$, and $T_{\mathsf{debit}}(s_i, a^{(\rho)}_i)$ replaces $s_i$, for $2Rn$ bootstraps in total.
A balance is read with one $\PreDecrypt$. The encoding is correct while $A B \le p$ for the two-input tables $T_{\mathsf{debit}}, T_{\mathsf{credit}}$ on balances in $[0,B)$ and amounts in $[0,A)$, and every reachable balance stays below $p$ as a receiver balance accumulates credits across rounds.

\begin{algorithm2e*}[t]
  \caption{Dark-pool order matching (stateful, $R$ rounds).}\label{alg:fhevm-darkpool}
  \SetKwInput{Input}{Input}
  \SetKwInput{Output}{Output}
  \DontPrintSemicolon
  \Input{book of $S$ resting asks $(\pi_s, \nu_s)_s$ (price, volume); requests $(\ell^{(\rho)}, q^{(\rho)})$ and pending orders $(\hat{\pi}^{(\rho)}, \hat{\nu}^{(\rho)})$, for $\rho = 1, \ldots, R$}
  \Output{per-round fills $f^{(1)}, \ldots, f^{(R)}$}
  \BlankLine
  \For{$\rho \gets 1$ \KwTo $R$}{
    \lFor{$s \gets 0$ \KwTo $S-1$}{$\kappa_s \gets (\pi_s \le \ell^{(\rho)})\,?\,\pi_s : P$}
    \tcp{\highlight{$\kappa_s \gets T_{\leq \ell^{(\rho)},P}(\pi_s)$: $RS$ bts}}
    $\beta \gets \min_s \kappa_s$\;
    \tcp{\highlight{$S{-}1$ comparisons $(\beta,\,\cdot) \gets T_{\mathsf{sort}}(\beta, \kappa_s)$, then $\beta \gets T_{\mathsf{refresh}}(\beta)$: $RS$ bts}}
    \eIf{$\beta = P$}{
      $f^{(\rho)} \gets 0$ \tcp*{nothing crosses (branchless)}
    }{
      $s^\star \gets$ first $s$ with $\pi_s = \beta$\;
      \tcp{\highlight{$\forall s$ ($S$ slots): $e_s \gets T_{=}(\pi_s, \beta)$, $m_s \gets T_{\mathsf{scan}}(e_s, g)$: $2RS$ bts}}
      $f^{(\rho)} \gets \min(q^{(\rho)}, \nu_{s^\star})$\;
      \tcp{\highlight{$v \gets T_{\mathsf{refresh}}\bigl(\textstyle\sum_s T_{\mathsf{sel}}(m_s, \nu_s)\bigr)$: $S{+}1$ bts,}}
      \tcp{\highlight{$(f^{(\rho)},\,\cdot) \gets T_{\mathsf{sort}}(q^{(\rho)}, v)$: ${+}1$ bt, total $R(S{+}2)$}}
      $\nu_{s^\star} \gets \nu_{s^\star} - f^{(\rho)}$
      \tcp{\highlight{$\forall s$: $\nu_s \mathrel{-}= T_{\mathsf{sel}}(m_s, f^{(\rho)})$ (only $s^\star$ changes): $RS$ bts}}
      \lIf{$\nu_{s^\star} = 0$}{$(\pi_{s^\star}, \nu_{s^\star}) \gets (\hat{\pi}^{(\rho)}, \hat{\nu}^{(\rho)})$}
      \tcp{\highlight{$d \gets T_{=0}(\nu_{s^\star})$, $\forall s$: $h_s \gets T_\wedge(m_s, d)$,}}
      \tcp{\highlight{$\pi_s \gets T_{\mathsf{sel}}(h_s, \hat{\pi}^{(\rho)}) + T_{\mathsf{sel}}(1{-}h_s, \pi_s)$,}}
      \tcp{\highlight{$\nu_s \gets T_{\mathsf{sel}}(h_s, \hat{\nu}^{(\rho)}) + T_{\mathsf{sel}}(1{-}h_s, \nu_s)$: $(5S{+}1)R$ bts}}
    }
  }
  \Return $(f^{(\rho)})_\rho$ \tcp*{book stays hidden. $1$ $\PreDecrypt$ per round}
\end{algorithm2e*}

\begin{algorithm2e*}[t]
  \caption{Median price oracle.}\label{alg:fhevm-oracle}
  \SetKwInput{Input}{Input}
  \SetKwInput{Output}{Output}
  \DontPrintSemicolon
  \Input{price submissions $x_0, \ldots, x_{n-1} \in \{0, \ldots, H-1\}$}
  \Output{robust aggregate price}
  \BlankLine
  $(y_0, \ldots, y_{n-1}) \gets \mathsf{sort}(x_0, \ldots, x_{n-1})$ \tcp*{nondecreasing.}
  \tcp{\highlight{Batcher network of $T_{\mathsf{sort}}$: $\bigl((\log_2 n)^2 - \log_2 n + 4\bigr)\,n/4 - 1$ bts}}
  \Return $y_{\lfloor n/2 \rfloor}$ \tcp*{median only. $1$ $\PreDecrypt$}
\end{algorithm2e*}

\medskip\noindent\textbf{3. Dark-Pool DEX (\cref{alg:fhevm-darkpool}).}
The resting book of a dark pool consists of $S$ orders $(\pi_s, \nu_s)$, prices and volumes held encrypted state and never revealed; only the trades executed against it become public. Each round matches one buy request $(\ell^{(\rho)}, q^{(\rho)})$, a limit price and a quantity, against the cheapest ask it crosses, reports the fill $f^{(\rho)}$, and updates the book, refilling an emptied slot from a pending order $(\hat{\pi}^{(\rho)}, \hat{\nu}^{(\rho)})$. To keep the matched slot secret, every step runs over all $S$ slots and carries the match in a one-hot indicator $m_s$. The round has three stages: \emph{cross} prices the eligible asks and takes the lowest, $\beta$; \emph{locate} marks the first slot at $\beta$ as $m_s$; and \emph{fill and update} computes the fill and writes the book back.

\medskip\noindent\emph{Cross.}
An ask crosses when $\pi_s \le \ell^{(\rho)}$. The crossing key $\kappa_s = T_{\le \ell^{(\rho)}, P}(\pi_s)$ keeps a crossing ask at its own price and lifts every other to a reserved value $P$ above all prices (prices in $[0,P)$), so the lowest crossing price is $\beta = \min_s \kappa_s$. This minimum is a tournament of $S-1$ pairwise sorts $T_{\mathsf{sort}}$, then one $T_{\mathsf{refresh}}$.

\medskip\noindent\emph{Locate.}
The match is the first slot at price $\beta$, a one-hot $m_s$. A per-slot equality $e_s = T_=(\pi_s, \beta)$ flags every slot at $\beta$, and a left-to-right prefix scan keeps the first. The scan carries a flag $g$, initially $0$, that flips once a match has been passed: $T_{\mathsf{scan}}$ sets $m_s = 1$ exactly when $e_s = 1$ and $g = 0$, then $g \mathrel{+}= m_s$. The scan is sequential, slot $s$ reading slot $s-1$'s flag.

\medskip\noindent\emph{Fill and update.}
The matched volume $v = \sum_s T_{\mathsf{sel}}(m_s, \nu_s)$ is cleaned by $T_{\mathsf{refresh}}$, and the fill $f^{(\rho)} = \min(q^{(\rho)}, v)$ is one further $T_{\mathsf{sort}}$. The matched ask is drawn down by $\nu_s \mathrel{-}= T_{\mathsf{sel}}(m_s, f^{(\rho)})$, and an emptied match is refilled: $d = T_{=0}(\nu_{s^\star})$ tests emptiness, $h_s = T_\wedge(m_s, d)$ flags the slot, and every slot is rewritten as $\pi_s \gets T_{\mathsf{sel}}(h_s, \hat{\pi}^{(\rho)}) + T_{\mathsf{sel}}(1 - h_s, \pi_s)$ ($\nu_s$ analogously). All $S$ slots are rewritten, so only the matched one changes.

For volumes in $[0, V)$, the round is correct while $P \le V$, $2V \le p$ (the fill sorts the volumes $q^{(\rho)}$ and the matched volume, both $< p/2$), and $P < p/2$.

\medskip\noindent\textbf{4. Median Price Oracle (\cref{alg:fhevm-oracle}).}
A median price oracle aggregates $n$ encrypted submissions into their median. The circuit sorts them obliviously and returns the middle element. The sorting algorithm is Batcher's odd-even merge-sort. Each comparator is one two-input sort $T_{\mathsf{sort}}$, hence one bootstrap. For $n$ a power of two the network has $\bigl((\log_2 n)^2 - \log_2 n + 4\bigr)\,n/4 - 1$ comparators ($1471$ at $n = 128$). Reading the median $y_{\lfloor n/2 \rfloor}$ is one $\PreDecrypt$. Correctness needs every submission in $[0,H)$ with $H = p/2$, the domain of the two-input sort.

\fullversion{

\section{Additional Preliminaries}\label{sec:add-prelim}
We give some additional preliminaries beyond those in \cref{sec:prelim} that are used in the proofs of our main theorems.

\medskip
We write $\otimes$ for the tensor (i.e.~Kronecker) product of matrices.
The matrix tensor product satisfies the mixed product property: For all matrices $\mat{A}$, $\mat{B}$, $\mat{C}$ and $\mat{D}$ of suitable dimensions, we have
\(
(\mat{A}\otimes\mat{B}) (\mat{C}\otimes\mat{D})
= \mat{A}\mat{C} \otimes \mat{B}\mat{D}
\).

\medskip
\noindent\textbf{Supplementary lemmas.}
Below are lemmas on the ring expansion factors, subtractive sets, Gaussian norm bounds, and noise flooding.

\begin{lemma}\label{lem:norm-conversion}
    If $\conductor$ is a power of two, then $\expansion_2 \leq \sqrt{\varphi}$ and $\expansion_\infty \leq \varphi$.
    If $\conductor$ is an odd prime, then $\expansion_2 \leq \conductor$ and $\expansion_\infty \leq 2\varphi$. 
\end{lemma}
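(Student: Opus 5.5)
The plan is to bound the expansion factors $\expansion_2$ and $\expansion_\infty$ of $\field = \QQ(\zeta)$ by writing multiplication by a fixed $a \in \ring$ as a linear map on coefficient vectors in the power basis. For $a,b$ we look at the product $ab$, first as a polynomial of degree $< 2\varphi$, and then reduce it modulo the minimal polynomial $\Phi_\conductor$. The norm bound follows from two pieces of information: the shape of this reduction map, and the convolution estimates for the unreduced product.

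\emph{Power-of-two case.} Here $\Phi_\conductor = X^\varphi + 1$, so multiplication by $a$ is the negacyclic matrix $\Psi(a)$. Each of its rows and columns is a signed permutation of $\psi(a)$.
\begin{itemize}
\item For $\ell_\infty$, each coefficient of $ab$ is a signed sum of $\varphi$ products $a_i b_j$. This gives $\norm{ab}_\infty \le \varphi \norm{a}_\infty \norm{b}_\infty$, hence $\expansion_\infty \le \varphi$.
\item For $\ell_2$, each coefficient is an inner product of $\psi(b)$ with a signed permutation of $\psi(a)$. Cauchy--Schwarz gives $|(ab)_k| \le \norm{a}_2 \norm{b}_2$, and summing over $\varphi$ coordinates gives $\norm{ab}_2 \le \sqrt{\varphi}\norm{a}_2\norm{b}_2$.
\end{itemize}
A sharper alternative route is to use the canonical embedding, where $\norm{\sigma(x)}_2 = \sqrt{\varphi}\norm{\psi(x)}_2$ and $\norm{\sigma(ab)}_2 \le \norm{\sigma(a)}_\infty \norm{\sigma(b)}_2$. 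This is not needed for the stated bound.

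\emph{Odd prime case.} Here $\Phi_\conductor = 1 + X + \dots + X^{\conductor-1}$ and $\varphi = \conductor - 1$. Reduction uses $X^{\conductor-1} \equiv -(1+\dots+X^{\conductor-2})$ together with $X^\conductor \equiv 1$. I would proceed in two steps.
\begin{enumerate}
\item First compute the product in $\ZZ[X]/(X^\conductor - 1)$, which is a cyclic convolution of length $\conductor$. By the same row argument as above, its coefficients $c_0,\dots,c_{\conductor-1}$ satisfy $|c_k| \le \varphi\norm{a}_\infty\norm{b}_\infty$ and $|c_k| \le \norm{a}_2\norm{b}_2$. The second bound holds because $\psi(a)$ and $\psi(b)$ embed into length-$\conductor$ vectors with a zero coordinate.
\item Then reduce modulo $\Phi_\conductor$. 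The result is $ab = \sum_{k<\varphi}(c_k - c_{\conductor-1})X^k$, so each coefficient is a difference of two convolution coefficients. This gives $\norm{ab}_\infty \le 2\varphi\norm{a}_\infty\norm{b}_\infty$, i.e.\ $\expansion_\infty \le 2\varphi$.
\end{enumerate}
For $\ell_2$, I would combine the two steps: $\norm{ab}_2 \le \norm{(c_k)_{k<\varphi}}_2 + \sqrt{\varphi}\,|c_{\conductor-1}|$. Using the pointwise bound $|c_k| \le \norm{a}_2\norm{b}_2$ on all $\conductor$ coordinates, we get $\norm{(c_k)}_2 \le \sqrt{\conductor}\norm{a}_2\norm{b}_2$. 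This yields $\norm{ab}_2 \le (\sqrt{\conductor} + \sqrt{\conductor-1})\norm{a}_2\norm{b}_2 \le 2\sqrt{\conductor}\norm{a}_2\norm{b}_2$, which is at most $\conductor$ once $\conductor \ge 4$. The remaining case $\conductor = 3$ I would treat directly: there $\varphi = 2$ and the bound $2\sqrt{3} > 3$ is too weak, so a tighter count is needed, e.g.\ $\norm{ab}_2 \le \sqrt{2}\cdot\sqrt{3}\cdot\norm{a}_2\norm{b}_2$ from bounding the two coordinates each by $\sqrt{3}\norm{a}_2\norm{b}_2$, or an explicit $2\times 2$ computation.

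\emph{Main obstacle.} The only delicate point is this odd-prime $\ell_2$ bound, because the reduction map modulo $\Phi_\conductor$ is not an isometry and small $\conductor$ needs care. If the crude estimate above fails for small $\conductor$, I would instead bound the operator norm of the reduction matrix $[\matI_\varphi \mid -\mathbf{1}]$ exactly. Its largest singular value is $\sqrt{1+\varphi} = \sqrt{\conductor}$. Combined with a cyclic-convolution $\ell_2$ bound of the form $\norm{c}_2 \le \sqrt{\conductor}\norm{a}_2\norm{b}_2$, this gives exactly $\expansion_2 \le \conductor$ for every odd prime $\conductor$.
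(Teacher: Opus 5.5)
Your proof is correct but takes a different route from the paper. The paper does not compute products coefficient-wise. For $\expansion_\infty$ it simply invokes the argument of \cite[Proposition~2]{C:AlbLai21}, noting that it works over all of $\field$ and not just $\ring$. For $\expansion_2$ it sandwiches through the canonical embedding: it combines $\norm{\psi(x)}_2 \le \sqrt{\rad(\conductor)/\conductor}\,\norm{\sigma(x)}_2$ and $\norm{\sigma(x)}_2 \le \sqrt{\hat{\conductor}}\,\norm{\psi(x)}_2$ from \cite{EPRINT:KLNO24} with $\norm{\sigma(ab)}_2 \le \norm{\sigma(a)}_2\norm{\sigma(b)}_2$. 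This gives the factor $\sqrt{\rad(\conductor)/\conductor}\cdot\hat{\conductor}$, which evaluates to $\sqrt{\varphi}$ and $\conductor$ in the two cases; your canonical-embedding aside is essentially this argument. You instead write the product as a negacyclic, respectively cyclic, convolution, reduce modulo $\Phi_\conductor$, and bound by term counting and Cauchy--Schwarz. Your route is self-contained and elementary, reproves the cited $\ell_\infty$ bound, and shows exactly where the losses occur. The paper's route is shorter and applies verbatim to arbitrary conductors, but it relies on external embedding-conversion lemmas.

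Two small remarks. First, $\expansion_p$ is a maximum over $a,b\in\field$, so say explicitly that your estimates are homogeneous and hold for arbitrary rational coefficient vectors; restricting to $a\in\ring$ then loses nothing.

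Second, your special treatment of $\conductor=3$ is unnecessary. The vector $(c_k)_{k<\varphi}$ has only $\varphi$ coordinates, not $\conductor$. Your first estimate therefore already gives $\norm{ab}_2 \le 2\sqrt{\varphi}\,\norm{a}_2\norm{b}_2 \le (\varphi+1)\norm{a}_2\norm{b}_2 = \conductor\,\norm{a}_2\norm{b}_2$ by AM--GM, for every odd prime. Your claim that each coordinate is at most $\sqrt{3}\,\norm{a}_2\norm{b}_2$ when $\conductor=3$ is true but does not follow from the estimates you stated, which give $2$ per coordinate; however, $2\sqrt{2}<3$ already suffices. Your fallback, using $\smax([\matI_\varphi \mid -\mathbf{1}])=\sqrt{\conductor}$ together with $\norm{c}_2\le\sqrt{\conductor}\,\norm{a}_2\norm{b}_2$, is also correct and uniform in $\conductor$.
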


\begin{proof}
    The claims about $\expansion_\infty$ follow by realising that the proof of \cite[Proposition 2]{C:AlbLai21} not only applies to the ring of integers $\ring$ of $\field$ but also to $\field$ in general. 
    For the claims about $\expansion_2$, we use Items (i) and (ii) of \cite[Lemma 1]{EPRINT:KLNO24}, which state the following: 
    For any $x \in \field$, 
    (i) $\norm{\psi(x)}_2 \leq \sqrt{\frac{\rad(\conductor)}{\conductor}} \cdot \norm{\sigma(x)}_2$ and 
    (ii) $\norm{\sigma(x)}_2 \leq \sqrt{\hat{\conductor}} \cdot \norm{\psi(x)}_2$
    where $\rad(\conductor)$ is the radical (i.e.~product of all distinct prime factors) of $\conductor$ and $\hat{\conductor} = \conductor$ if $\conductor$ is odd and $\hat{\conductor} = \conductor/2$ if $\conductor$ is even.
    Consider $a, b \in \field$. Substituting $x = a \cdot b$, we have 
    $\norm{\psi(a \cdot b)}_2 
    \leq \sqrt{\frac{\rad(\conductor)}{\conductor}} \cdot \norm{\sigma(a \cdot b)}_2
    \leq \sqrt{\frac{\rad(\conductor)}{\conductor}} \cdot \norm{\sigma(a)}_2 \cdot \norm{\sigma(b)}_2
    \leq \sqrt{\frac{\rad(\conductor)}{\conductor}} \cdot \hat{\conductor} \cdot \norm{\psi(a)}_2 \cdot \norm{\psi(b)}_2$. 
    Observe that $\sqrt{\frac{\rad(\conductor)}{\conductor}} \cdot \hat{\conductor} = \conductor$ if $\conductor$ is an odd prime and $\sqrt{\frac{\rad(\conductor)}{\conductor}} \cdot \hat{\conductor} = \sqrt{\hat{\conductor}} = \sqrt{\varphi}$ if $\conductor$ is a power of 2.
    \qed
\end{proof}

\begin{lemma}\label{lem:induced-norm-embeddings}
    Let $\conductor$ be a power of two.
    Apply each component $\sigma_i$, $i \in \ZZ_\conductor^\times$, of the canonical embedding $\sigma$ of $\field$ entrywise to vectors and matrices over $\field$.
    Then for any matrix $\matM \in \field^{n \times d}$,
    \[
        \norm{\matM}_{2,2} = \max_{i \in \ZZ_\conductor^\times} \norm{\sigma_i(\matM)}_{2,2}.
    \]
\end{lemma}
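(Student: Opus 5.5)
The plan is to reduce the statement to a block-diagonalisation of the coefficient-embedding matrix $\Psi(\matM)$. For a power-of-two conductor, the coefficient embedding $\psi$ and the canonical embedding $\sigma$ are related by a scaled unitary map: there is a matrix $\mat{U}\in\CC^{\varphi\times\varphi}$ (a twisted DFT/Vandermonde matrix in the primitive $\conductor$-th roots of unity) such that $\sigma(x)=\mat{U}\psi(x)$ for all $x\in\field$ and $\mat{U}^*\mat{U}=\varphi\cdot\matI_\varphi$. This matches the preliminaries, which state $\norm{\sigma(x)}_2=\sqrt{\varphi}\norm{\psi(x)}_2$ for power-of-two fields. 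So $\mat{W}\coloneqq\varphi^{-1/2}\mat{U}$ is unitary.

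Next, extend to vectors. Let $\mat{W}_d\coloneqq\matI_d\otimes\mat{W}$, which is unitary, so that $\mat{W}_d\psi(\vecx)=\varphi^{-1/2}\sigma(\vecx)$ componentwise for $\vecx\in\field^d$. Multiplication in $\field$ is coordinatewise under $\sigma$: $\sigma_i(ab)=\sigma_i(a)\sigma_i(b)$. Hence, for $\vecy=\matM\vecx$ we have $\sigma_i(\vecy)=\sigma_i(\matM)\sigma_i(\vecx)$ for each $i\in\ZZ_\conductor^\times$.

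The next step is to reindex coordinates. After permuting the coordinates of $\CC^{\varphi d}$ and $\CC^{\varphi n}$ so that they are grouped by embedding index $i$ rather than by vector entry, the conjugated matrix $\mat{W}_n\Psi(\matM)\mat{W}_d^{*}$ becomes the block-diagonal matrix $\mathrm{diag}(\sigma_i(\matM))_{i\in\ZZ_\conductor^\times}$. This identity first holds on the image of $\psi(\field^d)$, which is $\QQ^{\varphi d}$. It extends to all of $\CC^{\varphi d}$ by $\CC$-linearity, since $\QQ^{\varphi d}$ spans $\CC^{\varphi d}$ over $\CC$. Permutations are unitary, so the block-diagonal matrix is unitarily equivalent to $\Psi(\matM)$.

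Finally, conclude with singular values. Unitary equivalence preserves singular values, and the largest singular value of a block-diagonal matrix is the maximum over its blocks. Therefore $\smax(\Psi(\matM))=\max_i \smax(\sigma_i(\matM))$. Combined with the fact recalled in the preliminaries, $\norm{\matM}_{2,2}=\smax(\Psi(\matM))$ from \cite[Lemma~2]{CiC:LaiSwaWoo25}, and with $\norm{\sigma_i(\matM)}_{2,2}=\smax(\sigma_i(\matM))$ by definition, this gives the claim.

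The main obstacle is the extension step. The map $\psi$ is only defined on $\field$, while the singular-value computation needs an identity between complex matrices on all of $\CC^{\varphi d}$. The fix is the linearity argument: both sides are $\CC$-linear maps that agree on the spanning set $\QQ^{\varphi d}$. A secondary point is justifying that $\mat{W}$ is exactly unitary. This is where the power-of-two hypothesis is used: the rows of $\mat{U}$ are $(\omega^{ij})_j$ for the odd residues $i$ modulo $\conductor$, with $\omega=e^{2\pi\mathrm{i}/\conductor}$. Their pairwise inner products vanish because, for distinct odd $i,i'$, the number $\omega^{i-i'}$ is a root of unity $\neq1$ whose $\varphi$-th power is $\omega^{(i-i')\varphi}=(-1)^{(i-i')}=1$, so the geometric sum over $j<\varphi$ vanishes.
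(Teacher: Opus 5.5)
Your proposal is correct and follows essentially the same route as the paper. Both proofs show the scaled DFT-type matrix relating $\psi$ and $\sigma$ is unitary, using the same geometric-sum argument over odd residues. They then regroup coordinates by embedding to block-diagonalise $\Psi(\matM)$ into the blocks $\sigma_i(\matM)$, extend the identity from $\QQ^{\varphi d}$ to $\CC^{\varphi d}$ by linearity, and read off $\smax(\Psi(\matM))=\max_i \smax(\sigma_i(\matM))$ blockwise. One point of phrasing: what holds on $\QQ^{\varphi d}$ is the intertwining relation between $\Psi(\matM)$ and the block-diagonal matrix, not the conjugated identity itself, but your linearity argument covers this.
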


\begin{proof}
    Throughout, let $i$ and $j$ range over $\ZZ_\conductor^\times$, let $\sigma_i$ be the embedding with $\sigma_i(\zeta) = \zeta^i$, and $\matM \in \field^{n \times d}$.
    We assume $\conductor \geq 2$, since $\conductor = 1$ and $\conductor = 2$ give the same field $\field = \QQ$.
    Then $\ZZ_\conductor^\times$ is the set of odd residues modulo $\conductor$, and $\varphi = \conductor/2$.
    We will show that (1) $\Psi(\matM)$ is unitarily equivalent to the block-diagonal matrix 
    \[
        \matD \coloneqq \begin{bmatrix} \sigma_1(\matM) & & \\ & \ddots & \\ & & \sigma_{\conductor-1}(\matM) \end{bmatrix} \in \CC^{\varphi n \times \varphi d},
    \]
    (2) the former implies $\norm{\matM}_{2,2} = \norm{\matD}_{2,2}$, and (3) $\norm{\matD}_{2,2} = \max_i \norm{\sigma_i(\matM)}_{2,2}$.

    \medskip\noindent(1): First, let $\matU \in \CC^{\varphi \times \varphi}$ be the matrix such that
    its $(i,k)$-th entry is $\zeta^{ik}/\sqrt{\varphi}$, for all rows $i$ and columns $k \in \set{0, \ldots, \varphi-1}$, so that  
    \[
        \sigma(x) = \sqrt{\varphi} \cdot \matU \cdot \psi(x) \quad\forall x \in \field.
    \]
    That is, $\matU$ is the conversion matrix between the coefficient and canonical embeddings of $\vecx \in \field$ up to scaling.
    The inner product of rows $i$ and $j$ of $\matU$ is $\frac{1}{\varphi} \sum_{k=0}^{\varphi-1} \omega^k$ where $\omega \coloneqq \zeta^{i-j}$.
    If $i = j$ this equals $1$.
    If $i \neq j$, then $i - j$ is even and nonzero modulo $\conductor$, so $\omega \neq 1$ and $\omega^\varphi = \zeta^{\conductor (i-j)/2} = 1$, hence the sum is $(\omega^\varphi - 1)/(\omega - 1) = 0$.
    Therefore $\matU$ is unitary.

    Next, for a vector $\vecx \in \field^d$, let 
    \[
        \hat{\sigma}(\vecx) \coloneqq (\sigma_i(\vecx))_i \in \CC^{\varphi d} 
    \]
    list the embeddings of $\vecx$ grouped by embedding rather than by entry. Also, let $\matP_d$ be the permutation matrix reordering coordinates from grouping by entry to grouping by embedding, so that we have  
    \[
        \hat{\sigma}(\vecx) = \sqrt{\varphi} \cdot \matW_d \cdot \psi(\vecx) 
        \quad\text{where}\quad 
        \matW_d \coloneqq \matP_d \cdot (\matI_d \otimes \matU).
    \]
    Observe that the matrix $\matW_d$ is unitary, as it is a product of unitary matrices.
    Since each $\sigma_i$ is a ring homomorphism, $\sigma_i(\matM \vecx) = \sigma_i(\matM) \cdot \sigma_i(\vecx)$ for every $i$, implying $\hat{\sigma}(\matM \vecx) = \matD \cdot \hat{\sigma}(\vecx)$. 
    Together with $\psi(\matM \vecx) = \Psi(\matM) \cdot \psi(\vecx)$ (\cref{sec:prelim}), this gives
    \[
        \matW_n \cdot \Psi(\matM) \cdot \psi(\vecx) = \matD \cdot \matW_d \cdot \psi(\vecx)
        \quad\forall \vecx \in \field^d.
    \]
    As $\psi(\field^d) = \QQ^{\varphi d}$ contains a basis of $\CC^{\varphi d}$, it follows that the block diagonalisation of $\Psi(\matM)$ is 
    \[
        \Psi(\matM) = \matW_n^* \cdot \matD \cdot \matW_d,
    \]
    that is, $\Psi(\matM)$ is unitarily equivalent to $\matD$.
    
    \medskip\noindent(2): 
    Since $\Psi(\matM)$ is real, its conjugate transpose is $\Psi(\matM)^\tr$, and since $\matW_n$ is unitary, (1) gives
    \[
        \Psi(\matM)^\tr \cdot \Psi(\matM)
        = \matW_d^* \cdot \matD^* \cdot \matW_n \cdot \matW_n^* \cdot \matD \cdot \matW_d
        = \matW_d^* \cdot (\matD^* \matD) \cdot \matW_d.
    \]
    Hence $\Psi(\matM)^\tr \Psi(\matM)$ and $\matD^* \matD$ are unitarily similar and have the same eigenvalues, so $\smax(\Psi(\matM)) = \smax(\matD)$.
    Since $\norm{\matM}_{2,2} = \smax(\Psi(\matM))$ by \cite[Lemma~2]{CiC:LaiSwaWoo25} (\cref{sec:prelim}), we get $\norm{\matM}_{2,2} = \norm{\matD}_{2,2}$.

    \medskip\noindent(3): The matrix $\matD^* \matD$ is block diagonal with diagonal blocks $\sigma_i(\matM)^* \sigma_i(\matM) \in \CC^{d \times d}$.
    The eigenvalues of a block-diagonal matrix are those of its diagonal blocks, so the largest eigenvalue of $\matD^* \matD$ is the maximum over $i$ of the largest eigenvalue of $\sigma_i(\matM)^* \sigma_i(\matM)$.
    Taking square roots gives 
    \[
        \norm{\matD}_{2,2} = \smax(\matD) = \max_i \smax(\sigma_i(\matM)) = \max_i \norm{\sigma_i(\matM)}_{2,2}.
    \]

    Finally, combining (2) and (3), we get $\norm{\matM}_{2,2} = \norm{\matD}_{2,2} = \max_i \norm{\sigma_i(\matM)}_{2,2}$.
    \qed
\end{proof}

\begin{lemma}[{Adapted from \cite[Theorems 1 and 2]{C:AlbLai21}}]\label{lem:xi-subtractive}
    If $\conductor$ is a power of two, then $\set{1, \zeta, \ldots, \zeta^{\conductor-1}}$ is $\slack$-subtractive for $\slack(t) = 2^{\ceil{\log t}}$. 
    If $\conductor = p$ is a prime, then the set $\set{\sum_{j=0}^{i-1} \zeta^j: i \in [p-1]}$ is $\slack$-subtractive for $\slack(t) = 1$.
\end{lemma}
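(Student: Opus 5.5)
The plan is to reduce the integrality of $\slack\cdot\mat{V}_T^{-1}$ to a divisibility statement about products of pairwise differences of the evaluation points. Then I treat the two cyclotomic cases by computing the ideals generated by these differences.

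\emph{Reduction.} For $T=\{\mu_1,\ldots,\mu_t\}$ I use the standard Lagrange description of the inverse Vandermonde matrix. Column $i$ of $\mat{V}_T^{-1}$ holds the coefficients of $L_i(x)=\prod_{j\neq i}(x-\mu_j)/(\mu_i-\mu_j)$. The numerators are $\pm$ elementary symmetric polynomials in $\{\mu_j\}_{j\neq i}$, hence lie in $\ring$. So it suffices to show $\slack(t)/\prod_{j\neq i}(\mu_i-\mu_j)\in\ring$ for every $i$, i.e.\ that the principal ideal $\prod_{j\neq i}(\mu_i-\mu_j)\ring$ divides $\slack(t)\ring$.

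\emph{Prime conductor.} Here $\mu_a=\sum_{k<a}\zeta^k=(1-\zeta^a)/(1-\zeta)$. Hence for $a>b$,
\[
\mu_a-\mu_b=\zeta^b\,\frac{1-\zeta^{a-b}}{1-\zeta}.
\]
Since $0<a-b<p$ and $p$ is prime, $(1-\zeta^{a-b})/(1-\zeta)$ is a cyclotomic unit, because $1-\zeta^{k}$ and $1-\zeta$ are associates. So every difference is a unit, the product is a unit, and $\slack=1$ works.

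\emph{Power-of-two conductor.} The only prime above $2$ is $\mathfrak{p}=(1-\zeta)$, and $2\ring=\mathfrak{p}^{\varphi}$ with $\varphi=\conductor/2$. Write $\zeta^a-\zeta^b=\zeta^b(\zeta^{k}-1)$ with $k=a-b\not\equiv0\bmod\conductor$ and $k=2^v u$, $u$ odd. Then $1-\zeta^k$ is an associate of $1-\zeta^{2^v}$, which has $\mathfrak{p}$-valuation $2^v$ because $\zeta^{2^v}$ is a primitive $\conductor/2^v$-th root of unity. So I must show, for fixed $i$ and any $t-1$ distinct nonzero residues $k_j$ modulo $\conductor$, that
\[
\sum_j 2^{v_2(k_j)}\le \varphi\cdot\ceil{\log t}.
\]

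This counting bound is the main obstacle. I would write $2^{v}=1+\sum_{w=1}^{v}2^{w-1}$ and swap the sums, which gives
\[
\sum_j 2^{v_2(k_j)}=(t-1)+\sum_{w\ge1}2^{w-1}N_w,
\]
where $N_w=\#\{j:2^w\mid k_j\}$. The bound $N_w\le\min\{t-1,\conductor/2^w-1\}$ holds because there are only $\conductor/2^w-1$ nonzero multiples of $2^w$ modulo $\conductor$. I then split at $2^w\approx\conductor/t$. Small $w$ contribute at most a geometric sum $\lesssim(t-1)\cdot\conductor/(2t)<\varphi$. Each of the at most $\ceil{\log t}$ large $w$ contributes at most $2^{w-1}\cdot\conductor/2^w=\varphi/2$. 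Tracking constants carefully, the total $(t-1)+\cdots$ should fit under $\varphi\ceil{\log t}$. This mirrors the argument of \cite[Theorem~1]{C:AlbLai21}, whose bookkeeping I would follow (or simply invoke) if my split is off by a boundary term. The case $t=1$ is trivial, since $\mat{V}_T=(1)$.
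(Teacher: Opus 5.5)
Your Lagrange-coefficient reduction, the prime-conductor case, and the reduction of the power-of-two case to the counting inequality $\sum_j 2^{v_2(k_j)}\le\varphi\lceil\log t\rceil$ (over $t-1$ distinct nonzero residues $k_j$ modulo $\conductor$) are all correct. The paper gives no argument for this lemma; it only cites \cite[Theorems 1 and 2]{C:AlbLai21}. A self-contained valuation proof along your lines is therefore a genuine alternative.

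\textbf{The gap.} The problem is your proof of the counting inequality. For large $w$ the per-term bound is $2^{w-1}(\conductor/2^w-1)=\varphi-2^{w-1}$, which is essentially $\varphi$, not $\varphi/2$, since $\conductor/2=\varphi$. The small-$w$ block contributes $(t-1)2^{W}$, where $W$ is your split point. Wherever you split, the crude tally comes out near $(\lceil\log t\rceil+1)\varphi$. That overshoots the target by about $\varphi$, not by a boundary term. Loose estimates cannot rescue this, because the inequality is tight. For $t=2^L$ and $T$ the $2^L$-th roots of unity, $\prod_{\omega\neq1}(1-\omega)=2^L$, whose $\mathfrak{p}$-valuation is exactly $L\varphi$.

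\textbf{The fix.} Sum your own bound $N_w\le\min\{t-1,\conductor/2^w-1\}$ exactly. Write $\conductor=2^r$, $L=\lceil\log t\rceil$ and $W=r-L$, so that $t-1\le 2^L-1$. Then
\[
\sum_j 2^{v_2(k_j)}\le (t-1)\,2^{W}+\sum_{w=W+1}^{r-1}\bigl(\varphi-2^{w-1}\bigr)\le\bigl(2\varphi-2^{W}\bigr)+(L-1)\varphi-\bigl(\varphi-2^{W}\bigr)=L\varphi .
\]
A more transparent route gives the same bound. Exactly $2^{r-1-v}$ nonzero residues have $v_2=v$, so each level $v$ carries total weight $\varphi$. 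The $2^L-1$ heaviest residues are precisely those with $v\ge r-L$, of total weight $L\varphi$. Any $t-1\le 2^L-1$ distinct residues therefore weigh at most $L\varphi$.

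With this step in place your argument is a complete proof. It also yields something the bare citation does not: the slack $2^{\lceil\log t\rceil}$ cannot be lowered when $t$ is a power of two. This is because the leading coefficient of $L_i$ is exactly $1/\prod_{j\neq i}(\mu_i-\mu_j)$.
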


The following lemma, adapted from \cite{C:KLLPS26}, provides bounds on the quality of the subtractive set we adopt in our implementation. Its proof is given in \fullversion{\cref{app:lem:xi-subtactive-trig-spaced}}{\thefullversion}.

\begin{lemma}[Quality of Subtractive Sets. {Adapted from \cite[Corollary 5.1]{C:KLLPS26}}]\label{lem:xi-subtactive-trig-spaced}
    Let $\conductor$ and $K \leq \conductor$ be powers of two, $\eta \coloneqq \zeta^{\conductor/K}$ a primitive $K$-th root of unity,
    and let $\Xi' \coloneqq \set{\eta^i : 0 \leq i < K}$. For any $2 \leq t \leq K$ and $\xi = \slack(t) = 2^{\ceil{\log t}}$, the set $\Xi'$ and any subset thereof have the quality 
    \begin{align*}
        \rho(t) \leq~& \xi \cdot 2^{-(t-1)} \prod_{k=1}^{\floor{t/2}} \frac{1}{\sin^2(\pi k/K)}, \\
        \theta(t) \leq~& \xi \cdot A, \\
        \gamma(t) \leq~& \xi \cdot \sqrt{ (K - t + 1) (A^2 + (t-1) B^2) + 1 },
    \end{align*}
    where
    \begin{align*}
        &A \coloneqq 2^{t-1} \prod_{k=1}^{\floor{(t-2)/2}} \cos^2\!\big(\tfrac{\pi k}{K}\big),
        \fullversion{\qquad}{\\ &}
        B \coloneqq \prod_{k=1}^{\floor{(t-3)/2}} \cos^2\!\big(\tfrac{\pi k}{K}\big) \prod_{k=1}^{\floor{(t-1)/2}} \frac{1}{\sin^2(\pi k/K)}.
    \end{align*}
\end{lemma}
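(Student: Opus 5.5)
The plan is to reduce every quantity in \cref{eq:subtractive-set-qualities} to products of distances between points on the complex unit circle. We then bound those products by an extremal-configuration argument, which is where we would follow \cite[Corollary 5.1]{C:KLLPS26}.

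\textbf{Step 1: reduction to the canonical embedding.} Since $\conductor$ is a power of two, the unitary matrix $\matU$ from the proof of \cref{lem:induced-norm-embeddings} gives $\psi(x) = \varphi^{-1/2}\matU^*\sigma(x)$. Each coefficient is therefore an average of the $\sigma_i(x)$, so $\norm{x}_\infty \le \max_i \abs{\sigma_i(x)}$. For $\gamma$, \cref{lem:induced-norm-embeddings} gives directly $\norm{\matM}_{2,2}=\max_i\norm{\sigma_i(\matM)}_{2,2}$. Each $\sigma_i$ sends $\eta$ to another primitive $K$-th root of unity, so $\sigma_i(\Xi')=\Xi'$ as a set of complex $K$-th roots of unity. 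It therefore suffices to bound, uniformly over all subsets $T\subseteq_t\Xi'$ and $T^*\subseteq_{t-1}\Xi'$, the absolute values of complex Lagrange coefficients. The factor $\xi$ simply pulls out of each bound. The "any subset thereof" clause is immediate, because the maxima then range over fewer subsets.

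\textbf{Step 2: $\rho$ and $\theta$.} The entries of $\vecv_0^\tr\matV_T^{-1}$ are $L_k(0)=\prod_{j\in T\setminus\{k\}}\zeta_j/(\zeta_j-\zeta_k)$. Since $\abs{\zeta_j}=1$, we have $\abs{L_k(0)}=\prod_{j\ne k}\abs{\zeta_j-\zeta_k}^{-1}$, and $\abs{\eta^a-\eta^b}=2\abs{\sin(\pi(a-b)/K)}$. Lower-bounding this product of $t-1$ distances amounts to placing the other points as close to $\zeta_k$ as possible. The extremal case puts them at the nearest available roots, alternating sides, so that each offset $\pm k$ is used for $k\le\floor{t/2}$ (with at most one side repeated). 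This yields $2^{-(t-1)}\prod_{k=1}^{\floor{t/2}}\sin^{-2}(\pi k/K)$.

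For $\theta$, the relevant entry is the coefficient of the value at $0$ when interpolating at $x\in\Xi\setminus T^*$ from $\{0\}\cup T^*$. This coefficient is $\ell_0(x)=\prod_{j\in T^*}(x-\zeta_j)/(-\zeta_j)$, whose modulus is $\prod_{j\in T^*}\abs{x-\zeta_j}$. This is now a maximisation: the points of $T^*$ should sit diametrically opposite $x$, where $\abs{x-\zeta_j}=2\cos(\pi k/K)$ at offset $k$ from the antipode. Pairing the offsets symmetrically gives $A=2^{t-1}\prod_{k=1}^{\floor{(t-2)/2}}\cos^2(\pi k/K)$.

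\textbf{Step 3: $\gamma$.} Order the rows of $\matV_{\Xi}\matV^{-1}_{\{0\}\cup T^*}$ so that the $t-1$ rows indexed by $T^*$ come first. Those rows form the unit rows $[\,\mathbf 0\mid \matI_{t-1}]$, which have operator norm $1$. The remaining $K-t+1$ rows form a block $\matM$, and $\norm{[\,\cdot\,;\matM]}_{2,2}^2\le 1+\norm{\matM}_{2,2}^2\le 1+\norm{\matM}_F^2$. Each row of $\matM$ has one entry $\ell_0(x)$, bounded by $A$ from Step 2. It also has $t-1$ entries $\ell_j(x)=\frac{x}{\zeta_j}\prod_{i\in T^*\setminus\{j\}}\frac{x-\zeta_i}{\zeta_j-\zeta_i}$. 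For these we bound the numerator by the cosine-product argument and the denominator by the sine-product argument, giving the bound $B$. Summing the squares gives the stated bound $\xi\sqrt{(K-t+1)(A^2+(t-1)B^2)+1}$.

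\textbf{Main obstacle.} The real work is the two extremal claims: that the distance product is minimised by clustering around $\zeta_k$, and maximised by clustering opposite $x$, with exactly the stated index ranges. This matters most for the mixed term $B$, whose numerator and denominator are controlled by different extremal configurations. Bounding them separately is what makes $B$ an upper bound rather than an exact maximum. We would prove both claims by a rearrangement/exchange argument: moving any point to a vacant root nearer (respectively farther) monotonically changes its factor, since $\sin$ is increasing and $\cos$ is decreasing on $[0,\pi/2]$. Otherwise we would import these estimates verbatim from \cite[Corollary 5.1]{C:KLLPS26}, checking only that their normalisation matches our row-style $\matV_T$ and the scaling by $\xi$.
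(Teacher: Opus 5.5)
Your proposal is correct and follows the paper's proof in all essential steps. These are: bounding coefficient norms by canonical-embedding magnitudes; using the sine-product bound for $\rho$ and the denominators of $B$ and the cosine-product bound for $A$ and the numerators of $B$, where the paper's simple observation that each distance value occurs at most twice suffices in place of your exchange argument; and bounding $\gamma$ by the Frobenius norm of the non-$T^*$ rows plus $1$ for the identity rows. The only difference is cosmetic: you work directly with the embeddings of $\ring$, using that each $\sigma_i$ maps $\Xi'$ onto the complex $K$-th roots of unity, whereas the paper first proves the bounds for all $\conductor$-th roots of unity and then transfers them to the subring $\ZZ[\eta]$ of conductor $K$.
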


\begin{lemma}[Adapted from~{\cite[Lemma~1.5]{MA:Banaszczyk93}}]\label{lem:norm_bound}
  Let $\ring = \ZZ[\zeta]$ with ring degree $\varphi$. For any $m \in \NN$, $\chi >0$, $\delta >0$,
  \begin{align}
    \condprob{\norm{\vece}_2 > \chi \sqrt{\varphi m}}{\vece \sample \DD^m_{\ring,\chi}} \leq~& 2^{-\varphi m}, \label{eq:bound-2-canonical} \\
    \condprob{\norm{\vece}_\infty > \chi\sqrt{\delta}}{\vece \sample \DD^m_{\ring,\chi}} \leq~& \varphi m\cdot 2^{-\delta}. \label{eq:bound-infinity-coefficient}
   \end{align}
\end{lemma}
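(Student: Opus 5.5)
The plan is to identify $\ring^m$ with $\ZZ^{\varphi m}$ via the coefficient embedding $\psi$. Since the discrete Gaussian $\DD^m_{\ring,\chi}$ is defined through $\rho_\chi(\vec{x}) = e^{-\pi\norm{\vec{x}}_2^2/\chi^2}$ with $\norm{\cdot}_2$ measured in the coefficient embedding, the pushforward $\psi(\vece)$ is exactly the discrete Gaussian $\DD_{\ZZ^{\varphi m},\chi}$ over the integer lattice of dimension $N \coloneqq \varphi m$. Both claims then reduce to the classical statements of \cite[Lemma~1.5]{MA:Banaszczyk93} for the lattice $\ZZ^N$, rescaled by the width $\chi$. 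No ring structure is used beyond this identification.

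\textbf{Step 1 (the $\ell_2$ bound \cref{eq:bound-2-canonical}).} I will invoke the first part of Banaszczyk's lemma. For any lattice $\Lambda\subset\RR^N$ and $c \geq 1/\sqrt{2\pi}$, it gives $\rho_\chi\big((\Lambda\setminus c\chi\sqrt{N}\,\mathcal{B})\big) \leq \big(c\sqrt{2\pi e}\,e^{-\pi c^2}\big)^N \rho_\chi(\Lambda)$, where $\mathcal{B}$ is the unit ball, so this is a bound on the Gaussian mass outside radius $c\chi\sqrt{N}$. Setting $c=1$ gives the constant $\sqrt{2\pi e}\,e^{-\pi}\approx 0.178 < 1/2$, hence the probability is at most $2^{-N} = 2^{-\varphi m}$. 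This step needs only the numerical check of the constant.

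\textbf{Step 2 (the $\ell_\infty$ bound \cref{eq:bound-infinity-coefficient}).} I will use the one-dimensional tail part of Banaszczyk's lemma. For any unit vector $\vec{u}$ and $T>0$, it gives $\Pr[\abs{\langle \vec{x},\vec{u}\rangle} > T\chi] \leq 2e^{-\pi T^2}$ for $\vec{x}\sample\DD_{\ZZ^N,\chi}$. Alternatively, $\ZZ^N$ is a product lattice, so each coordinate is exactly $\DD_{\ZZ,\chi}$ and the one-dimensional tail bound applies directly. I take $\vec{u}$ to be each standard basis vector and $T=\sqrt{\delta}$, which bounds each coordinate's tail by $2e^{-\pi\delta}$. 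A union bound over the $N=\varphi m$ coordinates then gives $\varphi m\cdot 2e^{-\pi\delta}$.

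\textbf{Step 3 (comparing constants).} Finally I show $2e^{-\pi\delta}\leq 2^{-\delta}$. This is equivalent to $2 \leq (e^\pi/2)^\delta$, and since $e^\pi/2\approx 11.6$ it holds for every $\delta \geq \log 2/\log(e^\pi/2) \approx 0.29$.

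The main obstacle is the degenerate small-$\delta$ regime of Step 3. For $\delta<0.29$ the comparison fails, so I will argue that the claimed right-hand side $\varphi m\cdot 2^{-\delta}$ is already at least $1$, which makes the bound trivial. This holds whenever $\varphi m \geq 2$, since then $\varphi m\cdot 2^{-\delta}\geq 2\cdot 2^{-0.29}>1$. The only remaining case is $\varphi m = 1$ with $\delta$ tiny, where the bound trivializes only in the limit, since $2^{-\delta} \to 1$ but stays below $1$ for $\delta > 0$. There I would either bound $\Pr[\abs{x}>\chi\sqrt\delta]\leq 1-\Pr[x=0]$ directly, or, more pragmatically, note that $\delta$ is a statistical security parameter (at least $1$ throughout the paper) and state the lemma under that assumption.
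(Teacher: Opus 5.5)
Your route is the one the paper intends: the paper gives no argument beyond the citation. Pushing $\DD^m_{\ring,\chi}$ forward to $\DD_{\ZZ^{\varphi m},\chi}$ via the coefficient embedding, taking $c=1$ in Banaszczyk's bound for the $\ell_2$ claim, and using a per-coefficient one-dimensional tail bound plus a union bound for the $\ell_\infty$ claim is exactly what that citation stands for. One attribution point: the bound $2e^{-\pi T^2}$ is, to my knowledge, not part of the 1993 Lemma~1.5 but comes from Banaszczyk's 1995 follow-up. To stay within Lemma~1.5, apply its part (i) in dimension one with $c=\sqrt{\delta}$, which gives $\sqrt{2\pi e\delta}\,e^{-\pi\delta}\le 2^{-\delta}$ for all $\delta\ge 1$.

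The one step that does not go through is the corner case you flagged, $\varphi m=1$ and $\delta<\ln 2/(\pi-\ln 2)\approx 0.28$. Your direct fallback $\Pr[|x|>\chi\sqrt{\delta}]\le 1-\Pr[x=0]=1-1/\rho_\chi(\ZZ)$ is too weak when $\chi$ is large, since $\rho_\chi(\ZZ)$ grows like $\chi$. For example, with $\chi=100$ and $\delta=0.1$ it gives about $0.99$, whereas $2^{-0.1}\approx 0.93$; the true tail is only about $0.43$. The claim still holds there, but it needs an actual one-dimensional estimate. Use $\rho_\chi(\ZZ)\le 1+\chi$ and $1-2^{-\delta}\le\delta\ln 2<0.21$, and split into two cases. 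If $\chi\sqrt{\delta}<1$, then $\Pr[|x|\le\chi\sqrt{\delta}]=1/\rho_\chi(\ZZ)\ge 1/(1+\chi)\ge\min\{0.21,\ln 2/\chi^2\}>\delta\ln 2$. If $\chi\sqrt{\delta}\ge 1$, the interval $[-\chi\sqrt{\delta},\chi\sqrt{\delta}]$ contains at least $\chi\sqrt{\delta}+1$ integers, each of Gaussian weight at least $e^{-\pi\delta}$. Hence $\Pr[|x|\le\chi\sqrt{\delta}]\ge\sqrt{\delta}\,e^{-\pi\delta}\ge\delta\ln 2$ for $\delta<0.29$. Alternatively, state the lemma for $\delta\ge 1$; this loses nothing for the paper, which only invokes it with $\delta=\lambda$.
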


The following lemma is standard and can be shown easily by bounding the KL divergence and then via Pinsker's inequality.

\begin{lemma}[Noise Flooding]\label{lem:noise_flooding}
    For any $n\in\NN$, $\chi > 0$, $\vec{c}\in\ring^n$, it
    holds that
    $\mathrm{KL}(\DD_{\ring,\chi}^n, \DD_{\ring,\chi}^n+\vec{c}) \leq \pi \cdot \frac{\norm{\vec{c}}_2^2}{\chi^2}$ and 
    $\mathsf{SD}(\DD_{\ring,\chi}^n, \DD_{\ring,\chi}^n+\vec{c})\leq \sqrt{\frac{\pi}{2}} \cdot \frac{\norm{\vec{c}}_2}{\chi}$,
    where $\mathrm{KL}$ and $\mathsf{SD}$ denote the KL divergence and the statistical distance, respectively.
    In particular, if $\chi\geq\lambda^{\omega(1)}\cdot \norm{\vec{c}}_2$, one has $\DD_{\ring,\chi}^n\approx_s\DD_{\ring,\chi}^n+\vec{c}$.
\end{lemma}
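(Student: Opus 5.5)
The plan is to compute the KL divergence exactly and then apply Pinsker's inequality. Throughout, we identify $\ring^n$ with the integer lattice $\ZZ^{\varphi n}$ via the coefficient embedding $\psi$. This is consistent with the paper's definition, since $\rho_\chi(\vec{x}) = e^{-\pi\norm{\vec{x}}_2^2/\chi^2}$ is measured in that embedding.

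\textbf{Step 1: same normaliser.} Write $P = \DD_{\ring,\chi}^n$ and let $P'$ be the law of $\vec{y} + \vec{c}$ with $\vec{y} \sample P$. Then $P(\vec{x}) = \rho_\chi(\vec{x})/\rho_\chi(\ring^n)$ and $P'(\vec{x}) = \rho_\chi(\vec{x}-\vec{c})/\rho_\chi(\ring^n)$. Both have the same normaliser because $\vec{c} \in \ring^n$, so $\ring^n - \vec{c} = \ring^n$. The two distributions also have the same support $\ring^n$. Hence for every $\vec{x}$,
\[
  \ln\frac{P(\vec{x})}{P'(\vec{x})} = \frac{\pi}{\chi^2}\left(\norm{\vec{x}-\vec{c}}_2^2 - \norm{\vec{x}}_2^2\right) = \frac{\pi}{\chi^2}\left(\norm{\vec{c}}_2^2 - 2\langle \psi(\vec{x}),\psi(\vec{c})\rangle\right).
\]

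\textbf{Step 2: exact KL.} Take the expectation over $\vec{x} \sample P$, using natural logarithms. The centred discrete Gaussian is symmetric under $\vec{x}\mapsto -\vec{x}$, which preserves $\ring^n$ and $\rho_\chi$. Its tails are Gaussian, so the mean exists, and by symmetry $\mathbb{E}[\psi(\vec{x})] = \vec{0}$. The cross term therefore vanishes, giving
\[
  \mathrm{KL}(P,P') = \pi\cdot\norm{\vec{c}}_2^2/\chi^2 .
\]
This in fact holds with equality.

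\textbf{Step 3: statistical distance.} Pinsker's inequality gives $\mathsf{SD}(P,P') \le \sqrt{\mathrm{KL}(P,P')/2}$, which equals $\sqrt{\pi/2}\cdot\norm{\vec{c}}_2/\chi$. For the final claim, if $\chi \ge \lambda^{\omega(1)}\cdot\norm{\vec{c}}_2$ then this bound is at most $\sqrt{\pi/2}\cdot\lambda^{-\omega(1)}$, which is negligible. Hence $P \approx_s P'$.

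There is no real obstacle here. The only points needing care are that $\vec{c}$ lies in the lattice, so the normalisers cancel, and that the divergence is measured in nats, so Pinsker's constant $1/2$ is the right one. If $\vec{c}$ were allowed to be non-integral, one would instead need a smoothing-parameter argument for the ratio of normalisers, but the lemma does not require this.
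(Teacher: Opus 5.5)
Your proof is correct and follows the route the paper indicates: bound the KL divergence, then apply Pinsker's inequality. Your exact computation of the KL divergence, in which the normalisers cancel because $\vec{c}\in\ring^n$ and the cross term vanishes by symmetry of the centred discrete Gaussian, is the standard argument that the paper leaves implicit.
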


For our security proofs, we will need to apply noise flooding over $Q$ many adaptive oracle queries. Naively making $Q$ hybrid hops will lead to a factor $Q$ blow-up in the advantage bound. The following lemma provides a tighter bound with only $\sqrt{Q}$ blow-up, it follows from bounding the KL divergence over all $Q$ queries carefully and then applying Pinsker's inequality. 

\begin{lemma}\label{lem:adaptive_noise_flooding}
    Let $n\in\NN$ and $\chi,B>0$. For $b\in\set{0,1}$ and any (computationally unbounded) adversary $\adv$, let $\mathsf{Flood}^b_\adv$ be the game defined below:
    \begin{center}
        \begin{pchstack}[center]
            \procedure{$\mathsf{Flood}^b_\adv$}{
                \pcreturn \adv^{\oracle_b}()
            }
            \pchspace
            \procedure{$\oracle_b(\vecc \in\ring^n)$}{
                \pcassert \norm{\vecc}_2 \le B \\
                \vece \sample \DD_{\ring,\chi}^n \\
                \pcreturn \veca \coloneqq b \cdot \vecc + \vece 
            }
        \end{pchstack}
    \end{center}
    For any (computationally unbounded) $\adv$ that makes at most $Q \in \NN$ queries to $\oracle_b$, it holds that 
    \[
        \abs{\prob{\mathsf{Flood}^0_\adv=1}-\prob{\mathsf{Flood}^1_\adv=1}}\le \sqrt{\tfrac{\pi}{2}}\cdot\frac{\sqrt{Q} \cdot B}{\chi}.
    \]
\end{lemma}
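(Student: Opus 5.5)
The plan is to compare the two games through the KL divergence of the adversary's full view, and convert to statistical distance only once at the end using Pinsker's inequality. Doing it this way is what gives $\sqrt{Q}$ rather than $Q$. Without loss of generality we make three simplifications. First, we assume $\adv$ makes exactly $Q$ queries, padding with dummy queries $\vecc = \veczero$ if necessary. Second, we write $\adv$'s random coins $r$ as an explicit first component of its view. Third, once $r$ and the previous answers are fixed, each query is determined by them. Concretely, the view in game $b$ is the random variable $V_b = (r, \veca_1, \ldots, \veca_Q)$, and the $i$-th query is $\vecc_i = \vecc_i(r, \veca_1, \ldots, \veca_{i-1})$. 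The output of $\mathsf{Flood}^b_\adv$ is a deterministic function of $V_b$. By the data-processing inequality, the quantity to bound is therefore at most $\mathsf{SD}(V_0, V_1)$.

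For a query violating $\norm{\vecc_i}_2 \le B$, the assertion aborts identically in both games. We treat the abort as a special symbol whose distribution does not depend on $b$, so it contributes nothing to the divergence below. All distributions involved are discrete, supported on countable sets $\ring^n$. Hence the chain rule for KL divergence applies directly:
\[
  \mathrm{KL}(V_0 \,\|\, V_1) = \mathrm{KL}(r_0 \,\|\, r_1) + \sum_{i=1}^{Q} \mathbb{E}_{h \sim (r, \veca_{<i})\text{ in game } 0}\Bigl[\mathrm{KL}\bigl(\veca_i \mid h \text{ in game } 0 \,\big\|\, \veca_i \mid h \text{ in game } 1\bigr)\Bigr].
\]
The coin term vanishes, since $r$ has the same distribution in both games.

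Conditioned on a history $h$, the query $\vecc_i$ is fixed. The answer $\veca_i$ is then distributed as $\DD_{\ring,\chi}^n$ in game $0$ and as $\DD_{\ring,\chi}^n + \vecc_i$ in game $1$, because fresh noise $\vece$ is sampled independently of the past. By \cref{lem:noise_flooding}, each conditional KL term is at most $\pi \norm{\vecc_i}_2^2/\chi^2 \le \pi B^2/\chi^2$. This holds pointwise in $h$, so it survives the expectation. Summing over the $Q$ queries gives $\mathrm{KL}(V_0 \,\|\, V_1) \le \pi Q B^2/\chi^2$. Pinsker's inequality then yields
\[
  \mathsf{SD}(V_0, V_1) \le \sqrt{\mathrm{KL}(V_0 \,\|\, V_1)/2} \le \sqrt{\pi/2} \cdot \sqrt{Q} \cdot B/\chi,
\]
which is the claimed bound.

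The only delicate step is the chain-rule decomposition under adaptivity. We must check that the conditional law of $\veca_i$ given the history is exactly a shifted discrete Gaussian, with the shift being a function of the history alone. This is immediate from the oracle's structure, given that the coins are included in the history and aborts are handled as above. Everything else is routine bookkeeping.
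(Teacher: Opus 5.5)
Your proposal is correct and follows essentially the same route as the paper: the chain rule for KL divergence over the adaptive transcript, a per-query bound of $\pi B^2/\chi^2$ from \cref{lem:noise_flooding}, and a single application of Pinsker's inequality. The differences are cosmetic. The paper makes the adversary deterministic by fixing its coins instead of carrying them in the view. It also writes the divergence in the opposite direction, which does not matter because the KL divergence between a discrete Gaussian and its integral shift is symmetric.
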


\begin{proof}
    Fix any $\adv$. As $\adv$ can be unbounded, we may assume it is deterministic -- fixing its coins to the ones such that the LHS of the concerned inequality is maximised. The view of $\adv$ is the tuple of all oracle answers $(\veca_1,\ldots,\veca_Q)$. 
    Below, denote by $P$ and $\hat{P}$ the joint distributions of these answers in $\mathsf{Flood}^1_\adv$ and $\mathsf{Flood}^0_\adv$, respectively. 
    The LHS quantity is thus at most $\mathsf{SD}(P,\hat{P})$, which we bound in the rest. 
    We will use the chain-rule for KL divergence, for which we provide a proof below for completeness.

    \begin{proposition}
    For any two distributions $P,\hat P$ over tuples $a=(a_1,\ldots,a_Q)$, the KL divergence between P and $\hat P$ satisfies
    \begin{align}\label{eq:KL-chain-rule}
        \mathrm{KL}(P \,\|\, \hat P)
        = \sum_{i \in [Q]} \sum_{a_{<i}} P(a_{<i}) \cdot \mathrm{KL}\big(P(a_i\mid a_{<i}) \,\big\|\, \hat P(a_i\mid a_{<i})\big).
    \end{align}
    \end{proposition}
    
    \begin{proof}    
        The probability chain rule gives $P(a)=\prod_{i\in[Q]}P(a_i\mid a_{<i})$ and $\hat P(a)=\prod_{i\in[Q]}\hat P(a_i\mid a_{<i})$, where $a_{<i}=(a_1,\ldots,a_{i-1})$. Hence
        \begin{align*}
            \mathrm{KL}(P \,\|\, \hat P)
            &= \sum_{a} P(a) \log \frac{P(a)}{\hat P(a)} \\
            &= \sum_{a} P(a) \sum_{i \in [Q]} \log \frac{P(a_i\mid a_{<i})}{\hat P(a_i\mid a_{<i})}  \\
            &= \sum_{i \in [Q]} \sum_{a} P(a) \log \frac{P(a_i\mid a_{<i})}{\hat P(a_i\mid a_{<i})}  \\
            &= \sum_{i \in [Q]} \sum_{a_{\le i}} \Big(\sum_{a_{>i}} P(a_{\le i}, a_{>i})\Big) \log \frac{P(a_i\mid a_{<i})}{\hat P(a_i\mid a_{<i})}  \\
            &= \sum_{i \in [Q]} \sum_{a_{\le i}} P(a_{\le i}) \log \frac{P(a_i\mid a_{<i})}{\hat P(a_i\mid a_{<i})}  \\
            &= \sum_{i \in [Q]} \sum_{a_{<i}} P(a_{<i}) \sum_{a_i} P(a_i\mid a_{<i}) \log \frac{P(a_i\mid a_{<i})}{\hat P(a_i\mid a_{<i})}  \\
            &= \sum_{i \in [Q]} \sum_{a_{<i}} P(a_{<i}) \cdot \mathrm{KL}\big(P(a_i\mid a_{<i}) \,\big\|\, \hat P(a_i\mid a_{<i})\big).
        \end{align*}
        In the above, the fourth line uses 
        $a=(a_{\le i}, a_{>i})$ where we denote $a_{\le i}=(a_1,\ldots,a_i)$ and $a_{>i}=(a_{i+1},\ldots,a_Q)$.
        The fifth line holds by definition of marginal distribution, i.e.~$\sum_{a_{>i}}P(a_{\le i},a_{>i})=P(a_{\le i})$. 
        The sixth line uses $P(a_{\le i})=P(a_{<i})\,P(a_i\mid a_{<i})$, and the last line is by realising that the inner sum over $a_i$ is the conditional KL divergence $\mathrm{KL}(P(a_i\mid a_{<i})\,\|\,\hat P(a_i\mid a_{<i}))=\sum_{a_i}P(a_i\mid a_{<i})\log\frac{P(a_i\mid a_{<i})}{\hat P(a_i\mid a_{<i})}$. 
        \qed
    \end{proof}

    With $\adv$ deterministic, once the prefix $\veca_{<i}$ is fixed, the query $\vecc_i$ is a fixed vector and the only fresh randomness in $\veca_i$ is $\vece_i\sample\DD_{\ring,\chi}^n$, drawn independently of the prefix. 
    The conditional distributions are therefore $P(\veca_i\mid\veca_{<i})=\vecc_i+\DD_{\ring,\chi}^n$ and $\hat{P}(\veca_i\mid\veca_{<i})=\DD_{\ring,\chi}^n$. 
    
    By the KL divergence bound in \cref{lem:noise_flooding} and the norm-check in the oracle $\oracle_b$, for every prefix $\veca_{<i}$, 
    \[
        \mathrm{KL}\big(P(\veca_i\mid\veca_{<i})\,\big\|\,\hat{P}(\veca_i\mid\veca_{<i})\big)
        \leq \frac{\pi \norm{\vecc_i}_2^2}{\chi^2} 
        \leq \frac{\pi B^2}{\chi^2}. 
    \]
    Note that the above upper bound is a fixed value independent of the distribution of $\norm{\vecc_i}$ (the latter may adaptively depend on the prefix $\veca_{<i}$). 
    Plugging into \cref{eq:KL-chain-rule}, we get 
    \begin{align*}
        \mathrm{KL}(P\,\|\,\hat{P})
        &\leq \sum_{i\in[Q]} \sum_{\veca_{<i}} P(\veca_{<i})\cdot \frac{\pi B^2}{\chi^2}
        = \frac{\pi Q B^2}{\chi^2},
    \end{align*}
    where the last equality uses $\sum_{\veca_{<i}} P(\veca_{<i})=1$. Finally, using Pinsker's inequality gives
    \[
        \mathsf{SD}(P,\hat{P})
        \leq \sqrt{\frac{1}{2} \mathrm{KL}(P\,\|\,\hat{P})} 
        \leq \sqrt{\tfrac{\pi}{2}}\cdot\frac{\sqrt{Q} \cdot B}{\chi}.
        \tag*{\qed}
    \]
\end{proof}


\section{Proof of \texorpdfstring{\cref{lem:xi-subtactive-trig-spaced}}{Lemma \ref{lem:xi-subtactive-trig-spaced}} (Quality of Subtractive Sets)}\label{app:lem:xi-subtactive-trig-spaced}

To prove \cref{lem:xi-subtactive-trig-spaced}, we will use the following \cref{lem:xi-subtactive-trig} which provides more general bounds for a subtractive set that consists of roots of unity. It is adapted from \cite{C:KLLPS26} to suit the norm convention in this work. 

\begin{lemma}[{Adapted from \cite[Theorem 4.3]{C:KLLPS26}}]\label{lem:xi-subtactive-trig}
    Let $\conductor$ be a power of two, so that $\varphi = \conductor/2$, let $\Xi = \set{1, \zeta, \ldots, \zeta^{\conductor-1}}$, and let $\xi = \slack(t) = 2^{\ceil{\log t}}$.
    For any $2 \leq t \leq \conductor$, the set $\Xi$ and any subset thereof have the quality
    \begin{align*}
        \rho(t) \leq~& \xi \cdot 2^{-(t-1)} \prod_{k=1}^{\floor{t/2}} \frac{1}{\sin^2(\pi k/\conductor)}, \\
        \theta(t) \leq~& \xi \cdot A, \\
        \gamma(t) \leq~& \xi \cdot \sqrt{ (\conductor - t + 1) (A^2 + (t-1) B^2) + 1 },
    \end{align*}
    where 
    \begin{align*}
        A \coloneqq 2^{t-1} \prod_{k=1}^{\floor{(t-2)/2}} \cos^2\!\big(\tfrac{\pi k}{\conductor}\big),
        \qquad 
        B \coloneqq \prod_{k=1}^{\floor{(t-3)/2}} \cos^2\!\big(\tfrac{\pi k}{\conductor}\big) \prod_{k=1}^{\floor{(t-1)/2}} \frac{1}{\sin^2(\pi k/\conductor)}.
    \end{align*}
\end{lemma}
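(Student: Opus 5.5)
The plan is to bound the three quality measurements separately, using the explicit structure of Vandermonde matrices whose nodes are roots of unity. Fix $T \subseteq_t \Xi$ with nodes $\zeta^{a_1},\ldots,\zeta^{a_t}$.

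\textbf{Bounding $\rho$.} The row $\vecv_0^\transpose\mat{V}_T^{-1}$ is the vector of Lagrange weights $L_k(0)=\prod_{j\neq k}\zeta_j/(\zeta_j-\zeta_k)$. Since $\xi=2^{\ceil{\log t}}$ makes $\xi\cdot\mat{V}_T^{-1}$ integral (\cref{lem:xi-subtractive}), the task is to bound the coefficient $\ell_\infty$-norm of $\xi L_k(0)$. The nodes are units, so $|\sigma_i(L_k(0))|=\prod_{j\neq k}|\sigma_i(\zeta_j)-\sigma_i(\zeta_k)|^{-1}$. Each factor is a chord length $2|\sin(\pi d/\conductor)|$ for some nonzero difference $d$ of exponents. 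To maximise the product, one takes the $t-1$ differences to be as small as possible. For a fixed $k$ and the nearest-neighbour configuration, these are $\pm1,\pm2,\ldots$, which gives $2^{-(t-1)}\prod_{k\le\floor{t/2}}\sin^{-2}(\pi k/\conductor)$. This is a rearrangement argument: among $t-1$ distinct nonzero residues mod $\conductor$, the product of $|\sin|$ values is minimised by the symmetric consecutive block. Converting from the canonical bound to the coefficient bound uses the power-of-two fact that $\norm{\psi(x)}_\infty\le\max_i|\sigma_i(x)|$. This holds because $\psi(x)=\varphi^{-1/2}\matU^*\sigma(x)$ with $\matU$ unitary and entries of modulus $\varphi^{-1/2}$, so each coefficient is an average of the $\sigma_i(x)$. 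I expect this to be routine.

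\textbf{Bounding $\theta$.} The entries of $\mat{V}_{\Xi\setminus T^*}\mat{V}_{\{0\}\cup T^*}^{-1}\vecv_0$ are the Lagrange weights attached to the node $0$ when interpolating from $\{0\}\cup T^*$ and evaluating at $x\in\Xi\setminus T^*$. Such a weight equals $\prod_{j\in T^*}(x-\zeta_j)/(0-\zeta_j)$, which has absolute value $\prod_{j}|x-\zeta_j|$ under each embedding. Each factor is $2|\cos(\cdot)|$-type after re-centering, and the product is maximised by a balanced configuration. Following \cite[Theorem 4.3]{C:KLLPS26}, this yields $A=2^{t-1}\prod\cos^2(\pi k/\conductor)$. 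The coefficient-norm conversion is the same as above. One point requires care: for the maximum, $x$ is the node antipodal to the cluster, and the count of $2^{t-1}$ versus $2^{t-1}$ with one factor omitted must be matched to the stated form.

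\textbf{Bounding $\gamma$.} By \cref{lem:induced-norm-embeddings}, $\norm{\cdot}_{2,2}$ equals $\max_i$ of the complex operator norm of $\sigma_i(\cdot)$, and each $\sigma_i$ permutes $\Xi$. It therefore suffices to bound one complex matrix by its Frobenius norm. The $t-1$ rows indexed by $T^*$ are unit vectors, while the row for $x=0$ is absent since $0\notin\Xi$. The remaining $K-t+1$ rows each have one entry (the weight on $0$) bounded by $A$ and $t-1$ entries (the weights on $\zeta_j$) bounded by $B$. Each of the latter is a ratio of a $(t-2)$-fold product of chords by a $(t-1)$-fold one, handled by the same extremal argument. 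The Frobenius norm is then $\le\sqrt{(K-t+1)(A^2+(t-1)B^2)+(t-1)}$. The stated bound has $+1$ in place of $+(t-1)$, so I would sharpen the estimate by treating the $T^*$ block as an identity submatrix, contributing operator norm $1$ via an orthogonal splitting of rows.

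\textbf{Main obstacle.} The main obstacle is the extremal trigonometric claims, namely that clustered consecutive nodes maximise the $B$-type ratio. I would simply import these from \cite[Theorem 4.3]{C:KLLPS26}, and only verify the norm-convention conversion and the Frobenius step.
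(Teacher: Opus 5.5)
Your outline follows the paper's proof. You bound each entry in every embedding $\sigma_\ell$, pass to coefficients via $\|x\|_\infty\le\max_\ell|\sigma_\ell(x)|$, and bound the block $\xi\mathbf{V}_{\Xi\setminus T^*}\mathbf{V}_{\{0\}\cup T^*}^{-1}$ by its Frobenius norm using the lemma $\|\cdot\|_{2,2}=\max_\ell\|\sigma_\ell(\cdot)\|_{2,2}$. The $+1$ comes from splitting off the $T^*$ rows, which are $\xi$ times a coordinate selection. The issues are all in the trigonometric core.

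What you call the main obstacle is not a joint extremal claim and need not be imported; the paper proves two elementary one-sided inequalities. The function $|\sin(\pi d/\conductor)|$ depends only on $\bar d=\min(d,\conductor-d)$, is increasing in it, and each value of $\bar d$ is taken by at most two residues. Hence any $m$ distinct nonzero residues satisfy $\prod_j|\sin(\pi d_j/\conductor)|^{-1}\le\prod_{k=1}^{\lceil m/2\rceil}\sin^{-2}(\pi k/\conductor)$. Writing $|\sin(\pi d/\conductor)|=\cos(\pi e/\conductor)$ with $e=\varphi-\bar d$, the value $e=0$ is taken once and each $e\ge 1$ at most twice. This gives $\prod_j|\sin(\pi d_j/\conductor)|\le\prod_{k=1}^{\lfloor(m-1)/2\rfloor}\cos^2(\pi k/\conductor)$.

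The $\rho$ bound is the first inequality with $m=t-1$. $A$ is $2^{t-1}$ times the second with $m=t-1$, which settles the bookkeeping you left open for $\theta$: the $e=0$ factor equals $1$, and for even $m$ the unpaired top factor is bounded by $1$. For $B$, the numerator and the denominator are bounded \emph{separately}. So you never need that one clustered configuration maximises the ratio; that bound is not attained, and it does not have to be.

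Your chord count for the $B$-entries is also wrong. Take the weight $\prod_{\nu\in S\setminus\{\mu\}}(\alpha-\nu)/(\mu-\nu)$ on a node $\mu\in T^*$, evaluated at $\alpha$, with $S=\{0\}\cup T^*$. Its $\nu=0$ factor is $\alpha/\mu$, which has modulus $1$ in every embedding. So the numerator and the denominator are both products of exactly $t-2$ chords $2|\sin(\cdot)|$, not $t-2$ and $t-1$. The powers of $2$ cancel, which is why $B$ carries none. The numerator gives $\prod_{k=1}^{\lfloor(t-3)/2\rfloor}\cos^2(\pi k/\conductor)$, and the inverse denominator gives $\prod_{k=1}^{\lceil(t-2)/2\rceil}\sin^{-2}(\pi k/\conductor)=\prod_{k=1}^{\lfloor(t-1)/2\rfloor}\sin^{-2}(\pi k/\conductor)$. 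With a $(t-1)$-fold denominator you would pick up a stray factor $1/2$ and, for even $t$, an extra $\sin^{-2}$ factor. You would then not arrive at the stated $B$.
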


\begin{proof}
    Throughout this proof, for a set $S \subseteq \set{0} \cup \Xi$ and an element $\beta \notin S$, we write
    $\lambda_\mu^{S \to \beta} \coloneqq \prod_{\nu \in S \setminus \set\mu} \frac{\beta - \nu}{\mu - \nu}$
    for the Lagrange coefficient of the element $\mu \in S$.%
    \footnote{This is consistent with the notations in \cite{C:KLLPS26}.}
    With this notation, the row vector $\xi \cdot \vec{v}_0^\transpose \cdot \mat{V}_T^{-1}$ has its $\mu$-th entry being $\xi \cdot \lambda_\mu^{T \to 0}$, and the matrix $\xi \cdot \mat{V}_{\Xi \setminus T^*} \cdot \mat{V}_{\set{0} \cup T^*}^{-1}$ has its $(\alpha, \mu)$-th entry being $\xi \cdot \lambda_\mu^{\set{0} \cup T^* \to \alpha}$,
    where the row index $\alpha$ ranges over $\Xi \setminus T^*$ and the column index $\mu$ over $\set{0} \cup T^*$.
    By \cref{lem:xi-subtractive}, all these elements lie in $\ring$.

    Since $\conductor$ is a power of two,
    \begin{align}\label{eq:bound-coeff-by-canon}
        \norm{x} \leq \norm{\sigma(x)}_\infty = \max_\ell \abs{\sigma_\ell(x)}
    \end{align}
    for any $x \in \field$, 
    so it suffices to bound the canonical embedding.
    Fix one embedding $\sigma_\ell$, sending $\zeta$ to a primitive $\conductor$-th root of unity $\omega = \exp(\pi i r/\varphi)$ with $r$ coprime to $\conductor$.
    For exponents $a, b$ we have $\abs{\sigma_\ell(\zeta^a)} = 1$, $\abs{\sigma_\ell(\zeta^a - 0)} = 1$, and
    \begin{align*}
        \abs{\sigma_\ell(\zeta^a - \zeta^b)} = \abs{\omega^{a-b} - 1} = \abs{\exp(2\pi i \, r(a-b)/\conductor) - 1} = 2\,\abs{\sin\!\big(\pi r (a-b)/\conductor\big)}.
    \end{align*}
    As $r$ is coprime to $\conductor$, multiplication by $r$ permutes the integers modulo $\conductor$, so we may assume $r = 1$ without loss of generality.

    \begin{proposition}[Sine-product]    
        For any $m$ distinct non-zero integers $d_1, \ldots, d_m$ modulo $\conductor$,
        \begin{align}\label{eq:sine-product-claim}
            \prod_{j=1}^{m} \frac{1}{\abs{\sin(\pi d_j/\conductor)}} \leq \prod_{k=1}^{\ceil{m/2}} \frac{1}{\sin^2(\pi k/\conductor)}.
        \end{align}
    \end{proposition}
    
    \begin{proof}
        Indeed, $\abs{\sin(\pi d/\conductor)}$ depends only on $\bar d \coloneqq \min(d, \conductor - d) \in \set{1, \ldots, \varphi}$, since $\sin(\pi(\conductor - d)/\conductor) = \sin(\pi d/\conductor)$, and is strictly increasing in $\bar d$.
        Each value of $\bar d$ is attained by at most two integers modulo $\conductor$, namely $d$ and $\conductor - d$.
        Hence the LHS in the proposition is the largest when the multiset $\set{\bar d_j}$ takes the smallest possible values each repeated at most twice, that is, $\set{1, 1, 2, 2, \ldots}$ of size $m$, whose largest element is $\ceil{m/2}$.
        When $m$ is odd we enlarge the product by one factor $\sin^{-2} \geq 1$, which gives \cref{eq:sine-product-claim}.
        \qed
    \end{proof}

    \begin{proposition}[Cosine-product]    
        For any $m$ distinct non-zero integers $d_1, \ldots, d_m$ modulo $\conductor$,
        \begin{align}\label{eq:cosine-product-claim}
            \prod_{j=1}^{m} \abs{\sin(\pi d_j/\conductor)} \leq \prod_{k=1}^{\floor{(m-1)/2}} \cos^2(\pi k/\conductor).
        \end{align}
    \end{proposition}

    \begin{proof}    
        Writing $e \coloneqq \varphi - \bar d \in \set{0, \ldots, \varphi - 1}$, we have $\abs{\sin(\pi d/\conductor)} = \sin(\pi(\varphi - e)/\conductor) = \cos(\pi e/\conductor)$, which is the largest when $e$ is the smallest.
        Each value $e = k \geq 1$ is attained by at most two integers modulo $\conductor$ (those with $\bar d = \varphi - k$), and $e = 0$ by one (the integer $d = \varphi$), so the LHS in the proposition is the largest for the multiset $\set{0, 1, 1, 2, 2, \ldots}$ of size $m$.
        For even $m$ the largest element $m/2$ is unpaired, and bounding its single factor by $\cos(\pi(m/2)/\conductor) \leq 1$ gives \cref{eq:cosine-product-claim}.
        \qed
    \end{proof}

    \medskip\noindent\underline{Bound on $\rho$.}
    By definition $\rho(t) = \max_{T \subseteq_t \Xi} \norm{\slack \cdot \vec{v}_0^\transpose \cdot \mat{V}_T^{-1}} = \max_{T \subseteq_t \Xi} \norm{(\xi \cdot \lambda_\mu^{T \to 0})_{\mu \in T}}$,
    that is, the maximum of $\norm{\xi \cdot \lambda_\mu^{T \to 0}}$ over all $\mu \in T$ and all sets $T \subseteq_t \Xi$.
    By \cref{eq:bound-coeff-by-canon}, to bound $\rho(t)$ it suffices to bound $\abs{\sigma_\ell(\xi \cdot \lambda_\mu^{T \to 0})}$ for any $\mu$,$T$ and $\ell$.

    Fix a size-$t$ set $T \subseteq_t \Xi$ and $\mu \in T$, and write each $\nu \in T$ as $\zeta^{c_\nu}$.
    Then $\abs{\sigma_\ell(\nu)} = 1$ and $\abs{\sigma_\ell(\mu - \nu)} = 2\abs{\sin(\pi(c_\mu - c_\nu)/\conductor)}$, so
    \begin{align*}
        \abs{\sigma_\ell(\xi \cdot \lambda_\mu^{T \to 0})}
        = \xi \prod_{\nu \in T \setminus \set\mu} \frac{\abs{\sigma_\ell(\nu)}}{\abs{\sigma_\ell(\mu - \nu)}}
        = \xi \cdot 2^{-(t-1)} \prod_{\nu \in T \setminus \set\mu} \frac{1}{\abs{\sin(\pi(c_\mu - c_\nu)/\conductor)}}.
    \end{align*}
    The $t-1$ integers $c_\mu - c_\nu$ are distinct and non-zero modulo $\conductor$, so \cref{eq:sine-product-claim} with $m = t-1$ and $\ceil{(t-1)/2} = \floor{t/2}$ upper-bounds this by
    $\xi \cdot 2^{-(t-1)} \prod_{k=1}^{\floor{t/2}} \sin^{-2}(\pi k/\conductor)$,
    which is the stated bound.

    \medskip\noindent\underline{Bound on $\gamma$.}
    By definition $\gamma(t) = \max_{T^* \subseteq_{t-1} \Xi} \norm{\slack \cdot \mat{V}_{\Xi} \cdot \mat{V}_{\set{0} \cup T^*}^{-1}}_{2,2}$.
    Fix any $T^* \subseteq_{t-1} \Xi$, define the shorthand $S \coloneqq \set{0} \cup T^*$, and denote $\matM \coloneqq \xi \cdot \mat{V}_{\Xi \setminus T^*} \cdot \mat{V}_S^{-1}$.
    We bound separately the rows of $\xi \cdot \mat{V}_\Xi \cdot \mat{V}_S^{-1}$ indexed by $\Xi \setminus T^*$, which form $\matM$, and those indexed by $T^*$, and then combine the two bounds.

    For any row $\alpha \in \Xi \setminus T^*$ and column $\mu \in S$, the $(\alpha, \mu)$-th entry of $\matM$ is
    \begin{align}\label{eq:express_M_i-mu}
        \mat{M}_{\alpha, \mu} = \xi \cdot \lambda_\mu^{S \to \alpha} = \xi \prod_{\nu \in S \setminus \set\mu} \frac{\alpha - \nu}{\mu - \nu}.
    \end{align}
    Write the evaluation point as $\alpha = \zeta^a$ and each node $\nu \in T^*$ as $\nu = \zeta^{c_\nu}$.
    Relative to the bound on $\rho(t)$, the main difference here is that the numerator is $\alpha - \nu$ (instead of $-\nu$) which is no longer a unit of $\ring$, and will yield a sine factor $\abs{\sigma_\ell(\alpha - \nu)} = 2\abs{\sin(\pi(a - c_\nu)/\conductor)}$ as shown below.

    For a canonical embedding $\sigma_\ell$, extend $\sigma_\ell$ entrywise to matrices and vectors of ring elements, so that $\sigma_\ell(\mat{M}) = \big( \sigma_\ell(\mat{M}_{\alpha, \mu}) \big)_{\alpha, \mu} \in \CC^{(\conductor - t + 1) \times t}$.
    Since $\conductor$ is a power of two, \cref{lem:induced-norm-embeddings} applies to $\matM$, and 
    \begin{align}\label{eq:bound-22-norm-by-Frobenius}
        \norm{\mat{M}}_{2,2}
        = \max_\ell \norm{\sigma_\ell(\mat{M})}_{2,2}
        \leq \max_\ell \norm{\sigma_\ell(\mat{M})}_F
        = \max_\ell \sqrt{\sum_{\alpha, \mu} \abs{\sigma_\ell(\mat{M}_{\alpha,\mu})}^2},
    \end{align}
    where the inequality upper-bounds $\norm{\cdot}_{2,2}$ by the Frobenius norm $\norm{\cdot}_F$.
    In the following we bound $\abs{\sigma_\ell(\mat{M}_{\alpha, \mu})}$ for each index $\alpha,\mu$. We recall that $\matM_{\alpha,\mu}$ can be expressed as in \cref{eq:express_M_i-mu}.
    
    Fix any row $\alpha \in \Xi\setminus T^*$. We consider two cases: the column index of $\mat{M}_{\alpha, \mu}$ is $\mu = 0$ or $\mu \in T^*$.
    
    For $\mu = 0$, the denominator $\prod_{\nu \in T^*}(0 - \nu)$ in \cref{eq:express_M_i-mu} is a product of units, so
    \begin{align*}
        \abs{\sigma_\ell(\xi \cdot \lambda_0^{S \to \alpha})} = \xi \prod_{\nu \in T^*} \abs{\sigma_\ell(\alpha - \nu)} = \xi \cdot 2^{t-1} \prod_{\nu \in T^*} \abs{\sin(\pi(a - c_\nu)/\conductor)}.
    \end{align*}
    The $t-1$ integers $a - c_\nu$ are distinct and non-zero modulo $\conductor$, so \cref{eq:cosine-product-claim} with $m = t-1$ upper-bounds this by
    \begin{align*}
        \abs{\sigma_\ell(\xi \cdot \lambda_0^{S \to \alpha})} \leq \xi \cdot 2^{t-1} \prod_{k=1}^{\floor{(t-2)/2}} \cos^2(\pi k/\conductor)
        = \xi \cdot A.
    \end{align*}
    For $\mu \in T^*$, writing $\mu = \zeta^c$ for some $c$ and factoring out the term with $\nu = 0$, 
    \begin{align*}
        \abs{\sigma_\ell(\xi \cdot \lambda_\mu^{S \to \alpha})}
        = \xi \cdot \frac{\abs{\sigma_\ell(\alpha-0)}}{\abs{\sigma_\ell(\mu-0)}} \prod_{\nu \in T^* \setminus \set\mu} \frac{\abs{\sigma_\ell(\alpha - \nu)}}{\abs{\sigma_\ell(\mu - \nu)}}
        = \xi \prod_{\nu \in T^* \setminus \set\mu} \frac{\abs{\sin(\pi(a - c_\nu)/\conductor)}}{\abs{\sin(\pi(c - c_\nu)/\conductor)}},
    \end{align*}
    where the last line is because both $\alpha$ and $\mu$ are units so $\abs{\sigma_\ell(\alpha)} = \abs{\sigma_\ell(\mu)} = 1$.
    Within each of the numerator and the denominator in \cref{eq:express_M_i-mu}, the $t-2$ integers $a - c_\nu$ (respectively $c - c_\nu$) are distinct and non-zero modulo $\conductor$. Hence, we can bound the numerator by \cref{eq:cosine-product-claim} with $m = t-2$, and the inverse of the denominator by \cref{eq:sine-product-claim} with $m = t-2$ (using $\ceil{(t-2)/2} = \floor{(t-1)/2}$), obtaining
    \begin{align*}
        \abs{\sigma_\ell(\xi \cdot \lambda_\mu^{S \to \alpha})} \leq \xi \prod_{k=1}^{\floor{(t-3)/2}} \cos^2(\pi k/\conductor) \prod_{k=1}^{\floor{(t-1)/2}} \frac{1}{\sin^2(\pi k/\conductor)}
        = \xi \cdot B.
    \end{align*}

    For any $\alpha \in \Xi \setminus T^*$, each row of $\sigma_\ell(\mat{M})$ indexed by $\alpha$ has 1 entry with $\mu = 0$ and $t-1$ entries with $\mu \in T^*$. Using the bounds from the above two cases,
    \[
        \sum_{\mu} \abs{\sigma_\ell(\mat{M}_{\alpha,\mu})}^2
        \leq (\xi \cdot A)^2 + (t-1)(\xi \cdot B)^2 = \xi^2 \cdot (A^2 + (t-1)B^2).
    \] 
    Summing over the $\conductor - t + 1$ rows, we get that 
    \begin{align*}
        \sum_{\alpha, \mu} \abs{\sigma_\ell(\mat{M}_{\alpha,\mu})}^2 \leq \xi^2 \cdot (\conductor - t + 1) (A^2 + (t-1)B^2).
    \end{align*}
    This bound is independent of $\ell$, so taking square roots yields an upper-bound for $\norm{\mat{M}}_{2,2}$ by \cref{eq:bound-22-norm-by-Frobenius}.
    
    It remains to account for the rows of $\xi \cdot \mat{V}_\Xi \cdot \mat{V}_S^{-1}$ indexed by $T^*$.
    Each $\mu \in T^*$ indexes a row of $\mat{V}_S$, so row $\mu$ of $\mat{V}_{T^*} \cdot \mat{V}_S^{-1}$ is the unit row vector selecting the column $\mu$.
    Hence, up to a permutation of rows, $\xi \cdot \mat{V}_\Xi \cdot \mat{V}_S^{-1}$ is $\matM$ stacked on $\xi \cdot \matP$, where $\matP \in \set{0,1}^{(t-1) \times t}$ selects the entries of a vector indexed by $T^*$.
    For any $\vec{x} \in \field^t$, since $\xi$ is an integer and $\norm{\matP \vec{x}}_2 \leq \norm{\vec{x}}_2$,
    \begin{align*}
        \norm{\xi \cdot \mat{V}_\Xi \cdot \mat{V}_S^{-1} \cdot \vec{x}}_2^2
        =~& \norm{\matM \vec{x}}_2^2 + \xi^2 \cdot \norm{\matP \vec{x}}_2^2 \\
        \leq~& \left( \norm{\matM}_{2,2}^2 + \xi^2 \right) \cdot \norm{\vec{x}}_2^2 \\
        \leq~& \xi^2 \cdot \big( (\conductor - t + 1) (A^2 + (t-1)B^2) + 1 \big) \cdot \norm{\vec{x}}_2^2.
    \end{align*}
    Maximising over $T^*$ yields the stated bound on $\gamma(t)$.

    \medskip\noindent\underline{Bound on $\theta$.}
    This quantity concerns the same matrix $\matM \coloneqq \xi \cdot \mat{V}_{\Xi \setminus T^*} \cdot \mat{V}_{\{0\} \cup T^*}^{-1}$ as above.
    In particular, $\theta(t) = \max_{T^*} \norm{\mat{M} \cdot \vec{v}_0}$ is the $\ell_\infty$-norm of the column with $\mu = 0$. We have
    \[
        \theta(t) \leq \max_\alpha \norm{\mat{M}_{\alpha, 0}}
        \leq \max_\alpha \norm{\sigma(\mat{M}_{\alpha, 0})}_\infty
        \leq
        \xi \cdot A,
    \]
    where the second inequality is by \cref{eq:bound-coeff-by-canon}.
    Maximising over $T^*$ gives the bound on $\theta(t)$.
    \qed
\end{proof}

Selecting subtractive set elements that are evenly spread among the $\conductor$ monomials reduces the effective ring degree and can lead to better upper bounds. This is observed in \cite{AC:CinLaiWoo25} and formalised in \cite[Section 5.1]{C:KLLPS26}, and is also the choice in our implementation, captured by \cref{lem:xi-subtactive-trig-spaced}.

\begin{proof}[of \cref{lem:xi-subtactive-trig-spaced}]
    Since $\conductor$ is a power of two and $K \mid \conductor$, the element $\eta = \zeta^{\conductor/K}$ is a primitive $K$-th root of unity with $\eta^{K/2} = \zeta^{\conductor/2} = -1$, so the subring $\ring' \coloneqq \ZZ[\eta] \subseteq \ring$ is isomorphic to the $K$-th cyclotomic ring $\ZZ[Y]/(Y^{K/2}+1)$, of degree $K/2$.
    The set $\Xi' = \set{\eta^i : 0 \leq i < K}$ is precisely the set of powers of $\eta$, which is a $\slack$-subtractive set over $\ring'$ for the same slack $\xi = 2^{\ceil{\log t}}$ by \cref{lem:xi-subtractive} (as $K$ is also a power of two).
    Applying \cref{lem:xi-subtactive-trig} over $\ring'$ gives the four bounds with conductor $K$ and degree $K/2$.

    It remains to note that all the norms involved are unchanged when an element of $\ring'$ is viewed in $\ring$.
    Any $a = \sum_{j=0}^{K/2-1} c_j \cdot \eta^j \in \ring'$ equals $\sum_j c_j \cdot \zeta^{(\conductor/K) j}$, whose exponents $(\conductor/K) j < \varphi$ are distinct, so the coefficient vector of $a$ in the power basis of $\ring$ is that of $a$ in $\ring'$ padded with zeros.
    Hence $\norm{a}_p$ agrees in $\ring$ and $\ring'$ for every $p$, and the canonical-embedding magnitudes $\abs{\sigma_\ell(a)}$ used in the proof of \cref{lem:xi-subtactive-trig} are those of $\ring'$, which give the trigonometric factors with argument $\pi k/K$.

    For $\gamma$, fix $T^* \subseteq_{t-1} \Xi'$, set $S \coloneqq \set{0} \cup T^*$, and let $\matM \coloneqq \xi \cdot \mat{V}_{\Xi' \setminus T^*} \cdot \mat{V}_S^{-1}$ be the matrix with entries in $\ring'$, having $K - t + 1$ rows indexed by $\alpha \in \Xi' \setminus T^*$ and $t$ columns indexed by $\mu \in S$.
    Viewing $\matM$ as a matrix over $\ring$, \cref{eq:bound-22-norm-by-Frobenius} bounds its $\ell_2$-induced operator norm by $\norm{\matM}_{2,2} \leq \max_\ell \norm{\sigma_\ell(\matM)}_F$, where $\ell$ ranges over the $\varphi$ embeddings of $\ring$.
    By the previous paragraph, the magnitude $\abs{\sigma_\ell(\matM_{\alpha,\mu})}$ viewed in $\ring$ equals the magnitude in $\ring'$, whose trigonometric factors carry the argument $\pi k/K$.
    Hence the bounds $\abs{\sigma_\ell(\matM_{\alpha,0})} \leq \xi A$ and $\abs{\sigma_\ell(\matM_{\alpha,\mu})} \leq \xi B$ for $\mu \in T^*$, established in the proof of \cref{lem:xi-subtactive-trig}, hold with $A$ and $B$ evaluated at conductor $K$.
    As each of the $K - t + 1$ rows carries one entry of the first kind and $t-1$ of the second,
    \begin{align*}
        \norm{\sigma_\ell(\matM)}_F^2 = \sum_{\alpha, \mu} \abs{\sigma_\ell(\matM_{\alpha,\mu})}^2 \leq \xi^2 \cdot (K - t + 1)\big(A^2 + (t-1)B^2\big)
    \end{align*}
    for every $\ell$.
    Accounting for the $t-1$ rows of $\xi \cdot \mat{V}_{\Xi'} \cdot \mat{V}_S^{-1}$ indexed by $T^*$ exactly as in the proof of \cref{lem:xi-subtactive-trig} gives
    \[
        \norm{\xi \cdot \mat{V}_{\Xi'} \cdot \mat{V}_S^{-1}}_{2,2}^2 \leq \norm{\matM}_{2,2}^2 + \xi^2 \leq \xi^2 \cdot \big( (K - t + 1)(A^2 + (t-1)B^2) + 1 \big).
    \]
    Taking square roots and maximising over $T^*$ yields the stated bound for $\gamma$.
    \qed
\end{proof}

\section{Proof of \texorpdfstring{\cref{thm:leakage-security}}{Theorem \ref{thm:leakage-security}} (FHE Leakage Security)}\label{app:thm:leakage-security}

\leakagesecurity*

\begin{proof}
We consider the following sequence of hybrids. 
    
\begin{description}
    \item $\hyb_{b,0}$: This is the IND-CPA-with-$(D,\leakdim,\chi)$-Leakage  security experiment for challenge bit $b$. To recall, the view of the adversary consists of 
    \begin{align*}    
        &\text{public parameters } \pubparam,\qquad \text{public key } \pk, \\
        &\text{ciphertexts } \ctxt \quad \text{from $\enc\oracle$, $\Chal\oracle$ or $\Eval\oracle$, and their underlying plaintexts } \mu_b, \\
        &\text{leakages } \vecu\cdot\hat{e}+\vecf \quad\text{from $\Leak\oracle$ on ciphertexts $\ctxt$}. 
    \end{align*}
    \item $\hyb_{b,1}$: We change the experiment as follows.
        For each query $(\id,\vecu)$ to the oracle $\Leak\oracle$, compute $\hat{e}\coloneqq  \hat{c}_1 - \sk^\tr\cdot\hat{\vecc}_0 - \xi^{-2} \round{\frac{q}{p}}\mu \bmod q $. If $\norm{\hat{e}} > B$, then set $L_\Leak[\id]\coloneqq \vecf \sample \DD_{\ring,\chi}^\leakdim$ and use this as the oracle answer. Otherwise, there is no change, i.e.~set $L_\Leak[\id]\coloneqq \vecu \cdot \hat{e} + \vecf$.
    \item $\hyb_{b,2}$: We change the experiment as follows. 
        For each $(\id,\vecu)$ to the oracle $\Leak\oracle$, sample $\vecf\sample\DD_{\ring,\chi}^{\leakdim}$. The leakage is computed as $L_\Leak[\id]\coloneqq \vecf$.
\end{description}

For $i=0,1,2$, let $\Adv_i(\adv)$ denote the advantage of $\adv$ in experiments $(\hyb_{0,i},\hyb_{1,i})$, that is
\[
\Adv_i(\adv) \coloneqq \abs{\prob{\hyb_{0,i,\adv}(\secparam) = 1} - \prob{\hyb_{1,i,\adv}(\secparam) = 1}}
\]
By definition of $\hyb_{b,0}$, one has
\[
\Adv_{\FHE,\adv,(D,\leakdim,\chi)}^{\INDCPAwL}(\secpar)=\Adv_0(\adv)
\]
\begin{itemize}
    \item By \cref{lem:abort-correctness}, if $\FHE$ has $(\ring,\hat{n},p,q,\xi,B)$-linear-decryption with error $\varepsilon_\FHE$, we have
    \[
    \Adv_0(\adv)\leq \Adv_1(\adv)+2\cdot Q\cdot \varepsilon_\FHE.
    \]
    \item By \cref{lem:noise-flooding-step}, we have
    \[
    \Adv_1(\adv)\leq \Adv_2(\adv)+2\cdot  \frac{C \cdot \sqrt{Q\cdot\varphi \cdot \leakdim} \cdot \expansion_\infty \cdot D \cdot B}{\chi}.
    \]
    \item By \cref{lem:reduction-to-IND-CPA}, for every PPT $\adv$, there exists a PPT $\bdv$ such that      
    \[
    \Adv_2(\adv)\leq \Adv_{\FHE,\bdv}^{\INDCPA}(\lambda).
    \]
\end{itemize}

Putting everything together, one obtains
\[
\Adv_{\FHE,\adv,(D,\leakdim,\chi)}^{\INDCPAwL}(\secpar)\leq 2\cdot \left(Q\cdot \varepsilon_\FHE+\frac{C \cdot \sqrt{Q\cdot\varphi \cdot \leakdim} \cdot \expansion_\infty \cdot D \cdot B}{\chi}\right)+\Adv_{\FHE,\bdv}^{\INDCPA}(\lambda),
\]
as claimed.

\begin{lemma}\label{lem:abort-correctness}
    Adopt the notations in \cref{thm:leakage-security}. 
    We have $\Adv_0(\adv)\leq \Adv_1(\adv)+2\cdot Q\cdot \varepsilon_\FHE$.
\end{lemma}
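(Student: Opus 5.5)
We would prove \cref{lem:abort-correctness} by a standard identical-until-bad argument applied to each game $b \in \set{0,1}$ separately. Fix $b$. The hybrids $\hyb_{b,0}$ and $\hyb_{b,1}$ run identically, on the same random coins, until the event $\mathsf{Bad}_b$ occurs. This is the event that some query $(\id,\vecu)$ to $\Leak\oracle$ (passing all asserts, first time for $\id$) concerns a ciphertext whose $\hat{e}$ satisfies $\norm{\hat{e}} > B$. Outside $\mathsf{Bad}_b$ the oracle answers coincide, so the fundamental lemma of game playing gives
\[
\abs{\prob{\hyb_{b,0,\adv}(\secparam)=1}-\prob{\hyb_{b,1,\adv}(\secparam)=1}} \leq \prob{\mathsf{Bad}_b}.
\]
The triangle inequality over $b=0,1$ then yields $\Adv_0(\adv) \leq \Adv_1(\adv) + \prob{\mathsf{Bad}_0} + \prob{\mathsf{Bad}_1}$. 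It therefore suffices to show $\prob{\mathsf{Bad}_b} \leq Q \cdot \varepsilon_\FHE$.

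For this we use a union bound over the at most $Q$ leakage queries together with a guessing reduction to the linear-decryption property (\cref{def:linear-FHE-fine-grained}). We build an adversary $\adv'$ for the experiment $\mathsf{Correct}_{\FHE,\adv'}$ (\cref{fig:FHE-correctness}) as follows.

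\begin{enumerate}
\item $\adv'$ receives $(\pubparam,\pk,\sk)$ and guesses an index $j \in [Q]$ uniformly at random.
\item It runs $\adv$ in $\hyb_{b,0}$, forwarding every $\Chal\oracle(\id,\mu_0,\mu_1)$ query as $\enc\oracle(\id,\mu_b)$. It forwards $\enc\oracle$ and $\Eval\oracle$ queries directly, using its own $\Eval\oracle$ on the $\mu_b$-labelled entries; this is consistent because $\Eval\oracle$ in $\mathsf{Correct}$ records $f((\mu_{b,i})_i)$.
\item Since it holds $\sk$, $\adv'$ answers every $\Leak\oracle$ query exactly as $\hyb_{b,0}$ does.
\item At the $j$-th leakage query $(\id,\vecu)$, it outputs $\id$ and halts.
\end{enumerate}

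Leakage queries pass only when $\mu_0=\mu_1$, so the plaintext $\mu$ that $\mathsf{Correct}$ records for $\id$ equals the $\mu$ used in $\Leak\oracle$. Hence the linear-decryption equation with $\norm{\hat e}\le B$ fails for this entry exactly when the $j$-th query triggers $\mathsf{Bad}_b$. Conditioned on the execution up to the $j$-th query, the simulation is perfect.

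Writing $\mathsf{Bad}_b^{(j)}$ for the event that the $j$-th leakage query is the one with large noise, the guessing reduction gives $\prob{\mathsf{Bad}_b^{(j)}} \leq \varepsilon_\FHE$ for each fixed $j$. This is more directly phrased as a separate adversary $\adv'_j$ per index $j$ rather than a random guess. The union bound over $j \in [Q]$ then gives $\prob{\mathsf{Bad}_b} \leq Q\cdot\varepsilon_\FHE$, and summing over $b$ yields the claimed $2Q\varepsilon_\FHE$.

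The main subtlety is verifying that the same noise $\hat e$ and plaintext $\mu$ arise in both experiments. This relies on $\PreDecrypt$ being deterministic, on the $\xi^{-2}\round{q/p}\mu$ normalisation being identical in $\Leak\oracle$ and in \cref{def:linear-FHE-fine-grained}, and on $\adv'$ being PPT; all three hold here.
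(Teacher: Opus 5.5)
Your proof is correct and takes the same approach as the paper: for each $b$, bound the distance between $\hyb_{b,0}$ and $\hyb_{b,1}$ by $Q\cdot\varepsilon_\FHE$ using the linear-decryption property for each leakage query and a hybrid over the $Q$ queries, then apply the triangle inequality. Your identical-until-bad framing and explicit reduction $\adv'_j$ to $\mathsf{Correct}_{\FHE,\adv'_j}$ spell out what the paper leaves implicit, and both rely on $\adv'_j$ receiving the ciphertexts from $\enc\oracle$ and $\Eval\oracle$ so it can forward them to $\adv$; since \cref{fig:FHE-correctness} omits those return statements, you should state that you read them as returning $\Lquery[\id]$, which is evidently the intended definition.
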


\begin{proof}
    By the definition of $(\ring,\hat{n},p,q,\xi,B)$-linear-decryption with error $\varepsilon_\FHE$, the norm of each $\hat{e}$ computed by the $\Leak\oracle$ is bounded by $B$ except with probability $\varepsilon_\FHE$. By a hybrid argument over the $Q$ queries made by $\adv$ to $\Leak\oracle$, it follows that, for each $b\in\{0,1\}$, the statistical distance between $\hyb_{b,0}$ and  $\hyb_{b,1}$ is at most $Q\cdot \varepsilon_\FHE$. Thus, by the triangle inequality, 
    \begin{align*}
        \Adv_0(\adv)&=\abs{\prob{\hyb_{0,0,\adv}(\secparam) = 1} - \prob{\hyb_{1,0,\adv}(\secparam) = 1}}\\
        &= \left|\,
            \begin{aligned}
                &\prob{\hyb_{0,0,\adv}(\secparam) = 1} - \prob{\hyb_{0,1,\adv}(\secparam) = 1} \\
                {}+{} &\prob{\hyb_{0,1,\adv}(\secparam) = 1} - \prob{\hyb_{1,1,\adv}(\secparam) = 1} \\
                {}+{} &\prob{\hyb_{1,1,\adv}(\secparam) = 1} - \prob{\hyb_{1,0,\adv}(\secparam) = 1}
            \end{aligned}
        \,\right|\\  
        &\leq
            \begin{aligned}[t]
                &\abs{\prob{\hyb_{0,0,\adv}(\secparam) = 1} - \prob{\hyb_{0,1,\adv}(\secparam) = 1}} \\
                {}+{} &\abs{\prob{\hyb_{0,1,\adv}(\secparam) = 1} - \prob{\hyb_{1,1,\adv}(\secparam) = 1}} \\
                {}+{} &\abs{\prob{\hyb_{1,1,\adv}(\secparam) = 1} - \prob{\hyb_{1,0,\adv}(\secparam) = 1}}
            \end{aligned}\\ 
        &\leq Q\cdot \varepsilon_\FHE + \Adv_1(\adv) + Q\cdot \varepsilon_\FHE\\
        &= \Adv_1(\adv)+2\cdot Q\cdot \varepsilon_\FHE.
        \tag*{\qed}
    \end{align*}
\end{proof}

\begin{lemma}\label{lem:noise-flooding-step}
    Adopt the notations in \cref{thm:leakage-security}. 
    We have $\Adv_1(\adv)\leq \Adv_2(\adv)+2\cdot \frac{C\cdot \sqrt{Q\cdot\varphi \cdot \leakdim} \cdot \expansion_\infty \cdot D \cdot B}{\chi}$.
\end{lemma}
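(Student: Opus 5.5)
The plan is to compare $\hyb_{b,1}$ and $\hyb_{b,2}$ for each fixed $b$ by reducing to the adaptive noise-flooding game of \cref{lem:adaptive_noise_flooding}, and then to combine the two resulting bounds with the triangle inequality, exactly as in \cref{lem:abort-correctness}.

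\textbf{Step 1 (the reduction).} Fix $b\in\{0,1\}$. Build an unbounded (indeed efficient, given $\sk$) simulator that plays $\mathsf{Flood}^{b'}$ with dimension $n=\leakdim$. It samples $\pubparam$ and $(\pk,\sk)$ and runs $\adv$, answering $\Chal\oracle$, $\enc\oracle$ and $\Eval\oracle$ honestly with challenge bit $b$. On a fresh leakage query $(\id,\vecu)$, it performs the same assertions and computes $\hat e$. If $\norm{\hat e}>B$, it answers with a fresh $\vecf\sample\DD_{\ring,\chi}^{\leakdim}$. Otherwise it queries its flooding oracle on $\vecc\coloneqq\vecu\cdot\hat e\in\ring^{\leakdim}$ and forwards the answer. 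With $b'=1$ this perfectly simulates $\hyb_{b,1}$, and with $b'=0$ it perfectly simulates $\hyb_{b,2}$. For the case $\norm{\hat e}>B$ this holds because both hybrids then output pure noise. Repeated queries on the same $\id$ are served from $\Lleak$, so at most $Q$ flooding queries are made.

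\textbf{Step 2 (the norm bound).} This is the only place where care is needed. The oracle requires $\norm{\vecc}_2\le B'$ for some fixed $B'$, and we must bound $\norm{\vecu\cdot\hat e}_2$ using $\norm{\vecu}_\infty\le D$ and $\norm{\hat e}_\infty\le B$. Coefficient-wise, each entry satisfies $\norm{u_i\hat e}_\infty\le\expansion_\infty\cdot D\cdot B$. Since $\vecc$ has $\varphi\leakdim$ integer coordinates, this gives $\norm{\vecc}_2\le\sqrt{\varphi\leakdim}\cdot\expansion_\infty\cdot D\cdot B\eqqcolon B'$. The bound holds deterministically whenever the query is actually forwarded, because oversized $\hat e$ were filtered out in $\hyb_{b,1}$. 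This filtering is exactly why that hybrid was introduced.

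\textbf{Step 3 (conclusion).} By \cref{lem:adaptive_noise_flooding}, for each $b$ we get $\abs{\Pr[\hyb_{b,1}=1]-\Pr[\hyb_{b,2}=1]}\le C\sqrt{Q}\,B'/\chi$, with $C=\sqrt{\pi/2}$. Writing $\Adv_1(\adv)$ as a telescoping sum through $\hyb_{0,2}$ and $\hyb_{1,2}$ and applying the triangle inequality gives
\[
\Adv_1(\adv)\le\Adv_2(\adv)+2\cdot\frac{C\sqrt{Q\varphi\leakdim}\,\expansion_\infty D B}{\chi}.
\]
The main (mild) obstacle is ensuring that the adaptivity of $\adv$'s choices of $\vecu$ and of the ciphertexts is compatible with the flooding game. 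It is compatible, because \cref{lem:adaptive_noise_flooding} allows adaptively chosen $\vecc$ under a uniform norm cap. This is also what yields the $\sqrt{Q}$ factor rather than $Q$.
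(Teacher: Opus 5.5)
Your proposal is correct and follows the paper's proof essentially step for step: the same reduction to \cref{lem:adaptive_noise_flooding} with $n=\leakdim$ and $B'=\sqrt{\varphi\leakdim}\cdot\expansion_\infty\cdot D\cdot B$, the same coefficient-wise norm bound, and the same triangle-inequality combination over $b\in\{0,1\}$. The only cosmetic difference is in the case $\norm{\hat{e}}>B$: you sample $\vecf$ locally, whereas the paper's reduction queries the flooding oracle on $\vecc\coloneqq\veczero$; both choices simulate the hybrids perfectly and respect the query bound.
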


\begin{proof}
    The only difference between $\Hyb_{b,1}$ and $\Hyb_{b,2}$ is that whenever $\norm{\hat{e}} \leq B$, the answer from $\Leak\oracle$ is changed from $\vecu \cdot \hat{e} + \vecf$ to $\vecf$. 
    For each such query $\vecu$, it holds that 
    \[
    \norm{\vecu \cdot \hat{e}}_2 
    \leq \sqrt{\varphi \cdot \leakdim} \cdot \norm{\vecu \cdot \hat{e}}_\infty 
    \leq \sqrt{\varphi \cdot \leakdim} \cdot \expansion_\infty \cdot \norm{\vecu}_\infty \cdot \norm{\hat{e}}_\infty  
    \leq  \sqrt{\varphi \cdot \leakdim} \cdot \expansion_\infty \cdot D \cdot B,
    \]
    where we use that $\norm{\vecu}_\infty \leq D$ by the norm-check in $\leak\oracle$. 
    
    For $b \in \{0,1\}$, we invoke \cref{lem:adaptive_noise_flooding} with $n = \leakdim$ and norm bound $B' \coloneqq \sqrt{\varphi \cdot \leakdim} \cdot \expansion_\infty \cdot D \cdot B$.
    The (unbounded) reduction is trivial, so we only give a sketch. It simulates $\hyb_{b,1}$ and $\hyb_{b,2}$ as follows: on each $\leak\oracle$ query it computes $\hat{e}$ using $\sk$, queries its oracle $\oracle_b$ on $\vecc \coloneqq \vecu \cdot \hat{e}$ if $\norm{\hat{e}} \leq B$ and $\vecc \coloneqq \veczero$ otherwise (so $\norm{\vecc}_2 \leq B'$ in both cases), and answers the query with the oracle output.
    The remaining parts are identical in both hybrids, which are computed by the reduction itself.
    It is direct to see that it perfectly simulates either hybrid depending on the challenge bit in the experiment in \cref{lem:adaptive_noise_flooding}.
    In particular, when $\norm{\hat{e}} > B$, the reduction answers $\vecf$ as defined in $\Hyb_{b,1}$.
    Since there are at most $Q$ queries, \cref{lem:adaptive_noise_flooding} gives
    \[
        \abs{\prob{\hyb_{b,1,\adv}(\secparam) = 1} - \prob{\hyb_{b,2,\adv}(\secparam) = 1}} \leq (C\cdot \sqrt{Q \cdot \varphi \cdot \leakdim} \cdot \expansion_\infty \cdot D\cdot B)/\chi
    \]
    for both $b \in \{0,1\}$.
    Then by the triangle inequality, 
    \begin{align*}
        \Adv_1(\adv)&=\abs{\prob{\hyb_{0,1,\adv}(\secparam) = 1} - \prob{\hyb_{1,1,\adv}(\secparam) = 1}}\\
        &= \left|\,
            \begin{aligned}
                &\prob{\hyb_{0,1,\adv}(\secparam) = 1} - \prob{\hyb_{0,2,\adv}(\secparam) = 1} \\
                {}+{} &\prob{\hyb_{0,2,\adv}(\secparam) = 1} - \prob{\hyb_{1,2,\adv}(\secparam) = 1} \\
                {}+{} &\prob{\hyb_{1,2,\adv}(\secparam) = 1} - \prob{\hyb_{1,1,\adv}(\secparam) = 1}
            \end{aligned}
        \,\right|\\  
        &\leq
        \begin{aligned}[t]
            &\abs{\prob{\hyb_{0,1,\adv}(\secparam) = 1} - \prob{\hyb_{0,2,\adv}(\secparam) = 1}} \\
            {}+{} &\abs{\prob{\hyb_{0,2,\adv}(\secparam) = 1} - \prob{\hyb_{1,2,\adv}(\secparam) = 1}} \\
            {}+{} &\abs{\prob{\hyb_{1,2,\adv}(\secparam) = 1} - \prob{\hyb_{1,1,\adv}(\secparam) = 1}}
        \end{aligned}\\ 
        &\leq (C\cdot \sqrt{Q \cdot \varphi \cdot \leakdim} \cdot \expansion_\infty \cdot D\cdot B)/\chi + \Adv_2(\adv) + (C\cdot \sqrt{Q \cdot \varphi \cdot \leakdim} \cdot \expansion_\infty \cdot D\cdot B)/\chi\\
        &= \Adv_2(\adv)+2\cdot (C\cdot \sqrt{Q \cdot \varphi \cdot \leakdim} \cdot \expansion_\infty \cdot D\cdot B)/\chi.
        \tag*{\qed}
    \end{align*}
\end{proof}

\begin{lemma}\label{lem:reduction-to-IND-CPA}
    Adopt the notations in \cref{thm:leakage-security}. 
    We have $\Adv_2(\adv)\leq \Adv_{\FHE,\bdv}^{\INDCPA}(\lambda)$.
\end{lemma}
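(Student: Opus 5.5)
The plan is a direct black-box reduction. Given a PPT $\adv$ distinguishing $\hyb_{0,2}$ from $\hyb_{1,2}$, I will build $\bdv$ against the standard IND-CPA experiment of $\FHE$ (the version without the dashed-box oracles) that runs $\adv$ and perfectly simulates its view in $\hyb_{b,2}$, where $b$ is $\bdv$'s own hidden challenge bit. Then $\bdv$ outputs whatever $\adv$ outputs. The simulation is perfect for both values of $b$, so $\Adv_2(\adv) = \Adv_{\FHE,\bdv}^{\INDCPA}(\secpar)$, which gives the claimed inequality.

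The key observation is that in $\hyb_{b,2}$ nothing in $\adv$'s view depends on $\sk$. $\bdv$ receives $(\pubparam,\pk)$ and forwards them to $\adv$, keeping its own table $\Lquery$. The oracles are answered as follows.
\begin{itemize}
  \item \emph{$\Chal\oracle(\id,\mu_0,\mu_1)$:} if $\id$ is fresh, $\bdv$ forwards $(\id,\mu_0,\mu_1)$ to its own $\Chal\oracle$, stores the returned $\ctxt$ together with $(\mu_0,\mu_1)$, and returns it.
  \item \emph{$\enc\oracle(\id,\mu)$:} $\bdv$ computes $\ctxt \gets \enc(\pk,\mu)$ itself, since $\pk$ is public, and stores $(\ctxt,\mu,\mu)$.
  \item \emph{$\Eval\oracle$:} $\bdv$ runs $\Eval(\pk,f,\cdot)$ on the stored ciphertexts and stores $f$ applied to both plaintext tuples, exactly as the experiment does.
  \item \emph{$\Leak\oracle(\id,\vecu)$:} $\bdv$ performs the same assertions as the experiment, namely the norm check, that $\id$ was honestly generated, and that $\mu_0=\mu_1$. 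These checks use only stored public data. If they pass, $\bdv$ samples $\vecf \sample \DD_{\ring,\chi}^\leakdim$, records it in $\Lleak[\id]$, and returns it. Repeated queries are answered from the table.
\end{itemize}
This matches the definition of $\hyb_{b,2}$ exactly: the leakage answer there is the fresh $\vecf$, independent of $\hat e$ and of $\sk$, so $\bdv$ never needs $\PreDecrypt$ or the secret key.

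The only step needing care, and the closest thing to an obstacle, is checking that the simulated challenge-side bookkeeping agrees with the real one. The ciphertext returned by $\bdv$'s $\Chal\oracle$ encrypts $\mu_b$ for $\bdv$'s hidden bit. Evaluated ciphertexts derived from it are therefore distributed exactly as in $\hyb_{b,2}$.

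I would also confirm that the IND-CPA experiment in \cref{fig:FHE-leaky-security} allows multiple $\Chal\oracle$ queries. It does, since the oracle is indexed by $\id$ in the same way. This means no hybrid over challenge queries is needed.

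Finally, $\bdv$ is PPT because every step is either an oracle forward, public encryption or evaluation, or Gaussian sampling, each done at most polynomially many times.
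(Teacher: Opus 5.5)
Your proposal is correct and follows essentially the same reduction as the paper. The paper's $\bdv$ also forwards $(\pubparam,\pk)$ and relays fresh $\Chal\oracle$ queries to its own IND-CPA challenger. It simulates $\enc\oracle$ and $\Eval\oracle$ itself using $\pk$, and answers $\Leak\oracle$, after the same assertions, with a fresh $\vecf \sample \DD_{\ring,\chi}^{\leakdim}$, which perfectly simulates $\hyb_{b,2}$ without needing $\sk$.
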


\begin{proof}
     Consider a PPT algorithm $\adv$ distinguishing $\hyb_{0,2}$ and $\hyb_{1,2}$ with advantage $\Adv_2(\adv)$. We construct a reduction $\bdv$ against the IND-CPA security of $\FHE$, which proceeds as follows:
\begin{itemize}
    \item It receives $(\pubparam,\pk)$ from the IND-CPA challenger and sends them to $\adv$.
    \item It internally maintains lists $\Lquery,\Lleak$, initially empty. It answers $\adv$'s queries to $\enc\oracle,\Eval\oracle,\Leak\oracle$ by honestly performing the oracle computations (which are identical in $\hyb_{0,2}$ and $\hyb_{1,2}$) itself, and forwards the $\Chal\oracle$ queries to its own $\Chal\oracle$ oracle available from the IND-CPA challenger. In more detail, it answers the oracle queries as follows:
    \begin{itemize}
    \item $\enc\oracle(\id,\mu)$: If $\Lquery[\id]=\emptyset$, it computes $\ctxt\gets\enc(\pk,\mu)$ and sets $\Lquery[\id]\coloneqq(\ctxt,\mu,\mu)$. It returns $\Lquery[\id]$ to $\adv$.
    \item $\Chal\oracle(\id,\mu_0,\mu_1)$: If $\Lquery[\id]=\emptyset$, it queries its own $\Chal\oracle$ on $(\id,\mu_0,\mu_1)$ to receive $\ctxt$, and sets $\Lquery[\id]\coloneqq(\ctxt,\mu_0,\mu_1)$. It returns $\Lquery[\id]$ to $\adv$.
    \item $\Eval\oracle(\id,f,(\id_i)_{i\in I})$: If $\Lquery[\id]=\emptyset$, it asserts $\Lquery[\id_i]\ne\emptyset$ and parses $(\ctxt_i,\mu_{0,i},\mu_{1,i})\gets\Lquery[\id_i]$ for each $i\in I$, then computes $\ctxt'\gets\Eval(\pk,f,(\ctxt_i)_{i\in I})$, and sets $\Lquery[\id]\coloneqq(\ctxt', \allowbreak f((\mu_{0,i})_{i\in I}), \allowbreak f((\mu_{1,i})_{i\in I}))$. It returns $\Lquery[\id]$ to $\adv$.
    \item $\Leak\oracle(\id,\vecu\in\ring^\leakdim)$: If $\Lleak[\id]=\emptyset$, it asserts $\norm{\vecu}\leq D$ and $\Lquery[\id]\neq\emptyset$, parses $(\ctxt,\mu_0,\mu_1)\gets\Lquery[\id]$ and asserts $\mu_0=\mu_1$, samples $\vecf\sample\DD_{\ring,\chi}^{\leakdim}$, and sets $\Lleak[\id]\coloneqq\vecf$. It returns $\Lleak[\id]$ to $\adv$.
    \end{itemize}
    \item Eventually, it receives $b'$ from $\adv$ and returns $b'$.
\end{itemize}       

If $\bdv$ is interacting with the IND-CPA  experiment $0$, then queries to $\Chal\oracle(\id,\mu_0,\mu_1)$ return $\ctxt\gets\enc(\pk,\mu_0)$. Therefore, the view of $\adv$ is identical to $\hyb_{0,2}$.

If $\bdv$ is interacting with the IND-CPA  experiment $1$, then queries to $\Chal\oracle(\id,\mu_0,\mu_1)$ return $\ctxt\gets\enc(\pk,\mu_1)$. Therefore, the view of $\adv$ is identical to $\hyb_{1,2}$.

It follows that $\bdv$ perfectly simulates the view of either $\hyb_{0,2}$ or $\hyb_{1,2}$ to $\adv$, 
if it is interacting either with the IND-CPA experiment $0$ or $1$. Therefore 
\(
\Adv_2(\adv)\leq \Adv_{\FHE,\bdv}^{\INDCPA}(\lambda)
\)
as claimed.
\qed        
\end{proof}
\qed
\end{proof}

\section{Proof of \texorpdfstring{\cref{thm:correctness}}{Theorem \ref{thm:correctness}} (\ourwork Correctness)}\label{app:thm:correctness}

\tfhecorrectness*

\begin{proof}
    First observe that, for any $\tFHE$ ciphertext $(\ctxt_\FHE,\rtag)$, 
    the third and fourth lines of $\ParDec$ and $\Dec$ in \cref{fig:threshold-FHE-fine-grained} are identical,
    i.e.~they compute the same $\vecx = \matG_{\beta,q}^{-1} \begin{pmatrix} \hat{\vecc}_0 \\ \rtag \end{pmatrix}$, where 
    $(\hat{\vecc}_0,\hat{c}_1)\gets\PreDecrypt(\ctxt_\FHE)$. 

    The algorithm $\ParDec$, on input $(\pk,\sk_k,\ctxt)$, computes the following quantity
    \begin{align}\label{eq:partial-decryption-share}
        \pd_k =\round{\vecs_k^\transpose\cdot\vecc_0+e_k}_{2^z}
        =\vecs_k^\transpose\cdot\matA\cdot\vecx+e_k + d_k \bmod q,
    \end{align}
    where $d_k \coloneqq \round{y}_{2^z} - y \in \ring$ is the rounding error, for $y$ the representative of $\vecs_k^\transpose\cdot\vecc_0+e_k \bmod q$ with coefficients in $[-q/2,q/2)$.
    Writing $y_i$ for the coefficients of $y$, each coefficient of $d_k$ is $2^z \cdot (\round{y_i/2^z} - y_i/2^z)$, so $\norm{d_k} \leq 2^{z-1}$.

    Below, for any recovery set $T$, denote by $\vec{e}_T \coloneqq (e_k + d_k)_{k \in T}$ the vertical concatenation of the errors $e_k + d_k$.
    The algorithm $\Dec$, on input $(\pk,T,(\pd_k)_{k\in T}, \ctxt)$, computes $\pd$ and $c_1$ which are  
    \begin{align}\label{eq:partial-decryption}
            \pd
            &= \xi \cdot\vec{v}_0^\transpose \cdot \mat{V}_T^{-1} \cdot (\pd_{k})_{k\in T} \bmod q \nonumber\\
            &=\xi \cdot\vec{v}_0^\transpose \cdot \mat{V}_T^{-1} \cdot \left(\vecs_k^\transpose\cdot\matA\cdot\vecx+e_k + d_k\right)_{k\in T} \bmod q
                \pccomment{by \cref{eq:partial-decryption-share}} \nonumber\\
            &=\xi^2 \cdot \vec{v}_0^\transpose \cdot \mat{V}_T^{-1} \cdot \matV_T\cdot\matR\cdot\matA\cdot\vecx+\xi \cdot\vec{v}_0^\transpose \cdot \mat{V}_T^{-1}\cdot\vece_T\bmod q 
                \pccomment{$(\vec{s}_{k})_{k\in T} =\xi\cdot \mat{V}_T \cdot \mat{R} \bmod q$, $(e_k + d_k)_{k\in T} = \vece_T$} \nonumber\\
            &=\xi^2 \cdot \vec{v}_0^\transpose \cdot\matR\cdot\matA\cdot\vecx+\xi \cdot\vec{v}_0^\transpose \cdot \mat{V}_T^{-1}\cdot\vece_T\bmod q \nonumber\\
            &=\xi^2 \cdot\vecr^\transpose\cdot\matA\cdot\vecx+\xi \cdot\vec{v}_0^\transpose \cdot \mat{V}_T^{-1}\cdot\vece_T\bmod q
                \pccomment{$\vecr^\transpose= \vec{v}_0^\transpose \cdot \matR\bmod q$} 
    \end{align}
    and  
    \begin{align}\label{eq:c1-component}
        c_1
        &=\xi^2 \cdot\vecb^\transpose\cdot\vecx\bmod q \nonumber\\
        &=\xi^2 \cdot(\vecr^\transpose\cdot\matA+\vece^\transpose+\vecz^\transpose)\cdot\vecx\bmod q\pccomment{$\vecb^{\transpose}=\vecr^\transpose\matA+\vece^\transpose+\vecz^\transpose\bmod q$} \nonumber\\
        &=\xi^2 \cdot(\vecr^\transpose\cdot\matA+\vece^\transpose+(\hat{\vec{s}}^\tr,\veczero_{h}^\tr)\cdot \matG_{\beta,q} )\cdot\vecx\bmod q
            \pccomment{$\vec{z}^\transpose = ((\hat{\vec{s}}^\tr,\veczero_{h}^\tr)\cdot \matG_{\beta,q} \bmod q)$} \nonumber\\
        &=\xi^2 \cdot\vecr^\transpose\cdot\matA\cdot\vecx+\xi^2 \cdot\vece^\transpose\cdot\vecx+\xi^2\cdot((\hatvec{s}^\tr,\veczero_{h}^\tr)\cdot\matG_{\beta,q}\cdot\vecx)\bmod q \nonumber\\
        &=\xi^2 \cdot\vecr^\transpose\cdot\matA\cdot\vecx+\xi^2 \cdot\vece^\transpose\cdot\vecx+\xi^2\cdot\left((\hat{\vec{s}}^\tr,\veczero_{h}^\tr)\cdot\matG_{\beta,q}\cdot\matG_{\beta,q}^{-1} \begin{pmatrix} \hat{\vecc}_0\\ \rtag \end{pmatrix}\right)\bmod q
            \pccomment{$\vecx=\matG_{\beta,q}^{-1} \begin{pmatrix} \hat{\vecc}_0\\ \rtag \end{pmatrix}$} \nonumber\\
        &=\xi^2 \cdot\vecr^\transpose\cdot\matA\cdot\vecx
        +\xi^2 \cdot\vece^\transpose\cdot\vecx
        +\xi^2 \cdot \hat{\vec{s}}^\tr\cdot \hat{\vecc}_0 \bmod q.
    \end{align}

    Putting \cref{eq:c1-component,eq:partial-decryption} together, we have 
    \begin{align}\label{eq:c_1-pd}
        c_1 - \pd
        =~&
        \left( \xi^2 \cdot\vecr^\transpose\cdot\matA\cdot\vecx
            +\xi^2 \cdot\vece^\transpose\cdot\vecx
            +\xi^2\cdot \hat{\vec{s}}^\tr\cdot \hat{\vecc}_0\right)
        - \left(\xi^2 \cdot\vecr^\transpose\cdot\matA\cdot\vecx+\xi \cdot\vec{v}_0^\transpose \cdot \mat{V}_T^{-1} \cdot \vece_T \right) \bmod q\nonumber\\
        =~& 
        \xi^2\cdot \hat{\vec{s}}^\tr \cdot \hat{\vecc}_0 
        + \xi^2 \cdot\vece^\transpose\cdot\vecx - \xi \cdot\vec{v}_0^\transpose \cdot \mat{V}_T^{-1}\cdot\vece_T 
        \bmod q \nonumber\\
        =~&
        \xi^2 \cdot \hat{\vec{s}}^\tr \cdot \hat{\vecc}_0 + f \bmod q, 
    \end{align} 
    where in the last line we use the shorthand 
    \begin{align*}
        f \coloneqq
         \xi^2 \cdot\vece^\transpose\cdot\vecx - \xi \cdot\vec{v}_0^\transpose \cdot \mat{V}_T^{-1}\cdot\vece_T.
    \end{align*}
    Using \cref{eq:c_1-pd}, we obtain 
    \begin{align}\label{eq:vprime}
            \hat{\mu}
            =~& \xi^2\cdot \hat{c}_1- (c_{1} - \pd)  \bmod q\nonumber\\
            =~&
           \xi^2 \cdot \hat{c}_1 - \xi^2 \cdot \hat{\vec{s}}^\tr \cdot \hat{\vecc}_0 - f  \bmod q\nonumber\\
            =~&
            \xi^2 \cdot \left(\xi^{-2}\cdot\round{\frac{q}{p}}\cdot \mu+\hat{e}\right) - f \bmod q \nonumber\\            
            =~& 
            \round{\frac{q}{p}}\cdot \mu + \xi^2\cdot\hat{e} - f  \bmod q\nonumber\\
            =~& 
            \round{\frac{q}{p}}\cdot \mu + \xi^2\cdot\hat{e} - f  + q\cdot h \pccomment{for some $h\in\ring$}
    \end{align}
    where the third equality follows from the $(\ring,\hat{n},p,q,\xi,B)$-linear-decryption property of $\FHE$ (\cref{def:linear-FHE-fine-grained}),
    which guarantees that 
    \[
        \hat{c}_1-\hat{\vec{s}}^\tr \cdot \hat{\vecc}_0 \bmod q 
            = \xi^{-2} \cdot \round{q/p} \cdot \mu + \hat{e} 
    \]
    except with probability at most $\varepsilon_\FHE$.
    Finally, the algorithm $\Dec$ returns 
    \begin{align*}
    \mu'&=\round{\frac{p}{q} \cdot \hat{\mu}} \bmod p \\
    &=\round{ \frac{p}{q}\cdot\left( \round{\frac{q}{p}}\cdot\mu+\xi^2\cdot\hat{e}-f+q\cdot h\right)} \bmod p \pccomment{By \cref{eq:vprime}}\\
        &=\round{ \frac{p}{q}\cdot \left(\frac{q}{p}-\fracpart{\frac{q}{p}}\right)\cdot\mu+\frac{p}{q} \cdot\xi^2\cdot\hat{e}-\frac{p}{q}\cdot f+p\cdot h} \bmod p \pccomment{$\frac{q}{p}=\round{\frac{q}{p}}+\fracpart{\frac{q}{p}}$}\\
        &=\round{ \mu-\frac{p}{q}\cdot \fracpart{\frac{q}{p}}\cdot\mu+\frac{p}{q} \cdot\xi^2\cdot\hat{e}-\frac{p}{q}\cdot f+p\cdot h} \bmod p \\
        &=\mu+\round{ -\frac{p}{q}\cdot \fracpart{\frac{q}{p}}\cdot\mu+\frac{p}{q} \cdot\xi^2\cdot\hat{e}-\frac{p}{q}\cdot f}+p\cdot h\bmod p \\
        &=\mu+\round{ -\frac{p}{q}\cdot \fracpart{\frac{q}{p}}\cdot\mu+\frac{p}{q} \cdot\xi^2\cdot\hat{e}-\frac{p}{q}\cdot f} \bmod p.
    \end{align*}
    Thus, the plaintext $\mu$ is correctly recovered as long as 
   \begin{align*}
        \round{ -\frac{p}{q}\cdot \fracpart{\frac{q}{p}}\cdot\mu+\frac{p}{q} \cdot\xi^2\cdot\hat{e}-\frac{p}{q}\cdot f}=0&\iff \norm{-\frac{p}{q}\cdot \fracpart{\frac{q}{p}}\cdot\mu+\frac{p}{q} \cdot\xi^2\cdot\hat{e}-\frac{p}{q}\cdot f}\leq \frac{1}{2}\\
        &\iff \norm{- \fracpart{\frac{q}{p}}\cdot\mu+\xi^2\cdot\hat{e}-f}\leq \frac{q}{2p}, 
   \end{align*}
   which is implied by 
   \begin{align}\label{eq:correctness-sufficient-condition}
        \norm{\fracpart{\frac{q}{p}}\cdot\mu} +\norm{\xi^2\cdot\hat{e}} +\norm{f}\leq \frac{q}{2p}.
   \end{align}
   In the rest, we show that \cref{eq:correctness} in the theorem statement implies \cref{eq:correctness-sufficient-condition} in the above except with probability at most $\varepsilon_\FHE+\varphi\cdot(m+t)\cdot 2^{-\lambda}$, which completes the proof. We have  
   \begin{align*}
        \norm{\fracpart{\frac{q}{p}}\cdot\mu}&\leq \abs{\fracpart{\frac{q}{p}} \cdot \frac{p}{2}} = \frac{\abs{q \bmod p}}{2},\\ \\
        \norm{\xi^2\cdot\hat{e}}&=\norm{\xi^2}\cdot\norm{\hat{e}}\pccomment{$\xi\in\ZZ$}\\ 
        &\leq \xi^2\cdot B, \pccomment{By $(\ring,\hat{n},p,q,\xi,B)$-linear-decryption with error $\varepsilon_\FHE$}\\ \\
        \norm{f}&=\norm{\xi^2 \cdot\vece^\transpose\cdot\vecx-\xi \cdot\vec{v}_0^\transpose \cdot \mat{V}_T^{-1}\cdot\vece_T}\\
        &\leq \norm{\xi^2 \cdot\vece^\transpose\cdot\vecx}+\norm{\xi \cdot\vec{v}_0^\transpose \cdot \mat{V}_T^{-1}\cdot\vece_T}\\
        &\leq \Big(\expansion_\infty \cdot m\cdot\norm{\xi^2}\cdot\norm{\vece}\cdot\norm{\vecx}\qquad\qquad
            \pccomment{$\xi\in\ZZ$, $\norm{\vece^\tr\cdot\vecx}\leq \expansion_\infty \cdot m\cdot\norm{\vece}\cdot\norm{\vecx}$}\\
        &\qquad\qquad + \expansion_\infty \cdot t\cdot  \norm{\xi\cdot\vecv_0^\tr\cdot\matV_T^{-1}} \cdot\norm{\vece_T}\Big)
            \pccomment{$\norm{\xi \cdot\vec{v}_0^\transpose \cdot \mat{V}_T^{-1}\cdot\vece_T}\leq \expansion_\infty \cdot t\cdot \norm{\xi \cdot\vec{v}_0^\transpose \cdot \mat{V}_T^{-1}}\cdot\norm{\vece_T}$}\\
         &\leq \bigg(\expansion_\infty \cdot m\cdot\xi^2\cdot \bar{\chi}\cdot\sqrt{\lambda}\cdot\frac{\beta}{2}
            \pccomment{$\norm{\vece}\leq \bar{\chi}\cdot\sqrt{\lambda}$ by \cref{eq:bound-infinity-coefficient}, $\norm{\vecx}\leq \frac{\beta}{2}$, $\xi\in\ZZ$}\\
        &\qquad\qquad + \expansion_\infty \cdot t\cdot \rho\cdot(\chi\cdot\sqrt{\lambda} + 2^{z-1})\bigg)
            \pccomment{$\norm{\vece_T}\leq \chi\cdot\sqrt{\lambda} + 2^{z-1}$ by \cref{eq:bound-infinity-coefficient}, $\norm{d_k} \leq 2^{z-1}$}\\
        &= \expansion_\infty \cdot \bigg( \sqrt{\lambda} \cdot m\cdot\xi^2\cdot \bar{\chi}\cdot\frac{\beta}{2}+ t\cdot \rho\cdot(\sqrt{\lambda} \cdot \chi + 2^{z-1})\bigg),
   \end{align*} 
   where we recall that $\rho$ in the above is a quality measurement of $\Xi$ (\cshref{def:subtractive-sets}).
   Putting the above bounds together with \cref{eq:correctness} yields \cref{eq:correctness-sufficient-condition} as claimed. 
   Indeed, if $\min\{\log(1/\varepsilon_\FHE),\lambda -\log \varphi-\log(m+t)\}\geq\omega(\log\lambda)$, then the above bounds hold except with probability negligible in $\secpar$.
   \qed

\end{proof}


\section{Proofs for \texorpdfstring{\Cref{thm:tfhe-security}}{Theorem \ref{thm:tfhe-security}} (\ourwork Security)}\label{sec:missing-proofs-for-security}

This section provides the proof of \cref{thm:tfhe-security} and is structured as follows.
In \cref{sec:thm-statement-and-proof}, we recall the theorem statement and state its proof which relies on several bridging lemmas.
Then, in \cref{sec:security-bridging-lemmas}, we state and prove these bridging lemmas \cref{lem:thLWE-hybrid-step,lem:H1-H2,lem:H2-H3,lem:H3-H4}.
One of the bridging lemmas (\cref{lem:thLWE-hybrid-step}) will further rely on a separate utility lemma (\cref{lem:th-LWE-variant}), the latter being a generalisation of a lemma in \cite{AC:CinLaiWoo25} and is to be provided in \cref{app:utility-and-adpLLWE}.

\subsection{Theorem Statement and Proof}\label{sec:thm-statement-and-proof}

\tfhesecurity*

\begin{proof}
    Let $\tFHE'$ denote the scheme in \cref{fig:threshold-FHE-fine-grained} with the step $\pd_k \gets \round{\pd_k}_{2^z}$ of $\ParDec$ removed, i.e.~the scheme without bit dropping.
    Since $\round{\cdot}_{2^z}$ is a deterministic public function applied only to the outputs of $\ParDec$, any adversary $\adv$ against $\tFHE$ yields an adversary $\adv'$ against $\tFHE'$ that runs $\adv$, forwards all its oracle queries, and applies $\round{\cdot}_{2^z}$ to every partial decryption before passing it to $\adv$.
    The adversary $\adv'$ is PPT, requests partial decryptions on the same ciphertexts as $\adv$, and perfectly simulates $\mathsf{Exp}^b_{\tFHE,\adv}$ for both $b \in \set{0,1}$, so $\Adv^{\INDCPA}_{\tFHE,\adv}(\secpar) = \Adv^{\INDCPA}_{\tFHE',\adv'}(\secpar)$.
    It therefore suffices to show $\Adv^{\INDCPA}_{\tFHE',\adv}(\secpar) \leq \varepsilon$ for every PPT $\adv$ against $\tFHE'$ that requests partial decryptions on at most $Q$ ciphertexts, with $\varepsilon$ as in \cref{thm:tfhe-security}.
    In the rest of this proof, $\adv$ denotes such an adversary and partial decryptions are those in $\mathsf{Exp}^b_{\tFHE',\adv}$.

    Throughout this proof (and in \cref{lem:th-LWE-variant,lem:thLWE-hybrid-step,lem:H1-H2,lem:H2-H3,lem:H3-H4} which support this proof), we adopt the following notations: 
    \begin{itemize}
        \item For any subset $\cdv \subseteq [K]$, denote by $\mat{V}_{\cdv} \in \ring^{|\cdv| \times t}$ the submatrix of $\mat{V}$ whose rows are indexed by $\cdv$. Similarly, $\mat{S}_{\cdv} \in \ring_q^{|\cdv| \times n}$ is the submatrix of $\mat{S}=\matV\cdot\matR $ whose rows are indexed by $\cdv$,  i.e., $\mat{S}_{\cdv} = \mat{V}_{\cdv} \cdot \mat{R}$. 
        \item For any set $\cdv \subset [K]$ of corrupt parties with $|\cdv| <t$, we define:
        \begin{itemize}
            \item $\hdv \coloneqq [K] \setminus \cdv$ the set of honest (i.e.~non-corrupt) parties, and 
            \item $\mat{T}_\cdv \in \ring^{t \times t}$ a matrix such that 
            it holds $\xi^{-1} \cdot \mat{T}_\cdv \cdot \mat{V}_{\cdv^*} = \mat{I}_t$
            for some size-$t$ set $\cdv^*$ satisfying $(\set{0} \cup \cdv) \subseteq \cdv^* \subseteq [0,K]$.
            ($\mat{T}_\cdv$ is not unique given $\cdv$ and we may fix an arbitrary one.) 
        \end{itemize}
    \end{itemize}

    We consider the following sequence of hybrids. 
    
    \begin{description}
    \item $\hyb_{b,0}$: This is the  $\tFHE'$ security experiment for challenge bit $b$. To recall, the view of the adversary consists of $\pk = (\pk_\FHE, \matA, \vec{b})$ where
    
            \begin{align*}    
        \vec{b}^\transpose \coloneqq \vec{r}^\transpose \matA + \vec{e}^\transpose + \vecz^\tr \bmod q,
        && 
        \vecz^\tr = (\hat{\vecs}^\tr,\veczero_h^\tr) \cdot \matG_{\beta,q} \bmod q,
    \end{align*}
    corrupt user keys $(\vecs_k)_{k \in \cdv}$, ciphertexts $\ctxt = (\ctxt_\FHE, \rtag)$ from $\enc\oracle, \Chal\oracle$ or $\Eval\oracle$, partial decryptions $\pd_k = \vecs_k^\transpose \matA \vecx + e_k \bmod q$ where $\vecx \coloneqq \matG^{-1}\begin{pmatrix} \hat{\vecc}_0 \\ \rtag \end{pmatrix}$ and the vector $\hatvec{c}_0$ is obtained from running 
    \(
        (\hat{\vecc}_0,\hat{c}_1) \gets \FHE.\PreDecrypt(\ctxt_\FHE),
    \)
    which is possible because of the $(\ring,\hat{n},p,q,\xi,B)$-linear-decryption property of $\FHE$. 
    (We recall that $\PreDecrypt$ is a public operation, thus the adversary can also compute the above and observe $(\hat{\vecc}_0,\hat{c}_1)$.)
    
    \item $\hyb_{b,1}$: This experiment is identical to $\hyb_{b,0}$ except for the following change.
        For each $\ctxt$, sample a random vector $\vecw_\vecx \sample \ring_q^{t-|\cdv|-1}$ and an error $\vece_\vecx \sample \DD_{\ring,\chi}^{K}$, its partial decryptions are computed as 
\begin{align} 
(\pd_k)_{k \in [K]} 
&=
\begin{bNiceArray}{c|c|c}  
\Block{1-3}{\mat{V}_\hdv \cdot \mat{T}_\cdv}  \\ \hline 
\quad &\quad & \xi\mat{I}_\cdv
\end{bNiceArray}
{\begin{bmatrix}
(\vecb - \vece - \vecz)^\transpose \vecx \\
\vecw_\vecx \\
\mat{S}_\cdv \mat{A} \vecx
\end{bmatrix}}
+ \vec{e}_\vecx \bmod q \nonumber\\
\eqqcolon~&
\begin{bNiceArray}{w{c}{2.5em}|w{c}{2.5em}|w{c}{3em}}
\Block{1-3}{\mat{V}_\hdv \cdot (\vecu_0,\matU_1,\matU_2)}  \\ \hline 
\quad &\quad & \xi\mat{I}_\cdv
\end{bNiceArray}
{\begin{bmatrix}
(\vecb - \vece - \vecz)^\transpose \vecx \\
\vecw_\vecx \\
\mat{S}_\cdv \mat{A} \vecx
\end{bmatrix}}
+ \vec{e}_\vecx \bmod q
\label{eq:simulation-with-random-b}
\end{align}
        where we recall $\vecx \coloneqq \matG^{-1}\begin{pmatrix} \hat{\vecc}_0\\ \rtag \end{pmatrix}$ and $\vecz^\tr = (\hat{\vecs}^\tr,\veczero_h^\tr) \cdot \matG_{\beta,q} \bmod q$.

        In the last line, we define the shorthand 
        \begin{equation}\label{eq:definition-of-u0-U1-U2}
        \matT_\cdv = (\vecu_0, \matU_1, \matU_2) \in \ring^{t \times 1} \times \ring^{t \times t-|\cdv|-1} \times \ring^{t \times |\cdv|},
        \end{equation} 
        for the block columns of $\mat{T}_\cdv$,
        which will be used in the subsequent hybrids.

    \item $\hyb_{b,2}$: This experiment is identical to $\hyb_{b,1}$ except for the following change.
    For each $\ParDec\oracle$ query $(\id, k)$, if $\Lquery[\id] = (\ctxt, \mu_0, \mu_1)$ is such that
    $\mu_0 \neq \mu_1$ and $k \notin \cdv$, we change the partial decryption answer for $k$ to a
    uniformly random $\ring_q$ element. For corrupt parties $k \in \cdv$, the answer is kept as
    in $\hyb_{b,1}$.

    \item $\hyb_{b,3}$: This experiment is identical to $\hyb_{b,2}$ except for the following change. For each $\ParDec\oracle$ query $\id$, if $\Lquery[\id] = (\ctxt, \mu_0, \mu_1)$ is such that $\mu_0 = \mu_1$, we express its partial decryptions as 
   \begin{align}
        (\pd_k)_{k \in [K]} \nonumber  
        &=
        \begin{bmatrix} \mat{V}_\hdv \vecu_0 \\ \matzero \end{bmatrix}
        \begin{bmatrix}(\vecb - \vece)^\transpose \vecx - \left(\hat{c}_1- \left(\round{\frac{q}{p}}\cdot \xi^{-2} \cdot \mu + \hat{e}\right)\right) \end{bmatrix} \\ 
        &\quad+ 
        {\begin{bNiceArray}{w{c}{2.5em}|w{c}{3em}}  
            \Block{1-2}{\mat{V}_\hdv (\matU_1,\matU_2)}  \\ \hline 
            \quad & \xi\mat{I}_\cdv \end{bNiceArray}}
        {\begin{bmatrix}
            \vecw_\vecx \\ \mat{S}_\cdv \mat{A} \vecx 
        \end{bmatrix}} 
        + \vece_\vecx 
        \bmod q, \label{eq:simulation-from-mu}
    \end{align}
    where $\mu \coloneqq \mu_0 = \mu_1$, and $\hat{e} \in \ring$ is defined as
    \begin{equation}\label{eq:hate}
        \hat{e} \coloneqq \hat{c}_1 - \hat{\vecs}^\tr\cdot\hat{\vecc}_0 - \xi^{-2} \round{\frac{q}{p}}\mu \bmod q
    \end{equation}
    We recall $\vecx \coloneqq \matG^{-1}\begin{pmatrix} \hat{\vecc}_0 \\ \rtag \end{pmatrix}$, and $\vecu_0, \matU_1, \matU_2$ are defined in $\hyb_{b,1}$.
    \end{description}
    For $i\in[0,3]$, let $\Adv_i(\adv)$ denote the advantage of $\adv$ in experiments $(\hyb_{0,i},\hyb_{1,i})$, that is
    \(
        \Adv_i(\adv) \coloneqq \abs{\prob{\hyb_{0,i,\adv}(\secparam) = 1} - \prob{\hyb_{1,i,\adv}(\secparam) = 1}}.
    \)
    By definition of $\hyb_{b,0}$, one has
    \(
        \Adv_{\tFHE',\adv}^{\INDCPA}(\secpar)=\Adv_0(\adv).
    \)
  By \cref{lem:thLWE-hybrid-step}, under the condition stated in \cref{thm:tfhe-security}, for any PPT $\adv$, there exist PPT $\bdv_1,\bdv_2$ such that
    \begin{align*}
        \Adv_0(\adv)
        \leq~&
        \Adv_1(\adv)+2\cdot t\cdot\Bigg(\Adv_{\params_1,\bdv_1}^{\LWE{}}(\secpar)
        +\Adv_{\params_2,\bdv_2}^{\LWE{}}(\secpar)
        + \frac{6 \cdot Q \cdot \epsilon}{t(1-\epsilon)}
        + 4 \cdot QK \cdot \epsilon
        + \frac{6 \cdot \epsilon}{1-\epsilon} \Bigg).
    \end{align*}
    By \cref{lem:H1-H2}, 
        \(
         \Adv_1(\adv)=  \Adv_2(\adv).
        \)
    By \cref{lem:H2-H3},
        \(
         \Adv_2(\adv)= \Adv_3(\adv).
        \)
    By \cref{lem:H3-H4}, $\forall$ PPT $\adv$, $\exists$ PPT $\bdv_3$ such that  
        \(
        \Adv_3(\adv)\leq \Adv_{\FHE,\bdv_3,\params_3}^{\INDCPAwL}(\secpar).
        \)
    Putting everything together, one obtains the claimed result.
    \Cref{lem:thLWE-hybrid-step,lem:H1-H2,lem:H2-H3,lem:H3-H4} and their proofs are provided in \cref{sec:security-bridging-lemmas}.
    \qed
\end{proof}

\subsection{Proofs of \texorpdfstring{\Cref{lem:thLWE-hybrid-step,lem:H1-H2,lem:H2-H3,lem:H3-H4}}{Lemmas \ref{lem:thLWE-hybrid-step}, \ref{lem:H1-H2}, \ref{lem:H2-H3} and \ref{lem:H3-H4}}}\label{sec:security-bridging-lemmas}

\begin{lemma}\label{lem:thLWE-hybrid-step}
    Adopt the notations in \cref{thm:tfhe-security}. 
        Suppose $q$ and $\xi$ are coprime, $0 < \epsilon \leq 1/2$, and the noise parameters $\chi_1^*,\chi_2^*$ satisfy both of the following:
        \begin{itemize}
            \item $\chi_1^* \geq \sqrt{2} \cdot \eta_\epsilon(\ring^t)$ is such that there exists $\hat{\chi} >0$ satisfying
                \(
                    \gamma \leq \chi \cdot \sqrt{(2 \eta_\epsilon(\ring^t)^2 + (\chi_1^*)^2)^{-1} - \hat{\chi}^{-2}};
                \)
            \item $\chi_2^* \geq \sqrt{2} \cdot \eta_\epsilon(\ring^m)$,
                $\gamma \cdot \sqrt{\varphi m} \cdot \beta/2 \geq 1$, and, with $\tensorexpansion{t} \coloneqq 1$ if $\conductor$ is a power of two and $\tensorexpansion{t} \coloneqq \sqrt{t}$ otherwise,
                \(
                    \sqrt{2 Q K \cdot \varphi m} \cdot \expansion_2 \cdot \tensorexpansion{t} \cdot \gamma \cdot \beta/2 \leq \chi \cdot \sqrt{(2 \eta_\epsilon(\ring^m)^2 + (\chi_2^*)^2)^{-1} - \bar{\chi}^{-2}}.
                \)
        \end{itemize}
            Then, for any PPT $\adv$, there exist PPT $\bdv_1,\bdv_2$ such that
            \begin{align*}
                \Adv_0(\adv)
                \leq~&
                \Adv_1(\adv)+2\cdot t\cdot\Bigg(\Adv_{(\ring,h \ceil{\log_\beta q},\poly,q,\chi_1^*,\ddvt),\bdv_1}^{\LWE{}}(\secpar)
                +\Adv_{(\ring,n,m,q,\chi_2^*),\bdv_2}^{\LWE{}}(\secpar) \\
                &\qquad\qquad+ \frac{6 \cdot Q \cdot \epsilon}{t(1-\epsilon)}
                + 4 \cdot QK \cdot \epsilon
                + \frac{6 \cdot \epsilon}{1-\epsilon} \Bigg)
            \end{align*}
\end{lemma}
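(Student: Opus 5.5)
The plan follows the proof template of \cite{AC:CinLaiWoo25} for threshold Regev PKE, adapted to the key-switched setting. It runs a sequence of $O(t)$ intermediate hybrids between $\hyb_{b,0}$ and $\hyb_{b,1}$. The source of the factor $2t$ is that there are $t-|\cdv|-1\le t$ honest ``free'' coordinates of the secret-sharing polynomial, and each is handled in one or two hops.

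\textbf{Step 1 (rewriting the real partial decryptions).} Using $\xi^{-1}\matT_\cdv\matV_{\cdv^*}=\matI_t$, I would rewrite the full share matrix. Since $\matS=\xi\matV\matR=\matV\,\matT_\cdv\,(\matV_{\cdv^*}\matR)$, we have $\matS_\hdv\matA\vecx = \matV_\hdv(\vecu_0,\matU_1,\matU_2)(\xi^{-1}\matS_{\cdv^*}\matA\vecx)$. The row indexed by $0$ of $\xi^{-1}\matS_{\cdv^*}$ is $\vecr^\tr$, so its contribution is $\vecr^\tr\matA\vecx=(\vecb-\vece-\vecz)^\tr\vecx$. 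The rows indexed by $\cdv$ give $\matS_\cdv\matA\vecx$, and the middle block gives $\vecw'_\vecx\coloneqq \matR'\matA\vecx$, where $\matR'$ consists of the $t-|\cdv|-1$ rows of $\xi^{-1}\matS_{\cdv^*}$ not fixed by $\cdv$ and $0$. Hence the real experiment already has the shape of \cref{eq:simulation-with-random-b}. Two things remain to change. First, $\vecw'_\vecx$ must be made uniform and independent. Second, the real errors $(e_k)_k$ (i.i.d.\ $\DD_{\ring,\chi}$) must become $\vece_\vecx$, with $\vece$ in $\vecb$ now decoupled. The real errors are already of the required form, so the difficult part is the first change.

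\textbf{Step 2 (replacing $\matR'\matA\vecx$ by uniform, one row at a time).} I would follow the Leaky-LWE/thLWE strategy (\cref{lem:th-LWE-variant}, the generalisation of the lemma in \cite{AC:CinLaiWoo25}). For each row $\vecr'_j$ in turn, I would embed an LWE instance. The term $\vecr'^\tr_j\matA\vecx$ splits as $\vecr'^\tr_j\matA_{\mathrm{top}}\matG^{-1}(\hatvec{c}_0)+\vecr'^\tr_j\matA_{\mathrm{bot}}\matG^{-1}(\rtag)$. Its tag part consists of samples $\vecy_j^\tr\pretag$ with $\pretag\sim\ddvt$ fresh per ciphertext, which is exactly the $\LWE{\params_1}$ instance with unboundedly many samples. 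The LWE error of width $\chi_1^*$ generated this way must be absorbed into the flooding noise $\vece_\vecx$ after being multiplied by $\matV_\hdv\matU_1$ (whose norm is controlled by $\gamma$). This is done by Gaussian convolution, and the condition $\gamma\le\chi\sqrt{(2\eta_\epsilon^2+(\chi_1^*)^2)^{-1}-\hat\chi^{-2}}$ is precisely what the convolution lemma requires; it costs the $\epsilon/(1-\epsilon)$ terms. The adversarially-influenced part $\matG^{-1}(\hatvec{c}_0)$ cannot be handled this way. Instead I would first invoke $\LWE{\params_2}$ on $\vecr_j'^\tr\matA$ (replacing it by a uniform vector plus error $\vece'_j$ of width $\chi_2^*$). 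The resulting leakage $\vece_j'^\tr\vecx$ then has to be flooded against both $\vece$ in $\vecb$ (width $\bar\chi$) and $\vece_\vecx$ (width $\chi$). Because the queries are adaptive, I would argue this statistically across all $QK$ shares using the adaptive KL bound (\cref{lem:adaptive_noise_flooding}) and Gaussian convolution. This accounts for the factor $\sqrt{2QK\varphi m}\,\expansion_2\tensorexpansion{t}\gamma\beta/2$ and the $4QK\epsilon$ term. The condition $\gamma\sqrt{\varphi m}\beta/2\ge1$ serves only to normalise the bound.

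\textbf{Step 3 (accounting) and the expected obstacle.} Each row swap uses one LWE hop of each type and is reverted symmetrically for both $b$, giving the factor $2t$. Summing the statistical losses gives the stated bound. I expect the main obstacle to be Step 2's adaptive part. The simulator must answer partial decryptions for $\hatvec{c}_0$ that depend on earlier answers, so the error terms $\vece'^\tr_j\vecx$ cannot be precomputed. Moreover, making the flooded distribution exactly match $\vece_\vecx$ together with the independent $\vece$ in $\vecb$ requires a careful joint covariance (convolution) argument rather than naive hybrids. I would attempt it by treating the $(\vece,\vece_\vecx)$ pair as a single Gaussian whose covariance dominates that of the LWE-induced errors.
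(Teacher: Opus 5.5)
Your Step~1 identity is correct, but it only shows that $\hyb_{b,0}$ already has the shape of \cref{eq:simulation-with-random-b} with the \emph{real} key $\vecb^\tr=\vecr^\tr\matA+\vece^\tr+\vecz^\tr$ and with the same $\vece$. In $\hyb_{b,1}$, by contrast, $\vecb$ is uniform and the $\vece$ in $(\vecb-\vece-\vecz)^\tr\vecx$ is a fresh Gaussian independent of $\vecb$. The paper's proof identifies $\hyb_{b,1}$ with the $\mathsf{Ideal}$ experiment of \cref{lem:th-LWE-variant}, where $\vecb\sample\ring_q^m$ and $\vece$ is sampled in $\sdvinit$. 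The later reduction in \cref{lem:H3-H4} depends on this, since it cannot compute $\vecz$ without $\hatvec{s}$.

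You claim this ``decoupling'' needs no work, and that is the gap. For non-challenge ciphertexts the adversary receives all $K$ shares, so it can interpolate $\xi^2\vecr^\tr\matA\vecx$ up to noise. Together with $\vecb$, this gives it noisy linear leakage of $\vece^\tr\vecx$ for adaptively chosen $\vecx$. Hence the row $\vecr$ itself must be swapped by an LWE argument \emph{with leakage on the error of $\vecb$}. Your plan only touches the $t-|\cdv|-1$ rows of $\matR'$ and never does this; ``flooding against $\vece$ in $\vecb$'' only makes sense for that missing row.

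The paper treats $\vecr$ together with the middle rows:
\begin{itemize}
\item \Cref{thm:core1-alt} is instantiated with $\matS_1$ consisting of $\vecr^\tr$ stacked on $\matW$ ($t-|\cdv|$ rows) and $\vecb^\tr=\vecu^\tr\matS_1\matA+\barvec{e}^\tr$. This is reduced to adaptive Leaky LWE for $\matI_t\otimes\matA$.
\item Afterwards $\vecb=\vecr_m+\vece+\vecz$ with $\vecr_m$ uniform. Substituting $\vecr_m=\vecb-\vece-\vecz$ gives the uniform-$\vecb$ form.
\item Only then is the tag-LWE step (\cref{thm:core2-fix}) applied, to the remaining rows $\matW_m$.
\end{itemize}

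Your accounting for the adaptive part also would not give the stated bound. The KL bound of \cref{lem:adaptive_noise_flooding} contributes a term proportional to $\sqrt{QK}$ times the leakage norm divided by $\chi$. That needs a super-polynomial ratio and does not appear in the claim. In the paper, the $\bar{\chi}$-dependent constraint and the terms $6\epsilon/(1-\epsilon)$ and $4QK\epsilon$ come from two steps:
\begin{itemize}
\item $\mathsf{LWE}$ with error $\chi_2^*$ implies semi-adaptive Leaky LWE (\cref{lem:lwe-to-ellwe}). Here each row's LWE error is rerandomised by Gaussian convolution to width $\bar{\chi}$, and for the first row this error is exactly the $\vece$ in $\vecb$.
\item Semi-adaptive Leaky LWE implies the adaptive version through per-query Gaussian convolution over $QK$ queries (\cref{lem:LLWE-implies-AdpLLWE}).
\end{itemize}
The paper packages both LWE hops into a single reduction from $(\hyb_{b,0},\hyb_{b,1})$ to the $\mathsf{Real}/\mathsf{Ideal}$ pair of \cref{lem:th-LWE-variant}. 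That reduction is applied once for each $b$, which is where the factor $2$ comes from.
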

\begin{proof}
    Let $\params = ((\ring,n,m,q,Q,\bar{\chi},\chi,\ddvA,\ddvt,\beta/2),(t,K,\Xi))$, where
    \[
    \ddvA \coloneqq \mathcal{U}(\ring_q^{n \times m}),\quad\text{and}\quad \ddvt\coloneqq\left\{\pretag : \rtag\sample\ring_q^h, \pretag\coloneqq \matG_{\beta,q}^{-1}(\rtag)\right\}.
    \]
    We will show that for any PPT $\adv$, there exists a PPT $\bdv$ such that
    \begin{align*}
         \Adv_0(\adv)&\leq \Adv_1(\adv)+2\cdot\Adv_{\params,\bdv}^{\mathsf{RealIdeal}}(\secpar),
    \end{align*}
    where $\Adv_{\params,\bdv}^{\mathsf{RealIdeal}}(\secpar)$ is defined in \cref{lem:th-LWE-variant}.
    The claimed result will follow by plugging the upper-bound for $\Adv_{\params,\bdv}^{\mathsf{RealIdeal}}(\secpar)$ from \cref{lem:th-LWE-variant}, and by recalling that the parameter $\ddvA$ is omitted from the parameters of the $\LWE{}$ instance when $\ddvA$ is the uniform distribution (see \cref{def:LWE}).

    Suppose $\adv$ is a PPT distinguisher for ($\hyb_{b,0},\hyb_{b,1})$. We construct a reduction $\bdv$ distinguishing $\mathsf{Real}_{\params,\bdv}$ and $\mathsf{Ideal}_{\params,\bdv}$.
    The reduction $\bdv$ works as follows:
\begin{itemize}
    \item It receives $\matA\in\ring_q^{n\times m}$ from the challenger.
        It samples $\pubparam_\FHE\gets\FHE.\Setup(1^\lambda)$, sets $\pubparam\coloneqq\pubparam_\FHE$,
        and passes $\pubparam$ to $\adv$, receiving from which $\mathcal{C}\subset_{<t}[K]$.
    \item It samples $(\pk_\FHE,\sk_\FHE)\gets\FHE.\kgen(\pp_\FHE)$, parses
        $\sk_\FHE=\hat{\vecs}\in\ring_q^{\hat{n}}$, and sets
        $\vecz^\tr\coloneqq(\hat{\vecs}^\tr,\veczero_h^\tr)\cdot\matG_{\beta,q}\bmod q \in \ring_q^m$.
        It forwards $(\mathcal{C},\vecz)$ to the challenger, receiving from which
        $(\vecb,\matS_\mathcal{C}=(\sk_k)_{k\in\mathcal{C}})$.
    \item It sets $\pk\coloneqq(\pk_\FHE,\matA,\vecb)$ and
        $(\sk_k)_{k\in\mathcal{C}}\coloneqq(\vecs_k)_{k\in\mathcal{C}}$, and sends
        $(\pk,(\sk_k)_{k\in\mathcal{C}})$ to $\adv$.
    \item It internally maintains lists $\Lquery,\Lshare,\Lindex$, initially empty. 
        Define two subroutines $\mathsf{Pre}$ and $\mathsf{Tag}$ below, which will be used by the reduction to answer oracle queries, and where $\mathsf{Tag}$ invokes the experiment's oracle $\Gen\oracle$:
        \begin{pchstack}[center]
            \procedure[]{$\mathsf{Pre}(\ctxt_\FHE)$}
            {
                (\hat{\vecc}_0,\hat{c}_1) \gets \PreDecrypt(\ctxt_\FHE) \\
                \pcreturn \matG_{\beta,q}^{-1}(\hat{\vecc}_0)
            }
            \pchspace
            \procedure[]{$\mathsf{Tag}(\id,\ctxt_\FHE)$}
            {
                \prectxt \coloneqq \mathsf{Pre}(\ctxt_\FHE) \\
                \vecx \gets \Gen\oracle(\id,\prectxt) \\
                \pcparse \begin{pmatrix}\prectxt \pclb \pretag \end{pmatrix} \gets \vecx \\
                \rtag \coloneqq \matG_{\beta,q} \cdot \pretag \bmod q \\
                \pcreturn \ctxt \coloneqq (\ctxt_\FHE,\rtag)
            }
        \end{pchstack}
        The reduction answers the oracle queries of $\adv$ as follows:
    \begin{itemize}
        \item $\enc\oracle(\id,\mu)$: If $\Lquery[\id]=\emptyset$,
            it computes
            \begin{align*}
                \ctxt_\FHE \gets \FHE.\enc(\pk_\FHE,\mu)
                &&\text{and}&&
                \ctxt \gets \mathsf{Tag}(\id,\ctxt_\FHE),
                &&\text{and sets}~ \Lquery[\id]\coloneqq(\ctxt,\mu,\mu).
            \end{align*}
            It returns $\Lquery[\id]$ to $\adv$.
        \item $\Chal\oracle(\id,\mu_0,\mu_1)$: If $\Lquery[\id]=\emptyset$,
            it computes
            \begin{align*}
                \ctxt_\FHE \gets \FHE.\enc(\pk_\FHE,\mu_b)
                &&\text{and}&&
                \ctxt \gets \mathsf{Tag}(\id,\ctxt_\FHE),
                &&\text{and sets}~ \Lquery[\id]\coloneqq(\ctxt,\mu_0,\mu_1).
            \end{align*}
            It returns $\Lquery[\id]$ to $\adv$.
        \item $\Eval\oracle(\id,f,(\id_i)_{i\in I})$: If $\Lquery[\id]=\emptyset$, then:
            \begin{itemize}
            \item For each $i\in I$, it returns $\bot$ if $\Lquery[\id_i]=\emptyset$, and otherwise parses $(\ctxt_i,\mu_{0,i},\mu_{1,i})\gets\Lquery[\id_i]$ and $(\ctxt_{\FHE,i},\rtag_i)\gets\ctxt_i$.
            \item It computes
            \begin{align*}
                &\ctxt_\FHE'\gets\Eval(\pk,f,(\ctxt_{\FHE,i})_{i\in I})
                \quad\text{and}\quad
                \ctxt \gets \mathsf{Tag}(\id,\ctxt_\FHE'), \\
                &\text{and sets}~ \Lquery[\id]\coloneqq\bigl(\ctxt,~ f((\mu_{0,i})_{i\in I}),~ f((\mu_{1,i})_{i\in I})\bigr).
            \end{align*}
            \end{itemize}
            It returns $\Lquery[\id]$ to $\adv$.
        \item $\ParDec\oracle(\id,k)$: If $\Lquery[\id]=\emptyset$, it returns $\bot$. Otherwise:
        \begin{itemize}
            \item It adds $\{k\}$ to $\Lindex[\id]$, and parses $(\ctxt,\mu_0,\mu_1)\gets\Lquery[\id]$ and $(\ctxt_\FHE,\rtag)\gets\ctxt$.
            \item If $\mu_0 \neq \mu_1$ and $|\Lindex[\id]\cup\mathcal{C}|\geq t$, it returns $\bot$.
            \item Otherwise, if $\LShare[\id,k]=\emptyset$, it queries
            \begin{align*}
                d_k\gets\Share\oracle(\id,k)
                \quad\text{and sets}~ \LShare[\id,k]\coloneqq d_k.
            \end{align*}
            It returns $\LShare[\id,k]$ to $\adv$.
        \end{itemize}
    \end{itemize}
    \item Eventually receive $b'$ from $\adv$, return $b'$.
\end{itemize}
  Below, we show that $\bdv$ perfectly simulates the view of either $\hyb_{b,0}$ or $\hyb_{b,1}$ to $\adv$,
        if it is interacting either with the $\mathsf{Real}_{\params,\bdv}$ or the $\mathsf{Ideal}_{\params,\bdv}$ experiment.

If $\bdv$ is interacting with the $\mathsf{Real}_{\params,\bdv}$ experiment, then 
\begin{align*}
    \vecb^\tr &= \vecr^\tr\matA+\vece^\tr+\vecz^\tr \bmod q\\
    \vecx&=\begin{pmatrix} \prectxt \\ \pretag \end{pmatrix}\quad\text{for }\prectxt=\matG_{\beta,q}^{-1} \begin{pmatrix} \hat{\vecc}_0 \end{pmatrix} \text{ and }\pretag \sample \ddvt\\
    \vecd^\tr&=(\pd_k)_{k \in [K]}=\xi\cdot\matV\matR\matA\vecx+\vece_\vecx\bmod q
\end{align*}
for $\vec{e}_\vecx \sample \DD_{\ring,\chisub}^K$. Using that $\ddvt=\left\{\pretag : \rtag\sample\ring_q^h, \pretag\coloneqq \matG_{\beta,q}^{-1}(\rtag)\right\}$ and that $(\vecs_k)_{k\in[K]}=\xi\cdot\matV\matR\bmod q$, one obtains that
\[
(\pd_k)_{k \in [K]}=\underbrace{\xi\cdot\matV\cdot\matR}_{(\vecs_k)_{k\in[K]}}\cdot\underbrace{\matA\cdot\vecx}_{\vecc_0}+\underbrace{\vece_\vecx}_{(e_k)_{k\in[K]}}=\left(\vecs_k^\tr\cdot\vecc_0+e_k\right)_{k\in[K]}
\]
Therefore, the view of $\adv$ is identical to $\hyb_{b,0}$.

If $\bdv$ is interacting with the $\mathsf{Ideal}_{\params,\bdv}$ experiment, then 
\begin{align*}
    \vecb&\sample \ring_q^m\\
    \vecx&=\begin{pmatrix} \prectxt \\ \pretag\end{pmatrix}\quad\text{for }\prectxt=\matG_{\beta,q}^{-1} \begin{pmatrix} \hat{\vecc}_0  \end{pmatrix} \text{ and }\pretag \sample \ddvt\\
    \vecd^\tr&=(\pd_k)_{k \in [K]}=
    \begin{bNiceArray}{c|c|c}  
        \Block{1-3}{\mat{V}_\hdv \cdot \mat{T}_\cdv}  \\ \hline 
        \quad &\quad & \xi\mat{I}_\cdv \end{bNiceArray}
    {\begin{bmatrix}
        (\vecb - \vece - \vecz)^\transpose \vecx \\ \vecw_\vecx \\ \mat{S}_\cdv \mat{A} \vecx
    \end{bmatrix}} + \vec{e}_\vecx \bmod q 
\end{align*}
Using as above that $\ddvt=\left\{\pretag : \rtag\sample\ring_q^h, \pretag\coloneqq \matG_{\beta,q}^{-1}(\rtag)\right\}$, we conclude that the view of $\adv$ is identical to $\hyb_{b,1}$.
By the triangle inequality, it follows that 
\begin{align*}
    \Adv_0(\adv)
    &= \abs{\prob{\hyb_{0,0,\adv}(\secparam) = 1} - \prob{\hyb_{1,0,\adv}(\secparam) = 1}} \\
    &= \left|\,
        \begin{aligned}
            &\prob{\hyb_{0,0,\adv}(\secparam) = 1} - \prob{\hyb_{0,1,\adv}(\secparam) = 1} \\
            {}+{} &\prob{\hyb_{0,1,\adv}(\secparam) = 1} - \prob{\hyb_{1,1,\adv}(\secparam) = 1} \\
            {}+{} &\prob{\hyb_{1,1,\adv}(\secparam) = 1} - \prob{\hyb_{1,0,\adv}(\secparam) = 1}
        \end{aligned}
    \,\right| \\
    &\leq
        \begin{aligned}[t]
            &\abs{\prob{\hyb_{0,0,\adv}(\secparam) = 1} - \prob{\hyb_{0,1,\adv}(\secparam) = 1}} \\
            {}+{} &\abs{\prob{\hyb_{0,1,\adv}(\secparam) = 1} - \prob{\hyb_{1,1,\adv}(\secparam) = 1}} \\
            {}+{} &\abs{\prob{\hyb_{1,1,\adv}(\secparam) = 1} - \prob{\hyb_{1,0,\adv}(\secparam) = 1}}
        \end{aligned} \\
    &\leq \Adv_{\params,\bdv}^{\mathsf{RealIdeal}}(\secpar) + \Adv_1(\adv) + \Adv_{\params,\bdv}^{\mathsf{RealIdeal}}(\secpar) \\
    &= \Adv_1(\adv)+2\cdot\Adv_{\params,\bdv}^{\mathsf{RealIdeal}}(\secpar),
\end{align*}
as wanted. As mentioned at the beginning of the proof, plugging the upper-bound for $\Adv_{\params,\adv}^{\mathsf{RealIdeal}}(\secpar)$ from \cref{lem:th-LWE-variant} yields the claimed results.
\qed
\end{proof}

\begin{lemma}\label{lem:H1-H2}
    Adopt the notations in \cref{thm:tfhe-security}. We have $\Adv_1(\adv)=\Adv_2(\adv)$.
\end{lemma}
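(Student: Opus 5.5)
The plan is to show that the two hybrids are identically distributed. For each challenge ciphertext (with $\mu_0\neq\mu_1$), the honest shares that $\adv$ actually sees are already uniform in $\hyb_{b,1}$, because the fresh vector $\vecw_\vecx$ perfectly masks them. Fix one such ciphertext with preimage $\vecx$, and let $H'\subseteq\hdv$ be the set of honest parties whose shares $\adv$ eventually obtains. The check in $\ParDec\oracle$ gives $|H'\cup\cdv|<t$, so $|H'|\le t-|\cdv|-1$. Shares of parties outside $H'\cup\cdv$ are never output. We may therefore sample $\Lshare[\id]$ lazily, coordinate by coordinate, without changing the view. In \cref{eq:simulation-with-random-b}, the corrupt rows equal $\xi\matS_\cdv\matA\vecx+(\vece_\vecx)_\cdv$ and do not involve $\vecw_\vecx$. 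The honest rows in $H'$ equal
\[
\mat{V}_{H'}\vecu_0\,(\vecb-\vece-\vecz)^\transpose\vecx+\mat{V}_{H'}\matU_1\vecw_\vecx+\mat{V}_{H'}\matU_2\matS_\cdv\matA\vecx+(\vece_\vecx)_{H'}.
\]
It thus suffices to show that $\mat{V}_{H'}\matU_1$ has full row rank over $\ring_q$, i.e.\ that $\vecw\mapsto\mat{V}_{H'}\matU_1\vecw$ is surjective onto $\ring_q^{|H'|}$. Since $\vecw_\vecx\sample\ring_q^{t-|\cdv|-1}$ is fresh per ciphertext and used nowhere else, $\mat{V}_{H'}\matU_1\vecw_\vecx$ is then uniform and independent of everything else, so the $H'$-shares are uniform, exactly as in $\hyb_{b,2}$.

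For the rank claim, extend $H'$ to some $H''\subseteq\hdv$ with $|H''|=t-|\cdv|-1$; this is possible since $|\hdv|\ge t-|\cdv|$. Set $S\coloneqq\{0\}\cup\cdv\cup H''$, a set of exactly $t$ points. From $\xi^{-1}\matT_\cdv\mat{V}_{\cdv^*}=\mat{I}_t$ for square matrices we get $\mat{V}_{\cdv^*}\matT_\cdv=\xi\mat{I}_t$. Because $\{0\}\cup\cdv\subseteq\cdv^*$, the rows of $\mat{V}_S\matT_\cdv$ indexed by $0$ and by $\cdv$ are $\xi$ times the corresponding unit rows, namely $(\xi,\veczero,\veczero)$ and $(\veczero,\veczero,\xi\mat{I}_\cdv)$ in the column blocks $(\vecu_0,\matU_1,\matU_2)$. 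Expanding the determinant along these rows gives
\[
\det(\mat{V}_S)\det(\matT_\cdv)=\pm\,\xi^{|\cdv|+1}\det(\mat{V}_{H''}\matU_1).
\]

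Next, show that the left-hand side is a unit modulo $q$; then so is $\det(\mat{V}_{H''}\matU_1)$. Since $\ring$ is a domain and $\xi\mat{V}_{\cdv^*}^{-1}=\matT_\cdv$, we have $\det(\matT_\cdv)=\xi^t/\det(\mat{V}_{\cdv^*})$. The set $\cdv^*$ may contain $0$, and the scaled inverse $\xi\mat{V}_{\cdv^*}^{-1}$ is integral: this is exactly what the existence of an integral $\matT_\cdv$, fixed in the proof of \cref{thm:tfhe-security}, provides. As in \cref{def:subtractive-sets}, the same integrality holds for $\mat{V}_S$. Hence both $\det\mat{V}_S$ and $\det\mat{V}_{\cdv^*}$ are elements of $\ring$ dividing $\xi^t$. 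Because $\xi$ is coprime to $q$, they are units in $\ring_q$.

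Therefore $\mat{V}_{H''}\matU_1$ is invertible over $\ring_q$, and its submatrix of rows $H'$, which is $\mat{V}_{H'}\matU_1$, has full row rank. Applying this per ciphertext, with independent $\vecw_\vecx$ for distinct ciphertexts, yields $\hyb_{b,1}\equiv\hyb_{b,2}$ and hence $\Adv_1(\adv)=\Adv_2(\adv)$.

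The step I expect to be the main obstacle is the integrality and unit argument for the Vandermonde determinants, in particular handling sets that contain the point $0$. I would first try to derive it from the given integral $\matT_\cdv$ together with the subtractive property of \cref{def:subtractive-sets}, and otherwise fall back on the explicit Lagrange-coefficient description used in the proof of \cref{lem:xi-subtactive-trig}. A secondary point is to justify that adaptivity does no harm: the set $H'$ is chosen adaptively, but since $\vecw_\vecx$ stays hidden until shares are revealed, lazily revealing coordinates of a uniform vector in the image keeps each newly revealed share uniform conditioned on the previously revealed ones.
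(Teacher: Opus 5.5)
Your proof follows the same outline as the paper's: the fresh $\vecw_\vecx$ masks the at most $t-|\cdv|-1$ revealed honest shares of each challenge ciphertext, and the corrupt shares do not involve $\vecw_\vecx$. The difference is in the key fact. The paper simply imports surjectivity of $\vecw\mapsto\matV_{\bar\hdv}\matU_1\vecw$ over $\ring_q$ from \cite[Lemma~9]{AC:CinLaiWoo25}. You derive it yourself from the block-triangular identity $\det(\matV_S)\det(\matT_\cdv)=\pm\xi^{|\cdv|+1}\det(\matV_{H''}\matU_1)$. Your route is self-contained and makes the needed arithmetic condition explicit: $\det\matV_S$ must be invertible mod $q$ for a set $S$ that contains $0$ and the queried honest points. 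Your lazy-sampling remark also handles the adaptive choice of queried parties, which the paper passes over by fixing the queried set.

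One step is not justified as written: the claim that, ``as in \cref{def:subtractive-sets}'', $\xi\matV_S^{-1}$ is integral. That definition only covers subsets of $\Xi$. The single fixed $\matT_\cdv$ only covers $\cdv^*$, which need not contain the adversarially chosen points of $H'$. Subtractivity of $\Xi$ alone does not suffice. For example, take $\ring=\ZZ$, $\Xi=\{1,2\}$, $\xi=1$, $t=2$, $\cdv=\emptyset$, $\cdv^*=\{0,1\}$ and even $q$. Then $\matV_{\{2\}}\matU_1=2$, which is not surjective on $\ring_q$.

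The fix is to use the paper's standing assumption that an integral $\matT_{\cdv'}$ exists for \emph{every} $\cdv'$ with $|\cdv'|<t$; the example above violates this for $\cdv'=\{2\}$. Apply it to $\cdv'\coloneqq\cdv\cup H''$, which has size $t-1$. Its associated size-$t$ superset of $\{0\}\cup\cdv'$ is therefore forced to be $S$, so $\xi\matV_S^{-1}=\matT_{\cdv'}\in\ring^{t\times t}$ and $\det\matV_S\mid\xi^t$. With this, the rest of your argument goes through. It also holds directly in the instantiation of \cref{lem:xi-subtactive-trig-spaced}, where the evaluation points are roots of unity and hence units.
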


\begin{proof}
    Fix any $\id$ such that $\Lquery[\id] = (\ctxt, \mu_0, \mu_1)$ and $\mu_0 \neq \mu_1$.
    Denote by $\bar{\hdv} \subseteq \hdv = [K] \setminus \cdv$ the set of honest parties for which an adversary has queried its partial decryptions. 
    By design of the security experiment, it must hold that $|\bar{\hdv}| < t - |\cdv|$. 

    Now recall that, in $\hyb_{b,1}$, the set of partial decryptions for the parties $\bar{\hdv}$ is defined by \cref{eq:simulation-with-random-b}, which is 
    \begin{align*}
        (\pd_k)_{k \in \bar{\hdv}}
        =~&
        \begin{bmatrix}  
            {\mat{V}_{\bar{\hdv}} \cdot (\vecu_0,\matU_1,\matU_2)} \end{bmatrix}
        {\begin{bmatrix}
            (\vecb - \vece - \vecz)^\transpose \vecx \\ \vecw_\vecx \\ \mat{S}_\cdv \mat{A} \vecx 
        \end{bmatrix}} + \vec{e}_\vecx \bmod q \\
        =~&
        \left(
        \matV_{\bar{\hdv}} \cdot \vecu_0 \cdot (\vecb - \vece - \vecz)^\tr \vecx
        + 
        \matV_{\bar{\hdv}} \cdot \matU_1 \cdot \vecw_\vecx 
        +
        \matV_{\bar{\hdv}} \cdot \matU_2 \cdot \matS_\cdv \matA \vecx \right)
            + \vece_\vecx \bmod q, 
    \end{align*}
    where we recall $\vecw_\vecx$ is a random vector sampled by the experiment  
    and $\vecx = \matG^{-1}\begin{pmatrix} \hat{\vecc}_0 \\ \rtag \end{pmatrix}$.
    By \cite[Lemma 9, Eprint]{AC:CinLaiWoo25}, the columns of $\matV_{\bar{\hdv}} \cdot \matU_1$ generate $\ring_q^{|\bar{\hdv}|}$. 
    Therefore, since $\vecw_\vecx$ is uniformly random, so is $\matV_{\bar{\hdv}} \cdot \matU_1 \cdot \vecw_\vecx \bmod q$.
    That is, 
    \begin{align*}
        \matV_{\bar{\hdv}} \cdot \matU_1 \cdot \vecw_\vecx \bmod q \sim \mathcal{U}(\ring_q^{|\bar{\hdv}| }),
    \end{align*}
    and therefore we get
    \begin{align*}
        (\pd_k)_{k \in \bar{\hdv}}
        \equiv~&
        \left(
        \matV_{\bar{\hdv}} \cdot \vecu_0 \cdot (\vecb - \vece - \vecz)^\tr \vecx
        +
        \mathcal{U}(\ring_q^{|\bar{\hdv}|})
        +
        \matV_{\bar{\hdv}} \cdot \matU_2 \cdot \matS_\cdv \matA \vecx \right)
        + \vece_\vecx \bmod q \\
        \equiv~&
        \mathcal{U}(\ring_q^{|\bar{\hdv}|}).
    \end{align*}
    In other words, the set of partial decryptions $(\pd_k)_{k \in \bar{\hdv}}$ for $\id$ is uniformly random,
    identical to that in $\hyb_{b,2}$. Therefore, $\Adv_1(\adv)= \Adv_2(\adv)$.
    \qed 
\end{proof}

\begin{lemma}\label{lem:H2-H3}
    Adopt the notations in \cref{thm:tfhe-security}. 
    We have $\Adv_2(\adv)=\Adv_3(\adv)$.
\end{lemma}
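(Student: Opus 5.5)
The plan is to show that $\hyb_{b,2}$ and $\hyb_{b,3}$ are \emph{identical} experiments. The only change concerns partial decryptions of ciphertexts with $\mu_0=\mu_1$, so it suffices to check, for each such $\id$, that the vector in \cref{eq:simulation-from-mu} coincides with the one in \cref{eq:simulation-with-random-b}, for every choice of the randomness $(\vecw_\vecx,\vece_\vecx)$. Challenge ciphertexts are untouched, as are corrupt-party shares, public keys and ciphertexts. Once this holds, every view of $\adv$ is the same in the two hybrids for both $b$, and $\Adv_2(\adv)=\Adv_3(\adv)$ follows.

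\textbf{Step 1: split \cref{eq:simulation-with-random-b} into blocks.} I would use the decomposition $\matT_\cdv=(\vecu_0,\matU_1,\matU_2)$ from \cref{eq:definition-of-u0-U1-U2}. The honest rows then read
\[
\matV_\hdv\vecu_0\,(\vecb-\vece-\vecz)^\tr\vecx+\matV_\hdv(\matU_1,\matU_2)(\vecw_\vecx;\matS_\cdv\matA\vecx),
\]
and the corrupt rows are $\xi\matS_\cdv\matA\vecx$. Both are plus $\vece_\vecx$. The second and third block columns, and the whole corrupt block, already appear verbatim in \cref{eq:simulation-from-mu}. So the check reduces to the scalar $\ring_q$ identity
\[
(\vecb-\vece-\vecz)^\tr\vecx=(\vecb-\vece)^\tr\vecx-\Bigl(\hat c_1-\bigl(\round{q/p}\xi^{-2}\mu+\hat e\bigr)\Bigr)\bmod q,
\]
that is, $\vecz^\tr\vecx=\hat c_1-\round{q/p}\xi^{-2}\mu-\hat e\bmod q$.

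\textbf{Step 2: compute $\vecz^\tr\vecx$.} Using $\vecz^\tr=(\hatvec{s}^\tr,\veczero_h^\tr)\matG_{\beta,q}$ and $\matG_{\beta,q}\matG_{\beta,q}^{-1}(\cdot)=\mathrm{id}$, I get $\vecz^\tr\vecx=(\hatvec{s}^\tr,\veczero_h^\tr)\binom{\hat\vecc_0}{\rtag}=\hatvec{s}^\tr\hat\vecc_0 \bmod q$, exactly as in \cref{eq:c1-component}. The tag contribution vanishes because the corresponding block of the key is zero.

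\textbf{Step 3: match with the definition of $\hat e$.} By \cref{eq:hate}, $\hat e=\hat c_1-\hatvec{s}^\tr\hat\vecc_0-\xi^{-2}\round{q/p}\mu \bmod q$ holds by definition, not via the linear-decryption guarantee. Rearranging gives $\hatvec{s}^\tr\hat\vecc_0=\hat c_1-\xi^{-2}\round{q/p}\mu-\hat e$, and here $\xi^{-2}$ exists modulo $q$ because $q,\xi$ are coprime. This is the required identity, so no probabilistic loss or norm bound on $\hat e$ is needed; the only point to stress is that the equality is exact.

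The main obstacle is bookkeeping rather than mathematics. I need to confirm that the rewriting is applied for every $k\in[K]$, including corrupt parties, whose rows are unaffected. I also need to confirm that it is compatible with the hybrid-$2$ replacement: that replacement only touches ids with $\mu_0\ne\mu_1$, while this one touches only ids with $\mu_0=\mu_1$, so the two sets are disjoint. Finally, $\mu$ is well defined because $\mu_0=\mu_1$.
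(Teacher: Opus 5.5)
Your proposal is correct and follows the paper's own proof essentially step for step. In both, you compute $\vecz^\tr\vecx=\hatvec{s}^\tr\hat{\vecc}_0 \bmod q$ via $\matG_{\beta,q}\matG_{\beta,q}^{-1}(\cdot)$, substitute the exact, definitional expression for $\hat e$ from \cref{eq:hate}, and observe that the first block column of \cref{eq:simulation-with-random-b} then becomes that of \cref{eq:simulation-from-mu}, so the two hybrids are identical for every $b$.
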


\begin{proof}
    Fix any $\id$ such that $\Lquery[\id] = (\ctxt, \mu_0, \mu_1)$, $(\hat{\vecc}_0,\hat{c}_1)\gets\PreDecrypt(\ctxt_\FHE)$, and $\mu_0 = \mu_1$. 
    Denote $\mu \coloneqq \mu_0 = \mu_1$ and let 
    \begin{equation}\label{eq:def-hat-e}
    \hat{e} \coloneqq \hat{c}_1 - \hat{\vecs}^\tr\cdot\hat{\vecc}_0 - \xi^{-2} \round{\frac{q}{p}}\mu \bmod q
    \end{equation}
    Notice that
    \begin{align*}
        \vecz^\transpose \vecx \bmod q 
        =~& 
        ((\hat{\vecs}^\tr,\veczero_h^\tr) \cdot \matG_{\beta,q} ) \cdot \matG_{\beta,q}^{-1}\begin{pmatrix} \hat{\vecc}_0 \\ \rtag \end{pmatrix} \bmod q \\
        =~& 
        \left( (\hat{\vecs}^\tr,\veczero_h^\tr) \cdot \begin{pmatrix} \hat{\vecc}_0 \\ \rtag \end{pmatrix} \right) \bmod q \\
        =~& 
        (\hat{\vecs}^\tr \cdot \hat{\vecc}_0) \bmod q \\
        =~&
        \left(\hat{c}_1-\left( \round{\frac{q}{p}}\cdot \xi^{-2} \cdot \mu + \hat{e}\right)+q\cdot h''\right) \bmod q \pccomment{By \cref{eq:def-hat-e}}\\
        =~&
        \left(\hat{c}_1-\left( \round{\frac{q}{p}}\cdot \xi^{-2} \cdot \mu + \hat{e}\right)\right)\bmod q.
    \end{align*} 
    Hence, 
    \begin{align*}    
        &{\begin{bNiceArray}{w{c}{2.5em}|w{c}{2.5em}|w{c}{3em}}  
            \Block{1-3}{\mat{V}_\hdv \cdot (\vecu_0,\matU_1,\matU_2)}  \\ \hline 
            \quad &\quad & \xi\mat{I}_\cdv \end{bNiceArray}}
        {\begin{bmatrix}
            (\vecb - \vece)^\transpose \vecx - \vecz^\transpose \vecx \\ \vecw_\vecx \\ \mat{S}_\cdv \mat{A} \vecx 
        \end{bmatrix}} + \vec{e}_\vecx \bmod q \\
        =~&
        {\begin{bNiceArray}{w{c}{2.5em}|w{c}{2.5em}|w{c}{3em}}  
            \Block{1-3}{\mat{V}_\hdv \cdot (\vecu_0,\matU_1,\matU_2)}  \\ \hline 
            \quad &\quad & \xi\mat{I}_\cdv \end{bNiceArray}}
        {\begin{bmatrix}
            (\vecb - \vece)^\transpose \vecx - \left(\hat{c}_1-\left(\round{\frac{q}{p}}\cdot \xi^{-2}\cdot  \mu + \hat{e}\right)\right) \\ 
            \vecw_\vecx \\ \mat{S}_\cdv \mat{A} \vecx 
        \end{bmatrix}} + \vece_\vecx \bmod q \\
        =~&
        \begin{bmatrix} \mat{V}_\hdv \vecu_0 \\ \matzero \end{bmatrix}
        \begin{bmatrix} (\vecb - \vece)^\transpose \vecx - \left(\hat{c}_1- \left(\round{\frac{q}{p}}\cdot \xi^{-2}\cdot \mu + \hat{e}\right)\right) \end{bmatrix}
        + 
        {\begin{bNiceArray}{w{c}{2.5em}|w{c}{3em}}  
            \Block{1-2}{\mat{V}_\hdv (\matU_1,\matU_2)}  \\ \hline 
            \quad & \xi\mat{I}_\cdv \end{bNiceArray}}
        {\begin{bmatrix}
            \vecw_\vecx \\ \mat{S}_\cdv \mat{A} \vecx 
        \end{bmatrix}} + \vece_\vecx \bmod q, 
    \end{align*} 
    that is, the partial decryptions for $\id$ in $\hyb_{b,2}$ and $\hyb_{b,3}$ are identical. Therefore, $\Adv_2(\adv)=\Adv_3(\adv)$. 
    \qed
\end{proof}

\begin{lemma}\label{lem:H3-H4}
    Adopt the notations in \cref{thm:tfhe-security}. 
    Then, for any PPT $\adv$ there exists PPT $\bdv_3$ such that $\Adv_3(\adv)\leq \Adv_{\FHE,\bdv_3,(\theta,K,\chi)}^{\INDCPAwL}(\secpar)$.
\end{lemma}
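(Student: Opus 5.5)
The plan is to build $\bdv_3$ as a direct simulator of $\hyb_{b,3}$ that plays the IND-CPA-with-$(\theta,K,\chi)$-Leakage experiment of \cref{def:FHE-with-leakage-security} with the same bit $b$. It must never use $\hat{\vecs}$. By construction of the hybrids, $\hyb_{b,3}$ depends on $\hat{\vecs}$ only through two things: the term $\vecz^\tr\vecx$ in the challenge partial decryptions, and the noise $\hat e$ in the non-challenge partial decryptions of \cref{eq:simulation-from-mu}. I will remove each dependence in turn.

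\emph{Setup.} $\bdv_3$ receives $(\pubparam_\FHE,\pk_\FHE)$ and obtains $\cdv$ from $\adv$. It samples $\matA\sample\ring_q^{n\times m}$, $\mat R\sample\ring_q^{t\times n}$ and $\vece\sample\DD_{\ring,\bar\chi}^m$, sets $\mat S_\cdv\coloneqq\xi\matV_\cdv\mat R$, and samples $\vecb\sample\ring_q^m$ uniformly, which is exactly the distribution of $\vecb$ from $\hyb_{b,1}$ on. It fixes $\cdv^*$ and $\mat T_\cdv=(\vecu_0,\matU_1,\matU_2)$ as in the proof of \cref{thm:tfhe-security}.

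\emph{Ciphertext oracles.} Queries to $\enc\oracle$, $\Chal\oracle$ and $\Eval\oracle$ are forwarded to the corresponding oracles of the leakage experiment, under the same identifiers. Each returned $\ctxt_\FHE$ gets a fresh tag $\rtag\sample\ring_q^h$ from $\bdv_3$.

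\emph{Partial decryptions.} For each new ciphertext queried to $\ParDec\oracle$, $\bdv_3$ computes $(\hat\vecc_0,\hat c_1)$ and $\vecx$ publicly and samples $\vecw_\vecx$.
\begin{itemize}
\item Case $\mu_0\neq\mu_1$. Honest shares are uniform, as in $\hyb_{b,2}$. Corrupt shares are $\xi\mat S_\cdv\matA\vecx+\vece_{\vecx,\cdv}$, which $\bdv_3$ computes itself; no $\vecz$ appears.
\item Case $\mu_0=\mu_1=\mu$. $\bdv_3$ queries $\Leak\oracle(\id,\leco)$ with $\leco\coloneqq(\matV_\hdv\vecu_0,\veczero_\cdv)\in\ring^K$, after reordering coordinates. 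It receives $\leco\hat e+\vecf$ with $\vecf\sample\DD_{\ring,\chi}^K$, and outputs
\[
\begin{bmatrix}\matV_\hdv\vecu_0\\ \matzero\end{bmatrix}\bigl((\vecb-\vece)^\tr\vecx-\hat c_1+\round{q/p}\xi^{-2}\mu\bigr)+\begin{bmatrix}\matV_\hdv(\matU_1,\matU_2)\\ (\matzero\mid\xi\mat I_\cdv)\end{bmatrix}\begin{bmatrix}\vecw_\vecx\\ \mat S_\cdv\matA\vecx\end{bmatrix}+(\leco\hat e+\vecf)\bmod q .
\]
Since $\vecf$ has exactly the distribution of $\vece_\vecx$, this reproduces \cref{eq:simulation-from-mu} exactly. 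This includes the corrupt coordinates, where $\leco$ is zero and only fresh Gaussian noise is added.
\end{itemize}
The leakage experiment's assertion $\mu_0=\mu_1$ matches the case split, and at most one $\Leak\oracle$ query is made per identifier. Hence the simulation is perfect for both values of $b$, giving $\Adv_3(\adv)\le\Adv^{\INDCPAwL}_{\FHE,\bdv_3,(\theta,K,\chi)}(\secpar)$.

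\emph{Main obstacle: the norm check $\norm{\leco}\le D=\theta$.} By definition, $\xi^{-1}\mat T_\cdv\matV_{\cdv^*}=\mat I_t$, so $\mat T_\cdv=\xi\matV_{\cdv^*}^{-1}$ and $\vecu_0=\xi\matV_{\cdv^*}^{-1}\vecv_0$. I will pick $\cdv^*=\{0\}\cup T^*$ with $T^*\subseteq_{t-1}\Xi$ and $\cdv\subseteq T^*$, padding $\cdv$ with honest parties if $|\cdv|<t-1$. Then the honest rows outside $T^*$ of $\matV_\hdv\vecu_0$ are entries of $\xi\matV_{\Xi\setminus T^*}\matV_{\{0\}\cup T^*}^{-1}\vecv_0$, which have norm at most $\theta$ by \cref{def:subtractive-sets}. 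The honest rows inside $T^*$ are entries of $\xi\matV_{T^*}\matV_{\cdv^*}^{-1}\vecv_0=\veczero$. This choice of $\cdv^*$ must be consistent with the one fixed in $\hyb_{b,1}$; I expect this, together with the corresponding justification that these padded rows vanish, to be the delicate step.
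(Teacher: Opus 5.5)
Your proposal is correct and is essentially the paper's proof: the same reduction $\bdv_3$ (uniform $\vecb$, forwarded ciphertext oracles with fresh tags, uniform honest shares for challenge ciphertexts, and one $\Leak\oracle$ query on $\leco = (\matV_\hdv\vecu_0,\veczero)$ per non-challenge ciphertext, with $\vecf$ playing the role of $\vece_\vecx$), together with the same norm check via $T^* = \cdv^*\setminus\{0\}$, in which the filler-party rows vanish because $\matV_{\cdv^*}\vecu_0 = \xi\vecv_0$. The step you flag as delicate is automatic, because every admissible $\cdv^*$ already has the form $\{0\}\cup T^*$ with $\cdv\subseteq T^*\subseteq_{t-1}\Xi$; the one thing to fix is the slip $\matS_\cdv\coloneqq\xi\matV_\cdv\matR$, which should read $\matS_\cdv\coloneqq\matV_\cdv\matR$ (with corrupt keys $\xi\matS_\cdv$ handed to $\adv$), so that the corrupt shares $\xi\matS_\cdv\matA\vecx+\vece_\vecx$ carry a single factor $\xi$ as in $\hyb_{b,3}$.
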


\begin{proof} 
        Suppose $\adv$ is a PPT  distinguisher for ($\hyb_{0,3},\hyb_{1,3})$. We construct a reduction $\bdv_3$ against the IND-CPA under leakage security of $\FHE$, which proceeds as follows:
\begin{itemize}
    \item It receives $(\pubparam_\FHE,\pk_\FHE)$ from the IND-CPA-under-leakage challenger,
        parses $\pubparam=\pubparam_\FHE$, and passes $\pubparam$ to $\adv$, receiving from which
        $\mathcal{C}\subset_{<t}[K]$.
    \item It samples $\matA\sample\ring_q^{n\times m}$, $\matR\sample\ring_q^{t\times n}$,
        $\vecb\sample\ring_q^m$, and $\vece\sample\DD_{\ring,\bar{\chi}}^m$, sets $(\sk_k)_{k\in[K]}\coloneqq\xi\cdot\matV\cdot\matR\bmod q$ and
        $\pk\coloneqq(\pk_\FHE,\matA,\vecb)$, and sends $(\pk,(\sk_k)_{k\in\mathcal{C}})$ to $\adv$.
    \item It internally maintains lists $\Lquery,\Lshare,\Lindex$, initially empty. 
    It answers $\adv$'s queries to $\enc\oracle,\Chal\oracle,\Eval\oracle$ by forwarding to its own $\enc\oracle,\Chal\oracle,\Eval\oracle$ from the IND-CPA-under-leakage challenger, and appending a freshly sampled $\rtag$ itself. In more detail, it answers the oracle queries as follows:
    \begin{itemize}
        \item $\enc\oracle(\id,\mu)$: If $\Lquery[\id]=\emptyset$, then 
            \begin{align*}
                \text{query}~ (\ctxt_\FHE,\mu,\mu) \gets \enc\oracle(\id,\mu),
                \qquad 
                \text{sample}~ \rtag\sample\ring_q^h,
                \qquad 
                \text{set}~ \ctxt \coloneqq (\ctxt_\FHE,\rtag) 
            \end{align*}
            and $\Lquery[\id]\coloneqq(\ctxt,\mu,\mu)$. It returns $\Lquery[\id]$ to $\adv$.
        \item $\Chal\oracle(\id,\mu_0,\mu_1)$: If $\Lquery[\id]=\emptyset$, then
            \begin{align*}
                \text{query}~ (\ctxt_\FHE,\mu_0,\mu_1) \gets \Chal\oracle(\id,\mu_0,\mu_1),
                \qquad
                \text{sample}~ \rtag\sample\ring_q^h,
                \qquad
                \text{set}~ \ctxt \coloneqq (\ctxt_\FHE,\rtag)
            \end{align*}
            and $\Lquery[\id]\coloneqq(\ctxt,\mu_0,\mu_1)$. It returns $\Lquery[\id]$ to $\adv$.
        \item $\Eval\oracle(\id,f,(\id_i)_{i\in I})$: If $\Lquery[\id]=\emptyset$, it parses
            $(\ctxt_i,\mu_{0,i},\mu_{1,i})\gets\Lquery[\id_i]$ for each $i\in I$, then
            \begin{align*}
                \text{query}~ (\ctxt_\FHE',\mu_0',\mu_1') \gets \Eval\oracle(\id,f,(\id_i)_{i\in I}),
                \qquad
                \text{sample}~ \rtag\sample\ring_q^h,
                \qquad
                \text{set}~ \ctxt' \coloneqq (\ctxt_\FHE',\rtag)
            \end{align*}
            and $\Lquery[\id]\coloneqq(\ctxt',\mu_0',\mu_1')$.
            It returns $\Lquery[\id]$ to $\adv$.
    \end{itemize}
    \item To answer $\adv$'s oracle queries to $\ParDec\oracle(\id,k)$: If $\Lquery[\id]=\emptyset$, the reduction returns $\bot$. 
        Otherwise it adds $\{k\}$ to $\Lindex[\id]$ and parses $(\ctxt,\mu_0,\mu_1)\gets\Lquery[\id]$. If $\mu_0\ne\mu_1$ and $|\Lindex[\id]\cup\mathcal{C}|\geq t$, it returns $\bot$. Otherwise, if $\Lshare[\id]=\emptyset$, it computes
        $(\pd_k)_{k\in[K]}$ by distinguishing two cases:
    \begin{itemize}
        \item If $\mu_0=\mu_1$: It simulates $(\pd_k)_{k\in[K]}$ using the $\leak\oracle$ answer from the challenger. Specifically, 
        \begin{align*}
            \text{set}~\leco\coloneqq \begin{bmatrix}\matV_\hdv \cdot \vecu_0\\ \matzero\end{bmatrix},
            \qquad 
            \text{query}~ \leak \gets \Leak\oracle(\id,\leco),
        \end{align*}
        where $\vecu_0$ was defined in \cref{eq:definition-of-u0-U1-U2}. 
        Then, it parses $(\ctxt_\FHE,\rtag)\gets\ctxt$, computes
        $(\hat{\vecc}_0,\hat{c}_1)\gets\PreDecrypt(\ctxt_\FHE)$, samples
        $\vecw_\vecx\sample\ring_q^{t-|\cdv|-1}$, sets
        $\mu\coloneqq \mu_0=\mu_1$ and
        $\vecx\coloneqq\matG^{-1}\begin{pmatrix} \hat{\vecc}_0\\ \rtag \end{pmatrix}$, 
        and it sets
        \begin{align*}
            (\pd_k)_{k\in[K]} 
            \coloneqq& \begin{bmatrix} \mat{V}_\hdv \vecu_0 \\ \matzero \end{bmatrix}
            \begin{bmatrix}(\vecb - \vece)^\transpose \vecx - \left(\hat{c}_1- \round{\frac{q}{p}}\cdot \xi^{-2} \cdot \mu\right) \end{bmatrix}  
            +\leak +{\begin{bNiceArray}{w{c}{2.5em}|w{c}{3em}}  
                \Block{1-2}{\mat{V}_\hdv (\matU_1,\matU_2)}  \\ \hline 
                \quad & \xi\mat{I}_\cdv \end{bNiceArray}}
            {\begin{bmatrix}
                \vecw_\vecx \\ \mat{S}_\cdv \mat{A} \vecx 
            \end{bmatrix}} 
            \bmod q.
        \end{align*}
        \item If $\mu_0\ne\mu_1$: It parses $(\ctxt_\FHE,\rtag)\gets\ctxt$, computes
        $(\hat{\vecc}_0,\hat{c}_1)\gets\PreDecrypt(\ctxt_\FHE)$ and
        $\vecx\coloneqq\matG^{-1}\begin{pmatrix} \hat{\vecc}_0\\ \rtag \end{pmatrix}$, and samples
        $\vece_\vecx\sample\DD_{\ring,\chi}^{|\cdv|}$. It then samples the honest coordinates uniformly
        at random and computes the corrupt coordinates as real partial decryptions, that is
        \begin{align*}
            (\pd_k)_{k\in\hdv}\sample\ring_q^{|\hdv|}
            \qquad\text{and}\qquad
            (\pd_k)_{k\in\cdv}\coloneqq\xi\cdot\mat{S}_\cdv\mat{A}\vecx+\vece_\vecx\bmod q.
        \end{align*}
    \end{itemize}
    In either case, it sets $\Lshare[\id]\coloneqq(\pd_k)_{k\in[K]}$. It returns $\pd_k\gets\Lshare[\id]$ to $\adv$.
    \item Eventually, it receives $b'$ from $\adv$ and returns $b'$.
\end{itemize}

Below we analyse the reduction's simulation. Recall that $\cdv^*$ satisfies $(\set{0} \cup \cdv) \subseteq \cdv^* \subseteq [0,K]$. Notice that 
\begin{align*}
    \norm{\leco}
    = \norm{\begin{bmatrix}\matV_\hdv \cdot \vecu_0\\
    \matzero\end{bmatrix}} 
    = \norm{\matV_\hdv \cdot \xi \cdot \matV_{\cdv^*}^{-1} \cdot \vecv_0}
    \leq \theta(t),
    \end{align*}
    where the second equality is by definition of $\vecu_0 = \matT_\cdv \cdot \vecv_0 = \xi \cdot \matV_{\cdv^*}^{-1} \cdot \vecv_0$,  $\theta(t)$ is one of the quality measurements of the subtractive set $\Xi$ (\cref{def:subtractive-sets}, \cref{eq:subtractive-set-qualities}), and the inequality holds by taking $T^* = \cdv^* \setminus \{0\}$ in \cref{eq:subtractive-set-qualities}. Here $\matV_\hdv$ carries more rows than $\matV_{\Xi \setminus T^*}$ when $|\cdv| < t-1$, but the additional rows, indexed by the filler honest parties $(\cdv^* \setminus \{0\}) \setminus \cdv$, vanish on $\vecu_0$ since $\matV_{\cdv^*} \cdot \vecu_0 = \xi \cdot \vecv_0$, so the two give the same norm. Therefore, the queries made by $\bdv_3$ to the oracle $\Leak\oracle$ are valid.

    Below, we show that $\bdv_3$ perfectly simulates the view of either $\hyb_{0,3}$ or $\hyb_{1,3}$ to $\adv$, 
    if it is interacting either with the IND-CPA-with-$(\theta(t),K,\chi)$-leakage experiment $0$ or $1$.

    Consider the $\mu_0=\mu_1$ case.
    By definition of the IND-CPA-with-$(\theta(t),K,\chi)$-leakage security experiment (\cref{fig:FHE-leaky-security}), one has
    \begin{align*}
        \leak = \leco\cdot\hat{e}+\vecf 
        = \begin{bmatrix}\matV_\hdv \cdot \vecu_0\\
        \matzero\end{bmatrix} \cdot \hat{e}+\vecf.
    \end{align*}
    where
    \[
    \hat{e} \coloneqq \hat{c}_1 - \hat{\vecs}^\tr\cdot\hat{\vecc}_0 - \xi^{-2} \round{\frac{q}{p}}\mu \bmod q
    \]
    as also defined in \cref{eq:hate}.
    It follows that
    \begin{align*}
        (\pd_k)_{k\in[K]}
        =~& \begin{bmatrix} \mat{V}_\hdv \vecu_0 \\ \matzero \end{bmatrix}
        \begin{bmatrix}(\vecb - \vece)^\transpose \vecx - \left(\hat{c}_1- \round{\frac{q}{p}}\cdot \xi^{-2} \cdot \mu\right) \end{bmatrix}+\leak +{\begin{bNiceArray}{w{c}{2.5em}|w{c}{3em}}  
            \Block{1-2}{\mat{V}_\hdv (\matU_1,\matU_2)}  \\ \hline 
            \quad & \xi\mat{I}_\cdv \end{bNiceArray}}
        {\begin{bmatrix}
            \vecw_\vecx \\ \mat{S}_\cdv \mat{A} \vecx 
        \end{bmatrix}} 
        \bmod q\\
        =~& \begin{bmatrix} \mat{V}_\hdv \vecu_0 \\ \matzero \end{bmatrix}
        \begin{bmatrix}(\vecb - \vece)^\transpose \vecx - \left(\hat{c}_1- \round{\frac{q}{p}}\cdot \xi^{-2} \cdot \mu\right) \end{bmatrix} \\ 
        &+ \left(\begin{bmatrix}\matV_\hdv \vecu_0\\
        \matzero\end{bmatrix} \cdot \hat{e}+\vecf\right) +{\begin{bNiceArray}{w{c}{2.5em}|w{c}{3em}}  
            \Block{1-2}{\mat{V}_\hdv (\matU_1,\matU_2)}  \\ \hline 
            \quad & \xi\mat{I}_\cdv \end{bNiceArray}}
        {\begin{bmatrix}
            \vecw_\vecx \\ \mat{S}_\cdv \mat{A} \vecx 
        \end{bmatrix}} 
        \bmod q\\
        =~& \begin{bmatrix} \mat{V}_\hdv \vecu_0 \\ \matzero \end{bmatrix}
        \begin{bmatrix}(\vecb - \vece)^\transpose \vecx - \left(\hat{c}_1- \round{\frac{q}{p}}\cdot \xi^{-2} \cdot \mu-\hat{e}\right) \end{bmatrix} +{\begin{bNiceArray}{w{c}{2.5em}|w{c}{3em}}  
            \Block{1-2}{\mat{V}_\hdv (\matU_1,\matU_2)}  \\ \hline 
            \quad & \xi\mat{I}_\cdv \end{bNiceArray}}
        {\begin{bmatrix}
            \vecw_\vecx \\ \mat{S}_\cdv \mat{A} \vecx 
        \end{bmatrix}} +\vecf
        \bmod q\\
        =~& \begin{bmatrix} \mat{V}_\hdv \vecu_0 \\ \matzero \end{bmatrix}
        \begin{bmatrix}(\vecb - \vece)^\transpose \vecx - \left(\hat{c}_1- \left(\round{\frac{q}{p}}\cdot \xi^{-2} \cdot \mu+\hat{e}\right)\right) \end{bmatrix} +{\begin{bNiceArray}{w{c}{2.5em}|w{c}{3em}}  
            \Block{1-2}{\mat{V}_\hdv (\matU_1,\matU_2)}  \\ \hline 
            \quad & \xi\mat{I}_\cdv \end{bNiceArray}}
        {\begin{bmatrix}
            \vecw_\vecx \\ \mat{S}_\cdv \mat{A} \vecx 
        \end{bmatrix}} +\vece_\vecx
        \bmod q
    \end{align*}
    where in the last line we have used that $\vecf$ and $\vece_\vecx$ are identically distributed.
    This matches the distribution of $(\pd_k)_{k \in [K]}$ defined in \cref{eq:simulation-from-mu}.
    
    For the $\mu_0\ne\mu_1$ case, recall that the distribution of $(\pd_k)_{k \in [K]}$ is identical in $\hyb_{b,3}$ and $\hyb_{b,2}$, where in the latter $(\pd_k)_{k \in \hdv}$ are uniformly random and $(\pd_k)_{k \in \cdv}$ are real partial decryptions.
    This is exactly the distribution of $(\pd_k)_{k \in [K]}$ sampled by the reduction.

    It follows that the $\ParDec\oracle$ is correctly simulated in both cases.

    If $\bdv_3$ is interacting with the IND-CPA-with-$(\theta(t),K,\chi)$-leakage experiment $0$, then queries to $\Chal\oracle(\id,\mu_0,\mu_1)$ return $\ctxt_\FHE\gets\FHE.\enc(\pk_\FHE,\mu_0)$ and therefore $\ctxt\gets\enc(\pk,\mu_0)$. Therefore, the view of $\adv$ is identical to $\hyb_{0,3}$.

    If $\bdv_3$ is interacting with the IND-CPA-with-$(\theta(t),K,\chi)$-leakage experiment $1$, then queries to $\Chal\oracle(\id,\mu_0,\mu_1)$ return $\ctxt_\FHE\gets\FHE.\enc(\pk_\FHE,\mu_1)$ and therefore $\ctxt\gets\enc(\pk,\mu_1)$. Therefore, the view of $\adv$ is identical to $\hyb_{1,3}$.

    Therefore,
        $\Adv_3(\adv)\leq \Adv_{\FHE,\bdv_3,(\theta,K,\chi)}^{\INDCPAwL}(\secpar)$, as claimed.
    \qed        
\end{proof}


\section{A Utility Lemma and Intermediate Assumption}\label{app:utility-and-adpLLWE}

In \cref{app:leaky-lwe-def}, we state the Leaky LWE assumption and provide lemmas on reducing it from standard LWE. The Leaky LWE assumption serves as an intermediate interface which simplifies our proof steps. 
Using this, in \cref{sec:th-LWE-variant}, we state and prove a utility lemma (\cref{lem:th-LWE-variant}), which concerns the computational indistinguishability of a pair of real and ideal experiments. This utility lemma is invoked in the proof of \cref{thm:tfhe-security}, specifically in \cref{lem:thLWE-hybrid-step}, in \cref{sec:missing-proofs-for-security}. Additional results assisting this utility lemma are in \cref{app:core2,app:core1-alt}.

\subsection{Intermediate lattice assumption}\label{app:leaky-lwe-def}

The Leaky LWE assumption \cite{CiC:LaiSwaWoo25} supports leakages on the LWE errors, i.e.~the adversary declares the set of leakage functions and sees the noisy evaluated LWE errors, conditioned on that the leakage functions are norm-bounded. 
We state both the semi- and fully-adaptive versions below. 

\begin{definition}[$\elLWE{}$ {\cite{CiC:LaiSwaWoo25}}, $\Adp\elLWE{}$]\label{def:leakyLWE}
    Let 
    \(\params = (\ring,n,m,Q,q,\bar{\chi},\chi,\cL,\ddvA)\) be parametrised by $\secpar$,
    where $\ddvA \sim \ring_q^{n \times m}$ is a distribution.
    The semi-adaptive Leaky LWE assumption $\elLWE{\params}$ (resp.~adaptive Leaky LWE assumption $\Adp\LLWE{\params}$) states that 
    for any PPT adversary $\adv$, 
    for the experiments $\mathsf{Exp}\mhyphen\mathsf{LLWE}^b$ given in~\cref{fig:leaky-lwe}, it holds that 
    \begin{align*}
        \Adv_{\params,\adv}^{\scriptsize\dbox{$\LLWE{}$},\pcbox{\Adp\LLWE{}}} \coloneqq
        \abs{
            \prob{\mathsf{Exp}\mhyphen\mathsf{LLWE}^0_{\params, \adv}(\secparam) = 1} - \prob{\mathsf{Exp}\mhyphen\mathsf{LLWE}^1_{\params, \adv}(\secparam) = 1}
        } \leq \negl.
    \end{align*}
    The parameter $\ddvA$ is suppressed if it is the uniform distribution over $\ring_q^{n \times m}$.
\end{definition}

\cite{CiC:LaiSwaWoo25}, resp. \cite{EPRINT:HKMNS26}%
\footnote{More specifically, \cite{EPRINT:HKMNS26} showed that the Adaptive Hint-MLWE assumption is implied by Hint-MLWE \cite{C:KLSS23b}, the latter further implied by LWE. Leaky LWE is a variant of Hint-MLWE, and the reduction of \cite{EPRINT:HKMNS26} equally applies to Leaky LWE.}%
, showed that the semi-adaptive, resp. full-adaptive, Leaky LWE is implied by standard LWE under a polynomial-size modulus $q$. We recall both results below, with that of \cite{EPRINT:HKMNS26} adapted to the formulation and norm convention in this work. For completeness, we also provide the adapted proof.

\begin{lemma}[Special case of~{\cite[Theorem 3]{CiC:LaiSwaWoo25}}]\label{lem:lwe-to-ellwe}
    Let $\ring, n, m, Q, q, \epsilon, \alpha, \bar{\chi}, \chi, \allowbreak \chi^*, \cL, \ddv$ be parametrised by $\secpar$, where
    $\ring = \ZZ[\zeta]$,
    $n,m,Q,q \in \NN$,
    $\ddv \sim \ring_q^{n \times m}$,
    $0 < \epsilon \leq 1/2$,
    $\chi^* \geq \sqrt{2} \cdot \eta_\epsilon(\ring^m)$,
    $\bar{\chi} >0$, $\chi >0$, 
    and $\cL = \set{ \mat{Z} \in \ring^{m \times Q} : \norm{\mat{Z}^\tr}_{2,2} \leq \alpha}$,
    where $\alpha = \chi \cdot \sqrt{(2 \eta_\epsilon(\ring^m)^2 + (\chi^*)^2)^{-1} - \bar{\chi}^{-2}}$.%

    Denote 
    \[
        \params = (\ring,n,m,q,\chi^*,\ddv), \qquad 
        \params_L = (\ring,n,m,Q,q,\bar{\chi},\chi,\cL,\ddv).
    \]
    $\forall$ PPT $\adv$, $\exists$ PPT $\bdv$ such that
    \begin{align*}
        \Adv_{\params_L,\adv}^{\LLWE{}}(\secpar)
        \leq \Adv_{\params,\bdv}^{\LWE{}}(\secpar) + 6 \cdot \epsilon/ (1-\epsilon).
    \end{align*}
    In particular, if $\epsilon \in \negl$ and the $\LWE{\params}$ assumption holds, then the $\elLWE{\params_L}$ assumption holds.  
\end{lemma}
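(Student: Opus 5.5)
The plan is a direct reduction from $\LWE{\params}$ to $\elLWE{\params_L}$ via the ``hint'' technique for Hint-MLWE/Leaky LWE: resample the LWE error conditioned on the leakage. The reduction $\bdv$ receives $(\matA,\vecy)$, where either $\vecy^\tr = \vecs^\tr\matA + (\vece^*)^\tr$ with $\vece^* \sample \DD_{\ring,\chi^*}^m$, or $\vecy$ is uniform. It forwards $\matA$ to $\adv$. Since the assumption is semi-adaptive, $\adv$ then fixes its leakage matrix $\matZ \in \cL$ before seeing the LWE vector. The goal is to produce a pair $(\vecb,\vecl)$ such that:
\begin{itemize}
  \item in the LWE case, $\vecb = \vecs^\tr\matA + \vece^\tr$ and $\vecl = \matZ^\tr\vece + \vecf$ are jointly distributed as in the real experiment, with $\vece \sample \DD^m_{\ring,\bar\chi}$ and $\vecf \sample \DD^Q_{\ring,\chi}$;
  \item in the uniform case, $\vecb$ is uniform and independent of a correctly distributed $\vecl$.
\end{itemize}

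\textbf{Key step (conditional decomposition).} Consider the joint Gaussian $(\vece,\matZ^\tr\vece+\vecf)$. Conditioned on the leakage $\vecl$, the error $\vece$ is, heuristically in the continuous setting, Gaussian with covariance
\[
  \Sigma = \bigl(\bar\chi^{-2}\matI + \chi^{-2}\Psi(\matZ)\Psi(\matZ)^\tr\bigr)^{-1}
\]
and a mean $\vecc(\vecl)$ that is an efficiently computable linear function of $\vecl$. Since $\norm{\matZ^\tr}_{2,2}\le\alpha$, every eigenvalue of $\Sigma^{-1}$ is at most $\bar\chi^{-2}+\alpha^2\chi^{-2}$. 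By the choice of $\alpha$ in the statement, this bound equals $(2\eta_\epsilon(\ring^m)^2+(\chi^*)^2)^{-1}$, so
\[
  \Sigma \succeq \bigl((\chi^*)^2 + 2\eta_\epsilon(\ring^m)^2\bigr)\matI .
\]
Hence $\Sigma - (\chi^*)^2\matI \succeq 2\eta_\epsilon(\ring^m)^2\matI$, which is exactly the slack needed to split the conditional error as $\vece^* + \vece'$. Here $\vece'$ is a discrete Gaussian over the coset $\vecc(\vecl)+\ring^m$ with covariance $\Sigma-(\chi^*)^2\matI$, and the sum is statistically close to the conditional distribution by a discrete Gaussian convolution lemma (Peikert's convolution theorem). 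The requirement $\chi^* \ge \sqrt2\,\eta_\epsilon(\ring^m)$ ensures that the remaining smoothing conditions of that lemma hold.

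\textbf{Reduction.} Using only $\matZ$, $\bdv$ does the following:
\begin{enumerate}
  \item Sample $\vecl$ from its marginal, equivalently by drawing $\vece_0\sample\DD^m_{\ring,\bar\chi}$ and $\vecf$ and setting $\vecl \coloneqq \matZ^\tr\vece_0+\vecf$; this is exact.
  \item Sample $\vece'$ from the shifted discrete Gaussian with covariance $\Sigma-(\chi^*)^2\matI$ centred on the coset determined by $\vecc(\vecl)$.
  \item Output $\vecb^\tr \coloneqq \vecy^\tr + (\vece')^\tr$ together with $\vecl$.
\end{enumerate}
In the LWE case, $\vecb = \vecs^\tr\matA + (\vece^*+\vece')^\tr$, and $(\vece^*+\vece',\vecl)$ is statistically close to $(\vece,\vecl)$ in the real experiment. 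In the uniform case, $\vecb$ is uniform and independent of $\vecl$, which matches the ideal experiment. The distinguishing advantage therefore transfers up to the accumulated statistical loss. Counting the smoothing-parameter losses in the convolution and coset-sampling steps, each contributing $O(\epsilon/(1-\epsilon))$, should yield the stated $6\epsilon/(1-\epsilon)$.

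\textbf{Main obstacle.} The delicate step is the discrete rigour of this decomposition: the conditional distribution of a discrete Gaussian given a noisy discrete linear leakage is not exactly a discrete Gaussian. My first attempt would be to invoke the convolution theorem twice:
\begin{itemize}
  \item once to show that $\vece$ given $\vecl$ is close to a discrete Gaussian on $\ring^m$ with covariance $\Sigma$ and centre $\vecc(\vecl)$;
  \item once to split this as $\vece^*+\vece'$.
\end{itemize}
Both applications are justified by the eigenvalue bound above and the $2\eta_\epsilon^2$ slack, and I would then track the ratio-type error $(1+\epsilon)/(1-\epsilon)$ through each step to obtain the constant. Since this is a special case of \cite[Theorem 3]{CiC:LaiSwaWoo25}, an alternative is to verify that the parameter constraints here instantiate their general theorem and to cite it directly.
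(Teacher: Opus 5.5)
Your reduction is correct. It uses the same rerandomisation as the cited proof but produces the leakage differently, and that simplifies the statistical analysis.

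\textbf{What the paper relies on.} The paper does not prove this lemma; it imports \cite[Theorem 3]{CiC:LaiSwaWoo25}, whose argument is mirrored in the proof of \cref{lem:LLWE-reprove}. There the leakage is drawn from $\DD_{\sqrt{\Phi}}$, with $\Phi$ the \emph{continuous} marginal covariance. The same correction as yours (covariance $\Theta-(\chi^*)^2\matI$, centre $\Theta\Gamma\vecl$; your $\Sigma$ is their $\Theta$) is added to the LWE vector. The joint-distribution statement \cref{lem:elLWE-thm2} is then applied in both the LWE and the uniform branch, at $3\epsilon/(1-\epsilon)$ each, which is where $6\epsilon/(1-\epsilon)$ comes from.

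\textbf{How your route differs.} You generate $\vecl$ exactly, by running the leakage on a fresh error, so the uniform branch is perfect. Moreover, the step you flag as the main obstacle is not one. Over $\vece\in\ring^m$ the conditional pmf is proportional to $\rho_{\bar\chi}(\vece)\cdot\rho_{\chi}(\vecl-\matZ^\tr\vece)$. Completing the square in $\psi(\vece)$ shows that $\vece$ given $\vecl$ is \emph{exactly} $\DD_{\ring^m,\sqrt{\Sigma},\vecc(\vecl)}$ with $\vecc(\vecl)=\chi^{-2}\,\Sigma\,\Psi(\matZ^\tr)^\tr\psi(\vecl)$. So no lemma is needed there, and a convolution theorem would not address a conditioning statement anyway.

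What remains is a single convolution, $\DD_{\ring,\chi^*}^m+\DD_{\ring^m,\sqrt{\Sigma-(\chi^*)^2\matI},\vecc}\approx\DD_{\ring^m,\sqrt{\Sigma},\vecc}$. Apply \cite[Theorem 3.1 (eprint)]{PKC:GMPW20} as in \cref{lem:gaussian-convolution}, with $T=(\matI\mid\matI)$ and $S=\mathrm{diag}(\chi^*\matI,\sqrt{\Sigma-(\chi^*)^2\matI})$ on the appropriate coset. This gives loss $\epsilon/(1-\epsilon)$. Its kernel condition reduces to $(\chi^*)^{-2}\matI+(\Sigma-(\chi^*)^2\matI)^{-1}\preceq\eta_\epsilon(\ring^m)^{-2}\matI$, which is exactly what your two $2\eta_\epsilon(\ring^m)^2$ slacks provide. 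Prefer this to Peikert's original convolution theorem: its $8\epsilon$ exceeds $6\epsilon/(1-\epsilon)$ once $\epsilon<1/4$, so it would not give the stated bound.

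\textbf{What each route buys.} Yours is more elementary, gives the tighter loss $\epsilon/(1-\epsilon)$, and avoids sampling the non-spherical $\DD_{\sqrt{\Phi}}$ in dimension $\varphi Q$. The cited route's merit is that one self-contained statistical theorem covers the whole joint law, and it is reused verbatim in \cref{lem:LLWE-reprove}.

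\textbf{Two small fixes.} First, the relevant Gram matrix is $\Psi(\matZ^\tr)^\tr\Psi(\matZ^\tr)$ rather than $\Psi(\matZ)\Psi(\matZ)^\tr$. The leakage in coefficients is $\Psi(\matZ^\tr)\psi(\vece)$, and this is what $\norm{\matZ^\tr}_{2,2}=\smax(\Psi(\matZ^\tr))\leq\alpha$ controls. Second, $\vece'$ must be a discrete Gaussian over $\ring^m$ \emph{centred} at $\vecc(\vecl)$, not one supported on the coset $\vecc(\vecl)+\ring^m$; otherwise $\vecb$ would not lie in $\ring_q^m$.
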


\begin{figure}[t!]
    \begin{pchstack}[center,boxed]
        \adjustbox{max width=\dimexpr\columnwidth-2\fboxsep-2\fboxrule\relax}%
        {
        \procedure{$\mathsf{Exp}\mhyphen\mathsf{LLWE}^b_{\params, \adv}(\secparam)$}
        {
            \mat{A} \gets \ddvA \\
            \vecr \sample \ring_q^{n} ;~ \barvec{e} \sample \DD_{\ring,\bar{\chi}}^{m} \\ 
            \pcif b=0 \pcthen \vecb^\tr \coloneqq \vecr^\tr \mat{A} + \barvec{e}^\tr \bmod q \\
            \pcif b=1 \pcthen \vecb \sample \ring_q^{m} \\
            \begin{subprocedure}
                \dbox{
                \pseudocode{
                    \mat{Z} \gets \adv(\mat{A}) \\ 
                    \pcassert \mat{Z} \in \cL \subseteq \ring^{m \times Q} \\
                    \vece \sample \DD_{\ring,\chi}^{Q} ;~
                    \vecl^\tr \coloneqq \barvec{e}^\tr \cdot \mat{Z} + \vec{e}^\tr \\
                    \pcreturn \adv(\vecb, \vecl) 
                }
                } 
            \end{subprocedure} \\
            \begin{subprocedure}
                \pcbox{
                \pseudocode{
                    i \coloneqq 0 \\
                    \pcreturn \adv^{\Leak\oracle}(\mat{A},\vec{b})
                }
                }
            \end{subprocedure}
        }
        \pchspace
        \procedure{$\Leak\oracle(\vecz)$}
        {
            \pcassert i < Q \\
            i \coloneqq i+1 \\
            \pcassert \vecz \in \cL \subseteq \ring^{m} \\
            e \sample \DD_{\ring,\chi} ;~
            \ell \coloneqq \barvec{e}^\tr \cdot \vecz + e \\
            \pcreturn \ell
        }
        }
    \end{pchstack}
    \caption[{}]{
        Experiments for Leaky LWE (\cref{def:leakyLWE}), \dbox{dashed box} for semi-adaptive, \pcbox{\text{solid box}} for adaptive.
    }\label{fig:leaky-lwe}
\end{figure}

The following is a special case of the Gaussian convolution lemma in \cite{PKC:GMPW20}, needed for \cref{lem:LLWE-implies-AdpLLWE}.

\begin{lemma}[Special case of {\cite[Theorem 3.1 (eprint)]{PKC:GMPW20}}]\label{lem:gaussian-convolution}
    Let $\ring = \ZZ[\zeta]$ have ring degree $\varphi$ and coefficient embedding $\psi$, let $m \in \NN$, $0 < \epsilon \leq 1/2$, and $s, \kappa > 0$ with $s > \eta_\epsilon(\ring^m)$.
    Let $\vecz \in \ring^m$ be arbitrary satisfying
    \begin{align}\label{eq:convolution-bound}
        \norm{\Psi(\vecz^\tr)}_{2,2} \leq \kappa \cdot \sqrt{1 - \eta_\epsilon(\ring^m)^2/s^2}
    \end{align}
    and let $\Sigma_\vecz \coloneqq s^2 (\kappa^2 \matI_\varphi - \Psi(\vecz^\tr)\Psi(\vecz^\tr)^\tr)$.
    Then the statistical distance between
    \begin{align*}
        \set{\vecz^\tr \vecy + f ~\middle|~ \vecy \sample \DD_{\ring,s}^m,~f \sample \DD_{\ring,\sqrt{\Sigma_\vecz}}}
        &&\text{and}&&
        \DD_{\ring,\kappa s}
    \end{align*}
    is at most $\epsilon/(1-\epsilon) \leq 2\epsilon$.
\end{lemma}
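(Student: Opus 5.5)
The proof reduces to the lattice-level convolution theorem of \cite{PKC:GMPW20}, applied to $\ZZ^{\varphi m}$ via the coefficient embedding. Write $\vecy \in \ring^m$ through $\psi(\vecy) \in \ZZ^{\varphi m}$. Then $\DD_{\ring,s}^m$ is exactly the spherical discrete Gaussian $\DD_{\ZZ^{\varphi m},s}$. Also $\psi(\vecz^\tr\vecy) = \Psi(\vecz^\tr)\,\psi(\vecy)$, where $\Psi(\vecz^\tr) \in \ZZ^{\varphi \times \varphi m}$ is an integer matrix. Similarly, $\psi(f)$ for $f \sim \DD_{\ring,\sqrt{\Sigma_\vecz}}$ is the discrete Gaussian over $\ZZ^\varphi$ with covariance $\Sigma_\vecz$, and the target $\DD_{\ring,\kappa s}$ is $\DD_{\ZZ^\varphi,\kappa s}$. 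So the claim becomes a purely $\ZZ$-lattice statement: for an integer linear map $\matT \coloneqq \Psi(\vecz^\tr)$, the sum $\matT\,\DD_{\ZZ^{\varphi m},s} + \DD_{\ZZ^\varphi,\sqrt{\Sigma}}$ is $\epsilon/(1-\epsilon)$-close to $\DD_{\ZZ^\varphi,\kappa s}$.

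First I check that $\Sigma_\vecz$ is well defined. The bound \cref{eq:convolution-bound} gives $\norm{\matT}_{2,2} < \kappa$, so $\kappa^2\matI_\varphi - \matT\matT^\tr \succ 0$ and $\Sigma_\vecz$ is positive definite. Next I invoke the theorem of \cite{PKC:GMPW20} in its "linear transformation plus complementary Gaussian" form. For a lattice $\Lambda_1 = \ZZ^{\varphi m}$ with Gaussian parameter $\sqrt{\Sigma_1} = s\matI$, and a linear map $\matT$ with $\matT\Lambda_1 \subseteq \Lambda_2 = \ZZ^\varphi$, the distribution $\matT\,\DD_{\Lambda_1,\sqrt{\Sigma_1}} + \DD_{\Lambda_2,\sqrt{\Sigma_2}}$ is close to $\DD_{\Lambda_2,\sqrt{\matT\Sigma_1\matT^\tr+\Sigma_2}}$. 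This holds provided a smoothing condition involving the coset lattice $\Lambda_1 \cap \ker\matT$ and the "conditional" covariance holds. Here $\matT\Sigma_1\matT^\tr + \Sigma_2 = s^2\matT\matT^\tr + s^2(\kappa^2\matI - \matT\matT^\tr) = \kappa^2 s^2\matI$, which is exactly the target.

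The main obstacle is verifying the smoothing hypothesis of the cited theorem from \cref{eq:convolution-bound} and $s > \eta_\epsilon(\ring^m)$. In the formulation I would use, the requirement is that $\sqrt{\Sigma_1^{-1} + \matT^\tr\Sigma_2^{-1}\matT}^{-1} \succeq \eta_\epsilon(\Lambda_1)$. Equivalently, we need $\Sigma_1 - \Sigma_1\matT^\tr(\matT\Sigma_1\matT^\tr+\Sigma_2)^{-1}\matT\Sigma_1 \succeq \eta_\epsilon(\ZZ^{\varphi m})^2\matI$, together with $\sqrt{\Sigma_2} \succeq \eta_\epsilon(\ZZ^\varphi)$ if that is also demanded. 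Substituting, the left side is $s^2\matI - (s^2/\kappa^2)\matT^\tr\matT$, and its smallest eigenvalue is $s^2(1-\norm{\matT}_{2,2}^2/\kappa^2)$. Condition \cref{eq:convolution-bound} is precisely $\norm{\matT}_{2,2}^2/\kappa^2 \le 1 - \eta_\epsilon^2/s^2$, so this eigenvalue is at least $\eta_\epsilon(\ring^m)^2$, as needed. If the cited version additionally requires smoothing of $\ZZ^\varphi$ by $\sqrt{\Sigma_2}$, I would either check that it is implied, or fall back on the version of the theorem whose only hypothesis is the one above. I believe the eprint Theorem~3.1 only needs the kernel-sublattice condition, which the eigenvalue bound implies because smoothing passes to sublattices of $\Lambda_1 \cap \ker\matT$ and their projections.

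Finally, the statistical distance bound $\epsilon/(1-\epsilon)$ is read off from the theorem, and $\epsilon \le 1/2$ gives $\epsilon/(1-\epsilon) \le 2\epsilon$. Pulling back through $\psi$, which is a bijection $\ring \to \ZZ^\varphi$, transfers the bound to the ring statement.
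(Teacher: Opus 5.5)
Your final inequality is correct and is the one the paper checks. With $\matM\coloneqq\Psi(\vecz^\tr)$, Woodbury gives $(s^{-2}\matI+\matM^\tr\Sigma_\vecz^{-1}\matM)^{-1}=s^2\matI-(s^2/\kappa^2)\matM^\tr\matM\succeq\eta_\epsilon(\ring^m)^2\matI$. This is the inverse form of the paper's push-through computation $\matB^\tr\matB=\frac{\kappa^2}{s^2}(\kappa^2\matI-\matM^\tr\matM)^{-1}\preceq\eta_\epsilon(\ring^m)^{-2}\matI$.

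\textbf{The gap.} You assert (``I believe'') rather than derive the link between this inequality and \cite[Theorem 3.1 (eprint)]{PKC:GMPW20}. That theorem has no built-in ``plus complementary Gaussian'' form. It concerns a linear image $\matT\cdot\DD_{\Lambda,\matS}$ of a \emph{single} discrete Gaussian, under two hypotheses: $\ker(\matT)$ is a $\Lambda$-subspace, and $\eta_\epsilon(\Lambda\cap\ker\matT)\le\matS$. To use it you must stack, as the paper does:
\begin{itemize}
\item $(\vecy,\vecf)$ is one discrete Gaussian over $\Lambda=\ZZ^{\varphi m+\varphi}$ with $\matS=\diag{s\matI_{\varphi m},\sqrt{\Sigma_\vecz}}$.
\item $\matM\vecy+\vecf=\matT\cdot(\vecy,\vecf)$ for $\matT=(\matM\mid\matI_\varphi)$.
\item The kernel lattice is the graph lattice $\Lambda\cap\ker\matT=\matC\ZZ^{\varphi m}$ with $\matC=\begin{bmatrix}\matI\\-\matM\end{bmatrix}$. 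It spans a $\Lambda$-subspace because $\matM$ is integral.
\item $\matS^{-1}\matC$ has Gram matrix exactly $s^{-2}\matI+\matM^\tr\Sigma_\vecz^{-1}\matM$. This is why your Gram-matrix condition is the right one, and with \cite[Lemma 2.5 (eprint)]{PKC:GMPW20} your eigenvalue bound gives $\eta_\epsilon(\matS^{-1}\matC\ZZ^{\varphi m})\le 1$.
\end{itemize}
No smoothing condition on $\sqrt{\Sigma_\vecz}$ is needed. Such a condition could even fail here, e.g.\ for $\vecz=\veczero$ and $\kappa s<\eta_\epsilon(\ZZ^\varphi)$, where the lemma holds trivially. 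Finally, $\matT\Lambda=\ZZ^\varphi$ and $(\matT\matS)(\matT\matS)^\tr=\kappa^2s^2\matI$ identify the target. The theorem's relative error $2\epsilon/(1-\epsilon)$ then yields statistical distance $\epsilon/(1-\epsilon)$.

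\textbf{Your stated justification fails.} The lattice you name, $\ZZ^{\varphi m}\cap\ker\Psi(\vecz^\tr)$, is what a two-step ``first transform, then convolve'' argument would need. That argument does not work under the lemma's hypotheses. Take $m=2$, $\vecz=(1,N)$ with large $N\in\ZZ$, $\kappa=2N$ and $s=2\eta_\epsilon(\ring^m)$; the hypothesis \cref{eq:convolution-bound} holds. Yet that kernel lattice $\set{(-N\vecu,\vecu):\vecu\in\ZZ^\varphi}$ is isometric to $\sqrt{1+N^2}\,\ZZ^\varphi$, so its smoothing parameter is far above $s$. Accordingly, $\Psi(\vecz^\tr)\vecy=\vecy_1+N\vecy_2$ concentrates near $N\ZZ^\varphi$ and is far from $\DD_{\ZZ^\varphi,s\sqrt{1+N^2}}$; only the added $\vecf$ smooths it out. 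For the same reason, ``smoothing passes to sublattices'' is not a valid principle: this sublattice has a much larger smoothing parameter than $\ZZ^{2\varphi}$. Drop that remark and replace it with the stacking argument above.
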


\begin{proof}
    In this proof we use the shorthands 
    \[
        \matM \coloneqq \Psi(\vecz^\tr), \quad
        \mu \coloneqq \norm{\matM}_{2,2}, \quad
        \Sigma \coloneqq \Sigma_\vecz, \quad 
        \eta \coloneqq \eta_\epsilon(\ring^m) = \eta_\epsilon(\ZZ^{\varphi m}). 
    \]
    The bijection $\psi$ preserves statistical distances and maps the distributions
    \[
        \DD_{\ring,s}^m,~ \DD_{\ring,\sqrt{\Sigma}},~ \DD_{\ring,\kappa s},~ \vecz^\tr \vecy
        \quad\text{to}\quad 
        \DD_{\ZZ^{\varphi m},s},~ \DD_{\ZZ^{\varphi},\sqrt{\Sigma}},~ \DD_{\ZZ^{\varphi},\kappa s},~ \matM \cdot \psi(\vecy)
    \]
    respectively, so it suffices to bound the distance between $\ddv \coloneqq \{\matM \vecy + \vecf \mid \vecy \sample \DD_{\ZZ^{\varphi m},s},~\vecf \sample \DD_{\ZZ^{\varphi},\sqrt{\Sigma}}\}$ and $\DD_{\ZZ^{\varphi},\kappa s}$.

    We apply \cite[Theorem 3.1 (eprint)]{PKC:GMPW20}, for which we express the distribution $\ddv$ as a linear image of one discrete Gaussian vector.
    Let $\Lambda \coloneqq \ZZ^{\varphi m+\varphi}$, $\matS \coloneqq \diag{s \cdot \matI_{\varphi m}, \sqrt{\Sigma}}$ and $\matT \coloneqq (\matM \mid \matI_{\varphi}) \in \ZZ^{\varphi \times (\varphi m+\varphi)}$.
    Since $\matS\matS^\tr = \diag{s^2\matI_{\varphi m}, \Sigma}$ is block diagonal, a sample $\vecx \sample \DD_{\Lambda,\matS}$ is $\vecx = \begin{bmatrix} \vecy \\ \vecf \end{bmatrix}$ with independent $\vecy \sample \DD_{\ZZ^{\varphi m},s}$ and $\vecf \sample \DD_{\ZZ^{\varphi},\sqrt{\Sigma}}$, and $\matT\vecx = \matM\vecy + \vecf$.
    Hence $\matT \cdot \DD_{\Lambda,\matS}$ is exactly the distribution of $\matM\vecy + \vecf$ in $\ddv$, and it remains to check the requirements of \cite[Theorem 3.1 (eprint)]{PKC:GMPW20}:
    \begin{itemize}
        \item $\matS$ has full column rank and $\Lambda \subseteq \Span(\matS) = \RR^{\varphi m + \varphi}$.
        \item $\ker(\matT) = \set{(\vecu,-\matM \vecu) : \vecu \in \RR^{\varphi m}}$ is spanned by $\Lambda \cap \ker(\matT) = \set{(\vecu,-\matM\vecu) : \vecu \in \ZZ^{\varphi m}}$, using $\matM \in \ZZ^{\varphi \times \varphi m}$, i.e.~$\ker(\matT)$ is a $\Lambda$-subspace.
        \item $\eta_\epsilon(\Lambda \cap \ker(\matT)) \leq \matS$.
        Recall that for matrices $\matS' \leq \matT'$ means $\matS'\matS'^\tr \preceq \matT'\matT'^\tr$ \cite{PKC:GMPW20}.
        Since $\Lambda \cap \ker(\matT) = \matC\ZZ^{\varphi m}$ for $\matC \coloneqq \begin{bmatrix} \matI_{\varphi m} \\ -\matM \end{bmatrix}$, by \cite[Definition 2.4 (eprint)]{PKC:GMPW20} the condition is same as 
        \[
            \eta_\epsilon(\matB\ZZ^{\varphi m}) \leq 1
            \quad\text{where}\quad
            \matB \coloneqq \matS^{-1}\matC = \begin{bmatrix} s^{-1}\matI_{\varphi m} \\ -\sqrt{\Sigma}^{-1}\matM \end{bmatrix}.
        \]
        As $\eta_\epsilon(\ZZ^{\varphi m}) = \eta$, \cite[Lemma 2.5 (eprint)]{PKC:GMPW20} gives $\eta_\epsilon(\matB\ZZ^{\varphi m}) \leq \eta\matB$, so it suffices to show $\eta\matB \leq \matI_{\varphi m+\varphi}$, i.e.\ $\eta^2\matB^\tr\matB \preceq \matI_{\varphi m}$.
        Write $\matX \coloneqq \matM^\tr\matM$, whose eigenvalues lie in $[0,\mu^2]$, with $\mu < \kappa$ by \eqref{eq:convolution-bound}.
        Using $\Sigma^{-1} = s^{-2}(\kappa^2\matI_\varphi - \matM\matM^\tr)^{-1}$ and $\matM^\tr(\kappa^2\matI_\varphi - \matM\matM^\tr)^{-1} = (\kappa^2\matI_{\varphi m} - \matX)^{-1}\matM^\tr$ (the ``push-through identity''),
        \begin{align*}
            \matB^\tr\matB
            &= s^{-2}\matI_{\varphi m} + \matM^\tr\Sigma^{-1}\matM
             = s^{-2}\bigl(\matI_{\varphi m} + (\kappa^2\matI_{\varphi m} - \matX)^{-1}\matX\bigr) \\
            &= \frac{\kappa^2}{s^2}(\kappa^2\matI_{\varphi m} - \matX)^{-1}
             \preceq \frac{\kappa^2}{s^2(\kappa^2-\mu^2)}\matI_{\varphi m}
             \preceq \eta^{-2}\matI_{\varphi m},
        \end{align*}
        where the last step is \eqref{eq:convolution-bound} squared and rearranged.
    \end{itemize}
    Therefore $\matT \cdot \DD_{\Lambda,\matS}$ and $\DD_{\matT\Lambda,\matT\matS}$ are within relative error $2\epsilon/(1-\epsilon)$, hence within statistical distance $\epsilon/(1-\epsilon) \leq 2\epsilon$ using $\epsilon \leq 1/2$.
    Finally $\matT \Lambda = \ZZ^{\varphi}$ and $(\matT\matS)(\matT\matS)^\tr = s^2 \matM\matM^\tr + \Sigma = \kappa^2 s^2 \matI_{\varphi}$, so $\DD_{\matT\Lambda,\matT\matS} = \DD_{\ZZ^{\varphi},\kappa s}$.
    \qed
\end{proof}

The next lemma, adapted from \cite{EPRINT:HKMNS26}, shows that fully-adaptive Leaky LWE is implied by the semi-adaptive counterpart. 

\begin{lemma}[Adapted from proof of {\cite[Theorem 1]{EPRINT:HKMNS26}}]\label{lem:LLWE-implies-AdpLLWE}
    Let $\ring, n, m, Q, q, \epsilon, \alpha, \kappa, \bar{\chi}, \chi, \ddv$ be parametrised by $\secpar$, where
    $\ring = \ZZ[\zeta]$ has ring degree $\varphi$ and field expansion factor $\expansion_2$ w.r.t. the coefficient $\ell_2$-norm,
    and $n,m,Q,q \in \NN$,
    $\ddv \sim \ring_q^{n \times m}$,
    $0 < \epsilon \leq 1/2$,
    and $\alpha \geq 1$, $\kappa, \bar{\chi}, \chi >0$ satisfy
    \begin{align}\label{eq:rerand-condition}
        \expansion_2 \cdot \alpha \leq \kappa \cdot \sqrt{1 - \kappa^2 \cdot \eta_\epsilon(\ring^m)^2/\chi^2}.
    \end{align}
    Define the parameters
    \begin{align*}
        \params_L = (\ring,n,m,mQ,q,\bar{\chi},\chi/\kappa,\cL_L,\ddv) 
        &&\text{and}&&
        \params_A = (\ring,n,m,Q,q,\bar{\chi},\chi,\cL_A,\ddv)
    \end{align*}
    where
    \(
    \cL_L = \{ \matZ \in \ring^{m \times mQ} : \norm{\matZ^\tr}_{2,2} \leq \sqrt{Q} \}\) and \(
    \cL_A = \{ \vecz \in \ring^{m} : \norm{\vecz}_2 \leq \alpha \}
    \).
    $\forall$ PPT $\adv$, $\exists$ PPT $\bdv$ such that
    \begin{align*}
        \Adv_{\params_A,\adv}^{\Adp\LLWE{}}(\secpar)
        \leq \Adv_{\params_L,\bdv}^{\LLWE{}}(\secpar) + 4 \cdot Q \cdot \epsilon.
    \end{align*}
    In particular, if $Q \in \poly$, $\epsilon \in \negl$,
    and the $\LLWE{\ring,n,m,mQ,q,\bar{\chi},\chi/\kappa,\cL_L,\ddv}$ assumption holds for some $\kappa$ satisfying \cref{eq:rerand-condition},
    then the $\Adp\LLWE{\ring,n,m,Q,q,\bar{\chi},\chi,\cL_A,\ddv}$ assumption holds.
\end{lemma}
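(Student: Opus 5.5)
The plan is a reduction $\bdv$ against semi-adaptive $\LLWE{\params_L}$ that simulates the adaptive leakage oracle by answering query $i$ through rerandomisation of a precomputed non-adaptive leakage. The idea follows \cite{EPRINT:HKMNS26}: before seeing any query, $\bdv$ commits to a fixed leakage matrix built from identity blocks. Concretely, $\bdv$ chooses $\matZ \coloneqq (\matI_m,\ldots,\matI_m) \in \ring^{m \times mQ}$, i.e.~$Q$ copies of $\matI_m$. Then $\matZ\matZ^\tr = Q\cdot\matI_m$, so $\norm{\matZ^\tr}_{2,2} = \sqrt{Q}$ and $\matZ \in \cL_L$.

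The semi-adaptive challenger then returns $(\matA,\vecb)$ together with $Q$ vectors $\vecl_i = \barvec{e} + \vecy_i \in \ring^m$, where each $\vecy_i \sample \DD_{\ring,\chi/\kappa}^m$ is independent. Each $\vecl_i$ is a fresh noisy copy of the LWE error $\barvec{e}$. $\bdv$ runs $\adv(\matA,\vecb)$ and answers the $i$-th query $\vecz_i$ (with $\norm{\vecz_i}_2\le\alpha$) by
\[
\ell_i \coloneqq \vecz_i^\tr \vecl_i + f_i = \barvec{e}^\tr\vecz_i + \bigl(\vecz_i^\tr\vecy_i + f_i\bigr),
\]
where $f_i \sample \DD_{\ring,\sqrt{\Sigma_{\vecz_i}}}$ and $\Sigma_{\vecz_i}$ is the covariance from \cref{lem:gaussian-convolution} with $s=\chi/\kappa$. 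Finally, $\bdv$ outputs whatever $\adv$ outputs. Since $\vecb$ is either real or uniform exactly as in the target game, the only discrepancy is the distribution of the noise term $\vecz_i^\tr\vecy_i + f_i$.

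To control that noise term, I invoke \cref{lem:gaussian-convolution} with $s = \chi/\kappa$ and target width $\kappa s = \chi$. This requires two conditions.
\begin{itemize}
\item The width must exceed smoothing, i.e.~$s > \eta_\epsilon(\ring^m)$. This follows from \cref{eq:rerand-condition}, since $\alpha\ge1$ forces $\kappa\,\eta_\epsilon/\chi < 1$.
\item The convolution bound is $\norm{\Psi(\vecz_i^\tr)}_{2,2} \le \kappa\sqrt{1-\eta_\epsilon^2\kappa^2/\chi^2}$. I would bound the left side by $\expansion_2 \norm{\vecz_i}_2 \le \expansion_2\alpha$. This uses that $\Psi(\vecz^\tr)\psi(\vecy) = \psi(\vecz^\tr\vecy)$, and $\norm{\vecz^\tr\vecy}_2 \le \sum_j \expansion_2 \norm{z_j}_2\norm{y_j}_2 \le \expansion_2\norm{\vecz}_2\norm{\vecy}_2$ by Cauchy--Schwarz. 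Then \cref{eq:rerand-condition} gives exactly the required inequality.
\end{itemize}
Given both conditions, the lemma shows that $\vecz_i^\tr\vecy_i + f_i$ is within statistical distance $2\epsilon$ of $\DD_{\ring,\chi}$, the fresh noise in the real $\Leak\oracle$.

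The main obstacle is adaptivity: $\vecz_i$ depends on earlier answers, which are themselves correlated with $\barvec{e}$. The key observation for handling it is that $\vecy_i$ is used only in answer $i$ and is independent of everything $\adv$ has seen before query $i$. The statistical bound therefore holds conditionally on any fixed prefix and any fixed $\barvec{e}$. I would then run a hybrid over $i=1,\ldots,Q$, replacing one simulated noise term at a time by fresh $\DD_{\ring,\chi}$ noise, with each hop costing at most $2\epsilon$. The resulting $2Q\epsilon$ loss is incurred once in the $b=0$ world and once in the $b=1$ world. The triangle inequality then yields $\Adv^{\Adp\LLWE{}} \le \Adv^{\LLWE{}}_{\params_L,\bdv} + 4Q\epsilon$.

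The remaining points are routine. $\bdv$ is PPT provided the covariance $\Sigma_{\vecz_i}$ can be sampled efficiently, which I would assume via standard non-spherical Gaussian sampling (or allow statistical slack). The query-count check $i<Q$ is mirrored by $\bdv$ using only the $Q$ prepared blocks $\vecl_i$.
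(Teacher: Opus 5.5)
Your proposal is correct and is essentially the paper's proof. The paper uses the same reduction with $\matZ = (\matI_m|\cdots|\matI_m)$ and answers each query as $\vecz_i^\tr\vecl_i + f_i$, applying \cref{lem:gaussian-convolution} with $s=\chi/\kappa$ one query at a time for a loss of $2Q\epsilon$ in each of the $b=0$ and $b=1$ worlds. The only difference is packaging: the paper presents this as explicit hybrids $\hyb_0\to\hyb_1\to\hyb_2\to\hyb_3$ and invokes the reduction only for the middle step.
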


\begin{proof}
    Consider the following hybrid experiments, where $\Sigma_\vecz \coloneqq (\chi/\kappa)^2 (\kappa^2 \matI_\varphi - \Psi(\vecz^\tr) \Psi(\vecz^\tr)^\tr)$:
    \begin{description}
        \item $\hyb_{0}$: The fully-adaptive Leaky LWE experiment (\cref{fig:leaky-lwe}, solid box) with challenge bit $b=0$.
        \item $\hyb_{1}$: Almost identical to $\hyb_{0}$, except that the challenger additionally samples $\vece_1,\ldots,\vece_Q \sample \DD_{\ring,\chi/\kappa}^m$, and it answers the $i$-th oracle query $\Leak\oracle(\vecz)$ with $(\barvec{e} + \vece_i)^\tr \cdot \vecz + f$, where $f \sample \DD_{\ring,\sqrt{\Sigma_\vecz}}$ is sampled freshly and dependent on the query $\vecz$. 
        \item $\hyb_{2}$: Almost identical to $\hyb_{1}$, except that the challenger samples the vector $\vecb \sample \ring_q^m$ uniformly randomly.
        \item $\hyb_{3}$: Almost identical to $\hyb_{2}$, except that challenger undoes the change in $\hyb_1$, i.e.~
        each oracle query $\Leak\oracle(\vecz)$ is answered with $\barvec{e}^\tr \cdot \vecz + e$ where $e \sample \DD_{\ring,\chi}$ is a fresh error, as in $\hyb_0$.
    \end{description}

    We have the following:
    \begin{itemize}
        \item By \cref{lem:LLWE-impliesAdpLLWE-convolution}, the advantage in distinguishing $\Hyb_0$ and $\Hyb_1$ is at most $2 Q \epsilon$.
        \item By \cref{lem:LLWE-impliesAdpLLWE-reduction}, the advantage in distinguishing $\hyb_{1}$ and $\hyb_{2}$
        equals $\Adv_{\params_L,\bdv}^{\LLWE{}}(\secpar)$ for some PPT $\bdv$.
        \item By the same argument as for $\Hyb_0$ and $\Hyb_1$, the advantage in distinguishing $\Hyb_2$ and $\Hyb_3$ is again at most $2 Q \epsilon$.
    \end{itemize}
    The proof is completed by noting that $\hyb_3$ is identical to the fully-adaptive Leaky LWE experiment with challenge bit $b=1$, and the claimed bound follows from the triangle inequality.

    \begin{lemma}\label{lem:LLWE-impliesAdpLLWE-convolution}
        For any (computationally unbounded) $\adv$, the advantage in distinguishing $\Hyb_0$ and $\Hyb_1$ is at most $2 Q \epsilon$.
    \end{lemma}

    \begin{proof}
    For $i \in \set{0,1,\ldots,Q}$, let $\hyb_{0,i}$ be identical to $\hyb_0$ except that the first $i$ oracle queries are answered as in $\hyb_1$, so that $\hyb_{0,0} = \hyb_0$ and $\hyb_{0,Q} = \hyb_1$.
    Fix any $i \in [Q]$. 
    The answer to the $i$-th $\Leak\oracle$ query $\vecz$ in $\hyb_{0,i}$ is
    \[
        \hatvec{e}_i^\tr \cdot \vecz + f = \barvec{e}^\tr \cdot \vecz + \left(\vece_i^\tr \cdot \vecz + f\right).
    \]
    By \cref{lem:gaussian-convolution}, applied with $\vecy = \vece_i$ and using $\kappa s = \chi$, the summand $\vece_i^\tr \cdot \vecz + f$ is within statistical distance $2\epsilon$ to $\DD_{\ring,\chi}$, which is the distribution of the corresponding error in $\hyb_{0,i-1}$. As $\hyb_{0,i-1}$ and $\hyb_{0,i}$ are otherwise identical, their statistical distance is at most $2\epsilon$.
    Summing over $i \in [Q]$ yields the claim.

    In the rest we show that the requirements in \cref{lem:gaussian-convolution} are satisfied, so that it can indeed be applied. Let $\vecz \in \cL_A$ be any admissible query and write $\vecz = (z_1,\ldots,z_m)$. For any $\vecx = (x_1,\ldots,x_m) \in \field^m$ we have
    \[
        \norm{\vecz^\tr \vecx}_2 \leq \expansion_2 \cdot \sum_{j \in [m]} \norm{z_j}_2 \cdot \norm{x_j}_2 \leq \expansion_2 \cdot \norm{\vecz}_2 \cdot \norm{\vecx}_2
    \]
    by the triangle inequality and Cauchy--Schwarz, hence
    \[
        \norm{\vecz^\tr}_{2,2} = \norm{\Psi(\vecz^\tr)}_{2,2} \leq \expansion_2 \cdot \norm{\vecz}_2 \leq \expansion_2 \cdot \alpha.
    \]
    The above together with \cref{eq:rerand-condition} imply \cref{eq:convolution-bound} for $s = \chi/\kappa$, i.e. 
    \[
        \norm{\Psi(\vecz^\tr)}_{2,2} \leq \kappa \cdot \sqrt{1 - \kappa^2 \cdot \eta_\epsilon(\ring^m)^2 / \chi^2}
    \]
    holds.
    Clearly \cref{eq:rerand-condition} also implies the other requirement of $\kappa \cdot \eta_\epsilon(\ring^m)/\chi < 1$.    
    \qed
    \end{proof}

    \begin{lemma}\label{lem:LLWE-impliesAdpLLWE-reduction}
        For any PPT $\adv$ there exists a PPT $\bdv$ such that the advantage of $\adv$ in distinguishing $\Hyb_1$ and $\Hyb_2$ is equal to $\Adv_{\params_L,\bdv}^{\LLWE{}}(\secpar)$.
    \end{lemma}

    \begin{proof}
        Given a PPT $\adv$ distinguishing $\Hyb_1$ and $\Hyb_2$, we construct a PPT reduction $\bdv$ against $\LLWE{\params_L}$, as follows:
        \begin{itemize}
        \item Receive $\matA$ from the $\LLWE{}$ challenger, reply with the leakage matrix 
        $\matZ \coloneqq (\matI_m | \cdots | \matI_m) \in \ring^{m \times mQ}$. 
        \item Receive $(\vecb,\vecl)$ from the $\LLWE{}$ challenger, pass $(\matA,\vecb)$ to $\adv$. Also, write $\vecl^\tr = (\vecl_1^\tr,\ldots,\vecl_Q^\tr)$ with each $\vecl_i \in \ring^m$.
        \item Maintain a counter $i \coloneqq 0$. Answer $\adv$'s oracle queries $\Leak\oracle(\vecz)$ as follows:
        \begin{itemize}
            \item Assert $i < Q$ and $\vecz \in \cL_A$, and set $i \coloneqq i+1$.
            \item Sample $f \sample \DD_{\ring,\sqrt{\Sigma_\vecz}}$ and answer with $\vecl_i^\tr \cdot \vecz + f$.
        \end{itemize}
        \item Eventually receive a bit $b'$ from $\adv$, return $b'$.
    \end{itemize}

    The reduction $\bdv$ is admissible as an $\LLWE{\params_L}$ adversary, since
    for any $\vecy \in \field^{m}$ we have $\matZ^\tr \cdot \vecy = (\vecy^\tr,\ldots,\vecy^\tr)^\tr \in \field^{mQ}$, which implies $\norm{\matZ^\tr \cdot \vecy}_2 = \sqrt{Q} \cdot \norm{\vecy}_2$, so $\norm{\matZ^\tr}_{2,2} = \sqrt{Q}$ and $\matZ \in \cL_L$.

    The simulation is perfect: in the $\LLWE{\params_L}$ experiment with challenge bit $b$, the matrix $\matA$ and the error $\barvec{e} \sample \DD_{\ring,\bar{\chi}}^m$ are distributed as in $\hyb_1$ and $\hyb_2$, the vector $\vecb$ is as in $\hyb_1$ if $b=0$ and as in $\hyb_2$ if $b=1$. In either game, by the choice of $\matZ$, 
    \begin{align*}    
        \vecl^\tr 
        =~& \barvec{e}^\tr \cdot \matZ + \vece^\tr \pccomment{$\vece \sample \DD_{\ring,\chi/\kappa}^{mQ}$} \\
        =~& \barvec{e}^\tr \cdot (\matI_m | \cdots | \matI_m) + (\vece_1^\tr | \ldots | \vece_Q^\tr) \pccomment{$(\vece_1^\tr \,|\, \ldots \,|\, \vece_Q^\tr) \coloneqq \vece^\tr$} \\
        =~& (\barvec{e}^\tr + \vece_1^\tr \,|\, \ldots \,|\, \barvec{e}^\tr + \vece_Q^\tr) \\
        \iff \quad 
        \vecl_i =~& \barvec{e} + \vece_i ~~\forall i \in [Q], \pccomment{$\vece_i \sample \DD_{\ring,\chi/\kappa}^{m}$}
    \end{align*}
    hence each $\Leak\oracle(\vecz)$ query answer is $(\barvec{e} + \vece_i)^\tr \cdot \vecz + f$, as wanted.
    Overall, $\bdv$'s answers are exactly those of $\hyb_1$, resp.~$\hyb_2$.
    \qed
    \end{proof}
    \qed
\end{proof}

By \cref{lem:lwe-to-ellwe,lem:LLWE-implies-AdpLLWE}, standard LWE thus implies fully-adaptive Leaky LWE, summarised below.

\begin{corollary}[Adapted from {\cite[Theorem 1]{EPRINT:HKMNS26}}]\label{cor:lwe-to-adpllwe}
    Let $\ring, n, m, Q, q, \epsilon, \alpha, \kappa, \bar{\chi}, \chi, \chi^*, \ddv$ be parametrised by $\secpar$, where
    $\ring = \ZZ[\zeta]$ has ring degree $\varphi$ and field expansion factor $\expansion_2$ w.r.t. the coefficient $\ell_2$-norm,
    $n,m,Q,q \in \NN$,
    $\ddv \sim \ring_q^{n \times m}$,
    $0 < \epsilon \leq 1/2$,
    $\chi^* \geq \sqrt{2} \cdot \eta_\epsilon(\ring^m)$,
    and $\alpha \geq 1$, $\kappa, \bar{\chi}, \chi > 0$ satisfy both
    \begin{align}
        \expansion_2 \cdot \alpha \leq~& \kappa \cdot \sqrt{1 - \kappa^2 \cdot \eta_\epsilon(\ring^m)^2/\chi^2},\label{eq:adpllwe-constraint1} \\
        \sqrt{Q} \leq~& \frac{\chi}{\kappa} \cdot \sqrt{(2 \eta_\epsilon(\ring^m)^2 + (\chi^*)^2)^{-1} - \bar{\chi}^{-2}}.\label{eq:adpllwe-constraint2}
    \end{align}
    Denote
    \[
        \params = (\ring,n,m,q,\chi^*,\ddv), \qquad
        \params_A = (\ring,n,m,Q,q,\bar{\chi},\chi,\cL_A,\ddv),
    \]
    where $\cL_A = \{ \vecz \in \ring^{m} : \norm{\vecz}_2 \leq \alpha \}$.
    $\forall$ PPT $\adv$, $\exists$ PPT $\bdv$ such that
    \begin{align*}
        \Adv_{\params_A,\adv}^{\Adp\LLWE{}}(\secpar)
        \leq \Adv_{\params,\bdv}^{\LWE{}}(\secpar) + 4 \cdot Q \cdot \epsilon + 6 \cdot \epsilon/(1-\epsilon).
    \end{align*}
    In particular, if $Q \in \poly$, $\epsilon \in \negl$ and the $\LWE{\params}$ assumption holds, then the $\Adp\LLWE{\params_A}$ assumption holds.
\end{corollary}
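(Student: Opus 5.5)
This corollary simply chains the two earlier reductions: first \cref{lem:LLWE-implies-AdpLLWE} to pass from fully-adaptive to semi-adaptive Leaky LWE, then \cref{lem:lwe-to-ellwe} to pass from semi-adaptive Leaky LWE to standard LWE. Fix $\kappa$ as in the statement. Condition \cref{eq:adpllwe-constraint1} is verbatim the hypothesis \cref{eq:rerand-condition} of \cref{lem:LLWE-implies-AdpLLWE}, and the remaining hypotheses ($\alpha \geq 1$, $0<\epsilon\leq 1/2$) are assumed directly. That lemma therefore gives, for any PPT $\adv$ against $\Adp\LLWE{\params_A}$, a PPT $\bdv'$ against $\LLWE{\params_L}$ with
\[
\Adv_{\params_A,\adv}^{\Adp\LLWE{}}(\secpar) \leq \Adv_{\params_L,\bdv'}^{\LLWE{}}(\secpar) + 4\cdot Q\cdot\epsilon .
\]
Here $\params_L = (\ring,n,m,mQ,q,\bar{\chi},\chi/\kappa,\cL_L,\ddv)$ and $\cL_L = \{\matZ : \norm{\matZ^\tr}_{2,2}\leq \sqrt{Q}\}$.

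Next I apply \cref{lem:lwe-to-ellwe} to $\params_L$. In that lemma I instantiate the leakage count as $mQ$ and the leakage noise width as $\chi/\kappa$. It requires the leakage set to be $\{\matZ : \norm{\matZ^\tr}_{2,2} \leq \alpha'\}$ with
\[
\alpha' = \frac{\chi}{\kappa}\sqrt{(2\eta_\epsilon(\ring^m)^2+(\chi^*)^2)^{-1}-\bar{\chi}^{-2}} .
\]
Condition \cref{eq:adpllwe-constraint2} says exactly that $\sqrt{Q} \leq \alpha'$, so $\cL_L$ is contained in the lemma's leakage class. Restricting the admissible leakage matrices only weakens the adversary, since any $\bdv'$ that always outputs matrices in $\cL_L$ is also a valid adversary for the larger class. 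Hence $\Adv_{\params_L,\bdv'}^{\LLWE{}}$ is bounded by the advantage against the larger class. \Cref{lem:lwe-to-ellwe} then bounds that by $\Adv_{\params,\bdv}^{\LWE{}}(\secpar) + 6\epsilon/(1-\epsilon)$, with $\params=(\ring,n,m,q,\chi^*,\ddv)$. Its requirement $\chi^*\geq\sqrt{2}\cdot\eta_\epsilon(\ring^m)$ is assumed.

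Summing the two bounds yields the stated inequality. The asymptotic claim follows because $Q\in\poly$ and $\epsilon\in\negl$ make $4Q\epsilon + 6\epsilon/(1-\epsilon)$ negligible.

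The only delicate point is the monotonicity in the leakage class, i.e.\ that $\cL_L\subseteq\cL$ transfers the bound. This is immediate from the definition, since the experiment only asserts membership of $\matZ$ in the class. I also need to confirm that \cref{eq:adpllwe-constraint2} makes the square root real and positive; it does, because otherwise the inequality could not hold with $\sqrt{Q}\geq 1$.
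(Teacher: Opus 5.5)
Your proposal is correct and matches the paper's own proof essentially step for step. \Cref{eq:adpllwe-constraint1} is used verbatim as the hypothesis of \cref{lem:LLWE-implies-AdpLLWE}. \Cref{lem:lwe-to-ellwe} is instantiated with leakage count $mQ$ and leakage width $\chi/\kappa$, and \cref{eq:adpllwe-constraint2} gives the inclusion of $\cL_L$ in that lemma's leakage class, so the losses add up to $4Q\epsilon + 6\epsilon/(1-\epsilon)$.
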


\begin{proof}
    Follows from putting \cref{lem:lwe-to-ellwe,lem:LLWE-implies-AdpLLWE} together. In particular, we invoke \cref{lem:lwe-to-ellwe} 
    with $\LLWE{}$ parameters $\params' = (\ring,n,m,mQ,q,\bar{\chi},\chi/\kappa,\cL',\ddv)$, 
    where $\cL' = \{ \matZ \in \ring^{m \times mQ} : \norm{\matZ^\tr}_{2,2} \leq \alpha' \}$ with $\alpha' = \frac{\chi}{\kappa} \cdot \sqrt{(2 \eta_\epsilon(\ring^m)^2 + (\chi^*)^2)^{-1} - \bar{\chi}^{-2}}$.
    
    Since $\cL' \supseteq \cL_L \coloneqq \{ \matZ \in \ring^{m \times mQ} : \norm{\matZ^\tr}_{2,2} \leq \sqrt{Q} \}$ by \cref{eq:adpllwe-constraint2}, $\LLWE{}$ with $\params'$ implies
    $\LLWE{}$ with $\params_L = (\ring,n,m,mQ,q,\bar{\chi},\chi/\kappa,\cL_L,\ddv)$, the latter being the same as that used in \cref{lem:LLWE-implies-AdpLLWE}. 

    \cref{eq:adpllwe-constraint1} is identical to \cref{eq:rerand-condition} required by \cref{lem:LLWE-implies-AdpLLWE}.
    \qed
\end{proof}

\begin{remark}[On choosing $\kappa$ for \cref{lem:LLWE-implies-AdpLLWE,cor:lwe-to-adpllwe}]\label{rem:llwe-kappa}
Keeping everything else unchanged, a smaller $\kappa$ in \cref{lem:LLWE-implies-AdpLLWE,cor:lwe-to-adpllwe} is desired, 
since it leads to a larger error parameter $\chi/\kappa$ for $\LLWE{}$ in \cref{lem:LLWE-implies-AdpLLWE}, i.e. a harder $\LLWE{}$ problem, 
and it makes \cref{eq:adpllwe-constraint2} in \cref{cor:lwe-to-adpllwe} more easily satisfied.

Writing $\eta \coloneqq \eta_\epsilon(\ring^m)$, 
a $\kappa$ satisfying \cref{eq:adpllwe-constraint1} exists if and only if 
\begin{align}\label{eq:adpllwe-mod1}
    \chi \geq 2 \cdot \expansion_2 \cdot \alpha \cdot \eta, 
\end{align}
in which case any $\kappa$ such that $\kappa^2 \in [\kappa_-^2, \kappa_+^2]$ is admissible, where
\(
    \kappa_\pm^2 \coloneqq \frac{\chi^2}{2\eta^2}\left(1 \pm \sqrt{1 - (2 \cdot \expansion_2 \cdot \alpha \cdot \eta/\chi)^2}\right)%
\)%
.\footnote{Squaring \cref{eq:adpllwe-constraint1} gives $\kappa^4 \cdot \eta^2/\chi^2 - \kappa^2 + \expansion_2^2 \cdot \alpha^2 \leq 0$, a quadratic in $\kappa^2$ whose roots are the stated $\kappa_\pm^2$. The latter are real if and only if $2 \cdot \expansion_2 \cdot \alpha \cdot \eta/\chi \leq 1$.}
For the minimum $\kappa_-$, it holds that 
\[
    \expansion_2 \cdot \alpha \leq \kappa_- \leq \sqrt{2}\cdot \expansion_2 \cdot \alpha,
\]
which follows from 
$\kappa_-^2 \cdot (1 - \kappa_-^2 \cdot \eta^2/\chi^2) = \expansion_2^2 \cdot \alpha^2$ and from $1 - \sqrt{1-t} \leq t$ for $t \in [0,1]$, the latter giving $\kappa_-^2 \leq \frac{\chi^2}{2\eta^2}\cdot \frac{4 \cdot \expansion_2^2 \cdot \alpha^2 \cdot \eta^2}{\chi^2}$.

A sufficient condition for \cref{eq:adpllwe-constraint2} at $\kappa = \kappa_-$ is thus 
\begin{align}\label{eq:adpllwe-mod2}
    \sqrt{2Q} \cdot \expansion_2 \cdot \alpha \leq \chi \cdot \sqrt{(2 \eta^2 + (\chi^*)^2)^{-1} - \bar{\chi}^{-2}}.
\end{align}
In the latter sections where we invoke \cref{cor:lwe-to-adpllwe}, 
we will always directly require \cref{eq:adpllwe-mod2} instead of \cref{eq:adpllwe-constraint1,eq:adpllwe-constraint2}. This is sufficient, since \cref{eq:adpllwe-mod2} and the condition $\chi^* \geq \sqrt{2} \cdot \eta_\epsilon(\ring^m)$ jointly imply \cref{eq:adpllwe-mod1}%
\footnote{Since $\chi^* \geq \sqrt{2} \cdot \eta$, the radicand in \cref{eq:adpllwe-mod2} is at most $(2\eta^2 + 2\eta^2)^{-1} = (4\eta^2)^{-1}$, so the right-hand side of \cref{eq:adpllwe-mod2} is at most $\chi/(2\eta)$. Hence \cref{eq:adpllwe-mod2} gives $\chi \geq 2 \cdot \sqrt{2Q} \cdot \expansion_2 \cdot \alpha \cdot \eta$, which implies \cref{eq:adpllwe-mod1} since $Q \geq 1$.}%
, and \cref{eq:adpllwe-mod1,eq:adpllwe-mod2} in turn imply that \cref{eq:adpllwe-constraint1,eq:adpllwe-constraint2} hold for the choice $\kappa = \kappa_-$ by the above.
\end{remark}

\begin{remark}
    The semi-adaptive Leaky LWE in \cite{CiC:LaiSwaWoo25} more generally supports bounded-norm leakage also on the LWE secret $\vecr$, and admits a reduction from LWE when $\vecr$ is distributed as Gaussian. 
    The adaptive version can be generalised in an analogous way, where the oracle $\Leak\oracle$ returns leakages of the form $(\vecr^\tr,\barvec{e}^\tr)\cdot \vecz + e$. 
    When $\vecr$ is Gaussian vector, \cref{lem:LLWE-implies-AdpLLWE,cor:lwe-to-adpllwe} can be generalised analogously. 
\end{remark}


\subsection{Utility Lemma}\label{sec:th-LWE-variant}

This subsection is devoted to proving that the two experiments
$\mathsf{Real}$ and $\mathsf{Ideal}$
in \cref{fig:exp-th-LWE} are computationally indistinguishable under two $\LWE{}$ assumptions,
one with respect to the tag distribution $\ddvt$ and one with respect to the public matrix distribution $\ddvA$.
This is formally given in \cref{lem:th-LWE-variant}.
The majority of its proof is a generalisation of \cite[Lemma 10, Eprint]{AC:CinLaiWoo25}.


\begin{figure}[t!]
    \begin{pchstack}[center,boxed]
        {
        \begin{pcvstack}
            \procedure{$\mathsf{Real}_{\params, \adv}(\secparam)$}
            {
                \LShare\coloneqq \emptyset;~~ \Lquery \coloneqq \emptyset\\
                \mat{A} \sample \ddvA \\
                (\cdv, \vecz) \gets \adv\tuple{\mat{A}} ;~ \pcassert |\cdv| < t \\
                \mat{R} \sample \ring_q^{t \times n} \\
                \mat{S}_\cdv \coloneqq \mat{V}_\cdv \cdot \mat{R} \bmod q  \in \ring_q^{\cdv \times n} \\
                \vecr \coloneqq \vec{v}_0^\transpose \cdot \mat{R} \bmod q \\
                \vec{e} \sample \DD_{\ring,\bar{\chi}}^m \\
                \vec{b}^\transpose \coloneqq \vecr^\transpose \mat{A} + \vec{e}^\transpose + \vecz^\transpose \bmod q \\
                \\
                b' \gets \adv^{\Gen\oracle,\Share\oracle_0}\tuple{\vec{b}, \xi \cdot \mat{S}_\cdv} \\
                \pcassert |\LShare| \leq Q \pccomment{at most $Q$ shared $\mat{A}\vec{x}$} \\
                \pcreturn b'
            }
            \pcvspace
            \procedure{$\Gen\oracle(\id,\tildevec{c}) \pccomment{$\tildevec{c} \in \ring^{(m-m_t)}$}$}
            {
                \pcif \Lquery[\id] = \emptyset \pcthen \\
                \pcind \pcassert \norm{\tildevec{c}} \leq B_c \\
                \pcind \tildevec{t} \sample \ddvt ;~ 
                \vecx \coloneqq \begin{bmatrix} \tildevec{c} \pclb \tildevec{t} \end{bmatrix} \\
                \Lquery[\id] \coloneqq \vecx \\
                \pcreturn \vecx \gets \Lquery[\id]
            }
            \pcvspace
            \procedure{$\Share\oracle_b(\id, k \in [K])$}
            {
                \pcif \LShare[\id] = \emptyset \pcthen \\
                \pcind \pcassert \Lquery[\id] \neq \emptyset ;~ 
                \pcparse \vecx \gets \Lquery[\id] \\
                \pcind \pcif b = 0 \pcthen \\
                \pcind[2] \vec{e}_\vecx \sample \DD_{\ring,\chisub}^K ;~ 
                    \vecd \coloneqq \xi \cdot \mat{V} \mat{R} \mat{A}\vecx + \vec{e}_\vecx \bmod q \\
                \pcind \pcif b = 1 \pcthen \\
                \pcind[2] (\vecd, \state) \gets \sdv_{\mathsf{Share}}(\vecx, \state) \\
                \pcind \LShare[\id] \coloneqq \vecd \\
                \pcparse \vecd \gets \LShare[\id] \in \ring_q^{K} \\
                \pcreturn d_k
            }
        \end{pcvstack}
        \pchspace
        \begin{pcvstack}
            \procedure{$\mathsf{Ideal}_{\params, \adv}(\secparam)$}
            {
                \LShare\coloneqq \emptyset;~~ \Lquery \coloneqq \emptyset\\
                \mat{A} \sample \ddvA \\
                (\cdv,\vecz) \gets \adv\tuple{\mat{A}} ;~ \pcassert |\cdv| < t \\
                \\
                \mat{S}_\cdv \sample \ring_q^{\cdv \times n} \\
                \\
                \\
                \vec{b} \sample \ring_q^m \\
                \state \gets \sdvinit(\mat{A},\vec{b},\mat{S}_\cdv,\vecz) \pccomment{State of simulator $\sdv$} \\
                b' \gets \adv^{\Gen\oracle,\Share\oracle_1}\tuple{\vec{b}, \xi \cdot \mat{S}_\cdv} \\
                \pcassert |\LShare| \leq Q \pccomment{at most $Q$ shared $\mat{A}\vec{x}$} \\
                \pcreturn b'
            }
            \pcvspace
            \procedure{$\sdvinit(\mat{A},\vec{b},\mat{S}_\sdv,\vecz)$}
            {   
                \vec{e} \sample \DD_{\ring,\bar{\chi}}^{m} ;~
                \hatvec{b} \coloneqq \vec{b}-\vec{e} - \vecz \\
                \pcreturn \state \coloneqq (\mat{A},\vec{b},\mat{S}_\sdv, \hatvec{b})
            }
            \pcvspace
            \procedure{$\sdv_{\mathsf{Share}}(\vecx,\state)$}
            {   
                \vec{e}_\vecx \sample \DD_{\ring,\chi}^{K} \\
                \vecw_\vecx \sample \ring_q^{t-|\cdv|-1} \\
                \vecd \coloneqq  
                {\begin{bNiceArray}{c|c|c}  
                    \Block{1-3}{\mat{V}_\hdv \cdot \mat{T}_\cdv}  \pclb\hline 
                    \quad &\quad & \xi\mat{I}_\cdv \end{bNiceArray}}
                {\begin{bmatrix}
                    \hatvec{b}^\transpose \vecx \pclb \vecw_\vecx \pclb \mat{S}_\cdv \mat{A} \vecx 
                \end{bmatrix}} + \vec{e}_\vecx \bmod q \\
                \pcreturn (\vecd, \state)
            }        
        \end{pcvstack}    
        }
    \end{pchstack}
\caption{$\mathsf{Real}_{\params,\adv}$ and $\mathsf{Ideal}_{\params,\adv}$ 
with $\params = ((\ring,n,m,q,Q,\bar{\chi},\chi,\ddvA,\ddvt,B_c),(t,K,\Xi))$. 
}\label{fig:exp-th-LWE}
\end{figure}

\begin{lemma}\label{lem:th-LWE-variant}
    Let $\params = ((\ring,n,m,q,Q,\bar{\chi},\chi,\ddvA,\ddvt,B_c),(t,K,\Xi))$ be parametrised by $\secpar$,
    where $\ring = \ZZ[\zeta]$ has ring degree $\varphi$ and field expansion factor $\expansion_p$ w.r.t. the coefficient $\ell_p$-norm, 
    $\Xi$ is a size-$K$ $\slack$-subtractive set with quality measurements $(\rho,\gamma,\theta) = (\rho(t),\gamma(t),\theta(t))$ (\cref{def:subtractive-sets}), 
    $\ddvA$ and $\ddvt$ are distributions over $\ring_q^{n \times m}$ and $\ring^{m_t}$ respectively with $m_t < m$, and $B_c >0$.
    Let $B_t >0$ be such that 
    $\Pr(\norm{\tildevec{t}} \leq B_t | \tildevec{t} \sample \ddvt) = 1$.
    
    Consider the experiments $\mathsf{Real}_{\params,\adv}$ and $\mathsf{Ideal}_{\params,\adv}$ in \cref{fig:exp-th-LWE},
    and define the advantage bound
    \begin{align*}
        \Adv_{\params,\adv}^{\mathsf{RealIdeal}}(\secpar) \coloneqq \abs{ \prob{\mathsf{Real}_{\params, \adv}(\secparam) = 1} - \prob{\mathsf{Ideal}_{\params, \adv}(\secparam) = 1}}.
    \end{align*} 
    Suppose $q$ and $\xi$ are coprime, $0 < \epsilon \leq 1/2$, and the noise parameters $\chi_1^*, \chi_2^*$ satisfy both of the following:
    \begin{itemize}
        \item $\chi_1^* \geq \sqrt{2} \cdot \eta_\epsilon(\ring^t)$ is such that there exists $\hat{\chi} >0$ satisfying
            \[
                \gamma \leq \chi \cdot \sqrt{(2 \eta_\epsilon(\ring^t)^2 + (\chi_1^*)^2)^{-1} - \hat{\chi}^{-2}};
            \]
        \item $\chi_2^* \geq \sqrt{2} \cdot \eta_\epsilon(\ring^m)$,
            $\gamma \cdot \sqrt{\varphi m} \cdot \max(B_c,B_t) \geq 1$, and, with $\tensorexpansion{t} \coloneqq 1$ if $\conductor$ is a power of two and $\tensorexpansion{t} \coloneqq \sqrt{t}$ otherwise,
            \begin{align}\label{eq:th-LWE-variant-constraint}
                \sqrt{2 Q K \cdot \varphi m} \cdot \expansion_2 \cdot \tensorexpansion{t} \cdot \gamma \cdot \max(B_c,B_t) \leq \chi \cdot \sqrt{(2 \eta_\epsilon(\ring^m)^2 + (\chi_2^*)^2)^{-1} - \bar{\chi}^{-2}}.
            \end{align}
    \end{itemize}
    Then, for any PPT $\adv$, there exist PPT $\bdv_1,\bdv_2$ such that
    \begin{align*}
        \frac{1}{t} \cdot\Adv_{\params,\adv}^{\mathsf{RealIdeal}}(\secpar)
        \leq~&
        \Adv_{(\ring,m_t,\poly,q,\chi_1^*,\ddvt),\bdv_1}^{\LWE{}}(\secpar)
        +\Adv_{(\ring,n,m,q,\chi_2^*,\ddvA),\bdv_2}^{\LWE{}}(\secpar) \\
        &+ \frac{6 \cdot Q \cdot \epsilon}{t(1-\epsilon)}
        + 4 \cdot Q K \cdot \epsilon
        + \frac{6 \cdot \epsilon}{1-\epsilon}.
    \end{align*}
    In particular, if $QK \in \poly$, $\epsilon \in \negl$, and the $\LWE{\ring,m_t,\poly,q,\chi_1^*,\ddvt}$ and $\LWE{\ring,n,m,q,\chi_2^*,\ddvA}$ assumptions hold, then
    the experiments $\mathsf{Real}_{\params,\adv}$ and $\mathsf{Ideal}_{\params,\adv}$ are computationally indistinguishable.
\end{lemma}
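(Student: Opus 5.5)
The proof will be a sequence of $t$ hybrid steps (explaining the $1/t$ normalisation), closely following \cite[Lemma 10, Eprint]{AC:CinLaiWoo25}. Each step costs one invocation of $\LWE{}$ w.r.t.\ $\ddvt$, one of $\LWE{}$ w.r.t.\ $\ddvA$ (via fully-adaptive Leaky LWE, \cref{cor:lwe-to-adpllwe}), and statistical terms.

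The first move is an exact rewriting of $\mathsf{Real}$. Write $\matR$ through the invertible change of basis given by $\xi^{-1}\matT_\cdv$ and $\matV_{\cdv^*}$, using $q,\xi$ coprime. Then $\xi\matV\matR\matA\vecx$ is expressed as the block matrix $\bigl[\matV_\hdv\matT_\cdv;\,(0\,|\,0\,|\,\xi\matI_\cdv)\bigr]$ applied to $\bigl(\vecr^\tr\matA\vecx,\ \matR'\matA\vecx,\ \matS_\cdv\matA\vecx\bigr)$. Here $\matR'\in\ring_q^{(t-|\cdv|-1)\times n}$ collects the shares of the filler parties in $\cdv^*\setminus(\{0\}\cup\cdv)$; it is uniform and independent of $\vecr$ and $\matS_\cdv$. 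Since $\vecb-\vece-\vecz=\vecr^\tr\matA$, the first coordinate already equals $\hatvec{b}^\tr\vecx$ as in $\sdv_{\mathsf{Share}}$.

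What remains is to replace the $t$ rows of the ``secret'' part $(\vecr,\matR')$, one row at a time, while the errors $\vece_\vecx$ stay fixed.
\begin{itemize}
\item For the $t-|\cdv|-1$ rows of $\matR'$: the value $\matR'\matA\vecx$ must become the uniform $\vecw_\vecx$. Split $\vecx=(\tildevec{c},\tildevec{t})$. The tag part $\matR'\matA_t\tildevec{t}$ is an $\LWE{}$ sample with public vector $\tildevec{t}\sample\ddvt$ and secret $\matA_t^\tr\vecr'_i$. To get $\LWE{}$ noise in it, first decompose the flooding noise $\vece_\vecx$ by Gaussian convolution (\cref{lem:gaussian-convolution}): $\matV_\hdv\matT_\cdv$ applied to a $\chi_1^*$-Gaussian vector in $\ring^t$, plus a residual Gaussian. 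This is exactly what the first condition on $\gamma$ permits, costing $O(\epsilon/(1-\epsilon))$ per query. Then $\LWE{\ring,m_t,\poly,q,\chi_1^*,\ddvt}$ makes each such row uniform, and adding the content-dependent part $\tildevec{c}$ keeps it uniform.
\item For the row $\vecr$: we need $\vecb$ uniform while shares stay consistent. Use the $\Adp\LLWE{}$ game of \cref{def:leakyLWE} with secret $\vecr$ and error $\vece$. The coupling between $\vecb$ and the $\hdv$-shares arises through $\vece^\tr\vecx$. Rewriting $\vecr^\tr\matA\vecx=(\vecb-\vece-\vecz)^\tr\vecx$ moves the dependence onto leakages $\vece^\tr(\vecx\cdot\text{coeff}_k)+e_k$, one per query per honest party. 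This gives at most $QK$ leakage queries, each of norm at most $\sqrt{\varphi m}\cdot\tensorexpansion{t}\cdot\gamma\cdot\max(B_c,B_t)$, which must fit into the $\ell_2$ bound $\alpha$ of $\cL_A$. \cref{eq:th-LWE-variant-constraint}, via the sufficient condition \cref{eq:adpllwe-mod2} of \cref{rem:llwe-kappa} with $Q\leftarrow QK$, is precisely what is needed here. \cref{cor:lwe-to-adpllwe} then yields the terms $\Adv^{\LWE{}}_{\params_2}$, $4QK\epsilon$ and $6\epsilon/(1-\epsilon)$.
\end{itemize}
The adaptive oracle $\Gen\oracle$ is harmless because the reduction just samples $\tildevec{t}$ itself or receives it from the $\LWE{}$ challenger with unboundedly many samples. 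The final hybrid is syntactically $\mathsf{Ideal}$. Summing $t$ row replacements and the convolution losses gives the bound.

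The main obstacle is the $\vecr$-step. The norm computation must bound the $\ell_2$ norm of the leakage vector $\vecx\otimes(\matV_\hdv\vecu_0)_k$ in terms of $\gamma$ (an $\ell_{2,2}$ operator norm over the field), which needs \cref{lem:induced-norm-embeddings} in the power-of-two case and the extra $\sqrt t$ factor otherwise. The decomposition also has to keep the simulated honest shares exactly equal to the $\sdv_{\mathsf{Share}}$ formula rather than merely close to it. I would first try to route all of $\vece_\vecx$ through a single convolution absorbing both $\chi_1^*$ and the leaky part, and fall back to two separate convolutions if the parameters do not match.
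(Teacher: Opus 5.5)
Your rewriting of $\mathsf{Real}$ is fine. So is your treatment of the single row $\vecr$ by fully-adaptive Leaky LWE, with the reduction sampling $\matR'$ and $\matS_\cdv$ itself. The step for the filler rows $\matR'$, however, does not go through. You treat the tag part of $\matR'\matA\vecx$ as an $\LWE{}$ sample with public vector $\tildevec{t}\sample\ddvt$ and secret $\matA_t^\tr\vecr'_i$. A reduction to $\LWE{\ring,m_t,\poly,q,\chi_1^*,\ddvt}$ must also output, on every query, the content part $(\vecr'_i)^\tr\matA_c\tildevec{c}$ for an adversarially chosen $\tildevec{c}$. It therefore needs $(\vecr'_i)^\tr\matA_c$. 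In the regime of interest ($m-m_t=\hat{n}\lceil\log_\beta q\rceil\geq n$ in every set of \cref{tab:tulva_params}), this vector generically determines $\vecr'_i$, and hence your putative secret $\matA_t^\tr\vecr'_i$. So the challenge secret cannot be embedded. The remark that adding the content-dependent part keeps things uniform presupposes exactly the independence that is missing. In isolation the secret is not even uniform over $\ring_q^{m_t}$ when $m_t>n$, e.g.\ $m_t=54>n=27$ for \texttt{S.3}. Taking $\vecr'_i$ as the secret and $\matA\vecx$ as the public vector fails for the same reason: the adversarial shift $\matA_c\tildevec{c}$ contributes $(\vecr'_i)^\tr\matA_c\tildevec{c}$, which the reduction cannot compute.

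The missing idea is that $\matA$ must be stripped from all $t-|\cdv|$ rows $(\vecr;\matR')$, not only from $\vecr$, \emph{before} the tag assumption is used. The paper does this in one hop (\cref{thm:core1-alt}):
\begin{enumerate}
\item It introduces a fictitious error matrix $\barmat{E}$ for the whole block and sets $\matB=(\vecr;\matR')\matA+\barmat{E}$.
\item It writes the $\matM_1$-part of the shares as $\matM_1\matB\vecx$ minus the leakage $\matM_1\barmat{E}\vecx+\vece$, using symmetry of the Gaussian.
\item It replaces $\matB$ by a uniform matrix via fully-adaptive Leaky LWE for $\matI_t\otimes\matA$, with $K$ leakage vectors $\vecy_k\otimes\vecx$ per query, where $\vecy_k$ are the rows of $\matM_1$.
\end{enumerate}
This hop is where the factor $t$ in front of $\Adv^{\LWE{}}_{\params_2}+4QK\epsilon+6\epsilon/(1-\epsilon)$ comes from, and also the $\sqrt{t}$ inside $\tensorexpansion{t}$. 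Your single-row leakage $(\matV_\hdv\vecu_0)_k\vecx$ produces neither.

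Only afterwards, once $\matR'\matA$ has become a uniform $\matW_m=(\matW_{m,c},\matW_{m,t})$, is $\matW_m\vecx$ switched to the fresh $\vecw_\vecx$ by tag-LWE (\cref{thm:core2-fix}). The secret $\matW_{m,t}$ is now uniform and independent of $\matW_{m,c}$, which the reduction samples itself. In that second step, your Gaussian-convolution splitting of $\vece_\vecx$ is a plausible replacement for the statistical lemma the paper uses there.
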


\begin{proof}
    Let the experiment $\mathsf{Real}_{\params, \adv}$ be $\Hyb_0$, and we define experiments $\hyb_1$, $\hyb_2$, $\Hyb_3$ and $\Hyb_4$ in \cref{fig:core_dist}.
    Below we argue that $\Hyb_i$ is indistinguishable from $\Hyb_{i+1}$ for each $i \in \set{0,1,2,3}$.
    The proof is then completed by noting that $\Hyb_4$ is identical to the experiment $\mathsf{Ideal}_{\params, \adv}$.


\begin{figure}[t]
    \begin{pchstack}[center,boxed]        
    \adjustbox{max width=\dimexpr\textwidth-2\fboxsep-2\fboxrule\relax}%
    {
        \begin{pcvstack}
            \procedure{$\Hyb_1(\secparam)$}
            {
                \LShare[\cdot] \coloneqq \emptyset \\
                \mat{A} \sample \ddvA \\
                (\cdv, \vecz) \gets \adv\tuple{\mat{A}} ;~ \pcassert |\cdv| < t \\
                \diff{\begin{bmatrix} \vec{r}^\transpose \pclb \mat{W} \pclb \mat{S}_\cdv \end{bmatrix} \sample \ring_q^{t \times n}} \\[5pt]
                \vec{e} \sample \DD_{\ring,\bar{\chi}}^m \\
                \vec{b}^\transpose \coloneqq \diff{\vecr^\transpose} \mat{A} + \vec{e}^\transpose + \vecz^\transpose \bmod q \\
                b' \gets \adv^{\Share\oracle_{\Hyb1}}\tuple{\vec{b}, \xi \cdot \mat{S}_\cdv} \\
                {\pcassert |\LShare| \leq Q} \\
                \pcreturn b'
            }
            \pcvspace
            \procedure{$\Share\oracle_{\Hyb1}(\id, k \in [K])$}
            {
                \pcif \LShare[\id] = \emptyset \pcthen \\
                \pcind \pcassert \Lquery[\id] \neq \emptyset \\ 
                \pcind \pcparse \vecx \gets \Lquery[\id] \\
                \pcind \vec{e}_\vecx \sample \DD_{\ring,\chisub}^K \\ 
                \\[4pt]
                \pcind \matM \coloneqq 
                \diff{\begin{bNiceArray}{c|c|c}  
                    \Block{1-3}{\mat{V}_\hdv \cdot \mat{T}_\cdv}  \pclb\hline 
                    \quad &\quad & \xi \mat{I}_\cdv \end{bNiceArray}} \\
                \pcind \vecd \coloneqq 
                    \matM \cdot 
                    \diff{\begin{bmatrix} \vec{r}^\transpose \pclb \mat{W} \pclb \mat{S}_\cdv \end{bmatrix}}
                    \mat{A}\vec{x} + \vec{e}_\vecx \bmod q \\
                \pcind \LShare[\id] \coloneqq \vecd \in \ring_q^{K} \\
                \pcparse \vecd \gets \LShare[\id]  \\
                \pcreturn d_k 
            }
        \end{pcvstack} 
        \pchspace
        \begin{pcvstack}
            \procedure{$\Hyb_2(\secparam)$}
            {
                \LShare[\cdot] \coloneqq \emptyset \\
                \mat{A} \sample \ddvA \\
                (\cdv, \vecz) \gets \adv\tuple{\mat{A}} ;~ \pcassert |\cdv| < t \\
                \mat{S}_\cdv \sample \ring_q^{\cdv \times n} \\
                \diff{\begin{bmatrix} \vec{r}_m^\transpose \pclb \mat{W}_m \end{bmatrix} \sample \ring_q^{(t - |\cdv|) \times m}} \\
                \vec{e} \sample \DD_{\ring,\bar{\chi}}^m \\
                \vec{b}^\transpose \coloneqq \diff{\vec{r}_m^\transpose} + \vec{e}^\transpose + \vecz^\transpose \bmod q \\
                b' \gets \adv^{\Share\oracle_{\Hyb2}}\tuple{\vec{b}, \xi \cdot \mat{S}_\cdv} \\
                {\pcassert |\LShare| \leq Q} \\
                \pcreturn b'
            }
            \pcvspace
            \procedure{$\Share\oracle_{\Hyb2}(\id, k \in [K])$}
            {
                \pcif \LShare[\id] = \emptyset \pcthen \\
                \pcind \pcassert \Lquery[\id] \neq \emptyset \\ 
                \pcind \pcparse \vecx \gets \Lquery[\id] \\
                \pcind \vec{e}_\vecx \sample \DD_{\ring,\chisub}^K \\
                \\[4pt]
                \pcind \matM \coloneqq 
                {\begin{bNiceArray}{c|c|c}  
                    \Block{1-3}{\mat{V}_\hdv \cdot \mat{T}_\cdv}  \pclb\hline 
                    \quad &\quad & \xi \mat{I}_\cdv \end{bNiceArray}} \\
                \pcind \vecd \coloneqq 
                    \matM \cdot 
                    {\begin{bmatrix} \diff{\vec{r}_m^\transpose} \pclb \diff{\mat{W}_m} \pclb \mat{S}_\cdv \mat{A} \end{bmatrix}}
                    \vec{x} + \vec{e}_\vecx \bmod q \\
                \pcind \LShare[\id] \coloneqq \vecd \in \ring_q^{K} \\
                \pcparse \vecd \gets \LShare[\id]  \\
                \pcreturn d_k
            }
        \end{pcvstack}
        \pchspace
        \begin{pcvstack}
            \procedure{$\Hyb_3(\secparam)$}
            {
                \LShare[\cdot] \coloneqq \emptyset \\
                \mat{A} \sample \ddvA \\
                (\cdv, \vecz) \gets \adv\tuple{\mat{A}} ;~ \pcassert |\cdv| < t \\
                \mat{S}_\cdv \sample \ring_q^{\cdv \times n} \\
                \vec{r}_m^\transpose \sample \ring_q^{1 \times m} \\[12pt]
                \vec{e} \sample \DD_{\ring,\bar{\chi}}^m \\
                \vec{b}^\transpose \coloneqq \vec{r}_m^\transpose + \vec{e}^\transpose + \vecz^\transpose \bmod q \\
                b' \gets \adv^{\Share\oracle_{\Hyb3}}\tuple{\vec{b}, \xi \cdot \mat{S}_\cdv} \\
                {\pcassert |\LShare| \leq Q} \\
                \pcreturn b'
            }            
            \pcvspace
            \procedure{$\Share\oracle_{\Hyb3}(\id, k \in [K])$}
            {
                \pcif \LShare[\id] = \emptyset \pcthen \\
                \pcind \pcassert \Lquery[\id] \neq \emptyset \\ 
                \pcind \pcparse \vecx \gets \Lquery[\id] \\
                \pcind \vec{e}_\vecx \sample \DD_{\ring,\chisub}^K \\ 
                \pcind \diff{\vecw_\vecx \sample \ring_q^{t-|\cdv|-1}} \\
                \pcind \matM \coloneqq 
                {\begin{bNiceArray}{c|c|c}  
                    \Block{1-3}{\mat{V}_\hdv \cdot \mat{T}_\cdv}  \pclb\hline 
                    \quad &\quad & \xi \mat{I}_\cdv \end{bNiceArray}} \\
                \pcind \vecd \coloneqq 
                    \matM \cdot 
                    {\begin{bmatrix} \vec{r}_m^\transpose \vecx \pclb \diff{\vecw_\vecx} \pclb \mat{S}_\cdv \mat{A} \vecx \end{bmatrix}}
                    + \vec{e}_\vecx \bmod q \\
                \pcind \LShare[\id] \coloneqq \vecd \in \ring_q^{K} \\
                \pcparse \vecd \gets \LShare[\id]  \\
                \pcreturn d_k
            }
        \end{pcvstack}
        \pchspace
        \begin{pcvstack}
            \procedure{$\Hyb_4(\secparam)$}
            {
                \LShare[\cdot] \coloneqq \emptyset \\
                \mat{A} \sample \ddvA \\
                (\cdv, \vecz) \gets \adv\tuple{\mat{A}} ;~ \pcassert |\cdv| < t \\
                \mat{S}_\cdv \sample \ring_q^{\cdv \times n} \\
                \\
                \\
                \vec{e} \sample \DD_{\ring,\bar{\chi}}^m \\
                {\diff{\vec{b}^\transpose \sample \ring_q^{m}}} \\
                b' \gets \adv^{\Share\oracle_{\Hyb4}}\tuple{\vec{b}, \xi \cdot \mat{S}_\cdv} \\
                {\pcassert |\LShare| \leq Q} \\
                \pcreturn b'
            }        
            \pcvspace
            \procedure{$\Share\oracle_{\Hyb4}(\id, k \in [K])$}
            {
                \pcif \LShare[\id] = \emptyset \pcthen \\
                \pcind \pcassert \Lquery[\id] \neq \emptyset \\ 
                \pcind \pcparse \vecx \gets \Lquery[\id] \\
                \pcind \vec{e}_\vecx \sample \DD_{\ring,\chisub}^K \\ 
                \pcind \vecw_\vecx \sample \ring_q^{t-|\cdv|-1} \\
                \pcind \matM \coloneqq 
                {\begin{bNiceArray}{c|c|c}  
                    \Block{1-3}{\mat{V}_\hdv \cdot \mat{T}_\cdv}  \pclb\hline 
                    \quad &\quad & \xi \mat{I}_\cdv \end{bNiceArray}} \\
                \pcind \vecd \coloneqq 
                    \matM \cdot 
                    {\begin{bmatrix} \diff{(\vec{b} - \vec{e} - \vecz)^\transpose} \vecx ~ \pclb \vecw_\vecx \pclb \mat{S}_\cdv \mat{A} \vecx \end{bmatrix}} 
                    + \vec{e}_\vecx \bmod q \\
                \pcind \LShare[\id] \coloneqq \vecd \in \ring_q^{K} \\
                \pcparse \vecd \gets \LShare[\id]  \\
                \pcreturn d_k
            }
        \end{pcvstack}
    }
    \end{pchstack}
    \caption{Hybrid distributions $\Hyb_1$ to $\Hyb_4$ for proof of \cref{lem:th-LWE-variant}.
    For each, the changes relative to the previous hybrid are marked in red. 
    The oracle $\Gen\oracle$ is unchanged and omitted.
    }\label{fig:core_dist}
\end{figure}

    \paragraph{$\hyb_1$.}
    This is functionally equivalent to $\hyb_0 = \mathsf{Real}_{\params, \adv}$. 
    Notice that $\mat{V} = 
    \begin{bmatrix}  
        \mat{V}_{\hdv} \\
        \mat{V}_\cdv 
    \end{bmatrix}$
    by definition, thus 
    the only differences in $\hyb_1$ are that we sample uniformly random $\begin{bmatrix} \vec{r}^\transpose \\ \mat{W} \\ \mat{S}_\cdv \end{bmatrix}$,
    and we swap 
    \begin{align*}
        \begin{bmatrix} \vec{v}_0^\tr \\ \xi \cdot \mat{V} \end{bmatrix}
        \mat{R}
        = 
        \begin{bmatrix}  
            \vec{v}_0^\transpose \\
            \xi \cdot \mat{V}_{\hdv} \\
            \xi \cdot \mat{V}_\cdv 
        \end{bmatrix}
        \mat{R} 
        \qquad\text{to}\qquad
        \matM \cdot  
        \begin{bmatrix} \vec{r}^\transpose \\ \mat{W} \\ \mat{S}_\cdv \end{bmatrix}
        =
        \begin{bNiceArray}{c|c|c}  
            1 & \quad & \quad \\\hline 
            \Block{1-3}{\mat{V}_\hdv \cdot \mat{T}_\cdv} \\\hline 
            \quad & \quad & \xi \mat{I}_\cdv  
        \end{bNiceArray}
        \begin{bmatrix} \vec{r}^\transpose \\ \mat{W} \\ \mat{S}_\cdv \end{bmatrix}.
    \end{align*} 
    We recall $\mat{T}_\cdv \in \ring^{t \times t}$ is such that
    it holds $\xi^{-1} \cdot \mat{T}_\cdv \cdot \mat{V}_{\cdv^*} = \mat{I}_t$
    for some size-$t$ set $\cdv^*$ satisfying $(\set{0} \cup \cdv) \subseteq \cdv^* \subseteq [0,K]$.
    To see that the two terms are functionally equivalent,
    we denote $\bar{\cdv} \coloneqq \cdv^* \setminus (\set{0} \cup \cdv)$ and notice 
    \begin{align*}
        \begin{bmatrix} \vec{v}_0^\tr \\ \xi \cdot \mat{V} \end{bmatrix} \mat{R} 
        =
        \begin{bmatrix}  
            \vec{v}_0^\transpose \\
            \xi \cdot \mat{V}_{\hdv} \\
            \xi \cdot \mat{V}_\cdv 
        \end{bmatrix}
        \mat{R}
        =
        \begin{bmatrix}  
            1 \cdot \vec{v}_0^\transpose \\
            \xi \cdot \xi^{-1} \cdot \mat{V}_\hdv \cdot \mat{T}_\cdv \cdot \matV_{\cdv^*} \\
            \xi \cdot \matI_\cdv \cdot \mat{V}_\cdv 
        \end{bmatrix}
        \mat{R}
        =
        \begin{bNiceArray}{c|c|c}  
            1 & \quad & \quad \\\hline 
            \Block{1-3}{\mat{V}_\hdv \cdot \mat{T}_\cdv} \\\hline 
            \quad & \quad & \xi \mat{I}_\cdv  
        \end{bNiceArray}
        \underbrace{
        \begin{bmatrix}  
            \vec{v}_0^\transpose \\
            \mat{V}_{\bar{\cdv}} \\
            \mat{V}_\cdv 
        \end{bmatrix}
        }_{= \mat{V}_{\cdv^*}} \mat{R} 
        \equiv 
        \begin{bNiceArray}{c|c|c}  
            1 & \quad & \quad \\\hline 
            \Block{1-3}{\mat{V}_\hdv \cdot \mat{T}_\cdv} \\\hline 
            \quad & \quad & \xi \mat{I}_\cdv  
        \end{bNiceArray}
        \underbrace{
        \begin{bmatrix} \vec{r}^\transpose \\ \mat{W} \\ \mat{S}_\cdv \end{bmatrix}
        }_{\text{uniform}} \bmod~q,
    \end{align*}
    where the first $=$ is by definition of $\mat{V}$, the second $=$ holds since $\xi^{-1} \cdot \mat{T}_\cdv \cdot \mat{V}_{\cdv^*} = \mat{I}_t$, 
    and the last $\equiv$ holds because $\xi$ and $q$ are coprime and thus $\mat{V}_{\cdv^*}^{-1}$ exists over $\ring_q$.%
    \footnote{By definition of $\Xi$, $\xi \cdot \matV_{\cdv^*}^{-1} \in \ring^{t \times t}$. 
    Since $\xi$ and $q$ are coprime, there exists $a \in \ring$ s.t. $a \cdot \xi = 1 + q\cdot h$ for some $h \in \ZZ$. 
    We have $a \cdot \xi \cdot \matV_{\cdv^*}^{-1} \cdot \matV_{\cdv^*} = (1 + q \cdot h) \cdot \matI$, and $a \cdot \xi \cdot \matV_{\cdv^*}^{-1} \in \ring^{t \times t}$.
    Therefore, $a \cdot \xi \cdot \matV_{\cdv^*}^{-1} \bmod q$ is the inverse of $\matV_{\cdv^*}$ over $\ring_q$.}

    \paragraph{$\hyb_2$.}
    We swap the vector $\mat{b}^\transpose$ and each vector $\vec{d}$
    from 
    \begin{align*}
        \vec{r}^\transpose \mat{A} + \mat{e}^\transpose + \vec{z}^\tr \bmod q
        &&\text{and}&& 
        \matM \cdot 
        \begin{bmatrix} \vec{r}^\transpose \\ \mat{W} \\ \mat{S}_\cdv \end{bmatrix}
        \mat{A} \vecx 
        + \vece_\vecx \bmod q \\
    \text{to}\qquad 
        \diff{\vec{r}_m^\transpose} + \vec{e}^\transpose + \vec{z}^\tr \bmod q 
        &&\text{and}&& 
        \matM \cdot 
        \begin{bmatrix} \diff{\vec{r}_m^\transpose} \\ \diff{\mat{W}_m} \\ \mat{S}_\cdv \mat{A} \end{bmatrix}
        \vecx
        + \vece_\vecx \bmod q
    \end{align*}
    respectively, where $\vec{r}_m, \mat{W}_m$ are uniformly random. 
    For this, we will invoke \cref{thm:core1-alt}, instantiating its parameters $(t,t_2)$ with $(t-|\cdv|, |\cdv|)$ and its matrix with
    \[
        \matM = (\matM_1,\matM_2) \coloneqq
        \begin{bNiceArray}{c|c}
            \Block{1-2}{\mat{V}_\hdv \cdot \mat{T}_\cdv} \\\hline
            \quad & \xi \mat{I}_\cdv
        \end{bNiceArray}
        \in \ring^{K \times ((t-|\cdv|) + |\cdv|)},
    \]
    so that its secret matrices are $\matS_1 = \begin{bmatrix} \vec{r}^\tr \\ \mat{W} \end{bmatrix}$ and $\matS_2 = \mat{S}_\cdv$,
    and its unit vector $\vecu$ selects the row $\vec{r}^\tr$ of $\matS_1$, whose noisy image is $\vecb$.
    We bound the $\ell_2$-induced norm of $\matM_1$ and the norm of $\vecx \coloneqq \begin{bmatrix} \tildevec{c} \\ \tildevec{t} \end{bmatrix}$:
    \begin{align*}
        \norm{\matM_1}_{2,2} \leq~& \norm{\matV_\hdv \cdot \matT_\cdv}_{2,2} = \norm{\xi \cdot \matV_\hdv \cdot \matV_{\cdv^*}^{-1}}_{2,2} \leq \gamma, \\
        \norm{\vecx}_\infty \leq~& \max(B_c,B_t),
    \end{align*}
    where the first inequality holds since $\matM_1$ consists of the first $t-|\cdv|$ columns of $\matM$, the block $\xi \matI_\cdv$ contributing to $\matM_2$ only, and where we recall $\gamma$ is a quality measurement of $\Xi$ (ref. \cref{eq:subtractive-set-qualities}): the last inequality holds by taking $T^* = \cdv^* \setminus \set{0}$ in \cref{eq:subtractive-set-qualities}, since the rows of $\matV_\hdv$ are a subset of the rows of $\matV_\Xi$ and removing rows does not increase the $\ell_2$-induced norm.
    The remaining requirements of \cref{thm:core1-alt} are satisfied by assumptions in this lemma:
    We set the noise parameter $\chi^* = \chi_2^*$, the norm bound $\beta = \max(B_c,B_t)$.
    The two constraints in \cref{thm:core1-alt}, at its instantiated threshold $t - |\cdv|$, follow from $\gamma \cdot \sqrt{\varphi m} \cdot \max(B_c,B_t) \geq 1$ and \cref{eq:th-LWE-variant-constraint}, since $1 \leq \tensorexpansion{t - |\cdv|} \leq \tensorexpansion{t}$.
    Then, by \cref{thm:core1-alt}, whose advantage bound carries the multiplicative factor $t - |\cdv| \leq t$, the advantage of any PPT algorithm $\bdv_2$ in distinguishing $\hyb_1$ and $\hyb_2$ is upper-bounded by
    \[
        t \cdot \left( \Adv_{(\ring,n,m,q,\chi_2^*,\ddvA),\bdv_2}^{\LWE{}}(\secpar)
        + 4 \cdot QK \cdot \epsilon
        + \frac{6 \cdot \epsilon}{1-\epsilon} \right).
    \]
    (To use \cref{thm:core1-alt}, the trivial reduction forwards $\matA$ to $\adv$, adds $\vecz^\tr$ to the vector $\vecb^\tr$ received from the experiment $\mathsf{Exp2}^b$, passes $\xi \matS_2$ in place of $\matS_2$, and answers the first $\Share\oracle$ query on each $\id$ by querying $\oracle_b$ once on the vector $\vecx$ recorded in $\Lquery[\id]$ and storing the answer under $\id$, and every later query on the same $\id$ from that stored answer.)%

    \paragraph{$\hyb_3$.}
    We swap each vector $\vecd$ from 
    \begin{align*}
        \matM \cdot 
        \begin{bmatrix} \vec{r}_m^\transpose \\ \mat{W}_m \\ \mat{S}_\cdv \mat{A} \end{bmatrix} \vecx
        + \vece_\vecx \bmod q
        &&\text{to}&& 
        \matM \cdot 
        \begin{bmatrix} \vec{r}_m^\transpose \vecx \\ \diff{\vecw_\vecx} \\ \mat{S}_\cdv \mat{A} \vecx \end{bmatrix}
        + \vece_\vecx \bmod q
    \end{align*}
    where $\vecw_\vecx$ is uniformly random.
    For this, we will invoke \cref{thm:core2-fix},
    for which we again write $\matM = (\matM_0,\matM_1,\matM_2)$; here $\matM_1$ (width $t_1 = t-|\cdv|-1$) is the block swapped to uniform, while $\matM_0,\matM_2$ (total width $1+|\cdv|$) are left unchanged and together play the role of \cref{thm:core2-fix}'s second block, so we instantiate \cref{thm:core2-fix} with $t_1 = t-|\cdv|-1$ and $t_2 = 1+|\cdv|$.
    We bound the norm of $\matM_1$:
    \[
        \norm{\matM_1}_{2,2} \leq \norm{\matV_\hdv \cdot \matT_\cdv}_{2,2} = \norm{\xi \cdot \matV_\hdv \cdot \matV_{\cdv^*}^{-1}}_{2,2} \leq \gamma,
    \]
    where we recall $\gamma$ is a quality measurement of $\Xi$ (ref. \cref{eq:subtractive-set-qualities}), and the last inequality holds for the same reason as in the $\hyb_2$ step.
    Then, by \cref{thm:core2-fix}, whose advantage bound carries the multiplicative factor $t_1 = t-|\cdv|-1 \leq t$, the advantage of any PPT algorithm $\bdv_1$ in distinguishing $\hyb_2$ and $\hyb_3$ is upper-bounded by
    \[
        t \Adv_{(\ring,m_t,\poly,q,\chi_1^*,\ddvt),\bdv_1}^{\LWE{}}(\secpar)
        + \frac{6 \cdot Q \cdot \epsilon}{1-\epsilon},
    \]
    where $\chi_1^* \geq \sqrt{2} \cdot \eta_\epsilon(\ring^t)$ is such that
    there exists $\hat{\chi} >0$ satisfying
    \[
        \gamma \leq \chi \cdot \sqrt{(2 \eta_\epsilon(\ring^t)^2 + (\chi_1^*)^2)^{-1} - \hat{\chi}^{-2}},
    \]
    as assumed in the statement of this lemma.
    (To use \cref{thm:core2-fix}, the trivial reduction queries $\mathsf{Tag}\oracle$ from the experiments in \cref{thm:core2-fix} to obtain $\tildevec{t}$ to answer $\Gen\oracle$ queries, simply forwards the $\Share\oracle$ queries to the oracle $\oracle$, and generates the rest itself.)

    \paragraph{$\hyb_4$.}
    This is functionally equivalent to $\hyb_3$. 
    We rewrite the vectors $\vec{b} = \vec{r}_m + \vec{e} + \vec{z} \bmod q$ and $\vec{r}_m \sample \ring_q^{m}$ in $\hyb_3$ 
    as $\vec{b} \sample \ring_q^{m}$ and $(\vec{b} - \vec{e} - \vec{z})$ in $\hyb_4$. 
    This is a simple change of variables of uniformly random vectors, and clearly the two hybrids are identically distributed.
    \qed
\end{proof}


We provide \cref{thm:core1-alt,thm:core2-fix} involved in \cref{lem:th-LWE-variant} in the remaining subsections.


\subsection{\texorpdfstring{\cref{thm:core2-fix}}{Theorem \ref{thm:core2-fix}} and proofs}\label{app:core2}

\begin{theorem}\label{thm:core2-fix}
    Let $\ring,m,Q,q,m_c,K,t_1,t_2,\chi^*,\chi,\gamma,\ddvt$ and $\epsilon \in (0,1)$ be parametrised by $\secpar$,
    where $\ring = \ZZ[\zeta]$ has ring degree $\varphi$,
    $m,Q,m_c,K,t_1,t_2 \in \poly$ and $\ddvt$ is a distribution over $\ring_q^{m}$.
    Suppose $\chi^*,\chi$ are such that $\chi^* \geq \sqrt{2} \cdot \eta_\epsilon(\ring^{t_1})$, and there exists $\hat{\chi} >0$ such that $\gamma \leq \chi \cdot \sqrt{(2 \eta_\epsilon(\ring^{t_1})^2 + (\chi^*)^2)^{-1} - \hat{\chi}^{-2}}$. Denote
    \begin{align*}
        \params = (\ring,m,Q,q,m_c,K,t_1,t_2,\chi,\gamma,\ddvt).
    \end{align*}
    For any PPT $\adv$, define the following experiment $\mathsf{Exp1}^b_{\params,\adv}$:
    \begin{pchstack}[center]
        \begin{pcvstack}
            \procedure{$\mathsf{Exp1}^b_{\params, \adv}(\secparam)$}
            {
                \Lquery[\cdot] \coloneqq \emptyset ;~ \Ltag[\cdot] \coloneqq \emptyset \\
                \begin{pmatrix}
                    (\matM=(\matM_1,\matM_2),\matS_2) \pclb
                    \in \ring^{K \times (t_1+t_2)} \times \ring_q^{t_2 \times (m_c+m)}
                \end{pmatrix}
                \gets \adv(\secparam) \\
                \pcassert \norm{\matM_1}_{2,2} \leq \gamma \\
                \matS_1 \sample \ring_q^{t_1 \times (m_c+m)} \\
                b' \gets \adv^{\mathsf{Tag}\oracle,\oracle_b}(\secparam) \\
                \pcreturn b'
            }
            \pcvspace
            \procedure{$\mathsf{Tag}\oracle()$}
            {
                i \coloneqq |\Ltag| + 1 ;~ \vec{t}_i \sample \ddvt \\
                \Ltag[i] \coloneqq \vec{t}_i \\
                \pcreturn (i, \vec{t}_i)
            }
        \end{pcvstack}
        \pchspace
        \procedure{$\oracle_b(\vecc \in \ring^{m_c}, i)$}
        {
            \pcassert |\Lquery| < Q \\
            \pcif \Lquery[i] = \emptyset \pcthen \\
            \pcind \pcassert \Ltag[i] \neq \emptyset ;~ \pcparse \vec{t}_i \gets \Ltag[i] \\
            \pcind \vecx \coloneqq \begin{bmatrix}\vecc \pclb \vec{t}_i \end{bmatrix} ;~ \vece \sample \ddv_{\ring,\chi}^{K} \\
            \pcind\pcif b=0 \pcthen
                \vecd \coloneqq \matM \cdot \begin{bmatrix} \mat{S}_1 \pclb \mat{S}_2 \end{bmatrix} \vecx + \vece \\
            \pcind \pcif b=1 \pcthen
                \vecw \sample \ring_q^{t_1} ;~
                \vecd \coloneqq \matM \cdot \begin{bmatrix} \vecw \pclb \mat{S}_2 \vecx \end{bmatrix} + \vece \\
            \pcind \Lquery[i] \coloneqq \vecd \\
            \pcreturn \vecd \gets \Lquery[i]
        }
    \end{pchstack}
    and denote the advantage of $\adv$ in distinguishing the above experiments by
    \begin{align}\label{eq:core2-fix-advantage}
        \Adv_{\params,\adv}^{\mathsf{Exp1}} \coloneqq
        \abs{ \prob{\mathsf{Exp1}^0_{\params, \adv}(\secparam) = 1} - \prob{\mathsf{Exp1}^1_{\params, \adv}(\secparam) = 1}}.
    \end{align}
    For any PPT $\adv$, there exists a PPT $\bdv$ such that
    \[
        \Adv_{\params,\adv}^{\mathsf{Exp1}}(\secpar) \leq t_1 \cdot \Adv_{(\ring,m,\poly,q,\chi^*,\ddvt),\bdv}^{\LWE{}}(\secpar) + 2Q \cdot \frac{3\epsilon}{1-\epsilon}.
    \]
    In particular, if the $\LWE{\ring,m,\poly,q,\chi^*,\ddvt}$ assumption holds,
    then the experiments $\mathsf{Exp1}^0_{\params,\adv}$ and $\mathsf{Exp1}^1_{\params,\adv}$ are computationally indistinguishable.
\end{theorem}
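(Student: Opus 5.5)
We plan to follow the structure of the corresponding lemma in \cite{AC:CinLaiWoo25}, hybridising over the $t_1$ rows of $\matS_1$ and, for each row, converting LWE samples with public matrix distribution $\ddvt$ into the desired simulated share vectors via Gaussian convolution. Write $\matS_1 = (\matS_1^c \mid \matS_1^t)$ with $\matS_1^c \in \ring_q^{t_1\times m_c}$ and $\matS_1^t \in \ring_q^{t_1 \times m}$. Then $\matM_1 \matS_1 \vecx = \matM_1(\matS_1^c \vecc + \matS_1^t \vec{t}_i)$. The aim is to replace $\matS_1^t \vec{t}_i$ by a uniform vector, which immediately makes $\matS_1\vecx$ uniform, since $\vec{t}_i$ is fresh per $i$ and the $\vecc$-term is only a shift. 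Each ciphertext index $i$ is used at most once because answers are cached in $\Lquery[i]$. This makes the $\Tag$-indexed vectors $\vec{t}_i$ the LWE ``public vectors'', with $\poly$ many samples.

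Concretely, we define hybrids $\Hyb_j$ for $j\in[0,t_1]$, in which the first $j$ rows of $\matS_1^t\vec{t}_i$ are replaced by fresh uniform values $\vecw_i^{(\le j)}$. To move from $\Hyb_{j-1}$ to $\Hyb_j$, we would first rewrite the error $\vece$. Decompose $\vece \sample \ddv_{\ring,\chi}^K$ as $\matM_1 \vecy + \vecf$, where $\vecy \sample \DD_{\ring,s}^{t_1}$ and $\vecf$ is an elliptical Gaussian of covariance $\chi^2\matI - s^2\Psi(\matM_1)\Psi(\matM_1)^\tr$. This is a vector analogue of \cref{lem:gaussian-convolution}, applied with $\norm{\matM_1}_{2,2}\le\gamma$ and $s$ chosen as $\sqrt{2\eta_\epsilon(\ring^{t_1})^2+(\chi^*)^2}$ rescaled. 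The hypothesis on $\gamma,\chi,\chi^*,\hat\chi$ is exactly what makes this decomposition valid up to statistical distance about $\epsilon/(1-\epsilon)$ per query. After the rewrite, $\matM_1(\matS_1^t\vec{t}_i + \vecy)$ appears, and the $j$-th coordinate carries $\vecs_j^\tr \vec{t}_i + y_j$. We would then use a further convolution step to reshape $y_j$ into an LWE error of width $\chi^*$ plus an independent Gaussian remainder; the $\sqrt{2}\eta_\epsilon$ slack allows this. The result is an $\LWE{\ring,m,\poly,q,\chi^*,\ddvt}$ sample $(\vec{t}_i,\vecs_j^\tr\vec{t}_i+e^*_i)$. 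The reduction $\bdv$ receives these samples, uses them to answer $\mathsf{Tag}\oracle$, and samples all other rows and noise itself. It knows $\matM,\matS_2$ and the other rows of $\matS_1$, which suffices to assemble $\vecd$. When the sample is uniform, the $j$-th coordinate becomes uniform and thus absorbs $y_j$. Undoing the convolution then returns the error to $\ddv_{\ring,\chi}^K$.

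Summing gives $t_1$ LWE terms. The statistical terms come to three convolution-type steps per query (decompose, reshape, and undo), each costing $\epsilon/(1-\epsilon)$. Accounted in both directions across the hybrid chain, this totals $2Q\cdot 3\epsilon/(1-\epsilon)$. To avoid a factor $t_1 Q$, we would perform the convolution rewrite once at the start for all $t_1$ coordinates simultaneously, and undo it once at the end.

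The main obstacle is the precise Gaussian decomposition $\vece\approx\matM_1\vecy+\vecf$ with a \emph{ring} matrix $\matM_1$. This requires verifying the smoothing condition for the kernel lattice of $(\Psi(\matM_1)\mid\matI)$, as in the proof of \cref{lem:gaussian-convolution}, but with $t_1$ columns. Getting the exact parameter relation to match the assumed inequality with $\hat\chi$ is the delicate part. We would try to absorb the reshaping into the single \cite{PKC:GMPW20} application by taking $\vecy=\vecy_1+\vecy_2$, with $\vecy_1$ of width $\chi^*$ serving as the LWE error and $\vecy_2$ of width $\sqrt2\eta_\epsilon$.
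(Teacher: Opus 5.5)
Your route differs from the paper's, and it goes through once the width of $\vecy$ is chosen correctly (see the second paragraph). The paper never decomposes $\vece$ itself. It passes through an intermediate tag-based leaky experiment (\cref{lem:LLWE-reprove}). There, for each tag, the reduction takes an LWE sample with error width $\chi^*$ and, via the Leaky-LWE statistical lemma of \cite{CiC:LaiSwaWoo25} (\cref{lem:elLWE-thm2}), turns it into an LWE sample with an auxiliary error $\barvec{e}_i$ of width $\hat\chi$ together with a leakage $\vecl_i=\matM_1\barvec{e}_i+\vece_i$. This costs $3\epsilon/(1-\epsilon)$ per query in each world. Then \cref{lem:core2-fix} answers $\oracle_b$ perfectly with $\matM_1\matS_{1,c}\vecc+\matM_1\vecb_i-\vecl_i+\matM_2\matS_2\vecx$, in which $\barvec{e}_i$ cancels. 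You instead write $\vece\approx\matM_1\vecy+\vecf$ directly with a matrix version of \cref{lem:gaussian-convolution}: replace $\Psi(\vecz^\tr)$ by $\Psi(\matM_1)$ and $\eta_\epsilon(\ring^m)$ by $\eta_\epsilon(\ring^{t_1})$, and the kernel-lattice argument is unchanged. The coordinates of $\vecy$ then serve as the LWE errors, row by row. Your route is more direct and needs no $\hat\chi$; in the hypothesis, $\hat\chi$ only makes the inequality strict. Decomposing once at the start and undoing once at the end, it loses only $2Q\cdot\epsilon/(1-\epsilon)$ statistically. The paper's detour buys modularity: it reuses the Leaky-LWE reduction and its parameter form verbatim.

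The one thing to change is the width of $\vecy$. Write $\eta\coloneqq\eta_\epsilon(\ring^{t_1})$. With your choice $s^2=2\eta^2+(\chi^*)^2$, the smoothing condition of the convolution is $\gamma^2/\chi^2\le s^{-2}(1-\eta^2/s^2)$. The hypothesis only guarantees $\gamma^2/\chi^2<s^{-2}$, and as $\hat\chi\to\infty$ it can come arbitrarily close to $s^{-2}$, so this step is not justified. Splitting $\vecy=\vecy_1+\vecy_2$ inside one application does not rescue it either. Instead take $\vecy\sample\DD_{\ring,\chi^*}^{t_1}$, i.e.\ $s=\chi^*$ and $\kappa=\chi/\chi^*$, and drop the reshaping step altogether. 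The required bound $\gamma^2\le\chi^2\,((\chi^*)^2-\eta^2)/(\chi^*)^4$ then follows from the hypothesis, because $((\chi^*)^2-\eta^2)((\chi^*)^2+2\eta^2)-(\chi^*)^4=\eta^2((\chi^*)^2-2\eta^2)\ge 0$. Each $\vecs_j^\tr\vect_i+y_{i,j}$ is then exactly an $\LWE{\ring,m,\poly,q,\chi^*,\ddvt}$ sample.
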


\begin{proof}
    Follows directly by putting \cref{lem:LLWE-reprove,lem:core2-fix} together. 
    (We rename $\bar{\chi}$ in \cref{lem:LLWE-reprove} to $\hat{\chi}$.) \qed
\end{proof}

We will use the following statistical lemma from \cite{CiC:LaiSwaWoo25} to prove \cref{lem:LLWE-reprove}.
The proof of \cref{lem:LLWE-reprove} is essentially an adaption of \cite[Thm.3]{CiC:LaiSwaWoo25}. 

\begin{lemma}[Special case of {\cite[Thm. 2]{CiC:LaiSwaWoo25}}]\label{lem:elLWE-thm2}
    Let $m_1, m_2 \in \NN$ and $s_1,s_2,\chi^* >0$.
    Fix any $\mat{Z} \in \ZZ^{m_1 \times m_2}$.
    Define the following matrices (dependent on $s_1,s_2,\mat{Z}$): 
    \begin{align}\label{eq:elLWE-define-matrices}
        \mat{\Gamma} \coloneqq~& s_2^{-2} \cdot \mat{Z} \in \RR^{m_1 \times m_2}, \nonumber \\
        \mat{\Theta}^{-1} \coloneqq~& s_1^{-2} \cdot \mat{I}_{m_1} + s_2^{-2} \cdot \mat{Z} \mat{Z}^\tr \in \RR^{m_1 \times m_1}, \\
        \mat{\Phi}^{-1} \coloneqq~& s_2^{-2} \cdot \mat{I}_{m_2} - \mat{\Gamma}^\tr \cdot \mat{\Theta} \cdot \mat{\Gamma} \in \RR^{m_2 \times m_2}. \nonumber 
    \end{align}
    Let $\epsilon \in (0,1)$ and $\upsilon^{-1} \coloneqq s_1^{-2} + \smax(\mat{Z} \mat{Z}^\transpose) \cdot s_2^{-2}$.
    Suppose $(\chi^*)^2 < \upsilon$ and $\left( (\chi^*)^{-2} + (\upsilon - (\chi^*)^2)^{-1} \right)^{-1/2} \geq \eta_\epsilon(\ZZ^{m_1})$, which implies $\sqrt{\upsilon} \geq \eta_\epsilon(\ZZ^{m_1})$.
    Then, the statistical distance between the distributions 
    \begin{align*}  
        \set{ \left( \vec{e}_1^\transpose, \vec{e}_1^\transpose \mat{Z} + \vec{e}_2^\transpose \right)
        ~\middle|~
        \begin{aligned}
        &\vec{e}_1 \sample \ddv_{\ZZ^{m_1}, s_1} \\
        &\vec{e}_2 \sample \ddv_{\ZZ^{m_2}, s_2} 
        \end{aligned}
        } 
        &&\text{and}&&
        \set{ \left( \vec{e}^\transpose + \vec{f}_1^\transpose, \vec{f}_2^\transpose \right)
        ~\middle|~
        \begin{aligned}
        &\vec{e} \sample \ddv_{\ZZ^{m_1}, \chi^*} ;~
        \vec{f}_2 \sample \ddv_{\ZZ^{m_2}, \sqrt{\mat{\Phi}}} \\
        &\vec{c} \coloneqq \mat{\Theta} \cdot \mat{\Gamma} \cdot \vec{f}_2 \\
        &\vec{f}_1 \sample \ddv_{\ZZ^{m_1}, \sqrt{\mat{\Theta} - (\chi^*)^2 \mat{I}_{m_1}}, \vec{c}}
        \end{aligned}
        }
    \end{align*} 
    is at most $3\epsilon/(1-\epsilon)$.
\end{lemma}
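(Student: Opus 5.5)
The plan is to prove the lemma by an explicit Gaussian computation in three stages. First, rewrite the left-hand distribution in terms of its joint density in $(\vec{e}_1,\vec{y})$ with $\vec{y}\coloneqq\mat{Z}^\tr\vec{e}_1+\vec{e}_2$. Second, show that the marginal of $\vec{y}$ is close to $\ddv_{\ZZ^{m_2},\sqrt{\mat{\Phi}}}$ and that, conditioned on $\vec{y}$, $\vec{e}_1$ is exactly a shifted discrete Gaussian $\ddv_{\ZZ^{m_1},\sqrt{\mat{\Theta}},\mat{\Theta}\mat{\Gamma}\vec{y}}$. Third, split that conditional Gaussian into $\vec{e}+\vec{f}_1$ using a discrete convolution lemma. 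Each of the three approximation steps costs at most $\epsilon/(1-\epsilon)$ in statistical distance, and the triangle inequality gives the claimed $3\epsilon/(1-\epsilon)$.

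\textbf{Step 1 (joint density and conditional).} Since $\mat{Z}$ is integral, the map $(\vec{e}_1,\vec{e}_2)\mapsto(\vec{e}_1,\vec{y})$ is a bijection of $\ZZ^{m_1}\times\ZZ^{m_2}$. The joint mass function is therefore proportional to
\[
\exp\!\bigl(-\pi(s_1^{-2}\norm{\vec{e}_1}_2^2+s_2^{-2}\norm{\vec{y}-\mat{Z}^\tr\vec{e}_1}_2^2)\bigr).
\]
Expanding and completing the square in $\vec{e}_1$ with the matrices of \cref{eq:elLWE-define-matrices} turns the exponent into
\[
-\pi\bigl((\vec{e}_1-\mat{\Theta}\mat{\Gamma}\vec{y})^\tr\mat{\Theta}^{-1}(\vec{e}_1-\mat{\Theta}\mat{\Gamma}\vec{y})+\vec{y}^\tr\mat{\Phi}^{-1}\vec{y}\bigr).
\]
This uses $\mat{\Theta}^{-1}=s_1^{-2}\mat{I}+\mat{Z}\mat{Z}^\tr s_2^{-2}$, the identity $s_2^{-2}\mat{Z}\vec{y}=\mat{\Gamma}\vec{y}$, and the Schur complement $\mat{\Phi}^{-1}=s_2^{-2}\mat{I}-\mat{\Gamma}^\tr\mat{\Theta}\mat{\Gamma}$. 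Consequently the conditional of $\vec{e}_1$ given $\vec{y}$ is exactly $\ddv_{\ZZ^{m_1},\sqrt{\mat{\Theta}},\mat{\Theta}\mat{\Gamma}\vec{y}}$. The marginal of $\vec{y}$ has mass proportional to $\rho_{\sqrt{\mat{\Phi}}}(\vec{y})\cdot\rho_{\sqrt{\mat{\Theta}}}(\ZZ^{m_1}-\mat{\Theta}\mat{\Gamma}\vec{y})$.

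\textbf{Step 2 (marginal of $\vec{y}$).} From $\mat{\Theta}^{-1}\preceq\upsilon^{-1}\mat{I}$ we get $\mat{\Theta}\succeq\upsilon\mat{I}$. The hypothesis then gives $\sqrt{\mat{\Theta}}\geq\sqrt{\upsilon}\geq\eta_\epsilon(\ZZ^{m_1})$. By the standard smoothing bound, $\rho_{\sqrt{\mat{\Theta}}}(\ZZ^{m_1}-\vec{c})$ lies in $[\tfrac{1-\epsilon}{1+\epsilon},1]$ times a constant independent of $\vec{c}$. Hence the marginal of $\vec{y}$ is within relative error $O(\epsilon)$, and so within statistical distance $\epsilon/(1-\epsilon)$, of $\ddv_{\ZZ^{m_2},\sqrt{\mat{\Phi}}}$. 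Replacing $\vec{y}$ by $\vec{f}_2\sample\ddv_{\ZZ^{m_2},\sqrt{\mat{\Phi}}}$ while keeping the exact conditional for $\vec{e}_1$ costs that much.

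\textbf{Step 3 (splitting the conditional).} Next I will replace $\ddv_{\ZZ^{m_1},\sqrt{\mat{\Theta}},\vec{c}}$ by $\vec{e}+\vec{f}_1$, with $\vec{e}\sample\ddv_{\ZZ^{m_1},\chi^*}$ and $\vec{f}_1\sample\ddv_{\ZZ^{m_1},\sqrt{\mat{\Theta}-(\chi^*)^2\mat{I}},\vec{c}}$. For this I will invoke the discrete convolution theorem (Peikert's, equivalently the \cite{PKC:GMPW20} form already used in \cref{lem:gaussian-convolution}). Its requirement is
\[
\bigl((\chi^*)^{-2}\mat{I}+(\mat{\Theta}-(\chi^*)^2\mat{I})^{-1}\bigr)^{-1/2}\geq\eta_\epsilon(\ZZ^{m_1}).
\]
Because $\mat{\Theta}\succeq\upsilon\mat{I}$ and $(\chi^*)^2<\upsilon$, the matrix $\mat{\Theta}-(\chi^*)^2\mat{I}$ is positive definite with eigenvalues at least $\upsilon-(\chi^*)^2$. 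The map $\lambda\mapsto((\chi^*)^{-2}+(\lambda-(\chi^*)^2)^{-1})^{-1}$ is increasing, so the requirement reduces to the scalar hypothesis of the lemma. The convolution theorem then gives statistical distance at most $\epsilon/(1-\epsilon)$ uniformly in the centre $\vec{c}$, and since the bound is uniform it survives averaging over $\vec{f}_2$.

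I expect the main obstacle to be Step 3: applying the convolution lemma with a non-spherical, non-centred second Gaussian. The subtlety is getting the operator inequality into exactly the scalar form the lemma assumes, and checking that the error bound does not depend on the shift $\vec{c}=\mat{\Theta}\mat{\Gamma}\vec{f}_2$. The completing-the-square in Step 1 is routine but must be checked carefully for the sign and shape of $\mat{\Gamma}$.
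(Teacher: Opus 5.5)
The paper gives no proof of this lemma; it only imports it as a special case of \cite[Thm.~2]{CiC:LaiSwaWoo25}. Your argument is a correct, self-contained proof of it. The following all check out:
\begin{itemize}
  \item the completing-the-square identity, with $\mat{\Phi}^{-1}$ the Schur complement (by Woodbury, $\mat{\Phi}=s_2^2\mat{I}+s_1^2\mat{Z}^\tr\mat{Z}\succ 0$, so $\ddv_{\ZZ^{m_2},\sqrt{\mat{\Phi}}}$ is well defined);
  \item the exact conditional $\ddv_{\ZZ^{m_1},\sqrt{\mat{\Theta}},\mat{\Theta}\mat{\Gamma}\vec{y}}$;
  \item the bound $\mat{\Theta}\succeq\upsilon\mat{I}$;
  \item the eigenvalue-monotonicity reduction of the convolution condition to the scalar hypothesis.
\end{itemize}

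Three points of precision are worth fixing.

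First, Step~1 is lossless, so there are only two lossy steps and you actually obtain $2\epsilon/(1-\epsilon)$. The ``three approximation steps'' in your overview is a miscount, but it only strengthens the claim.

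Second, in Step~2 replace ``relative error $O(\epsilon)$'' by the explicit bound. Smoothing gives $\rho_{\sqrt{\mat{\Theta}}}(\ZZ^{m_1}-\vec{c})\in[1-\epsilon,1+\epsilon]\cdot\sqrt{\det\mat{\Theta}}$ for every $\vec{c}$. Hence the ratio of the true marginal of $\vec{y}$ to $\ddv_{\ZZ^{m_2},\sqrt{\mat{\Phi}}}$ lies in $[\frac{1-\epsilon}{1+\epsilon},\frac{1+\epsilon}{1-\epsilon}]$, and the statistical distance is at most $\frac12\cdot\frac{2\epsilon}{1-\epsilon}$.

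Third, in Step~3 invoke the linear-transformation theorem of \cite{PKC:GMPW20}, as in \cref{lem:gaussian-convolution}. Apply $(\mat{I}\mid\mat{I})$ to $(\vec{e},\vec{f}_1)$. The kernel lattice $\{(\vec{u},-\vec{u})\}$, rescaled by the inverse of the block-diagonal width matrix, has Gram matrix exactly $(\chi^*)^{-2}\mat{I}+(\mat{\Theta}-(\chi^*)^2\mat{I})^{-1}$. This gives relative error $2\epsilon/(1-\epsilon)$, hence distance $\epsilon/(1-\epsilon)$, uniformly in the centre $\vec{c}$. Peikert's original convolution statement only gives distance $8\epsilon$, which would not fit within the stated $3\epsilon/(1-\epsilon)$.
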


\begin{lemma}\label{lem:LLWE-reprove}
    Let $\ring,m,Q,q,t_1,K,\bar{\chi},\chi^*,\chi,\gamma,\ddvt$ and $\epsilon \in (0,1)$ be parametrised by $\secpar$,
    where $\ring = \ZZ[\zeta]$ has ring degree $\varphi$,
    $m,Q,t_1 \in \poly$ and $\ddvt$ is a distribution over $\ring_q^{m}$.
    Let $\chi^* \geq \sqrt{2} \cdot \eta_\epsilon(\ring^{t_1})$, and 
    \(
        \gamma \leq \chi \cdot \sqrt{ \left( 2 \eta_\epsilon(\ring^{t_1})^2 + (\chi^*)^2 \right)^{-1} - \bar{\chi}^{-2} }.
    \)
    Denote
    \begin{align*}
        \params = (\ring,m,Q,q,t_1,K,\bar{\chi},\chi,\gamma,\ddvt).
    \end{align*}
    For any PPT $\adv$, define the following experiment $\mathsf{Exp}^b_{\params,\adv}$: 
    \begin{pchstack}[center]
        \adjustbox{max width=\columnwidth}%
        {
        \begin{pcvstack}
            \procedure{$\mathsf{Exp}^b_{\params, \adv}(\secparam)$}
            {
                \Lquery[\cdot] \coloneqq \emptyset ;~ \Ltag[\cdot] \coloneqq \emptyset \pccomment{initial lists}\\
                \mat{Z} \in \ring^{t_1 \times K} \gets \adv(\secparam) ;~ \pcassert \norm{\mat{Z}^\tr}_{2,2} \leq \gamma \\
                \mat{S} \sample \ring_q^{t_1 \times m} \\
                b' \gets \adv^{\mathsf{Tag}\oracle,\leak\oracle_b}(\cdot) \\
                \pcreturn b'
            }
            \pcvspace
            \procedure{$\mathsf{Tag}\oracle()$}
            {
                i \coloneqq |\Ltag| + 1 ;~ \vec{t}_i \sample \ddvt \pccomment{$\vec{t}_i \in \ring_q^{m}$} \\
                \Ltag[i] \coloneqq \vec{t}_i \\
                \pcreturn (i, \vec{t}_i)
            }
        \end{pcvstack}
        \pchspace
            \procedure{$\leak\oracle_b(i)$}
            {
                \pcassert |\Lquery| < Q \\
                \pcif \Lquery[i] = \emptyset \pcthen \\
                \pcind \pcassert \Ltag[i] \neq \emptyset ;~ \pcparse \vec{t}_i \gets \Ltag[i] \\
                \pcind \barvec{e}_i \sample \ddv_{\ring,\bar{\chi}}^{t_1} ;~
                    \vec{e}_i \sample \ddv_{\ring,\chi}^{K} \\
                \pcind \pcif b = 0 \pcthen \vec{b}_i \coloneqq \mat{S} \cdot \vec{t}_i + \barvec{e}_i \bmod q \\
                \pcind \pcif b = 1 \pcthen \vec{b}_i \sample \ring_q^{t_1} \\
                \pcind \vec{l}_i^\tr \coloneqq \barvec{e}_i^\tr \cdot \mat{Z} + \vec{e}_i^\tr \bmod q \\
                \pcind \Lquery[i] \coloneqq (\vec{b}_i, \vec{l}_i) \\
                \pcreturn (\vec{b}_i, \vec{l}_i) \gets \Lquery[i]
            }
        }
    \end{pchstack}
    and denote the advantage of $\adv$ in distinguishing the above experiments by
    \begin{align*}
        \Adv_{\params,\adv}^{\mathsf{Exp}} \coloneqq
        \abs{ \prob{\mathsf{Exp}^0_{\params, \adv}(\secparam) = 1} - \prob{\mathsf{Exp}^1_{\params, \adv}(\secparam) = 1}}.
    \end{align*}
    For any PPT $\adv$, there exists a PPT $\bdv$ such that
    \[
        \Adv_{\params,\adv}^{\mathsf{Exp}}(\secpar) \leq t_1 \cdot \Adv_{(\ring,m,\poly,q,\chi^*,\ddvt),\bdv}^{\LWE{}}(\secpar) + 2Q \cdot \frac{3\epsilon}{1-\epsilon}.
    \]
    In particular, if the $\LWE{\ring,m,\poly,q,\chi^*,\ddvt}$ assumption holds,
    then the experiments $\mathsf{Exp}^0_{\params,\adv}$ and $\mathsf{Exp}^1_{\params,\adv}$ are computationally indistinguishable.
\end{lemma}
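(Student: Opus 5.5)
The plan is a hybrid argument over the $t_1$ rows of $\matS$, with a statistical step before and after it. Each row step is a reduction to plain $\LWE{\ring,m,\poly,q,\chi^*,\ddvt}$ with tags as the public samples. The work is in making the leakage $\vecl_i$ simulatable without knowing $\barvec{e}_i$. Following the proof of \cite[Thm.~3]{CiC:LaiSwaWoo25}, I would apply \cref{lem:elLWE-thm2} per query to decouple $\vecl_i$ from $\barvec{e}_i$.

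\textbf{Step 1 (statistical rewriting).} Fix a query index $i$. Pass through the coefficient embedding, with $m_1 = \varphi t_1$, $m_2 = \varphi K$, $s_1=\bar{\chi}$, $s_2=\chi$, and $\mat{Z}$ replaced by $\Psi(\mat{Z}^\tr)^\tr$. Then
\[
\smax(\mat{Z}\mat{Z}^\tr)\le \norm{\mat{Z}^\tr}_{2,2}^2 \le \gamma^2,
\]
so $\upsilon^{-1}\le \bar{\chi}^{-2}+\gamma^2\chi^{-2}$. The hypothesis on $\gamma$ gives $\upsilon \ge 2\eta_\epsilon(\ring^{t_1})^2+(\chi^*)^2$. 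Together with $\chi^*\ge\sqrt{2}\,\eta_\epsilon$, this yields the condition of \cref{lem:elLWE-thm2}:
\[
\left((\chi^*)^{-2}+(\upsilon-(\chi^*)^2)^{-1}\right)^{-1/2}\ge\eta_\epsilon(\ZZ^{m_1}).
\]
The two terms are each at least $2\eta_\epsilon^2$, so their harmonic-type combination is at least $\eta_\epsilon^2$. Hence, up to $3\epsilon/(1-\epsilon)$, the pair $(\barvec{e}_i,\vecl_i)$ may be replaced by $(\vece_i'+\vecf_{1,i},\vecf_{2,i})$. Here $\vece_i'\sample\DD_{\ring,\chi^*}^{t_1}$ is independent, and $(\vecf_{1,i},\vecf_{2,i})$ is sampled publicly. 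The matrices $\Theta,\Phi$ depend only on $\mat{Z}$, which is fixed before any query.

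Apply this to all $Q$ queries in both $b=0$ and $b=1$; this costs $2Q\cdot3\epsilon/(1-\epsilon)$. In the $b=1$ game, $\vecb_i$ is uniform, so the rewriting is only needed to keep the hybrid chain uniform. Afterwards $\vecb_i=\mat{S}\vect_i+\vece_i'+\vecf_{1,i}$, where $\vecf_{1,i}$ is known to a simulator and $\vecl_i=\vecf_{2,i}$.

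\textbf{Step 2 (row-by-row LWE).} For $j=0,\dots,t_1$, let the $j$-th hybrid use uniform values in the first $j$ coordinates of every $\vecb_i$ and the structured values $\vecs_{j'}^\tr\vect_i+e'_{i,j'}+f_{1,i,j'}$ in the rest. To move from $j-1$ to $j$, the reduction $\bdv$ takes unboundedly many samples $(\vect,u)$ with $\vect\sample\ddvt$ from its $\LWE{}$ challenger. It answers $\mathsf{Tag}\oracle$ with the next $\vect$ and stores $u$. It samples the other rows of $\mat{S}$ itself, computes $\vecf_{1,i},\vecf_{2,i}$, and places $u+f_{1,i,j}$ in coordinate $j$. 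Real samples give hybrid $j-1$ and uniform samples give hybrid $j$, so a telescoping sum gives the factor $t_1$. Finally, undo Step 1 in the all-uniform game.

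\textbf{Main obstacle.} The delicate part is Step 1. \cref{lem:elLWE-thm2} is stated over $\ZZ$ with a real matrix, so I must check carefully that the ring-to-$\ZZ$ translation preserves the bound $\smax\le\gamma^2$ (via \cite[Lemma~2]{CiC:LaiSwaWoo25}) and that each query's fresh errors make the per-query application legitimate despite adaptivity. Because $\mat{Z}$ is fixed upfront and each query has fresh $\barvec{e}_i,\vece_i$, a hybrid over queries should suffice.
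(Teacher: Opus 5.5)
Your proposal is correct and is essentially the paper's proof. It uses the same per-query application of \cref{lem:elLWE-thm2} in the coefficient embedding with $\Psi(\matZ^\tr)^\tr$, the same check that the hypothesis on $\gamma$ gives $\upsilon - (\chi^*)^2 \geq 2\eta_\epsilon(\ring^{t_1})^2$, the same $t_1$-fold row hybrid, and the same total statistical loss $2Q\cdot 3\epsilon/(1-\epsilon)$. The only difference is ordering: the paper builds one simulator against matrix-secret LWE, applies the statistical lemma inside the analysis of each LWE world, and then reduces matrix-secret to row-secret LWE by a standard hybrid. You instead do the statistical rewriting first as explicit hybrids and then run the row-by-row reduction directly.
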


\begin{proof}
    We will rely on that LWE with a uniformly random secret matrix $\matS \sample \ring_q^{t_1 \times m}$ is no easy than LWE with a uniformly random secret row vector. This is stated in the following proposition. 
    
    \begin{proposition}
       Let $\ring,m,q,t_1,\chi^*,\ddvA,\ddvt$ be parametrised by $\secpar$,
       where $t_1 \in \poly$ and $\ddvt$ is a distribution over $\ring_q^{m}$. 
       For any PPT $\adv$, there exists a PPT $\bdv$ such that  
       \[
            \Adv_{(\ring,m,\poly,t_1,q,\chi^*,\ddvt),\adv}^{\LWE{}}(\secpar)
            \leq 
            t_1 \cdot \Adv_{(\ring,m,\poly,q,\chi^*,\ddvt),\bdv}^{\LWE{}}(\secpar),
       \]
    \end{proposition}

    The above follows directly from standard hybrid argument, its proof omitted.

    In the rest we show that, given a PPT $\adv$ distinguishing $\mathsf{Exp}^0$ and $\mathsf{Exp}^1$, we construct a PPT $\bdv$ against the $\LWE{\ring,m,\poly,t_1,q,\chi^*,\ddvt}$ assumption.
    We will show that if $\adv$ distinguishes the two experiments with advantage $\Adv_{\params,\adv}^{\mathsf{Exp}}(\secpar)$,
    then $\bdv$ breaks the $\LWE{\ring,m,\poly,t_1,q,\chi^*,\ddvt}$ assumption with advantage at least
    \begin{align}\label{eq:LLWE-reprove-advantage}
        \Adv_{\params,\adv}^{\mathsf{Exp}}(\secpar) - 2Q \cdot \frac{3\epsilon}{1-\epsilon}.
    \end{align}
    The lemma then follows from combining with the proposition above.

    The PPT $\bdv$ works as follows:
    \begin{itemize}
        \item Receive for $\adv$ the matrix $\mat{Z}$, and assert $\norm{\mat{Z}^\tr}_{2,2} \leq \gamma$.
        \item Internally maintain the lists $\Lquery$ and $\Ltag$. Also, let
        \begin{align*} 
            \mat{\Gamma} = \chi^{-2} \cdot \Psi(\mat{Z}^\tr)^\tr, \qquad
            \mat{\Theta} = \big( \bar{\chi}^{-2} \cdot \mat{I}_{\varphi t_1} + \chi^{-2} \cdot \Psi(\mat{Z}^\tr)^\tr \Psi(\mat{Z}^\tr) \big)^{-1}, \qquad
            \mat{\Phi} = \big( \chi^{-2} \cdot \mat{I}_{\varphi K} - \mat{\Gamma}^\tr \, \mat{\Theta} \, \mat{\Gamma} \big)^{-1},
        \end{align*}
        where we recall that $\psi \colon \field \to \ZZ^\varphi$ is the coefficient embedding w.r.t. the power basis (\cref{sec:prelim}), and $\Psi(\matM)$ is the matrix of $\vecx \mapsto \matM \vecx$ in the coefficient embedding, so that $\psi(\matM \vecx) = \Psi(\matM) \cdot \psi(\vecx)$.
        In particular $\psi((\barvec{e}^\tr \matZ)^\tr) = \psi(\matZ^\tr \barvec{e}) = \Psi(\matZ^\tr) \cdot \psi(\barvec{e})$, so the leakage $\barvec{e}^\tr \matZ + \vece^\tr$ is, in coefficients, $\psi(\barvec{e})^\tr \cdot \Psi(\matZ^\tr)^\tr + \psi(\vece)^\tr$, and \cref{lem:elLWE-thm2} is applied with the integer matrix $\Psi(\matZ^\tr)^\tr \in \ZZ^{\varphi t_1 \times \varphi K}$.%

        \item For any query to $\mathsf{Tag}\oracle$, do the following: 
        \begin{itemize}
            \item Query the LWE challenger to obtain an LWE sample $(\vect_i,\hatvec{b}_i)$.
            \item Store $\Ltag[i] \coloneqq (\vect_i,\hatvec{b}_i)$, i.e. including also the LWE sample $\hatvec{b}_i$, and answer with $\vect_i$. 
        \end{itemize}

        \item For any query $i$ to $\mathsf{Leak}\oracle$, do the following:
        \begin{itemize}
            \item Assert $|\Lquery| < Q$, i.e. there has been fewer than $Q$ distinct queries.
            \item If $\Lquery[i] = \emptyset$, assert $\Ltag[i] \neq \emptyset$ and parse $(\vect_i,\hatvec{b}_i) \gets \Ltag[i]$.
            \item Sample $\vecl_i \sample \ddv_{\ring^{K}, \sqrt{\mat{\Phi}}}$, set 
            $\vecc_i \coloneqq \psi^{-1}(\mat{\Theta} \cdot \mat{\Gamma} \cdot \psi(\vecl_i))$\footnote{We abuse notation slightly, because $\mat{\Theta} \cdot \mat{\Gamma} \cdot \psi(\vecl_i)$ may not lie in the image of $\psi$. This is not an issue, as we only operate on $\psi(\vecc)$.} and $\matF \coloneqq \mat{\Theta} - (\chi^*)^2 \mat{I}_{\varphi t_1}$, and sample
            $\vecf_i \sample \ddv_{\ring^{t_1}, \sqrt{\matF}, \vecc_i}$. Set
            \(
                \vecb_i \coloneqq \hatvec{b}_i + \vecf_i \bmod q
            \)
            and $\Lquery[i] \coloneqq (\vecb_i, \vecl_i)$.
            \item Answer with $(\vecb_i, \vecl_i) \gets \Lquery[i]$.
        \end{itemize}
        \item Eventually receive a bit $b'$ from $\adv$, return $b'$. 
    \end{itemize}
    We analyse the output of $\bdv$. 

    \medskip\noindent
    \underline{If $\bdv$ interacts with the LWE experiment $0$}:
    In this case, each sample $(\vect_i,\hatvec{b}_i)$ from the LWE challenger is such that 
    $\hatvec{b}_i = \matS \cdot \vect_i + \hatvec{e}_i \bmod q$, where $\hatvec{e}_i \sample \ddv_{\ring,\chi^*}^{t_1}$.
    The simulated $\leak\oracle$ answer is thus 
    \begin{align*}
        (\vecb_i, \vecl_i) 
        =
        (\matS \cdot \vect_i + \hatvec{e}_i + \vecf_i,~ \vecl_i) 
        \approx_s
        (\matS \cdot \vect_i + \barvec{e}_i,~ \barvec{e}_i^\tr \cdot \matZ + \vece_i^\tr),
    \end{align*}
    where the last line follows from \cref{lem:elLWE-thm2}, with statistical distance at most $\frac{3\epsilon}{1-\epsilon}$.
    Since there are at most $Q$ many queries, 
    the overall view of $\adv$ is close to that from $\mathsf{Exp}^0$ with statistical distance at most 
    $Q \cdot \frac{3\epsilon}{1-\epsilon}$.

    \medskip\noindent
    \underline{If $\bdv$ interacts with the LWE experiment $1$}:
    In this case, each sample $(\vect_i,\hatvec{b}_i)$ from the LWE challenger is such that 
    $\hatvec{b}_i \sample \ring_q^{t_1}$ is uniformly random.
    The simulated $\leak\oracle$ answer is thus 
    \begin{align*}
        (\vecb_i, \vecl_i) 
        = 
        (\hatvec{b} + \vecf_i,~ \vecl_i) 
        = 
        (\mathcal{U}(\ring_q^{t_1}),~ \vecl_i) 
        \approx_s 
        (\mathcal{U}(\ring_q^{t_1}),~ \barvec{e}_i^\tr \cdot \matZ + \vece_i^\tr), 
    \end{align*}
    where the last line again follows from \cref{lem:elLWE-thm2}, with statistical distance at most $\frac{3\epsilon}{1-\epsilon}$.
    Since there are at most $Q$ many queries, 
    the overall view of $\adv$ is close to that from $\mathsf{Exp}^1$ with statistical distance at most 
    $Q \cdot \frac{3\epsilon}{1-\epsilon}$.

    Combining the two cases yields \cref{eq:LLWE-reprove-advantage}.

    It remains to see that \cref{lem:elLWE-thm2} can indeed be invoked. 
    Write $\upsilon^{-1} \coloneqq \bar{\chi}^{-2} + \smax(\Psi(\mat{Z}^\tr)^\tr \cdot \Psi(\mat{Z}^\tr)) \cdot \chi^{-2}$.
    By the lemma condition,
    \[
        \smax(\Psi(\mat{Z}^\tr)^\tr \cdot \Psi(\mat{Z}^\tr)) = \norm{\mat{Z}^\tr}_{2,2}^2 \leq \gamma^2 \leq \chi^2 \big( (2\eta_\epsilon(\ring^{t_1})^2 + (\chi^*)^2)^{-1} - \bar{\chi}^{-2} \big), 
    \]
    which implies $\upsilon^{-1} \leq (2\eta_\epsilon(\ring^{t_1})^2 + (\chi^*)^2)^{-1}$, and therefore 
    $\upsilon - (\chi^*)^2 \geq 2\eta_\epsilon(\ring^{t_1})^2$. 
    Together with the other condition of $(\chi^*)^2 \geq 2\eta_\epsilon(\ring^{t_1})^2$, 
    we have 
    $(\chi^*)^2 < \upsilon$ and $\big( (\chi^*)^{-2} + (\upsilon - (\chi^*)^2)^{-1} \big)^{-1/2} \geq \eta_\epsilon(\ring^{t_1})$, hence the requirements of \cref{lem:elLWE-thm2} are satisfied.
    \qed
\end{proof}

\begin{lemma}\label{lem:core2-fix}
    Let $\ring,m,Q,q,m_c,K,t_1,t_2,\hat{\chi},\chi,\gamma,\ddvt$ be parametrised by $\secpar$, where $\ddvt$ is a distribution over $\ring_q^{m}$. Denote
    \begin{align*}
        \params = (\ring,m,Q,q,m_c,K,t_1,t_2,\chi,\gamma,\ddvt), \qquad
        \params_{\mathsf{E}} = (\ring,m,Q,q,t_1,K,\hat{\chi},\chi,\gamma,\ddvt).
    \end{align*}
    Consider the experiment $\mathsf{Exp1}_{\params,\adv}^b$ defined in \cref{thm:core2-fix} and the advantage $\Adv_{\params,\adv}^{\mathsf{Exp1}}$ defined in \cref{eq:core2-fix-advantage}, and let $\mathsf{Exp}$ denote the experiment of \cref{lem:LLWE-reprove} with parameters $\params_{\mathsf{E}}$.
    For any PPT $\adv$, there exists a PPT $\bdv$ such that
    \[
        \Adv_{\params,\adv}^{\mathsf{Exp1}}(\secpar) = \Adv_{\params_{\mathsf{E}},\bdv}^{\mathsf{Exp}}(\secpar).
    \]
\end{lemma}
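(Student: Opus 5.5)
The plan is to give a direct, perfect reduction $\bdv$ that plays $\mathsf{Exp}^b_{\params_{\mathsf{E}}}$ of \cref{lem:LLWE-reprove} and simulates $\mathsf{Exp1}^b_{\params}$ for $\adv$. The key algebraic observation concerns the split $\vecx = (\vecc, \vect_i)$, with $\matS_1 = (\matS_{1,c} \mid \matS_{1,t}) \in \ring_q^{t_1\times m_c}\times\ring_q^{t_1\times m}$. Then
\[
\matM\begin{bmatrix}\matS_1\\ \matS_2\end{bmatrix}\vecx + \vece = \matM_1(\matS_{1,c}\vecc + \matS_{1,t}\vect_i) + \matM_2\matS_2\vecx + \vece .
\]
The term $\matS_{1,t}\vect_i$ is exactly where an LWE sample with secret $\matS_{1,t}$ and public vector $\vect_i$ can be embedded. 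If we write it as $\vecb_i - \barvec{e}_i$, the noise $-\matM_1\barvec{e}_i$ combines with $\vece$. This is precisely the leakage $\vecl_i^\tr = \barvec{e}_i^\tr \matZ + \vece_i^\tr$ in \cref{lem:LLWE-reprove} with $\matZ \coloneqq \matM_1^\tr$. We use the sign symmetry of centred discrete Gaussians, $\barvec{e}_i \equiv -\barvec{e}_i$, to absorb the minus sign. The norm condition $\norm{\matZ^\tr}_{2,2} = \norm{\matM_1}_{2,2}\le\gamma$ transfers verbatim.

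The reduction $\bdv$ works as follows. First, it runs $\adv$ to obtain $(\matM,\matS_2)$, checks $\norm{\matM_1}_{2,2}\le\gamma$, and outputs $\matZ \coloneqq \matM_1^\tr$ to its challenger. Second, it samples $\matS_{1,c}\sample\ring_q^{t_1\times m_c}$ itself. Third, it forwards each $\mathsf{Tag}\oracle$ query to its own $\mathsf{Tag}\oracle$, returning the same $(i,\vect_i)$; the tag distributions coincide since both are $\ddvt$. Fourth, on a query $\oracle(\vecc,i)$ not yet answered, it enforces the same asserts, queries $\leak\oracle_b(i)$ to get $(\vecb_i,\vecl_i)$, and answers
\[
\vecd \coloneqq \matM_1(\matS_{1,c}\vecc + \vecb_i) + \matM_2\matS_2\vecx + \vecl_i \bmod q ,
\]
caching $\vecd$ under $i$. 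Both experiments cache by $i$, and the $|\Lquery|<Q$ checks align, so at most $Q$ leak queries are made.

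For $b=0$, with $\vecb_i = \matS_{1,t}\vect_i + \barvec{e}_i$, we get $\vecd = \matM(\matS_1;\matS_2)\vecx + (\vecl_i + \matM_1\barvec{e}_i)$. The total is $\matM_1(\matS_{1,c}\vecc + \matS_{1,t}\vect_i) + \matM_2\matS_2\vecx$ plus the Gaussian term $\matM_1\barvec{e}_i + \vece_i$ computed from the leak, after the sign flip noted above. This does not yet match $\mathsf{Exp1}^0$, where the noise is just $\vece$. So the construction must be adjusted: the answer should be $\vecd = \matM_1(\matS_{1,c}\vecc + \vecb_i) + \matM_2\matS_2\vecx - \matM_1\barvec{e}_i + \vece$, and since $\vecl_i$ equals $\matM_1\barvec{e}_i+\vece_i$ up to sign, $\bdv$ must use $-\vecl_i$ rewritten as $\vece_i - \matM_1\barvec{e}_i$ with the sign conventions matched. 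This is legitimate because $(\barvec{e}_i,\vece_i)\mapsto(-\barvec{e}_i,\vece_i)$ preserves the joint distribution. Hence $\bdv$ answers $\matM_1(\matS_{1,c}\vecc + \vecb_i) + \matM_2\matS_2\vecx + \vecl_i'$, where $\vecl_i'$ is the leakage computed with the flipped sign, and the view is exactly $\mathsf{Exp1}^0$.

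For $b=1$, $\vecb_i$ is uniform and independent of everything else. Then $\vecw \coloneqq \matS_{1,c}\vecc + \vecb_i$ is uniform in $\ring_q^{t_1}$, and the noise $\vecl_i$ is distributed like $\vece$ only after conditioning appropriately. Here the delicate point arises: in the ideal case the leak carries $\barvec{e}_i^\tr\matZ$, which must be absorbed. We absorb it by writing $\matM_1(\vecw + \barvec{e}_i)$ with $\vecw+\barvec{e}_i$ still uniform. This gives exactly $\matM(\vecw;\matS_2\vecx)+\vece_i$, matching $\mathsf{Exp1}^1$.

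The main obstacle is this bookkeeping of where $\barvec{e}_i$ goes. In the real case it cancels against the LWE noise; in the ideal case it is swallowed by the uniform $\vecb_i$. Once the signs are fixed consistently, the simulation is perfect in both worlds, and equality of advantages follows.
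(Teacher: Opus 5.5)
Your strategy is the paper's: output $\matZ \coloneqq \matM_1^\tr$, sample $\matS_{1,c}$ yourself, forward tag queries, embed the challenger's secret via $\matS\vect_i = \vecb_i - \barvec{e}_i$, and cancel $\matM_1\barvec{e}_i$ using the leakage. However, the reduction as you state it does not work, and the repair you propose is not a step $\bdv$ can carry out.

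With the answer $\vecd = \matM_1(\matS_{1,c}\vecc + \vecb_i) + \matM_2\matS_2\vecx + \vecl_i$, in $\mathsf{Exp}^0$ you get $\vecd = \matM_1\matS_1\vecx + \matM_2\matS_2\vecx + 2\matM_1\barvec{e}_i + \vece_i$. This noise does not match $\mathsf{Exp1}^0$. Your fix uses the symmetry $(\barvec{e}_i,\vece_i)\mapsto(-\barvec{e}_i,\vece_i)$ to obtain a ``leakage computed with the flipped sign''. But $\bdv$ never sees $\barvec{e}_i$. It only receives $(\vecb_i,\vecl_i)$, where the same $\barvec{e}_i$ occurs with the same sign, so it cannot flip the sign inside $\vecl_i$ alone. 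Flipping $\barvec{e}_i$ globally also changes $\vecb_i$, and still leaves $-2\matM_1\barvec{e}_i$ in $\matM_1\vecb_i + \vecl_i$. Moreover, rewriting $-\vecl_i = -\matM_1\barvec{e}_i - \vece_i$ as $\vece_i - \matM_1\barvec{e}_i$ flips $\vece_i$, not $\barvec{e}_i$.

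The correct reduction, which is the paper's, simply subtracts the leakage: $\vecd \coloneqq \matM_1(\matS_{1,c}\vecc + \vecb_i) - \vecl_i + \matM_2\matS_2\vecx \bmod q$.
\begin{itemize}
\item In $\mathsf{Exp}^0$ this equals $\matM_1\matS_1\vecx + \matM_2\matS_2\vecx - \vece_i$ pointwise, with $\matS_1 = (\matS_{1,c},\matS)$ uniform. The only distributional fact needed is $-\vece_i \equiv \vece_i$ for the fresh error $\vece_i$, which enters the view only through $\vecd$.
\item In $\mathsf{Exp}^1$ it equals $\matM_1\vecw + \matM_2\matS_2\vecx - \vece_i$ with $\vecw \coloneqq \matS_{1,c}\vecc + \vecb_i - \barvec{e}_i$. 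This $\vecw$ is uniform and independent across indices because each $\vecb_i$ is fresh and uniform.
\end{itemize}
You must analyse one and the same reduction in both worlds. Your ideal-world computation uses $+\vecl_i$, which happens to work there but is inconsistent with the sign the real world requires. With $-\vecl_i$ both worlds are simulated perfectly and the advantages are equal.
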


\begin{proof}
    Given a PPT $\adv$ distinguishing $\mathsf{Exp1}^0$ and $\mathsf{Exp1}^1$, we construct a PPT $\bdv$ distinguishing $\mathsf{Exp}^0$ and $\mathsf{Exp}^1$ in \cref{lem:LLWE-reprove}, as follows:
    \begin{itemize}
        \item Receive from $\adv(\secparam)$ the matrices $(\matM = (\matM_1,\matM_2),\matS_2)$, and assert $\norm{\matM_1}_{2,2} \leq \gamma$.
        \item Send $\matZ \coloneqq \matM_1^\tr \in \ring^{t_1 \times K}$ to the $\mathsf{Exp}$ challenger. 
        \item Sample $\matS_{1,c} \sample \ring_q^{t_1 \times m_c}$. 
        \item Maintain a list $\Lquery$. Answer $\adv$'s oracle queries as follows:
        \begin{itemize}
            \item $\mathsf{Tag}\oracle()$: forward the query to the tag oracle of $\mathsf{Exp}$, obtain $(i,\vec{t}_i)$, and answer with $(i,\vec{t}_i)$.
            \item $\oracle_b(\vecc,i)$: if $\Lquery[i] = \emptyset$, query the leak oracle $\leak\oracle_b(i)$ of $\mathsf{Exp}$ to obtain $(\vec{b}_i,\vec{l}_i)$, set $\vecx \coloneqq \begin{bmatrix}\vecc \\ \vec{t}_i \end{bmatrix}$, compute
            \[
                \vecd \coloneqq \matM_1 \cdot (\matS_{1,c} \cdot \vecc) + \matM_1 \cdot \vec{b}_i - \vec{l}_i + \matM_2 \cdot (\matS_2 \cdot \vecx) \bmod q,
            \]
            and set $\Lquery[i] \coloneqq \vecd$. Answer with $\vecd \gets \Lquery[i]$.
        \end{itemize}
        \item Eventually receive a bit $b'$ from $\adv$, return $b'$.
    \end{itemize}
    Below we show that $\bdv$ perfectly simulates the view of either $\mathsf{Exp1}^0$ or $\mathsf{Exp1}^1$ to $\adv$,
    if it interacts respectively with $\mathsf{Exp}^0$ or $\mathsf{Exp}^1$, thus the lemma follows.

    \medskip\noindent
    \underline{If $\bdv$ interacts with $\mathsf{Exp}^0$}:
    By \cref{lem:LLWE-reprove}, for each fresh index $i$ the leak oracle returns
    \[
        \vec{b}_i = \matS \cdot \vec{t}_i + \barvec{e}_i \bmod q, \qquad
        \vec{l}_i = \matM_1 \cdot \barvec{e}_i + \vec{e}_i,
    \]
    where $\matS \sample \ring_q^{t_1 \times m}$ is the fixed secret of $\mathsf{Exp}$, $\barvec{e}_i \sample \ddv_{\ring,\hat{\chi}}^{t_1}$, $\vec{e}_i \sample \ddv_{\ring,\chi}^{K}$, and we used $\matZ = \matM_1^\tr$. Substituting,
    \begin{align*}
        \vecd
        =~& \matM_1 \cdot \matS_{1,c} \cdot \vecc + \matM_1 \cdot (\matS \cdot \vec{t}_i + \barvec{e}_i) - (\matM_1 \cdot \barvec{e}_i + \vec{e}_i) + \matM_2 \cdot \matS_2 \cdot \vecx \\
        =~& \matM_1 \cdot (\matS_{1,c} \cdot \vecc + \matS \cdot \vec{t}_i) - \vec{e}_i + \matM_2 \cdot \matS_2 \cdot \vecx \\
        =~& \matM_1 \cdot \matS_1 \cdot \vecx - \vec{e}_i + \matM_2 \cdot \matS_2 \cdot \vecx
        \equiv \matM \cdot \begin{bmatrix} \matS_1 \\ \matS_2 \end{bmatrix} \cdot \vecx + \vec{e}_i \bmod q,
    \end{align*}
    where $\matS_1 \coloneqq (\matS_{1,c},\matS) \in \ring_q^{t_1 \times (m_c+m)}$, so that $\matS_1 \cdot \vecx = \matS_{1,c} \cdot \vecc + \matS \cdot \vec{t}_i \bmod q$,
    and the last $\equiv$ holds since the discrete Gaussian over $\ring$ is symmetric around zero.
    Since $\matS_{1,c}$ and $\matS$ are both uniform, $\matS_1$ is uniform over $\ring_q^{t_1 \times (m_c+m)}$, exactly as in $\mathsf{Exp1}^0$. Thus $\vecd$ is distributed identically to the oracle answer in $\mathsf{Exp1}^0$, and $\bdv$ perfectly simulates $\mathsf{Exp1}^0$.

    \medskip\noindent
    \underline{If $\bdv$ interacts with $\mathsf{Exp}^1$}:
    In this case, $\vec{b}_i \sample \ring_q^{t_1}$ is uniform and independent, while $\vec{l}_i = \matM_1 \barvec{e}_i + \vec{e}_i$ as before. Substituting,
    \begin{align*}
        \vecd
        =~& \matM_1 \cdot \matS_{1,c} \cdot \vecc + \matM_1 \cdot \vec{b}_i - (\matM_1 \cdot \barvec{e}_i + \vec{e}_i) + \matM_2 \cdot \matS_2 \cdot \vecx \\
        =~& \matM_1 \cdot (\matS_{1,c} \cdot \vecc + \vec{b}_i - \barvec{e}_i) - \vec{e}_i + \matM_2 \cdot \matS_2 \cdot \vecx \\
        =~& \matM_1 \cdot \vecw - \vec{e}_i + \matM_2 \cdot \matS_2 \cdot \vecx
        \equiv \matM \cdot \begin{bmatrix} \vecw \\ \matS_2 \cdot \vecx \end{bmatrix} + \vec{e}_i \bmod q,
    \end{align*}
    where $\vecw \coloneqq \matS_{1,c} \vecc + \vec{b}_i - \barvec{e}_i \bmod q$. Since $\vec{b}_i$ is uniformly random and independent across queries, so is $\vecw$, identically distributed as the freshly sampled $\vecw$ in $\mathsf{Exp1}^1$. Thus $\vecd$ is distributed identically to the oracle answer in $\mathsf{Exp1}^1$, and $\bdv$ perfectly simulates $\mathsf{Exp1}^1$.

    \medskip\noindent
    \underline{Norm bound and query count}:
    The matrix $\matZ = \matM_1^\tr$ sent by $\bdv$ to the $\mathsf{Exp}$ challenger is admissible,
    because
    \[
        \norm{\matZ^\tr}_{2,2} = \norm{\matM_1}_{2,2} \leq \gamma.
    \]%
    $\bdv$ simply forwards each $\mathsf{Tag}\oracle$ query, and it queries $\leak\oracle$ exactly once per fresh index $i$, hence at most $Q$ times by the assertion $|\Lquery| < Q$ in $\oracle_b$ of $\mathsf{Exp}^b$.
    \qed
\end{proof}

%

\subsection{\texorpdfstring{\cref{thm:core1-alt}}{Theorem \ref{thm:core1-alt}} and proofs}\label{app:core1-alt}

\begin{theorem}\label{thm:core1-alt}
    Let $\ring,n,m,Q,q,K,t,t_2,\epsilon,\bar{\chi},\chi,\chi^*,\gamma,\beta,\ddvA$ be parametrised by $\secpar$,
    where $\ring = \ZZ[\zeta]$ has ring degree $\varphi$ and field expansion factor $\expansion_2$ w.r.t. the coefficient $\ell_2$-norm,
    $n,m,Q,K,t,t_2 \in \poly$, $\ddvA$ is a distribution over $\ring_q^{n \times m}$,
    $0 < \epsilon \leq 1/2$ and $\chi^* \geq \sqrt{2} \cdot \eta_\epsilon(\ring^m)$.
    Denote by $\vecu^\tr \coloneqq (1,0,\ldots,0) \in \ring^{1 \times t}$ the first unit vector, and
    \begin{align*}
        \params = (\ring,n,m,Q,q,K,t,t_2,\bar{\chi},\chi,\gamma,\beta,\ddvA), \qquad
        \params_{\mathsf{LWE}} = (\ring,n,m,q,\chi^*,\ddvA).
    \end{align*}
    Let $\tensorexpansion{t} \coloneqq 1$ if $\conductor$ is a power of two and $\tensorexpansion{t} \coloneqq \sqrt{t}$ otherwise.
    Suppose that $\gamma \cdot \sqrt{\varphi m} \cdot \beta \geq 1$ and
    \begin{align}\label{eq:core1-alt-constraint}
        \sqrt{2 Q K \cdot \varphi m} \cdot \expansion_2 \cdot \tensorexpansion{t} \cdot \gamma \cdot \beta \leq \chi \cdot \sqrt{(2 \eta_\epsilon(\ring^m)^2 + (\chi^*)^2)^{-1} - \bar{\chi}^{-2}}.
    \end{align}%
    For any PPT $\adv$, define the following experiment $\mathsf{Exp2}_{\params,\adv}^b$:
    \begin{pchstack}[center]
        \procedure{$\mathsf{Exp2}^b_{\params, \adv}(\secparam)$}
        {
            i \coloneqq 0 \\
            \matA \sample \ddvA \\
            \matM = (\matM_1,\matM_2) \in \ring^{K \times (t+t_2)} \gets \adv(\matA) \\
            \pcassert \norm{\matM_1}_{2,2} \leq \gamma \\
            \matS_2 \sample \ring_q^{t_2 \times n} ;~
            \barvec{e} \sample \ddv_{\ring,\bar{\chi}}^m \\
            \pcif b=0 \pcthen \\
                \pcind \matS_1 \sample \ring_q^{t \times n} \\
                \pcind \vecb^\tr \coloneqq \vecu^\tr \matS_1 \matA + \barvec{e}^\tr \bmod q \\
            \pcif b=1 \pcthen \\
                \pcind \barmat{S}_1 \sample \ring_q^{t \times m} \\
                \pcind \vecb^\tr \coloneqq \vecu^\tr \barmat{S}_1 + \barvec{e}^\tr \bmod q \\
            b' \gets \adv^{\oracle_b}(\matA,\vecb,\matS_2) \\
            \pcreturn b'
        }
        \pchspace
        \procedure{$\oracle_b(\vecx \in \ring^m)$}
        {
            \pcassert i < Q \\ 
            i \coloneqq i+1 \\
            \pcassert \norm{\vecx}_\infty \leq \beta \\
            \vece \sample \ddv_{\ring,\chi}^{K} \\
            \pcif b=0 \pcthen
                \vecd \coloneqq \matM \cdot \begin{bmatrix} \matS_1 \matA \pclb \matS_2 \matA \end{bmatrix} \vecx + \vece \bmod q \\
            \pcif b=1 \pcthen
                \vecd \coloneqq \matM \cdot \begin{bmatrix} \barmat{S}_1 \pclb \matS_2 \matA \end{bmatrix} \vecx + \vece \bmod q \\
            \pcreturn \vecd
        }
    \end{pchstack}%
    and denote the advantage of $\adv$ in distinguishing the above experiments by
    \begin{align}\label{eq:core1-alt-advantage}
        \Adv_{\params,\adv}^{\mathsf{Exp2}} \coloneqq
        \abs{ \prob{\mathsf{Exp2}^0_{\params, \adv}(\secparam) = 1} - \prob{\mathsf{Exp2}^1_{\params, \adv}(\secparam) = 1}}.
    \end{align}
    $\forall$ PPT $\adv$, $\exists$ PPT $\bdv$ such that
    \begin{align*}
        \Adv_{\params,\adv}^{\mathsf{Exp2}}(\secpar)
        \leq t \cdot \left( \Adv_{\params_{\mathsf{LWE}},\bdv}^{\LWE{}}(\secpar) + 4 \cdot QK \cdot \epsilon + 6 \cdot \epsilon/(1-\epsilon) \right).
    \end{align*}
    In particular, if $QK \in \poly$, $\epsilon \in \negl$ and the $\LWE{\params_{\mathsf{LWE}}}$ assumption holds,
    then the experiments $\mathsf{Exp2}^0_{\params,\adv}$ and $\mathsf{Exp2}^1_{\params,\adv}$ are computationally indistinguishable.
\end{theorem}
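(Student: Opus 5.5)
The plan is a hybrid argument over the $t$ rows of $\matS_1$. Each hybrid step is reduced to the fully-adaptive Leaky LWE assumption, which \cref{cor:lwe-to-adpllwe} (with the sufficient condition \cref{eq:adpllwe-mod2} from \cref{rem:llwe-kappa}) converts into standard $\LWE{\params_{\mathsf{LWE}}}$. This explains the factor $t$ and the statistical terms $4QK\epsilon+6\epsilon/(1-\epsilon)$ per step.

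\textbf{Hybrids.} For $j\in[0,t]$, let $\hyb_j$ be the experiment in which the first $j$ rows of the matrix used by $\oracle$ in place of $\matS_1\matA$ are uniformly random rows $\bar{\vecs}_1,\ldots,\bar{\vecs}_j\in\ring_q^{1\times m}$. The remaining rows are $\vecs_{j+1}\matA,\ldots,\vecs_t\matA$ with $\vecs_i$ uniform. The vector $\vecb$ is always the first row plus $\barvec{e}$. Then $\hyb_0=\mathsf{Exp2}^0$ and $\hyb_t=\mathsf{Exp2}^1$, and it suffices to bound each step $\hyb_{j-1}\to\hyb_j$.

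\textbf{One step.} The reduction $\bdv_j$ receives $(\matA,\vecb^*)$ from the adaptive LLWE challenger, where $\vecb^{*\tr}$ is either $\vecs_j^\tr\matA+\barvec{e}_j^\tr$ or uniform. It samples all other rows itself, and also $\matS_2$ and, when $j\neq1$, the public error $\barvec{e}$. When $j=1$ it sets $\vecb\coloneqq\vecb^*$. In every case it replaces row $j$ by $\vecb^*$. This is exactly right when $\vecb^*$ is uniform. In the LWE case it introduces an unwanted term $-\matM_{1,\cdot j}\,\barvec{e}_j^\tr\vecx$ in every oracle answer. To cancel it, on each query $\vecx$ and each coordinate $k\in[K]$, the reduction queries the leakage oracle on $\vecz_k\coloneqq M_{1,kj}\cdot\vecx$ and receives $\ell_k=\barvec{e}_j^\tr\vecz_k+f_k$. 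It then sets
\[
d_k \coloneqq \matM_{k,\cdot}\begin{bmatrix}\cdots\\ \vecb^{*\tr}\vecx\\ \cdots\\ \matS_2\matA\vecx\end{bmatrix} - \ell_k \bmod q .
\]
In the LWE case this equals the honest answer with error $-f_k$. Since $f_k\sim\DD_{\ring,\chi}$ is symmetric, this is identical to $\hyb_{j-1}$. In the uniform case, $-\ell_k$ is independent of $\vecb^*$ and the view is $\hyb_j$ up to the same replacement of $e_k$, which I would handle by a second LLWE hybrid on the unused leakage, or simply by noting that the leakage in the uniform game is distributed independently of everything else. In total the reduction makes $QK$ leakage queries.

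\textbf{Admissibility (main obstacle).} The main obstacle is checking $\norm{\vecz_k}_2\le\alpha$ with
\[
\alpha\coloneqq\sqrt{\varphi m}\cdot\expansion_2\cdot\tensorexpansion{t}\cdot\gamma\cdot\beta ,
\]
so that \cref{eq:core1-alt-constraint} is exactly \cref{eq:adpllwe-mod2} with $Q$ replaced by $QK$, and $\gamma\sqrt{\varphi m}\beta\ge1$ gives $\alpha\ge1$. For the norm bound, I would first use $\norm{\vecz_k}_2\le\expansion_2\norm{M_{1,kj}}_2\norm{\vecx}_2$ together with $\norm{\vecx}_2\le\sqrt{\varphi m}\,\beta$. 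Next I would bound a single entry of $\matM_1$ by the operator norm $\norm{\matM_1}_{2,2}\le\gamma$. For power-of-two conductors this goes through \cref{lem:induced-norm-embeddings}, since each $|\sigma_i(M_{1,kj})|\le\gamma$ and $\psi$ is an isometry up to $\sqrt\varphi$. For general conductors I would lose the extra $\tensorexpansion{t}$ factor via a cruder Frobenius-type argument. I expect this conversion between the ring operator norm and entrywise coefficient norms to be the delicate part.

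Summing the $t$ steps with \cref{cor:lwe-to-adpllwe} (with $QK$ queries) gives the claimed bound.
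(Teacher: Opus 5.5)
Your plan is the paper's proof with the layers collapsed. The paper goes through $\mathsf{Exp3}$ (\cref{lem:core1-alt}), then adaptive Leaky LWE for $\matI_t\otimes\matA$ (\cref{lem:mLLWE-from-AdpLLWE-alt}), then a hybrid over the $t$ blocks (\cref{lem:AdpLLWE-IotimesA}), and only then invokes \cref{cor:lwe-to-adpllwe} with $QK$ queries. The block that hybrid forwards to the single-secret challenger in step $j$ is exactly your $M_{1,kj}\vecx$, and the loss accounting is the same. Two steps, however, do not close as you wrote them.

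\textbf{Admissibility.} Your $\alpha$ contains $\expansion_2$, but \cref{eq:adpllwe-mod2} already carries its own factor $\expansion_2$ in front of $\alpha$. That factor comes from $\norm{\vecz^\tr}_{2,2}\le\expansion_2\norm{\vecz}_2$ in \cref{lem:LLWE-implies-AdpLLWE}. With your $\alpha$ you would therefore need
$\sqrt{2QK\varphi m}\cdot\expansion_2^2\cdot\tensorexpansion{t}\cdot\gamma\cdot\beta\le\chi\cdot\sqrt{(2\eta_\epsilon(\ring^m)^2+(\chi^*)^2)^{-1}-\bar{\chi}^{-2}}$. This is stronger than \cref{eq:core1-alt-constraint} by a factor $\expansion_2$, so the hypothesis is not ``exactly'' \cref{eq:adpllwe-mod2}.

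The culprit is the bound $\norm{\vecz_k}_2\le\expansion_2\norm{M_{1,kj}}_2\norm{\vecx}_2$. Use the operator norm directly instead. For $x\in\field$ and $\vecu_j$ the $j$-th unit vector of $\field^t$,
$\norm{M_{1,kj}\,x}_2\le\norm{\matM_1\cdot(x\,\vecu_j)}_2\le\gamma\norm{x}_2$
(equivalently, $\Psi(M_{1,kj})$ is a block of $\Psi(\matM_1)$). Summing over the entries of $\vecx$ gives $\norm{\vecz_k}_2\le\gamma\sqrt{\varphi m}\,\beta$ for every conductor. Then with $\alpha\coloneqq\tensorexpansion{t}\cdot\gamma\sqrt{\varphi m}\,\beta$ the queries are admissible, $\alpha\ge1$ holds, and \cref{eq:core1-alt-constraint} is \cref{eq:adpllwe-mod2} with $Q$ replaced by $QK$. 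This is the same trick the paper uses inside \cref{lem:mLLWE-from-AdpLLWE-alt}. In fact your per-block query is better than the paper's full-row query $\vecy_k\otimes\vecx$, which costs $\sqrt{t}$ for non-power-of-two conductors. Your route would let you drop $\tensorexpansion{t}$ from the hypothesis altogether, so the ``delicate part'' is not delicate, though your stated constants do not match.

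\textbf{Uniform case.} Neither of your two justifications works.
\begin{itemize}
\item The subtracted leakage is $\ell_k=M_{1,kj}\barvec{e}_j^\tr\vecx+f_k$, with the same challenger error $\barvec{e}_j$ in all $QK$ queries. It is therefore not fresh noise that can stand in for $e_k$.
\item A second Leaky LWE hybrid would add another computational term per step, so you would no longer get the stated factor $t$.
\end{itemize}
The missing observation is a change of variables, exactly as the paper does with $\matW\coloneqq\matB-\barmat{E}$ in \cref{lem:core1-alt}. The simulated row-$j$ contribution is $M_{1,kj}(\vecb^*-\barvec{e}_j)^\tr\vecx-f_k$. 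When $\vecb^*$ is uniform, $\barvec{s}_j\coloneqq\vecb^*-\barvec{e}_j$ is uniform and independent of $\barvec{e}_j$. So the view is exactly $\hyb_j$, with error $-f_k$ distributed as $e_k$ by symmetry.

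For $j=1$ you must also set $\barvec{e}\coloneqq\barvec{e}_1$. Then $\vecb=\vecb^*=\barvec{s}_1+\barvec{e}$ has the correct $\DD_{\ring,\bar{\chi}}$ offset from the row the oracle uses.

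With these two repairs, your $t$ steps, each bounded by \cref{cor:lwe-to-adpllwe} with $QK$ queries, give precisely the claimed bound.
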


\begin{proof}
    Follows from putting \cref{cor:lwe-to-adpllwe,lem:AdpLLWE-IotimesA,lem:mLLWE-from-AdpLLWE-alt,lem:core1-alt} together. In particular:
    \begin{itemize}
        \item By \cref{cor:lwe-to-adpllwe}, invoked with the query bound $QK$ and with its leakage parameter $\alpha$ instantiated as $\tensorexpansion{t} \cdot \gamma \cdot \sqrt{\varphi m} \cdot \beta$, so that its leakage set is
        $\cL = \{ \vecz \in \ring^{m}: \norm{\vecz}_2 \leq \tensorexpansion{t} \cdot \gamma \cdot \sqrt{\varphi m} \cdot \beta \}$,
        the $\LWE{\params_{\mathsf{LWE}}}$ assumption implies $\Adp\LLWE{}$ with
        $\params^* = (\ring,n,m,QK,q,\bar{\chi},\chi,\cL,\ddvA)$,
        at an additive loss of $4 \cdot QK \cdot \epsilon + 6 \cdot \epsilon/(1-\epsilon)$.
        Its requirements hold: $\alpha \geq 1$ follows from $\gamma \cdot \sqrt{\varphi m} \cdot \beta \geq 1$ as $\tensorexpansion{t} \geq 1$, $\chi^* \geq \sqrt{2} \cdot \eta_\epsilon(\ring^m)$ is assumed,
        and \cref{eq:core1-alt-constraint} is \cref{eq:adpllwe-mod2} with $\alpha$ instantiated and with $Q$ replaced by $QK$, which suffices by \cref{rem:llwe-kappa}.
        \item By \cref{lem:AdpLLWE-IotimesA}, $\Adp\LLWE{}$ with $\params^*$ implies
        $\Adp\LLWE{}$ with $\params^\dagger = (\ring,nt,mt,QK,q,\bar{\chi},\chi,\cL^\dagger, \matI_t \otimes \ddvA)$,
        where $\cL^\dagger = \{\vecz \in \ring^{mt}: \norm{\vecz}_2 \leq \tensorexpansion{t} \cdot \gamma \cdot \sqrt{\varphi m} \cdot \beta \}$ is the leakage set of \cref{lem:mLLWE-from-AdpLLWE-alt}, at a loss of a factor $t$.
        \item By \cref{lem:mLLWE-from-AdpLLWE-alt}, $\Adp\LLWE{}$ with $\params^\dagger$ implies the indistinguishability of the experiments $\mathsf{Exp3}^0$ and $\mathsf{Exp3}^1$ defined therein, and by \cref{lem:core1-alt} the latter implies the claimed indistinguishability. Both steps are tight.
    \end{itemize}
    Putting the advantage bounds in the above lemmas together yields the stated bound. \qed
\end{proof}

\cref{lem:AdpLLWE-IotimesA} below is a standard hybrid argument, reducing
adaptive leaky LWE for matrix $\matA$ to that for matrix $\matI \otimes \matA$. 
It is a direct adaption of \cite[Proposition 1, Eprint]{AC:CinLaiWoo25}.

\begin{lemma}\label{lem:AdpLLWE-IotimesA}
    Let $\ring,n,m,Q,q,\bar{\chi},\chi,\ddvA$ be parametrised by $\secpar$, where $\ddvA$ is a distribution over $\ring_q^{n \times m}$. 
    Let $t \in \poly$,
    $\cL = \set{\vecz \in \ring^{m}: \norm{\vecz}_2 \leq \alpha}$ and $\cL^\dagger = \set{\vecz \in \ring^{mt}: \norm{\vecz}_2 \leq \alpha}$ for some $\alpha > 0$.
    Let 
    \begin{align*}
        \params^* = (\ring,n,m,Q,q,\bar{\chi},\chi,\cL,\ddvA), \qquad \params^\dagger = (\ring,nt,mt,Q,q,\bar{\chi},\chi,\cL^\dagger, \matI_t \otimes \ddvA).
    \end{align*}
    For any PPT $\adv$, there exists a PPT $\bdv$ such that 
    \[
        \Adv_{\params^\dagger,\adv}^{\Adp\LLWE{}}(\secpar) \leq t \cdot \Adv_{\params^*,\bdv}^{\Adp\LLWE{}}(\secpar).
    \]
    In particular, if the $\Adp\elLWE{\params^*}$ assumption holds, then the $\Adp\elLWE{\params^\dagger}$ assumption holds. 
\end{lemma}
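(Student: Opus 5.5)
The plan is a standard hybrid over the $t$ block rows of the public matrix $\matI_t \otimes \matA$. Write the secret as $\vecr = (\vecr_1,\ldots,\vecr_t) \in \ring_q^{nt}$ and the error as $\barvec{e} = (\barvec{e}_1,\ldots,\barvec{e}_t) \in \ring^{mt}$ with $\barvec{e}_j \sample \DD_{\ring,\bar{\chi}}^m$. Sample $\matI_t \otimes \ddvA$ as $t$ independent blocks $\matA_j \gets \ddvA$ placed block-diagonally. Then $\vecb^\tr = \vecr^\tr(\matI_t\otimes\matA) + \barvec{e}^\tr$ splits into $t$ independent blocks $\vecb_j^\tr = \vecr_j^\tr \matA_j + \barvec{e}_j^\tr$. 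For $j \in [0,t]$ define $\hyb_j$ in which the first $j$ blocks $\vecb_1,\ldots,\vecb_j$ are uniform over $\ring_q^m$ and the remaining blocks are real LWE samples. The leakage oracle in every hybrid is unchanged: on input $\vecz = (\vecz_1,\ldots,\vecz_t)$ with $\norm{\vecz}_2 \le \alpha$, it returns $\sum_j \barvec{e}_j^\tr \vecz_j + e$ with $e \sample \DD_{\ring,\chi}$. Then $\hyb_0$ and $\hyb_t$ are exactly the $b=0$ and $b=1$ experiments for $\params^\dagger$, and the triangle inequality gives the factor $t$ once each adjacent pair is bounded by the $\params^*$ advantage.

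For the step $\hyb_{j-1}$ to $\hyb_j$, the reduction $\bdv$ receives $\matA$ and $\vecb$ from its $\Adp\LLWE{\params^*}$ challenger and embeds them as block $j$. It samples the other blocks $\matA_{j'} \gets \ddvA$ itself, together with their errors $\barvec{e}_{j'} \sample \DD_{\ring,\bar{\chi}}^m$, and secrets $\vecr_{j'}$ for $j' > j$. It sets $\vecb_{j'}$ uniform for $j'<j$ and real for $j'>j$. On a leakage query $\vecz$ from $\adv$, it checks the norm bound and calls its own oracle on $\vecz_j$. It then returns the answer plus $\sum_{j'\ne j} \barvec{e}_{j'}^\tr \vecz_{j'}$, which it can compute because it knows all errors except $\barvec{e}_j$. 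The freshly sampled Gaussian $e$ is supplied by the challenger. So the answer has exactly the form $\barvec{e}^\tr\vecz + e$, and the query count is preserved at $Q$.

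The only point needing care is admissibility of the forwarded query. We need $\norm{\vecz_j}_2 \le \norm{\vecz}_2 \le \alpha$, which holds because the coefficient $\ell_2$-norm of a subvector is at most that of the whole vector, so $\vecz_j \in \cL$. With that, the simulation is perfect in both cases. If the challenger's bit is $0$, block $j$ is real and $\adv$'s view is $\hyb_{j-1}$; if the bit is $1$, block $j$ is uniform and the view is $\hyb_j$. Summing over $j \in [t]$ yields $\Adv^{\Adp\LLWE{}}_{\params^\dagger,\adv} \le t\cdot\Adv^{\Adp\LLWE{}}_{\params^*,\bdv}$, where $\bdv$ picks $j \sample [t]$ at random (the usual averaging argument). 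I do not expect a real obstacle here; the main thing to check is that adaptivity causes no issue, which holds because $\bdv$ answers each query online.
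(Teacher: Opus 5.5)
Your proposal follows the same route as the paper: a hybrid over the $t$ blocks, a reduction that forwards $\vecz_j$ and adds $\sum_{j'\neq j}\barvec{e}_{j'}^\tr\vecz_{j'}$ itself, and admissibility via $\norm{\vecz_j}_2\le\norm{\vecz}_2\le\alpha$. There is, however, one concrete error. You read $\matI_t\otimes\ddvA$ as a block-diagonal matrix with $t$ independent blocks $\matA_j\gets\ddvA$. The Kronecker product $\matI_t\otimes\matA$ instead repeats a single $\matA\gets\ddvA$ in every diagonal block. This single-$\matA$ version is the one needed downstream: \cref{lem:mLLWE-from-AdpLLWE-alt} reassembles the blocks into $\matB=\matS\matA+\barmat{E}$ with one common $\matA$.

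Under the correct reading, your reduction does not simulate $\hyb_{j-1}$ or $\hyb_j$. It plants the challenger's $\matA$ in block $j$ and freshly sampled $\matA_{j'}$ elsewhere, so $\adv$ can distinguish just by checking whether the diagonal blocks coincide. The repair is immediate. Let $\bdv$ use the challenger's $\matA$ in every block, and compute $\vecb_{j'}^\tr=\vecr_{j'}^\tr\matA+\barvec{e}_{j'}^\tr$ for $j'>j$ with that same $\matA$; this also removes any need to sample from $\ddvA$ yourself. With this change the simulation is perfect and your argument coincides with the paper's. Your random choice $j\sample[t]$ is just the averaging form of the paper's per-index reductions, and gives the same factor $t$.
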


\begin{proof}
    For each $i \in \set{0,1,\ldots,t}$, define the following $i$-th hybrid experiment:
    \begin{pchstack}[center]
        \procedure{$\mathsf{Exp}_i(\secparam)$}
        {
            \mat{A} \gets \ddvA ;~ j \coloneqq 0 \\
            \pcfor k \in [t] \pcdo~ \barvec{e}_k \sample \DD_{\ring,\bar{\chi}}^m \\
            \pcfor k \in [1,\ldots,i] \pcdo~
                \vec{b}_k \sample \ring_q^m \\
            \pcfor k \in [i+1,\ldots,t] \pcdo~
                \vec{r}_k \sample \ring_q^n ;~
                \vec{b}_k^\transpose \coloneqq \vec{r}_k^\transpose \mat{A} + \barvec{e}_k^\transpose \bmod q \\
            \vec{b}^\transpose \coloneqq (\vec{b}_1^\transpose, \ldots, \vec{b}_t^\transpose) ;~
            \barvec{e}^\transpose \coloneqq (\barvec{e}_1^\transpose,\ldots,\barvec{e}_t^\transpose) \\
            \pcreturn \adv^{\oracle}(\matA, \vec{b})
        }
        \pchspace
        \procedure{$\oracle(\vecz)$}
        {
            \pcassert j < Q \\ 
            j \coloneqq j+1 \\
            \pcassert \vecz \in \cL^\dagger \\
            \pcparse \begin{pmatrix} \vecz_1 \pclb \vdots \pclb \vecz_t \end{pmatrix} \gets \vecz \pccomment{$\vecz_k \in \ring^{m} ~\forall k \in [t]$} \\
            e \sample \DD_{\ring,\chi};~ \ell \coloneqq \sum_{k \in [t]} \barvec{e}_k^\transpose \cdot \vecz_k + e \\
            \pcreturn \ell
        }
    \end{pchstack}
    It is direct to verify that the $0$-th hybrid is identical to the $\Adp\elLWE{\params^\dagger}$ experiment $0$, as 
    \begin{align*}
        \vec{b}^\transpose =~& (\vec{r}_1^\transpose \mat{A} + \barvec{e}_1^\transpose,~ \ldots,~ \vec{r}_t^\transpose \mat{A} + \barvec{e}_t^\transpose)
        = (\vec{r}_1^\transpose,\ldots,\vec{r}_t^\transpose) (\mat{I}_t \otimes \mat{A}) + \barvec{e}^\transpose \bmod q,
    \end{align*}
    and each oracle answer is 
    \begin{align*}
        \ell =~& \sum_{k \in [t]} \barvec{e}_k^\transpose \cdot \vecz_k + e
        = \barvec{e}^\transpose \cdot \vecz + e. 
    \end{align*}
    Also, the $t$-th hybrid is identical to the $\Adp\elLWE{\params^\dagger}$ experiment $1$.
    In the rest we show the $(i-1)$-th and $i$-th hybrids are computationally indistinguishable under the $\Adp\elLWE{\params^*}$ assumption, for each $i \in [t]$.

    Fix any $i \in [t]$. Given any PPT $\adv$ against distinguishing the $(i-1)$-th and $i$-th hybrids, we construct a PPT $\bdv$ against $\Adp\elLWE{\params^*}$.
    Let $\bdv$ proceed as follows:
    \begin{itemize}
        \item Receive $(\mat{A}, \vecb_i)$ from the $\Adp\elLWE{\params^*}$ challenger.
        \item For each $k \in [1,\ldots,t]\setminus \set{i}$, 
            sample $\barvec{e}_k \sample \DD_{\ring,\bar{\chi}}^m$. 
        \item For each $k \in [1,\ldots,i-1]$, 
            sample random $\vec{b}_k \sample \ring_q^m$. 
        \item For each $k \in [i+1,\ldots,t]$, 
            sample random $\vec{r}_k \sample \ring_q^n$, set $\vec{b}_k^\transpose \coloneqq \vec{r}_k^\transpose \mat{A} + \barvec{e}_k^\transpose \bmod q$.
        \item Set $\vec{b}^\transpose \coloneqq (\vec{b}_1^\transpose, \ldots, \vec{b}_t^\transpose)$.
            Send $(\matA,\vec{b})$ to $\adv$.
        \item Internally maintain a counter $j \coloneqq 0$.
        For each oracle query $\oracle(\vecz)$ made by $\adv$, do the following:
        \begin{itemize}
            \item Assert $j < Q$ and set $j \coloneqq j+1$.
            \item Assert $\vecz \in \cL^\dagger$. Parse $\begin{pmatrix} \vecz_1 \\ \vdots \\ \vecz_t \end{pmatrix} \gets \vecz$.
            \item For each $k \in [1,\ldots,t]\setminus \set{i}$, let $\ell_k \coloneqq \barvec{e}_k^\transpose \cdot \vecz_k$.
            \item Send $\vecz_{i}$ to the oracle $\Leak\oracle$ available from the $\Adp\elLWE{\params^*}$ challenger, receive from which $\ell_i$.
            \item Answer with $\ell \coloneqq \sum_{k \in [t]} \ell_k$.
        \end{itemize}
        \item Eventually receive a bit $b'$ from $\adv$. Return $b'$.
    \end{itemize}
    It is direct to see that $\bdv$ perfectly simulates the view of the $(i-1)$-th and the $i$-th hybrid to $\adv$, 
    if it interacts with the $\Adp\elLWE{\params^*}$ experiment $0$ and $1$, respectively.
    In particular, in either case, 
    $\ell_i = \barvec{e}_i^\transpose \cdot \vecz_i + e$ for error $e \sample \DD_{\ring,\chi}$, hence     
    the simulated oracle answer is 
    \[ 
        \ell = \sum_{k \in [t]} \ell_k = \sum_{k \in [t]} \barvec{e}_k^\transpose \cdot \vecz_k + e.
    \]
    The vector $\vecz_i$ sent from $\bdv$ to the $\Adp\elLWE{\params^*}$ challenger is admissible, 
    i.e.~$\vecz_i \in \cL$,
    since $\vecz \in \cL^\dagger$ implies $\norm{\vecz}_2 \leq \alpha$, and we have $\norm{\vecz_i}_2 \leq \norm{\vecz}_2 \leq \alpha$.
    \qed
\end{proof}

\cref{lem:mLLWE-from-AdpLLWE-alt} below restates $\Adp\LLWE{}$ w.r.t. $\matI \otimes \matA$ in the matrix form used in the sequel, in which the secret is a matrix and the leakage on the error matrix is a noisy bilinear form.

\begin{lemma}\label{lem:mLLWE-from-AdpLLWE-alt}
    Let $\ring,n,m,Q,q,K,t,\bar{\chi},\chi,\gamma,\beta,\ddvA$ be parametrised by $\secpar$,
    where $\ring = \ZZ[\zeta]$ has conductor $\conductor$ and ring degree $\varphi$,
    $n,m,Q,K,t \in \poly$ and $\ddvA$ is a distribution over $\ring_q^{n \times m}$.
    Denote
    \begin{align*}
        \alpha \coloneqq~& \tensorexpansion{t} \cdot \gamma \cdot \sqrt{\varphi m} \cdot \beta, \\
        \params_{\mathsf{E}} =~& (\ring,n,m,Q,q,K,t,\bar{\chi},\chi,\gamma,\beta,\ddvA), \qquad
        \params^\dagger = (\ring,nt,mt,QK,q,\bar{\chi},\chi,\cL^\dagger, \matI_t \otimes \ddvA),
    \end{align*}
    where $\tensorexpansion{t} \coloneqq 1$ if $\conductor$ is a power of two and $\tensorexpansion{t} \coloneqq \sqrt{t}$ otherwise, and $\cL^\dagger = \{\vecz \in \ring^{mt}: \norm{\vecz}_2 \leq \alpha\}$.
    For any PPT $\adv$, define the following experiment $\mathsf{Exp3}^b_{\params_{\mathsf{E}},\adv}$:
    \begin{pchstack}[center]
        \procedure{$\mathsf{Exp3}^b_{\params_{\mathsf{E}}, \adv}(\secparam)$}
        {
            i \coloneqq 0 \\
            \matA \sample \ddvA ;~
            \barmat{E} \sample \ddv_{\ring,\bar{\chi}}^{t \times m} \\
            \pcif b=0 \pcthen \\
                \pcind \matS \sample \ring_q^{t \times n} ;~
                \matB \coloneqq \matS \matA + \barmat{E} \bmod q \\
            \pcif b=1 \pcthen \\
                \pcind \matB \sample \ring_q^{t \times m} \\
            \pcreturn \adv^{\Leak\oracle}(\matA,\matB)
        }
        \pchspace
        \procedure{$\Leak\oracle(\matY \in \ring^{K \times t}, \vecx \in \ring^m)$}
        {
            \pcassert i < Q \\
            i \coloneqq i+1 \\
            \pcassert \norm{\matY}_{2,2} \leq \gamma \land \norm{\vecx}_\infty \leq \beta \\
            \vece \sample \ddv_{\ring,\chi}^{K} \\
            \vecl \coloneqq \matY \barmat{E} \vecx + \vece \bmod q \\
            \pcreturn \vecl
        }
    \end{pchstack}
    and denote the advantage of $\adv$ in distinguishing the above experiments by
    \begin{align*}
        \Adv_{\params_{\mathsf{E}},\adv}^{\mathsf{Exp3}} \coloneqq
        \abs{ \prob{\mathsf{Exp3}^0_{\params_{\mathsf{E}}, \adv}(\secparam) = 1} - \prob{\mathsf{Exp3}^1_{\params_{\mathsf{E}}, \adv}(\secparam) = 1}}.
    \end{align*}
    For any PPT $\adv$, there exists a PPT $\bdv$ such that
    \[
        \Adv_{\params_{\mathsf{E}},\adv}^{\mathsf{Exp3}}(\secpar) = \Adv_{\params^\dagger,\bdv}^{\Adp\LLWE{}}(\secpar).
    \]
    In particular, if the $\Adp\LLWE{\params^\dagger}$ assumption holds,
    then the experiments $\mathsf{Exp3}^0_{\params_{\mathsf{E}},\adv}$ and $\mathsf{Exp3}^1_{\params_{\mathsf{E}},\adv}$ are computationally indistinguishable.
\end{lemma}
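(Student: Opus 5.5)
The reduction $\bdv$ will be a direct, perfectly simulating wrapper around the $\Adp\LLWE{\params^\dagger}$ challenger. The idea is to flatten the matrix-form secret into a vector, so that every query $(\matY,\vecx)$ to $\Leak\oracle$ in $\mathsf{Exp3}$ becomes $K$ queries to the vector leakage oracle. This matches the query budget $QK$ in $\params^\dagger$.

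The steps, in order:
\begin{enumerate}
\item On receiving $(\matI_t \otimes \matA, \vecb)$ with $\vecb \in \ring_q^{mt}$, $\bdv$ reshapes $\vecb^\tr = (\vecb_1^\tr,\ldots,\vecb_t^\tr)$ into the rows of $\matB \in \ring_q^{t\times m}$ and gives $(\matA,\matB)$ to $\adv$.
\item Writing $\vecs^\tr = (\vecs_1^\tr,\ldots,\vecs_t^\tr)$ and $\barvec{e}^\tr = (\barvec{e}_1^\tr,\ldots,\barvec{e}_t^\tr)$, we have $\vecb^\tr = \vecs^\tr(\matI_t\otimes\matA)+\barvec{e}^\tr$. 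Hence row $k$ of $\matB$ is $\vecs_k^\tr\matA+\barvec{e}_k^\tr$, i.e.\ $\matB=\matS\matA+\barmat{E}$, where $\matS$ and $\barmat{E}$ have rows $\vecs_k^\tr$ and $\barvec{e}_k^\tr$. The distributions match those in $\mathsf{Exp3}^0$, and in the $b=1$ case $\matB$ is uniform.
\item On a query $(\matY,\vecx)$, $\bdv$ checks the norms. For each $j\in[K]$, with $\vecy_j^\tr$ the $j$-th row of $\matY$, it submits $\vecz_j \coloneqq \vecy_j\otimes\vecx \in \ring^{mt}$, i.e.\ the blocks $(Y_{j1}\vecx,\ldots,Y_{jt}\vecx)$. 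It receives $\ell_j=\barvec{e}^\tr\vecz_j+e_j=\sum_k Y_{jk}\barvec{e}_k^\tr\vecx+e_j$, which is exactly the $j$-th entry of $\matY\barmat{E}\vecx$ plus fresh $\DD_{\ring,\chi}$ noise.
\item $\bdv$ answers with $\vecl=(\ell_j)_j$. The simulation is perfect in both worlds and uses at most $QK$ oracle queries, so equality of the advantages follows.
\end{enumerate}

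The main obstacle is admissibility, namely showing $\norm{\vecz_j}_2 \le \alpha = \tensorexpansion{t}\cdot\gamma\cdot\sqrt{\varphi m}\cdot\beta$.

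First, $\norm{\vecx}_2 \le \sqrt{\varphi m}\,\beta$ follows from $\norm{\vecx}_\infty\le\beta$. So it remains to show $\norm{\vecy_j\otimes\vecx}_2 \le \tensorexpansion{t}\cdot\gamma\cdot\norm{\vecx}_2$.

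\emph{General conductor.} Bound $\norm{Y_{jk}\vecx}_2 \le \norm{Y_{jk}}_{2,2}\norm{\vecx}_2$, where $\norm{Y_{jk}}_{2,2}$ is the multiplication operator norm. Since $Y_{jk}$ is a $1\times1$ submatrix of $\matY$, this is at most $\norm{\matY}_{2,2}\le\gamma$. Summing the squares over $k\in[t]$ gives the $\sqrt{t}$ factor.

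\emph{Power-of-two conductor.} Here I would use \cref{lem:induced-norm-embeddings} to get rid of the $\sqrt t$. Multiplication by $Y_{jk}$ is unitarily equivalent to the diagonal operator $\diag{\sigma_i(Y_{jk})}_i$. Because the coefficient-to-canonical map is a scaled unitary, I can write $\norm{\vecy_j\otimes\vecx}_2^2$ up to the common scaling as $\sum_i \sum_k |\sigma_i(Y_{jk})|^2\,\norm{\sigma_i(\vecx)}_2^2$. For each $i$, $\sum_k|\sigma_i(Y_{jk})|^2$ is the squared norm of a row of $\sigma_i(\matY)$, hence at most $\norm{\sigma_i(\matY)}_{2,2}^2 \le \norm{\matY}_{2,2}^2 \le \gamma^2$. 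Undoing the scaling yields $\norm{\vecz_j}_2\le\gamma\norm{\vecx}_2$, as required.

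This embedding-wise computation is the step needing care: one has to check that the scaling factors cancel exactly, as in the proof of \cref{lem:induced-norm-embeddings}.
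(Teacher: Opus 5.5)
Your proposal is correct and follows essentially the same route as the paper: the same reduction that queries the adaptive Leaky LWE oracle on $\vecz_k = \vecy_k \otimes \vecx$ for each row of $\matY$ (hence at most $QK$ queries), and the same canonical-embedding computation via \cref{lem:induced-norm-embeddings} to remove the $\sqrt{t}$ factor when the conductor is a power of two. The only cosmetic difference is in the general-conductor case, where you bound each entry $Y_{jk}$ by the norm of a $1\times 1$ submatrix while the paper sums $\norm{\matY \cdot (x_i \vecu_j)}_2^2$ over $i$ and $j$; both give the bound $\sqrt{t}\cdot\gamma\cdot\norm{\vecx}_2$.
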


\begin{proof}
    Given a PPT $\adv$ distinguishing $\mathsf{Exp3}^0$ and $\mathsf{Exp3}^1$, we construct a PPT $\bdv$ against $\Adp\elLWE{\params^\dagger}$, as follows:
    \begin{itemize}
        \item Receive $(\matI_t \otimes \matA, \barvec{b}) \in \ring_q^{nt \times mt} \times \ring_q^{mt}$ from the $\Adp\elLWE{}$ challenger.
        Parse $\barvec{b}^\tr = (\barvec{b}_1^\tr, \ldots, \barvec{b}_t^\tr)$ with $\barvec{b}_j \in \ring_q^m$ for $j \in [t]$, set $\matB \in \ring_q^{t \times m}$ to be the matrix whose $j$-th row is $\barvec{b}_j^\tr$, and pass $(\matA,\matB)$ to $\adv$.
        \item For each query $(\matY,\vecx)$ made by $\adv$: 
        \begin{itemize}
            \item Assert $i < Q$ and set $i \coloneqq i+1$. Assert $\norm{\matY}_{2,2} \leq \gamma$ and $\norm{\vecx}_\infty \leq \beta$.
            \item Let $\vecy_k^\tr$ denote the $k$-th row of $\matY$ and $\vecz_k \coloneqq \vecy_k \otimes \vecx \in \ring^{mt}$, i.e.~$\vecz_k$ consists of $t$ blocks of length $m$, the $j$-th being $y_{k,j} \cdot \vecx$. 
            \item For each $k \in [K]$, query the oracle $\Leak\oracle(\vecz_k)$ available from the $\Adp\elLWE{}$ challenger, and receive an answer $\ell_k$. Answer $\adv$ with $\vecl \coloneqq (\ell_1,\ldots,\ell_K)^\tr$.%
        \end{itemize}
        \item Eventually receive $b'$ from $\adv$, return $b'$.
    \end{itemize}
    Write $\barvec{e}^\tr = (\barvec{e}_1^\tr,\ldots,\barvec{e}_t^\tr)$, where each $\barvec{e}_j \in \ring^m$, for the error of LWE samples in the $\Adp\elLWE{}$ experiment, and let $\barmat{E} \in \ring^{t \times m}$ be the matrix whose $j$-th row is $\barvec{e}_j^\tr$.

    In the $\Adp\elLWE{}$ experiment $0$, we have $\barvec{b}^\tr = \vecr^\tr (\matI_t \otimes \matA) + \barvec{e}^\tr \bmod q$ for a uniform $\vecr \sample \ring_q^{nt}$. Writing $\vecr^\tr = (\vecr_1^\tr,\ldots,\vecr_t^\tr)$ where each $\vecr_j \in \ring_q^n$, then we have 
    \begin{align*}
        \barvec{b}_j^\tr = \vecr_j^\tr \matA + \barvec{e}_j^\tr \bmod q ~~\forall j \in [t], \qquad \text{equivalently}\quad \matB = \matS \matA + \barmat{E} \bmod q 
    \end{align*}
    where $\matS \in \ring_q^{t \times n}$ is uniform, whose $j$-th row is $\vecr_j^\tr$. 
    In the $\Adp\elLWE{}$ experiment $1$, the matrix $\matB$ is uniform. 
    In either case, $\barmat{E} \sample \ddv_{\ring,\bar{\chi}}^{t \times m}$, as in $\mathsf{Exp3}$.
    
    For an oracle query $(\matY,\vecx)$, the answer is $\vecl = (\ell_1,\ldots,\ell_K)^\tr$ where, for each $k \in [K]$,  
    \[
        \ell_k = \barvec{e}^\tr \cdot \vecz_k + e_k
        = \sum_{j \in [t]} \barvec{e}_j^\tr \cdot (y_{k,j} \cdot \vecx) + e_k
        = \sum_{j \in [t]} y_{k,j} \cdot (\barvec{e}_j^\tr \vecx) + e_k
        = \vecy_k^\tr \barmat{E} \vecx + e_k,
    \]
    with $e_k \sample \ddv_{\ring,\chi}$. Vertically concatenating, we have $\vecl = \matY \barmat{E} \vecx + \vece$ as in $\mathsf{Exp3}$.
    
    Hence, $\bdv$ perfectly simulates $\mathsf{Exp3}^0$, resp.~$\mathsf{Exp3}^1$, when interacting with the $\Adp\elLWE{}$ experiment $0$, resp. experiment $1$.
    Note that the $K$ answers $\ell_1,\ldots,\ell_K$ have independent errors even if $\matY$ has two equal rows, since $\Leak\oracle$ in \cref{fig:leaky-lwe} samples a fresh error on every query.

    \medskip\noindent
    \underline{Query count}:
    $\bdv$ makes $K$ queries to $\Leak\oracle$ per query of $\adv$, hence at most $QK$ in total by the assertion $i < Q$ in $\Leak\oracle$ of $\mathsf{Exp3}^b$.

    \medskip\noindent
    \underline{Norm bound}:
    It remains to check that the queries of $\bdv$ satisfies the norm bound constraint.
    For each $k \in [K]$, the entries of $\vecz_k$ are $y_{k,j} \cdot x_i$ for $(j,i) \in [t] \times [m]$.
    Regrouping them by the columns of $\matY$, and writing $\vecu_j \in \ring^t$ for the $j$-th unit vector,
    \[
        \sum_{k \in [K]} \norm{\vecz_k}_2^2
        = \sum_{i \in [m]} \sum_{j \in [t]} \norm{\matY \cdot (x_i \cdot \vecu_j)}_2^2
        \leq \sum_{i \in [m]} \sum_{j \in [t]} \norm{\matY}_{2,2}^2 \cdot \norm{x_i}_2^2
        = t \cdot \norm{\matY}_{2,2}^2 \cdot \norm{\vecx}_2^2,
    \]
    hence, for every $k \in [K]$,
    \begin{align*}
        \norm{\vecz_k}_2 \leq~& \sqrt{t} \cdot \norm{\matY}_{2,2} \cdot \norm{\vecx}_2 \\
        \leq~& \sqrt{t} \cdot \gamma \cdot \sqrt{\varphi m} \cdot \norm{\vecx}_\infty \\
        \leq~& \sqrt{t} \cdot \gamma \cdot \sqrt{\varphi m} \cdot \beta.
    \end{align*}
    
    If $\conductor$ is a power of two, the factor $\sqrt{t}$ in the above bound can be dropped, as follows.
    The idea is to measure $\vecz_k$ in the canonical embedding, where the product of two field elements has multiplicative norm, and then to relate the resulting bound back to $\norm{\matY}_{2,2}$.
    We use the following three facts.
    \begin{enumerate}[label=(\roman*)]
        \item Extend each canonical embedding $\sigma_\iota$ of $\field$ (\cref{sec:prelim}) entrywise to vectors and matrices over $\field$ (so that e.g.~$\sigma_\iota(\matY) \in \CC^{K \times t}$).
        Since $\sigma_\iota$ is a ring homomorphism, $\sigma_\iota(\matM \vecv) = \sigma_\iota(\matM) \cdot \sigma_\iota(\vecv)$ and $\sigma_\iota(\veca \otimes \vecw) = \sigma_\iota(\veca) \otimes \sigma_\iota(\vecw)$ for any matrix $\matM$ and vectors $\vecv, \veca, \vecw$ over $\field$ of compatible dimensions.
        \item As $\conductor$ is a power of two, $\sigma$ is a scaled $\ell_2$-isometry, $\norm{\sigma(w)}_2 = \sqrt{\varphi} \cdot \norm{w}_2$ for any $w \in \field$ (\cref{sec:prelim}), hence $\norm{\vecv}_2^2 = \frac{1}{\varphi} \sum_\iota \norm{\sigma_\iota(\vecv)}_2^2$ for any vector $\vecv$ over $\field$.
        \item By \cref{lem:induced-norm-embeddings}, $\max_\iota \norm{\sigma_\iota(\matM)}_{2,2} = \norm{\matM}_{2,2}$ for any matrix $\matM$ over $\field$.
    \end{enumerate}
    Given these facts, for each $k \in [K]$,
    \begin{align*}
        \norm{\vecz_k}_2^2
        =~& \frac{1}{\varphi} \sum_\iota \norm{\sigma_\iota(\vecz_k)}_2^2 \quad\pccomment{by (ii)}\\
        =~& \frac{1}{\varphi} \sum_\iota \norm{\sigma_\iota(\vecy_k)}_2^2 \cdot \norm{\sigma_\iota(\vecx)}_2^2 \quad\pccomment{by (i) and $\norm{\veca \otimes \vecw}_2 = \norm{\veca}_2 \cdot \norm{\vecw}_2$}\\
        \leq~& \max_\iota \norm{\sigma_\iota(\matY)}_{2,2}^2 \cdot \frac{1}{\varphi} \sum_\iota \norm{\sigma_\iota(\vecx)}_2^2 \\
        =~& \max_\iota \norm{\sigma_\iota(\matY)}_{2,2}^2 \cdot \norm{\vecx}_2^2 \quad\pccomment{by (ii)} \\
        \leq~& \norm{\matY}_{2,2}^2 \cdot \norm{\vecx}_2^2 \quad\pccomment{by (iii)}
    \end{align*}
    where the first inequality holds because $\sigma_\iota(\vecy_k)^\tr$ is the $k$-th row of $\sigma_\iota(\matY)$ and the $\ell_2$-norm of a row of a complex matrix is at most its $\ell_2$-induced norm.
    Taking square roots and using $\norm{\matY}_{2,2} \leq \gamma$ and $\norm{\vecx}_2 \leq \sqrt{\varphi m} \cdot \beta$ as before yields 
    \[
        \norm{\vecz_k}_2 \leq \norm{\matY}_{2,2} \cdot \norm{\vecx}_2 \leq \gamma \cdot \sqrt{\varphi m} \cdot \beta, 
    \]
    i.e.~the bound of the general case but without the factor $\sqrt{t}$.
    \qed
\end{proof}

\cref{lem:core1-alt} below gives a reduction from $\mathsf{Exp3}$ to the distinguishing problem defined in \cref{thm:core1-alt}.

\begin{lemma}\label{lem:core1-alt}
    Let $\ring,n,m,Q,q,K,t,t_2,\bar{\chi},\chi,\gamma,\beta,\ddvA$ be parametrised by $\secpar$, where $\ddvA$ is a distribution over $\ring_q^{n \times m}$. Denote
    \begin{align*}
        \params = (\ring,n,m,Q,q,K,t,t_2,\bar{\chi},\chi,\gamma,\beta,\ddvA), \qquad
        \params_{\mathsf{E}} = (\ring,n,m,Q,q,K,t,\bar{\chi},\chi,\gamma,\beta,\ddvA).
    \end{align*}
    Consider the experiment $\mathsf{Exp2}_{\params,\adv}^b$ defined in \cref{thm:core1-alt} and the advantage $\Adv_{\params,\adv}^{\mathsf{Exp2}}$ defined in \cref{eq:core1-alt-advantage}, and let $\mathsf{Exp3}$ denote the experiment of \cref{lem:mLLWE-from-AdpLLWE-alt} with parameters $\params_{\mathsf{E}}$.
    For any PPT $\adv$, there exists a PPT $\bdv$ such that
    \[
        \Adv_{\params,\adv}^{\mathsf{Exp2}}(\secpar) = \Adv_{\params_{\mathsf{E}},\bdv}^{\mathsf{Exp3}}(\secpar).
    \]
\end{lemma}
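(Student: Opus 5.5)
We prove \cref{lem:core1-alt} by a direct, tight reduction. Given a PPT distinguisher $\adv$ for $\mathsf{Exp2}^0,\mathsf{Exp2}^1$, we build $\bdv$ against $\mathsf{Exp3}$ that perfectly simulates $\mathsf{Exp2}^b$ whenever it interacts with $\mathsf{Exp3}^b$.

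\textbf{Setup.} Upon receiving $(\matA,\matB)\in\ring_q^{n\times m}\times\ring_q^{t\times m}$ from $\mathsf{Exp3}^b$, the reduction $\bdv$ does the following:
\begin{enumerate}
\item It forwards $\matA$ to $\adv$, obtains $\matM=(\matM_1,\matM_2)$, and asserts $\norm{\matM_1}_{2,2}\le\gamma$.
\item It samples $\matS_2\sample\ring_q^{t_2\times n}$ itself.
\item It sets $\vecb^\tr\coloneqq\vecu^\tr\matB$, the first row of $\matB$, and gives $(\matA,\vecb,\matS_2)$ to $\adv$.
\end{enumerate}
The key identification is to write $\matB=\matS\matA+\barmat{E}$ in case $b=0$, and $\matB$ uniform (playing the role of $\barmat{S}_1$ plus an error) in case $b=1$. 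Then $\vecb$ has the same form as in $\mathsf{Exp2}$ with $\barvec{e}^\tr=\vecu^\tr\barmat{E}$, which is distributed as $\ddv_{\ring,\bar\chi}^m$.

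\textbf{Oracle queries.} On a query $\vecx$ with $\norm{\vecx}_\infty\le\beta$, $\bdv$ calls $\Leak\oracle(\matM_1,\vecx)$; this call is admissible since $\norm{\matM_1}_{2,2}\le\gamma$. It receives $\vecl=\matM_1\barmat{E}\vecx+\vece$ and answers
\[
\vecd\coloneqq \matM_1\matB\vecx-\vecl+\matM_2\matS_2\matA\vecx \bmod q.
\]
We check the two cases.
\begin{itemize}
\item If $b=0$: $\matM_1(\matS\matA+\barmat{E})\vecx-\matM_1\barmat{E}\vecx-\vece=\matM_1\matS\matA\vecx-\vece$. Since the discrete Gaussian is symmetric, $-\vece\equiv\vece$. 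This matches $\mathsf{Exp2}^0$ with $\matS_1\coloneqq\matS$ uniform, and the same $\matS_1$ generates $\vecb$, so the view is consistent.
\item If $b=1$: set $\barmat{S}_1\coloneqq\matB-\barmat{E}$, which is uniform and independent of $\barmat{E}$ because $\matB$ is uniform. Then $\vecd=\matM_1\barmat{S}_1\vecx-\vece+\matM_2\matS_2\matA\vecx$, and $\vecb^\tr=\vecu^\tr\barmat{S}_1+\vecu^\tr\barmat{E}$, exactly as in $\mathsf{Exp2}^1$.
\end{itemize}
In both cases the joint distribution matches, since $\barmat{E}$ appears only through $\barvec{e}$ and the leakage, as in $\mathsf{Exp2}$.

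\textbf{Bookkeeping.} Each $\adv$ query causes one $\Leak\oracle$ query, so the query bound $Q$ is preserved, and $\bdv$ returns $\adv$'s output bit. The advantages are therefore equal. The only subtle step is the $b=1$ change of variables: we must justify that $\barmat{S}_1=\matB-\barmat{E}$ is uniform and independent of $\barmat{E}$ even though the leakage reveals information about $\barmat{E}$. This holds because $\matB$ is drawn uniformly and independently of $\barmat{E}$ in $\mathsf{Exp3}^1$.
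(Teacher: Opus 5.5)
Your proposal is correct and is essentially the paper's own proof. It uses the same reduction: forward $\matA$, sample $\matS_2$ yourself, set $\vecb^\tr = \vecu^\tr\matB$, and answer each query with $\matM_1\matB\vecx - \vecl + \matM_2\matS_2\matA\vecx$ after one call to $\Leak\oracle(\matM_1,\vecx)$. The analysis is the same change of variables $\matB - \barmat{E}$, which equals $\matS\matA$ in $\mathsf{Exp3}^0$ and is uniform and independent of $\barmat{E}$ in $\mathsf{Exp3}^1$, with the same query-count and admissibility bookkeeping.
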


\begin{proof}
    Given a PPT $\adv$ distinguishing $\mathsf{Exp2}^0$ and $\mathsf{Exp2}^1$, we construct a PPT $\bdv$ distinguishing $\mathsf{Exp3}^0$ and $\mathsf{Exp3}^1$, as follows:
    \begin{itemize}
        \item Receive $(\matA,\matB) \in \ring_q^{n \times m} \times \ring_q^{t \times m}$ from the $\mathsf{Exp3}$ challenger.
        \item Pass $\matA$ to $\adv$, and receive from which $\matM = (\matM_1,\matM_2) \in \ring^{K \times (t+t_2)}$. Assert $\norm{\matM_1}_{2,2} \leq \gamma$.
        \item Sample $\matS_2 \sample \ring_q^{t_2 \times n}$, set $\vecb^\tr \coloneqq \vecu^\tr \matB \bmod q$ where $\vecu^\tr \coloneqq (1,0,\ldots,0) \in \ring^{1 \times t}$ is the first unit vector, and pass $(\matA,\vecb,\matS_2)$ to $\adv$.
        \item Maintain a counter $i \coloneqq 0$. Answer each oracle query $\oracle_b(\vecx)$ of $\adv$ as follows: assert $i < Q$, set $i \coloneqq i+1$, and assert $\norm{\vecx}_\infty \leq \beta$. Then query $\Leak\oracle(\matM_1,\vecx)$ of $\mathsf{Exp3}$ to obtain $\vecl$, compute
        \[
            \vecd \coloneqq \matM_1 \cdot \matB \vecx - \vecl + \matM_2 \cdot \matS_2 \matA \vecx \bmod q,
        \]
        and answer with $\vecd$.
        \item Eventually receive a bit $b'$ from $\adv$, return $b'$.
    \end{itemize}
    Below we show that $\bdv$ perfectly simulates the view of either $\mathsf{Exp2}^0$ or $\mathsf{Exp2}^1$ to $\adv$,
    if it interacts respectively with $\mathsf{Exp3}^0$ or $\mathsf{Exp3}^1$, thus the lemma follows.

    In both cases, write $\matW \coloneqq \matB - \barmat{E} \bmod q$ for the error matrix $\barmat{E} \sample \ddv_{\ring,\bar{\chi}}^{t \times m}$ of $\mathsf{Exp3}$, and $\barvec{e}^\tr \coloneqq \vecu^\tr \barmat{E}$ for its first row, which follows $\ddv_{\ring,\bar{\chi}}^m$. By definition of the $\Leak\oracle$ answer in $\mathsf{Exp3}$, we have $\vecl = \matM_1 \cdot \barmat{E} \vecx + \vece$ for a fresh $\vece \sample \ddv_{\ring,\chi}^K$, hence for each query $\vecx$ the simulated oracle answer is
    \begin{align*}
        \vecd
        =~& \matM_1 \cdot \matB \vecx - (\matM_1 \cdot \barmat{E} \vecx + \vece) + \matM_2 \cdot \matS_2 \matA \vecx \\
        =~& \matM_1 \cdot \matW \vecx - \vece + \matM_2 \cdot \matS_2 \matA \vecx
        \equiv \matM \cdot \begin{bmatrix} \matW \\ \matS_2 \matA \end{bmatrix} \vecx + \vece \bmod q,
    \end{align*}
    where the last $\equiv$ holds since the discrete Gaussian over $\ring$ is symmetric around zero. Moreover, the vector simulated to $\adv$ is $\vecb^\tr = \vecu^\tr \matB = \vecu^\tr \matW + \barvec{e}^\tr \bmod q$. It remains to identify $\matW$ in the two cases.

    \medskip\noindent
    \underline{If $\bdv$ interacts with $\mathsf{Exp3}^0$}:
    Here $\matB = \matS \matA + \barmat{E}$ for a uniform $\matS \sample \ring_q^{t \times n}$, so $\matW = \matS \matA$. Thus $\vecb^\tr = \vecu^\tr \matS \matA + \barvec{e}^\tr$ and $\vecd \equiv \matM \cdot \begin{bmatrix} \matS \matA \\ \matS_2 \matA \end{bmatrix} \vecx + \vece$, which is the view of $\mathsf{Exp2}^0$ with $\matS_1 \coloneqq \matS$.

    \medskip\noindent
    \underline{If $\bdv$ interacts with $\mathsf{Exp3}^1$}:
    Here $\matB \sample \ring_q^{t \times m}$ is uniform and independent of $\barmat{E}$, so $\matW = \matB - \barmat{E}$ is uniform over $\ring_q^{t \times m}$ and independent of $\barmat{E}$. Thus $\vecb^\tr = \vecu^\tr \matW + \barvec{e}^\tr$ and $\vecd \equiv \matM \cdot \begin{bmatrix} \matW \\ \matS_2 \matA \end{bmatrix} \vecx + \vece$, which is the view of $\mathsf{Exp2}^1$ with $\barmat{S}_1 \coloneqq \matW$.

    \medskip\noindent
    \underline{Norm bound and query count}:
    The tuple $(\matM_1,\vecx)$ sent by $\bdv$ to the $\mathsf{Exp3}$ challenger is admissible, since $\norm{\matM_1}_{2,2} \leq \gamma$ and $\norm{\vecx}_\infty \leq \beta$ are exactly the assertions of $\mathsf{Exp2}^b$.
    Finally, $\bdv$ queries $\Leak\oracle$ exactly once per query of $\adv$, hence at most $Q$ times by the assertion $i < Q$ in $\oracle_b$ of $\mathsf{Exp2}^b$.
    \qed
\end{proof}

}{
  \section{Ethics considerations}
  None.

  \newpage 

  \section{Meta-Review}

  The following meta-review was prepared by the program committee for the 2027
  IEEE Symposium on Security and Privacy (S\&P) as part of the review process as
  detailed in the call for papers.

  \subsection{Summary}
  The paper presents a method for turning an FHE scheme with a linear decryption function into an asynchronous threshold-FHE scheme. While this has already been done, the main innovation lies in the threshold-decryption method. Conceptually, the target threshold-FHE ciphertext is first (key+scheme)-switched to a threshold-PKE scheme, then the threshold decryption of this tPKE scheme is used. The paper presents a security argument (sketched, from a full version) for the construction's claimed properties, and provides an open-source implementation and evaluation. 

  \subsection{Scientific Contributions}
  \begin{itemize}
    \item Provides a Valuable Step Forward in an Established Field
  \end{itemize}

  \subsection{Reasons for Acceptance}
  \begin{enumerate}
  \item The paper provides a valuable step forward in the field of threshold-FHE, by proposing an asynchronous scheme that requires less communication and is more scalable than the state of the art. The construction is presented formally, with a clear model, security properties, and a security proof sketch (extracted from the paper's full version).
  \item The paper is accompanied by an open-source implementation, and presents concrete FHEVM use cases.
  \end{enumerate}

  \subsection{Noteworthy Concerns} 
  \begin{enumerate} 
    \item The core construction assumes a trusted dealer for the secret keys. The feasibility of a distributed key-generation protocol is left as future work.
    \item The comparison to Zyskind et al. \cite{CCS:ZZLP25} lacks a multi-dimensional characterization of the involved trade-offs. Notably, the chosen evaluation profile (key-dependent phase latency) discards the latency incurred by `PreProc`. Yet this pre-processing is the main latency bottleneck of the proposed construction, and its characteristics differ from the pre-processing of \cite{CCS:ZZLP25} (public vs. secret-dependent, ciphertext-dependent vs. fully offline). Hence, it remains an open question in which deployment scenarios \cite{CCS:ZZLP25} outperforms Tulva.
    \item The security proof is provided as a high-level sketch extracted from the full version. Only this proof sketch was peer-reviewed.
  \end{enumerate}
}

\end{document}
